\title{Extensional Taylor Expansion}
\newcommand{\pdfeta}{\texorpdfstring{$η$}{eta}}
\newcommand{\pdflambda}{\texorpdfstring{$λ$}{lambda}}
\crefname{section}{Section}{Sections}
\crefname{figure}{Figure}{Figures}
\crefname{cfigure}{Figure}{Figures}
\crefname{sidefigure}{Figure}{Figures}
\crefname{csidefigure}{Figure}{Figures}
\crefname{fact}{Fact}{Facts}
\newcommand{\eqdef}{\coloneqq}
\DeclareFontFamily{U}{ourstixex}{}
\DeclareSymbolFont{ourstixdelims}{U}{ourstixex}{m}{n}
\DeclareFontShape{U}{ourstixex}{m}{n}{ <-> s * [1.0] stix-mathex }{}
\DeclareMathDelimiter{\lAngle}{\mathopen}{ourstixdelims}{"EC}{ourstixdelims}{"12}
\DeclareMathDelimiter{\rAngle}{\mathclose}{ourstixdelims}{"ED}{ourstixdelims}{"13}
\DeclareFontFamily{U}{ourstixfrak}{}
\DeclareSymbolFont{ourstixsymbols2}{U}{ourstixfrak}{m}{n}
\DeclareFontShape{U}{ourstixfrak}{m}{n}{ <-> s * [1.0] stix-mathfrak }{}
\DeclareMathSymbol{\typecolon}{\mathbin}{ourstixsymbols2}{"25}
\newcommand{\bij}{\simeq}
\newcommand{\intr}[1]{\|#1\|}
\newcommand{\sintr}[1]{\|#1\|}
\newcommand{\lift}{\square}
\newcommand{\definitive}[1]{\textbf{#1}}
\newcommand{\word}[1]{\vec{#1}}
\newcommand{\bag}[1]{\bar{#1}}
\newcommand{\seq}[1]{\vec{#1}}
\newcommand{\st}{\;|\;}
\NewDocumentCommand{\nodelim}{ m }{}
\NewDocumentCommand{\pars}{ O{} m }{#1(#2#1)}
\NewDocumentCommand{\vpars}{ m m m }
	{\IfBooleanTF{#1}{#2(#3#2)}{#3}}
\NewDocumentCommand{\braces}{ O{} m }{#1\{#2#1\}}
\NewDocumentCommand{\brackets}{ O{} m }{#1[#2#1]}
\NewDocumentCommand{\angles}{ O{} m }{#1\langle#2#1\rangle}
\NewDocumentCommand{\vbars}{ O{} m }{#1|#2#1|}
\newcommand{\K}{\mathbb{K}}
\newcommand{\N}{\mathbb{N}}
\newcommand{\Bool}{\mathbb{B}}
\newcommand{\Realpc}{\overline{\mathbb{R}^+}}
\NewDocumentCommand{\Perms}{m}{\mathbb S_{#1}}
\NewDocumentCommand{\DegreeStreams}{}{\N_{\mathrm s}}
\newcommand{\card}{\#}
\NewDocumentCommand{\CardOf}{sO{}m}{\card\vpars{#1}{#2}{#3}}
\NewDocumentCommand{\WordsOf}{sO{}m}{\vpars{#1}{#2}{#3}^*}
\NewDocumentCommand{\SeqsOf}{sO{}m}{\vpars{#1}{#2}{#3}^{\N}}
\newcommand{\Mf}{\mathfrak{M}_{\mathrm{f}}}
\NewDocumentCommand{\MfOf}{O{}m}{\Mf\pars[#1]{#2}}
\newcommand{\Sf}{\mathcal{S}_{\mathrm{f}}}
\NewDocumentCommand{\SfOf}{O{}m}{\Sf\pars[#1]{#2}}
\newcommand{\Ss}{\mathcal{S}}
\NewDocumentCommand{\SsOf}{O{}m}{\Ss\pars[#1]{#2}}
\NewDocumentCommand{\set}{O{}m}{\braces[#1]{#2}}
\NewDocumentCommand{\MRel}{}{\mathbf{MRel}}
\NewDocumentCommand{\ArIso}{}{\mathbf{ArIso}}
\newcommand{\cons}{\mathop{\dblcolon}}
\NewDocumentCommand{\LengthOf}{O{}m}{\vbars[#1]{#2}}
\newcommand{\seqcat}{\,}
\NewDocumentCommand{\tuple}{mO{}}{#2\langle #1 #2\rangle}
\newcommand{\emptyword}{\varepsilon}
\newcommand{\emptybag}{\mset{\,}}
\newcommand{\mset}[2][]{#1[ #2#1]}
\newcommand{\bagcat}{*}
\newcommand{\emptystream}{\iota}
\newcommand{\restrict}{\upharpoonright}
\newcommand{\splitinto}{\lhd}
\NewDocumentCommand{\preImage}{mO{}m}{#1^{-1}\pars[#2]{#3}}
\NewDocumentCommand{\restrictToPre}{mmO{}m}{#1\restrict\preImage{#2}[#3]{#4}}
\NewDocumentCommand{\bagpack}{mO{}}{#2\llbracket #1 #2\rrbracket}
\NewDocumentCommand{\tuplepack}{mO{}}{#2\lAngle #1#2\rAngle}
\NewDocumentCommand{\strpack}{mO{}}{\tuplepack{#1}[#2]}
\newcommand{\conspacknormal}{
    \mathbin{\ooalign{%
        \raise.52ex\hbox{$\scriptstyle\circ$}%
        \cr%
        \raise-.23ex\hbox{$\scriptstyle\circ$}%
        \cr%
        \raise.52ex\hbox{$\mspace{6mu}\scriptstyle\circ$}%
        \cr%
        \raise-.23ex\hbox{$\mspace{6mu}\scriptstyle\circ$}%
        \cr%
    }}}
\newcommand{\conspacksmall}{
    \mathbin{\ooalign{%
        \raise.41ex\hbox{$\scriptscriptstyle\circ$}%
        \cr%
        \raise-.18ex\hbox{$\scriptscriptstyle\circ$}%
        \cr%
        \raise.41ex\hbox{$\mspace{4.5mu}\scriptscriptstyle\circ$}%
        \cr%
        \raise-.18ex\hbox{$\mspace{4.5mu}\scriptscriptstyle\circ$}%
        \cr%
    }}}
\newcommand{\conspack}{\mathbin{\typecolon\!\typecolon}}
\NewDocumentCommand{\support}{O{} m}{\mathrm{supp}\pars[#1]{#2}}
\NewDocumentCommand{\LinearCombinationsOf}{O{\K}m}{#1^{#2}}
\NewDocumentCommand{\SumsOf}{sO{}m}{\Sigma\vpars{#1}{#2}{#3}}
\NewDocumentCommand{\VectorsOf}{O{\K}O{}m}{#1\angles[#2]{#3}}
\NewDocumentCommand{\Coef}{sO{}msO{}m}{\vpars{#1}{#2}{#3}.\vpars{#4}{#5}{#6}}
\NewDocumentCommand{\BagCoef}{sO{}mm}{\vpars{#1}{#2}{#3}^{#4}}
\newcommand{\LVar}{\mathcal{V}}
\newcommand{\SVar}{\LVar_{\mathrm s}}
\NewDocumentCommand{\LambdaTerms}   {}{\Lambda}
\newcommand{\HNFs} {\LambdaTerms_{\mathrm{hn}}}
\newcommand{\BotTerms} {\LambdaTerms_{\bot}}
\NewDocumentCommand{\NormalValueTerms}
				   {}{\Delta_{\mathrm v}^{\mathrm n}}
\NewDocumentCommand{\NormalBaseTerms}    
				   {}{\Delta_{\mathrm b}^{\mathrm n}}
\NewDocumentCommand{\NormalBagTerms}    
				   {}{\Delta_\oc^{\mathrm n}}
\NewDocumentCommand{\NormalStreamTerms}  
				   {}{\Delta_{\mathrm s}^{\mathrm n}}
\NewDocumentCommand{\NormalTerms}
				   {}{\Delta^{\mathrm n}}
\NewDocumentCommand{\ValueTerms}   {}{\Delta_{\mathrm v}}
\NewDocumentCommand{\HeadExprs}    {}{\Delta_{\mathrm h}}
\NewDocumentCommand{\StreamTerms}  {}{\Delta_{\mathrm s}}
\NewDocumentCommand{\BaseTerms}    {}{\Delta_{\mathrm b}}
\NewDocumentCommand{\BagTerms}     {}{\Delta_\oc}
\NewDocumentCommand{\ResourceTerms}{}{\Delta_{\mathrm t}}
\NewDocumentCommand{\ResourceExprs}{}{\Delta_{\mathrm e}}
\NewDocumentCommand{\ValueRule      }{s}{\IfBooleanF{#1}{\pars}{λ}}
\NewDocumentCommand{\BagRule        }{s}{\IfBooleanF{#1}{\pars}{\oc}}
\NewDocumentCommand{\EmptyStreamRule}{s}{\IfBooleanF{#1}{\pars}{\emptystream}}
\NewDocumentCommand{\ConsStreamRule }{s}{\IfBooleanF{#1}{\pars}{\cons}}
\NewDocumentCommand{\RedexBaseRule  }{s}{\IfBooleanF{#1}{\pars}{app}}
\NewDocumentCommand{\NormalBaseRule }{s}{\IfBooleanF{#1}{\pars}{\LVar}}
\NewDocumentCommand{\ValueSums}   {}{\SumsOf{\ValueTerms}}
\NewDocumentCommand{\HeadSums}    {}{\SumsOf{\HeadExprs}}
\NewDocumentCommand{\StreamSums}  {}{\SumsOf{\StreamTerms}}
\NewDocumentCommand{\BaseSums}    {}{\SumsOf{\BaseTerms}}
\NewDocumentCommand{\BagSums}     {}{\SumsOf{\BagTerms}}
\NewDocumentCommand{\ResourceSums}{}{\SumsOf{\ResourceTerms}}
\NewDocumentCommand{\ExprSums}    {}{\SumsOf{\ResourceExprs}}
\NewDocumentCommand{\ValueVectors}   {O{\K}}{\VectorsOf[#1]{\ValueTerms}}
\NewDocumentCommand{\HeadVectors}    {O{\K}}{\VectorsOf[#1]{\HeadExprs}}
\NewDocumentCommand{\StreamVectors}  {O{\K}}{\VectorsOf[#1]{\StreamTerms}}
\NewDocumentCommand{\BaseVectors}    {O{\K}}{\VectorsOf[#1]{\BaseTerms}}
\NewDocumentCommand{\BagVectors}     {O{\K}}{\VectorsOf[#1]{\BagTerms}}
\NewDocumentCommand{\ResourceVectors}{O{\K}}{\VectorsOf[#1]{\ResourceTerms}}
\NewDocumentCommand{\ExprVectors}    {O{\K}}{\VectorsOf[#1]{\ResourceExprs}}
\NewDocumentCommand{\RelTypes}      {oo}{\mathcal D\IfNoValueF{#1}{_{#1}}\IfNoValueF{#2}{^{#2}}}
\NewDocumentCommand{\RelBagTypes}   {o}{\MfOf{\RelTypes[#1]}}
\NewDocumentCommand{\RelStreamTypes}{o}{\SsOf{\RelTypes[#1]}}
\NewDocumentCommand{\ValueTypes} {o}{\RelTypes[\mathrm v][#1]}
\NewDocumentCommand{\BagTypes}   {o}{\RelTypes[\oc][#1]}
\NewDocumentCommand{\StreamTypes}{o}{\RelTypes[\mathrm s][#1]}
\NewDocumentCommand{\BaseTypes}  {o}{\RelTypes[\mathrm b][#1]}
\NewDocumentCommand{\sem}{O{}mo}{#1\llbracket #2#1\rrbracket\IfNoValueF{#3}{_{#3}}}
\NewDocumentCommand{\gs} {O{} m}{\mathcal{G}\pars[#1]{#2}}
\NewDocumentCommand{\emptyrctx}{}{\star}
\NewDocumentCommand{\rctxcat}{}{\bagcat}
\NewDocumentCommand{\basetype}{}{\mathit{o}}
\NewDocumentCommand{\arrowtype}{sO{}msO{}m}{\vpars{#1}{#2}{#3}\to\vpars{#4}{#5}{#6}}
\NewDocumentCommand{\dualtype}{sO{}m}{\vpars{#1}{#2}{#3}\multimap\basetype}
\NewDocumentCommand{\rjug} {}{\vdash}
\NewDocumentCommand{\rjugb}{}{\vdash_\oc}
\NewDocumentCommand{\rjugBase}{}{\vdash_{\mathrm b}}
\NewDocumentCommand{\rjugStr} {}{\vdash_{\mathrm s}}
\NewDocumentCommand{\rjugBag} {}{\vdash_\oc}
\NewDocumentCommand{\rjugVal} {}{\vdash_{\mathrm v}}
\NewDocumentCommand{\type}{}{\mathsf{type}}
\NewDocumentCommand{\typeOf}{O{}m}{\type\pars[#1]{#2}}
\NewDocumentCommand{\ctx}{}{\mathsf{ctx}}
\NewDocumentCommand{\ctxOf}{O{}m}{\ctx\pars[#1]{#2}}
\NewDocumentCommand{\ctxCtx}{m}{\ctx_{#1}}
\NewDocumentCommand{\ctxCtxOf}{m O{}m}{\ctxCtx{#1}\pars[#2]{#3}}
\NewDocumentCommand{\typeable}{m}{\rjug #1}
\NewDocumentCommand{\witnessOf}   {O{}mm}{\mathsf{w}_{#2}\pars[#1]{#3}}
\NewDocumentCommand{\witnessBagOf}{O{}mm}{\mathsf{w}^\oc_{#2}\pars[#1]{#3}}
\NewDocumentCommand{\witnessStrOf}{O{}mm}{\mathsf{w}_{#2}\pars[#1]{#3}}
\newcommand{\Hs}    {\mathbf H^*}
\newcommand{\eqBeta}{=_{\beta}}
\newcommand{\eqBEta}{=_{\beta\eta}}
\newcommand{\eqTay} {=_{\mathcal T}}
\newcommand{\eqExtTay} {=_{\mathcal T_{\eta}}}
\newcommand{\eqBT}  {=_{\mathcal{B}}}
\newcommand{\eqOh}  {=_{\mathrm{hn}}}
\newcommand{\eqRel} {=_{\RelTypes}}
\newcommand{\ssep}{\bowtie}
\newcommand{\sep} {\mathrel{\ltimes}}
\newcommand{\bsep} {\mathrel{{\ltimes}_{\mathcal B}}}
\NewDocumentCommand{\rotate}{m}{ρ_{#1}}
\NewDocumentCommand{\lsubst}{ s O{} m m s O{} m }
	{\partial_{#4}\vpars{#1}{#2}{#3}\cdot\vpars{#5}{#6}{#7}}
\NewDocumentCommand{\subst}{ s O{} m m O{} m }
    {\vpars{#1}{#2}{#3}\braces[#5]{#6/#4}}
\NewDocumentCommand{\rsubst}{ s O{} m m O{} m }
    {\vpars{#1}{#2}{#3}\brackets[#5]{#6/#4}}
\NewDocumentCommand{\appl}{ s O{} m s O{} m }
	{\vpars{#1}{#2}{#3}\,\vpars{#4}{#5}{#6}}
\NewDocumentCommand{\labs}{ m s O{} m }
	{\lambda{#1}.\vpars{#2}{#3}{#4}}
\NewDocumentCommand{\erase}{ s O{} m s O{} m }
    {\vpars{#1}{#2}{#3}\mathbin\lightning \vpars{#4}{#5}{#6}}
\NewDocumentCommand{\shiftdown}{ s O{} m m o}
    {\vpars{#1}{#2}{#3}{[#4\mathop\downarrow]}\IfNoValueF{#5}{^{#5}}}
\NewDocumentCommand{\shiftup}{ s O{} m m o}
    {\vpars{#1}{#2}{#3}{[#4\mathop\uparrow]}\IfNoValueF{#5}{^{#5}}}
\NewDocumentCommand{\SizeOf}{sO{}m}{\#\vpars{#1}{#2}{#3}}
\NewDocumentCommand{\height}  {}    {\mathsf h}
\NewDocumentCommand{\heightOf}{O{}m}{\height\pars[#1]{#2}}
\NewDocumentCommand{\weight}{O{} m}{w\pars[#1]{#2}}
\NewDocumentCommand{\muldeg}{O{} m}{\mathsf{m}\pars[#1]{#2}}
\NewDocumentCommand{\isodeg}{O{} m}{\mathsf{d}\pars[#1]{#2}}
\NewDocumentCommand{\range}  {}    {\mathsf{r}}
\NewDocumentCommand{\rangeOf}{O{}m}{\range\pars[#1]{#2}}
\NewDocumentCommand{\nocc}{m O{} m}{\vbars[#2]{#3}_{#1}}
\NewDocumentCommand{\fv}{O{} m}{fv\pars[#1]{#2}}
\NewDocumentCommand{\SVarOf}{O{} m}{\SVar\pars[#1]{#2}}
\NewDocumentCommand{\LVarOf}{O{} m}{\LVar\pars[#1]{#2}}
\NewDocumentCommand{\tlId} {}{\mathbf{I}}
\NewDocumentCommand{\tlJ}  {}{\mathbf{J}}
\NewDocumentCommand{\tlOne}{}{\mathbf{1}}
\NewDocumentCommand{\tlJaux}{}{\mathbf{R_J}}
\NewDocumentCommand{\trProj}{mo}{\mathsf{p}_{#1\IfNoValueF{#2}{,{#2}}}}
\NewDocumentCommand{\trCCz}{mo}{\mathsf{c}_{#1\IfNoValueF{#2}{,{#2}}}}
\NewDocumentCommand{\trCCp}{m}{\mathsf{c}'_{#1}}
\NewDocumentCommand{\trJaux}{m}{\mathsf{r}_{#1}}
\NewDocumentCommand{\tvJ}{m}{\mathsf{J}_{#1}}
\NewDocumentCommand{\tvJBag}{m}{\mathsf{J}^{\oc}_{#1}}
\NewDocumentCommand{\etaExp}{sO{}m}{
    \vpars{#1}{#2}{#3}^{η}
}
\NewDocumentCommand{\prom}{sO{}m}{
    \IfBooleanTF{#1}{#2(#3#2)}{#3}^{\oc}
}
\NewDocumentCommand{\IdExp}{sO{}m}{
    \vpars{#1}{#2}{#3}^\eta
}
\NewDocumentCommand{\IdExpCoef}{sO{}mm}{
    \Coef{\vpars{#1}{#2}{#3}^\eta}{#4}
}
\NewDocumentCommand{\IdBagExp}{sO{}m}{
    \vpars{#1}{#2}{#3}^{\oc}
}
\NewDocumentCommand{\IdBagExpCoef}{sO{}mm}{
    \Coef{\vpars{#1}{#2}{#3}^{\,\oc}}{#4}
}
\NewDocumentCommand{\IdSeqExp}{sO{}m}{
    \vpars{#1}{#2}{#3}^{\,\eta}
}
\NewDocumentCommand{\IdSeqExpCoef}{sO{}mm}{
    \Coef{\vpars{#1}{#2}{#3}^{\,\eta}}{#4}
}
\NewDocumentCommand{\IdStrExp}{sO{}m}{
    \vpars{#1}{#2}{#3}^{\,\oc}
}
\NewDocumentCommand{\IdStrExpCoef}{sO{}mm}{
    \Coef{\vpars{#1}{#2}{#3}^{\,\oc}}{#4}
}
\NewDocumentCommand{\ExtTayExp}{O{}m}{
    \mathcal T_{\eta}\pars[#1]{#2}
}
\NewDocumentCommand{\ExtTayExpCoef}{O{}mm}{
    \Coef{\ExtTayExp[#1]{#2}}{#3}
}
\NewDocumentCommand{\ExtArgTayExp}{O{}m}{
    \mathcal T_{\eta}^{\oc}\pars[#1]{#2}
}
\NewDocumentCommand{\HeadTayExp}{O{}m}{
    \mathcal T_{h}\pars[#1]{#2}
}
\NewDocumentCommand{\HeadTayExpCoef}{O{}mm}{
    \Coef{\HeadTayExp[#1]{#2}}{#3}
}
\NewDocumentCommand{\HeadArgTayExp}{O{}m}{
    \mathcal T_{h}^{\oc}\pars[#1]{#2}
}
\NewDocumentCommand{\HeadArgsTayExp}{O{}m}{
    \mathcal T_{h}^{\oc}\pars[#1]{#2}
}
\NewDocumentCommand{\TayExp}{O{}m}{
    \mathcal T\pars[#1]{#2}
}
\NewDocumentCommand{\SeqTayExp}{O{}m}{
    \vec{\mathcal T}\pars[#1]{#2}
}
\NewDocumentCommand{\NFTayExp}{O{}m}{
    \mathcal{NT_{\eta}}\pars[#1]{#2}
}
\NewDocumentCommand{\NFTayExpCoef}{O{}mm}{
    \Coef{\NFTayExp[#1]{#2}}{#3}
}
\NewDocumentCommand{\CCBag}   {mO{}m}{\mathsf{c}^-\angles[#2]{#1,#3}}
\NewDocumentCommand{\CCStr}   {mO{}m}{\mathsf{c}^-\angles[#2]{#1,#3}}
\NewDocumentCommand{\CCValCtx}{mO{}m}{\mathsf{c}^+\angles[#2]{#1,#3}}
\NewDocumentCommand{\CCBagCtx}{mO{}m}{\mathsf{c}^+\angles[#2]{#1,#3}}
\NewDocumentCommand{\CCStrCtx}{mO{}m}{\mathsf{c}^+\angles[#2]{#1,#3}}
\NewDocumentCommand{\CCVar}   {mO{}m}{\mathsf{c}  \angles[#2]{#1,#3}}
\NewDocumentCommand{\bred}{}{\to_{\beta}}
\NewDocumentCommand{\resredsym}{}{\mathrm r}
\NewDocumentCommand{\resredz}{}{\mapsto_{\resredsym_0}}
\NewDocumentCommand{\resred}{}{\mapsto_{\resredsym}}
\NewDocumentCommand{\resredR}{}{\resred^?}
\NewDocumentCommand{\sresred}{}{\to_\resredsym}
\NewDocumentCommand{\sresredRT}{}{\sresred^*}
\NewDocumentCommand{\Resredsym}{}{\mathrm R}
\NewDocumentCommand{\Resred}{}{\mapsto_{\Resredsym}}
\NewDocumentCommand{\ResredR}{}{\Resred^?}
\NewDocumentCommand{\sResred}{}{\to_{\Resredsym}}
\NewDocumentCommand{\sResredRT}{}{\sResred^*}
\NewDocumentCommand{\vembed}{m}{\widetilde{#1}}
\NewDocumentCommand{\vresred}{}{\leadsto}
\NewDocumentCommand{\vresredRT}{}{\mathrel{\leadsto\kern-4pt^*}}
\NewDocumentCommand{\etared}{}{\rightarrow_{\eta}}
\NewDocumentCommand{\etarev}{}{\leftarrow_{\eta}}
\NewDocumentCommand{\betaetared}{}{\rightarrow_{\beta\eta}}
\NewDocumentCommand{\resredrulebeta}{s}{\IfBooleanF{#1}{\pars}{\resredsym_\beta}}
\NewDocumentCommand{\resredruleiota}{s}{\IfBooleanF{#1}{\pars}{\resredsym_\iota}}
\NewDocumentCommand{\resredruleabs} {s}{\IfBooleanF{#1}{\pars}{\resredsym_\lambda}}
\NewDocumentCommand{\resredruleappl}{s}{\IfBooleanF{#1}{\pars}{\resredsym_{appL}}}
\NewDocumentCommand{\resredruleappr}{s}{\IfBooleanF{#1}{\pars}{\resredsym_{appR}}}
\NewDocumentCommand{\resredrulebag} {s}{\IfBooleanF{#1}{\pars}{\resredsym_\oc}}
\NewDocumentCommand{\resredrulestrl}{s}{\IfBooleanF{#1}{\pars}{\resredsym_{\cons L}}}
\NewDocumentCommand{\resredrulestrr}{s}{\IfBooleanF{#1}{\pars}{\resredsym_{\cons R}}}
\NewDocumentCommand{\Resredrulebeta}{s}{\IfBooleanF{#1}{\pars}{\Resredsym_\beta}}
\NewDocumentCommand{\Resredruleabs }{s}{\IfBooleanF{#1}{\pars}{\Resredsym_\lambda}}
\NewDocumentCommand{\Resredruleappl}{s}{\IfBooleanF{#1}{\pars}{\Resredsym_{appL}}}
\NewDocumentCommand{\Resredruleappr}{s}{\IfBooleanF{#1}{\pars}{\Resredsym_{appR}}}
\NewDocumentCommand{\Resredrulebag }{s}{\IfBooleanF{#1}{\pars}{\Resredsym_\oc}}
\NewDocumentCommand{\Resredrulestrl}{s}{\IfBooleanF{#1}{\pars}{\Resredsym_{\cons L}}}
\NewDocumentCommand{\Resredrulestrr}{s}{\IfBooleanF{#1}{\pars}{\Resredsym_{\cons R}}}
\NewDocumentCommand{\testresredsym}{}{\tau}
\NewDocumentCommand{\testresred}{}{\mapsto_{\testresredsym}}
\NewDocumentCommand{\stestresred}{}{\to_{\testresredsym}}
\NewDocumentCommand{\cork}{ s O{} m}
    {\tau\vpars{#1}{#2}{#3}}
\NewDocumentCommand{\uncork}{ s O{} m}
    {\bar\tau\vpars{#1}{#2}{#3}}
\NewDocumentCommand{\testcat}{}{\mid}
\NewDocumentCommand{\extToTest}{sO{}m}{\vpars{#1}{#2}{#3}^{\triangleleft}}
\NewDocumentCommand{\testToExt}{sO{}m}{\widetilde{\vpars{#1}{#2}{#3}}}
\NewDocumentCommand{\extIsTest}{sO{}msO{}m}{\vpars{#1}{#2}{#3}\triangleleft\vpars{#4}{#5}{#6}}
\NewDocumentCommand{\testEq}{}{\sim}
\NewDocumentCommand{\NF}{O{} m}{\mathcal{N}\pars[#1]{#2}}
\NewDocumentCommand{\HOf}{O{} m}{\mathcal{H}\pars[#1]{#2}}
\NewDocumentCommand{\HkOf}{m O{} m}{\mathcal{H}^{#1}\pars[#2]{#3}}
\NewDocumentCommand{\HROf}{O{} m}{\mathcal{H}_{\Resredsym}\pars[#1]{#2}}
\NewDocumentCommand{\HRkOf}{m O{} m}{\mathcal{H}_{\Resredsym}^{#1}\pars[#2]{#3}}
\NewDocumentCommand{\Infle}{}  {\le_\bot}
\NewDocumentCommand{\BT}{O{} m}{\mathcal{B}\pars[#1]{#2}}
\NewDocumentCommand{\BTApprox}{O{} m}{\mathcal{B}_f\pars[#1]{#2}}
\NewDocumentCommand{\Bapp}{m} {{τ}_{#1}}
\NewDocumentCommand{\Bsub}{mm}{{σ}_{#2}^{#1}}
\NewDocumentCommand{\Brot}{mm}{{σ}_{#2}^{#1}}
\NewDocumentCommand{\transf}{sO{}mm}{\vpars{#1}{#2}{#3}#4}
\newcommand{\coefA}{α}
\newcommand{\varA}{x}
\newcommand{\varB}{y}
\newcommand{\varC}{z}
\NewDocumentCommand{\svarA}{o}{\vec{\varA}\IfNoValueF{#1}{(#1)}}
\NewDocumentCommand{\svarB}{o}{\vec{\varB}\IfNoValueF{#1}{(#1)}}
\NewDocumentCommand{\svarC}{o}{\vec{\varC}\IfNoValueF{#1}{(#1)}}
\newcommand{\vtermA}{m}
\newcommand{\vtermB}{n}
\newcommand{\vtermC}{p}
\newcommand{\vtermsA}{M}
\newcommand{\vtermsB}{N}
\newcommand{\vtermsC}{P}
\newcommand{\vtermvA}{\vtermsA}
\newcommand{\vtermvB}{\vtermsB}
\newcommand{\vtermvC}{\vtermsC}
\newcommand{\btermA}{a}
\newcommand{\btermB}{b}
\newcommand{\btermC}{c}
\newcommand{\btermsA}{A}
\newcommand{\btermsB}{B}
\newcommand{\btermsC}{C}
\newcommand{\btermvA}{\btermsA}
\newcommand{\btermvB}{\btermsB}
\newcommand{\btermvC}{\btermsC}
\newcommand{\bagA}{\bag{\vtermA}}
\newcommand{\bagB}{\bag{\vtermB}}
\newcommand{\bagC}{\bag{\vtermC}}
\newcommand{\bagsA}{\bag{\vtermsA}}
\newcommand{\bagsB}{\bag{\vtermsB}}
\newcommand{\bagsC}{\bag{\vtermsC}}
\newcommand{\bagvA}{\bagsA}
\newcommand{\bagvB}{\bagsB}
\newcommand{\bagvC}{\bagsC}
\newcommand{\strA}{\seq{\vtermA}}
\newcommand{\strB}{\seq{\vtermB}}
\newcommand{\strC}{\seq{\vtermC\kern1pt}}
\newcommand{\strsA}{\seq{\vtermsA}}
\newcommand{\strsB}{\seq{\vtermsB}}
\newcommand{\strsC}{\seq{\vtermsC}}
\newcommand{\strvA}{\strsA}
\newcommand{\strvB}{\strsB}
\newcommand{\strvC}{\strsC}
\newcommand{\termA}{u}
\newcommand{\termB}{v}
\newcommand{\termC}{w}
\newcommand{\termsA}{U}
\newcommand{\termsB}{V}
\newcommand{\termsC}{W}
\newcommand{\termvA}{\termsA}
\newcommand{\termvB}{\termsB}
\newcommand{\termvC}{\termsC}
\newcommand{\hexprA}{e}
\newcommand{\hexprB}{f}
\newcommand{\hexprC}{g}
\newcommand{\hexprsA}{E}
\newcommand{\hexprsB}{F}
\newcommand{\hexprsC}{G}
\newcommand{\hexprvA}{\hexprsA}
\newcommand{\hexprvB}{\hexprsB}
\newcommand{\hexprvC}{\hexprsC}
\newcommand{\exprA}{q}
\newcommand{\exprB}{r}
\newcommand{\exprC}{s}
\newcommand{\exprsA}{Q}
\newcommand{\exprsB}{R}
\newcommand{\exprsC}{S}
\newcommand{\exprvA}{\exprsA}
\newcommand{\exprvB}{\exprsB}
\newcommand{\exprvC}{\exprsC}
\newcommand{\ltermA}{M}
\newcommand{\ltermB}{N}
\newcommand{\ltermC}{P}
\newcommand{\ltermsA}{\seq{\ltermA}}
\newcommand{\ltermsB}{\seq{\ltermB}}
\newcommand{\ltermsC}{\seq{\ltermC}}
\newcommand{\rtypeA}{α}
\newcommand{\rtypeB}{β}
\newcommand{\rtypeC}{γ}
\newcommand{\rtermA}{ρ}
\newcommand{\rtermB}{σ}
\newcommand{\rtermC}{τ}
\newcommand{\rbagA}{\bag{\rtypeA}}
\newcommand{\rbagB}{\bag{\rtypeB}}
\newcommand{\rbagC}{\bag{\rtypeC}}
\newcommand{\rstrA}{\seq{\rtypeA}}
\newcommand{\rstrB}{\seq{\rtypeB}}
\newcommand{\rstrC}{\seq{\rtypeC}}
\newcommand{\rctxA}{Γ}
\newcommand{\rctxB}{Δ}
\newcommand{\rctxC}{Φ}
\newcommand{\fbtA}{A}
\newcommand{\stypeA}{α}
\newcommand{\stypeB}{β}
\newcommand{\stypeC}{γ}
\newcommand{\dstrA}{\seq{k}}
\newcommand{\dstrB}{\seq{l}}
\NewDocumentCommand{\tctxA}{o}{C[\IfNoValueTF{#1}{\,}{#1}]}
\newcommand{\btrfA}{\tau}
\newcommand{\btrfB}{\sigma}
\newcommand{\pol}{\mathsf{pol}}
\newcommand{\imc}{\rightarrowtriangle}
\newcommand{\ev}[1]{|#1|}
\newcommand{\tensor}{\otimes}
\newcommand{\bigtensor}{\bigotimes}
\newcommand{\tto}{\Rightarrow}
\newcommand{\iso}{\cong}
\newcommand{\qu}{q}
\newcommand{\fold}{\mathsf{fold}}
\newcommand{\unfold}{\mathsf{unfold}}
\newcommand{\curry}{\mathsf{curry}}
\newcommand{\uncurry}{\mathsf{uncurry}}
\newcommand{\pack}{\mathsf{pack}}
\newcommand{\unpack}{\mathsf{unpack}}
\newcommand{\fun}{\mathsf{fun}}
\newcommand{\unfun}{\mathsf{unfun}}
\newcommand{\conf}{\mathcal{C}}
\newcommand{\pconf}{\mathcal{C}_\bullet}
\newcommand{\display}{\partial}
\newcommand{\init}{\mathsf{init}}
\newcommand{\sym}{\cong}
\newcommand{\pos}{\mathsf{Pos}}
\newcommand{\ppos}{\mathsf{Pos}_\bullet}
\newcommand{\ic}[1]{\overline{#1}}
\newcommand{\rep}[1]{\underline{#1}}
\newcommand{\deseq}[1]{\llparenthesis #1 \rrparenthesis}
\newcommand{\Aug}{\mathsf{Aug}}
\newcommand{\Isog}{\mathsf{Isog}}
\newcommand{\pIsog}{\Isog_\bullet}
\NewDocumentCommand{\confOf}{O{}m}{\conf\pars[#1]{#2}}
\NewDocumentCommand{\pconfOf}{O{}m}{\pconf\pars[#1]{#2}}
\NewDocumentCommand{\posOf}{O{}m}{\pos\pars[#1]{#2}}
\NewDocumentCommand{\pposOf}{O{}m}{\ppos\pars[#1]{#2}}
\NewDocumentCommand{\IsogOf}{O{}m}{\Isog\pars[#1]{#2}}
\NewDocumentCommand{\pIsogOf}{O{}m}{\pIsog\pars[#1]{#2}}
\newcommand{\U}{\mathbf{U}}
\newcommand{\emptyconf}{\varnothing}
\newcommand{\emptypos}{\ic{\emptyconf}}
\newcommand{\emptyaug}{\emptyconf}
\newcommand{\emptyisog}{\ic{\emptyconf}}
\newcommand{\q}{\mathsf{q}}
\newcommand{\p}{\mathsf{p}}
\newcommand{\bq}{\mathbf{q}}
\newcommand{\bp}{\mathbf{p}}
\newcommand{\x}{\mathsf{x}}
\newcommand{\y}{\mathsf{y}}
\newcommand{\z}{\mathsf{z}}
\colorlet{termA} {blue!70!green}
\colorlet{termB} {red!80!white}
\colorlet{termB'}{termB!50!white}
\colorlet{termC} {termB!35!white}
\colorlet{termC'}{termB!20!white}
\colorlet{focus} {violet!80!red}
\tikzset{
    arbre/.style = { 
        shape = isosceles triangle,
        isosceles triangle apex angle=80,
        shape border rotate = 90,
        draw,
    },
    foret/.style = { 
        shape = trapezium,
        draw,
    },
    terme/.style={fill opacity=.25,text opacity=1},
    arbre infini/.style={arbre,shape = open isosceles triangle,path fading=south},
    triangle etroit/.style={isosceles triangle apex angle=60},
    triangle large/.style={isosceles triangle apex angle=100},
    termA/.style= {terme,fill=termA, draw=termA},
    termB/.style= {terme,fill=termB, draw=termB},
    termB'/.style={terme,fill=termB',draw=termB'},
    termC/.style= {terme,fill=termC, draw=termC},
    termC'/.style= {terme,fill=termC', draw=termC'},
}
\begin{document}

\maketitle

\begin{abstract}

We introduce a calculus of extensional resource terms.
These are resource terms \emph{à la} Ehrhard--Regnier,
but in infinitely $η$-long form.
The calculus still retains a finite syntax and dynamics:
in particular, we prove strong confluence and normalization. 

Then we define an extensional version of Taylor expansion, mapping ordinary
$λ$-terms to (possibly infinite) linear combinations of extensional resource terms:
like in the ordinary case,
the dynamics of our resource calculus allows us to simulate the $β$-reduction of $λ$-terms;
the extensional nature of this expansion shows in the fact that we are also
able to simulate $η$-reduction.

In a sense, extensional resource terms contain a language of
finite approximants of Nakajima trees, much like ordinary resource terms can be
seen as a richer version of finite Böhm trees.
We show that the equivalence induced on $λ$-terms by the normalization
of extensional Taylor-expansion is nothing but $\Hs$, the greatest consistent
sensible $λ$-theory -- which is also the theory induced by Nakajima trees.
This characterization provides a new, simple way to exhibit models of $\Hs$:
it becomes sufficient to model the extensional resource calculus and its
dynamics.

The extensional resource calculus moreover allows us to recover, in an untyped
setting, a connection between Taylor expansion and game semantics
that was previously limited to the typed setting.
Indeed, simply typed, $η$-long, $β$-normal resource terms are known to be
in bijective correspondence with plays in the sense of Hyland-Ong game
semantics, up to Melliès' homotopy equivalence.
Extensional resource terms are the appropriate counterpart of $η$-long resource
terms in an untyped setting: we spell out the bijection between normal
extensional resource terms and isomorphism classes of augmentations (a
canonical presentation of plays up to homotopy) in the universal arena.

\end{abstract}

\section{Introduction}

The Taylor expansion of $λ$-terms has profoundly renewed the approximation
theory of the $λ$-calculus by providing a quantitative alternative to order
theoretic approximation techniques, the latter being famously embodied in the
notion of Böhm tree~\cite{DBLP:books/daglib/0067558}:
a key result of Ehrhard and Regnier’s seminal series of
papers~\cite{DBLP:journals/tcs/EhrhardR08,DBLP:conf/cie/EhrhardR06}
is the fact that the normal form of the Taylor expansion of a $λ$-term
is the Taylor expansion of its Böhm tree.
Taylor expansion can thus be seen as mediating
between the potentially infinite dynamics of finite \(λ\)-terms,
and the static but potentially infinite Böhm trees.
In order to expose the context and motivations of our contributions, we find
useful to first review these notions, only assuming knowledge of the ordinary
$λ$-calculus.

\subsection{Böhm trees and ordinary Taylor expansion}

The results we present in this subsection are digested from well-established
literature.\footnote{
For a comprehensive treatment of the theory of Böhm trees, the reader may refer
to classic textbooks such as Barendregt's~\cite{DBLP:books/daglib/0067558} or
Krivine's~\cite{krivine}.
The first chapters of Barendregt and Manzonetto's
\emph{Satellite}~\cite[esp.\ Section 2.3]{DBLP:books/cp/BarendregtM22}
offer a modern, self-sufficient survey of the theory.
Ehrhard and Regnier obtained their seminal commutation theorem, relating Taylor
expansion with Böhm trees via normalization (\cref{eqn:TayExpBT} below),
by combining the results of two
papers~\cite{DBLP:journals/tcs/EhrhardR08,DBLP:conf/cie/EhrhardR06}.
Both papers involve considerable technical developments to unveil deep,
distinctive properties of the Taylor expansion of $λ$-terms
(a uniformity property of the support of Taylor expansion, an explicit formula
for coefficients, a precise connection with execution in an abstract machine)
of which the commutation theorem is but a consequence.
The third author showed that the same theorem could be established
in a more direct fashion, by simulating \(β\)-reduction through Taylor
expansion~\cite{DBLP:journals/lmcs/Vaux19}:
although we do not even sketch a proof of the commutation theorem,
our exposition of ordinary Taylor expansion is inspired by that latter route,
as we will leverage similar techniques in our treatment of the extensional case.
}
For the reader discovering one of the subjects, or both, it can serve as 
a very quick and opinionated survey.
And for the reader versed in both subjects, as well as for the newcomer,
we hope it will convey intuitions that they can advantageously summon up
when we turn to the main matter of the paper.
The expert reader might still prefer to jump directly to 
\cref{section:intro:eta}, where we discuss the literature,
and some less established folklore, about extensionality 
in relation to Taylor expansion;
or even to \cref{section:intro:contrib}, where we outline our contributions.

\paragraph{Böhm trees.}
We can always write a $λ$-term as
\(\ltermA=\labs{\varA_1}{\cdots\labs{\varA_k}{\appl{\ltermA'}{\ltermB_1\cdots\ltermB_l}}}\),%
\footnote{
  We use standard notational conventions to avoid
  repetition of parentheses:
  application has precedence over abstraction,
  and we associate applications on the left.
  So
  $\labs{\varA_1}{\cdots\labs{\varA_k}{\appl{\ltermA'}{\ltermB_1\cdots\ltermB_l}}}$
  should be read as
  $\labs{\varA_1}*[\big]{\cdots\labs{\varA_k}*[\big]{\appl*{\appl{\ltermA'}{\ltermB_1}}{\cdots\ltermB_l}}\cdots}$.
}
where:
\begin{itemize}
  \item either $\ltermA'=\varB$ is the \definitive{head variable} of
    $\ltermA$, and then $\ltermA$ is in \definitive{head normal form};
  \item or $\ltermA'=\appl*{\labs{\varC}{\ltermC}}{\ltermB_0}$
    is the \definitive{head redex} of $\ltermA$,
    in which case the \definitive{head reduction strategy}
    deterministically reduces $\ltermA$ to
    $\HOf{\ltermA}\eqdef\labs{\varA_1}{\cdots\labs{\varA_k}{\appl{\subst{\ltermC}{\varC}{\ltermB_0}}{\ltermB_1\cdots\ltermB_l}}}$,
    where $\subst{\ltermC}{\varC}{\ltermB_0}$ denotes the usual capture avoiding substitution.
\end{itemize}

Head reduction plays a central rôle in the theory, chiefly because 
head normalizable terms are exactly those terms that are \definitive{solvable}:
informally, a term $\ltermA$ is solvable when it can interact with its
evaluation context via normalization;
a possible definition is to require the existence of a context
$\tctxA$ of the shape
$\appl*{\labs{\varC_1}{\cdots\labs{\varC_m}{[\,]}}}{\ltermC_1\cdots\ltermC_n}$
such that $\tctxA[\ltermA]$ $β$-normalizes to the identity term
$\labs{\varA}{\varA}$.
\begin{fact}
  \label{fact:solvability}
  The following three properties are equivalent:
  \begin{enumerate}[label=(\roman*)]
    \item the sequence of head reductions starting from $\ltermA$ is finite
      ($\ltermA$ is \definitive{head normalizable});
    \item $\ltermA$ is $β$-equivalent to a head normal form;
    \item $\ltermA$ is solvable.
  \end{enumerate}
\end{fact}
Conversely, an \definitive{unsolvable} term is one whose structure cannot be probed
by the environment via normalization: applying an unsolvable term to
an argument, or substituting a variable in that term with any other term, will
never yield a head normal form (let alone a normal form).

\begin{sidefigure}
  \includegraphics[page=1,width=.55\textwidth]{figures.pdf}
    \caption{Shape of a (non-\(\bot\)) Böhm tree}
    \label{fig:bohm}
\end{sidefigure}    
\begin{csidefigure}
   \includegraphics[page=2,width=.45\textwidth]{figures.pdf}
    \caption{Shape of a (non-\(\bot\)) approximant}
    \label{fig:approx}
\end{csidefigure}

The \definitive{Böhm tree} of a term $\ltermA$ is then a possibly infinite tree
$\BT{\ltermA}$, defined coinductively:
\begin{itemize}
  \item if $\ltermA$ head normalizes to 
    $\labs{\varA_1}{\cdots\labs{\varA_k}{\appl{\varB}{\ltermB_1\cdots\ltermB_l}}}$
    then 
    $\BT{\ltermA}\eqdef
    \labs{\varA_1}{\cdots\labs{\varA_k}{\appl{\varB}{\BT{\ltermB_1}\cdots\BT{\ltermB_l}}}}$
    which we consider as a tree whose root is labelled with the abstractions
    and head variable, and with $l$ immediate subtrees, depicted as in
    \cref{fig:bohm}
    (we leave the bottom of each triangle open to indicate that the
    tree is possibly infinite);
  \item if $\ltermA$ is unsolvable, then 
    $\BT{\ltermA}$ is reduced to a leaf denoted $\bot$.
\end{itemize}
Like $λ$-terms, Böhm trees are considered up to $α$-equivalence.\footnote{
  Although they are possibly infinite trees,
  the set of free variables of any subtree of $\BT{\ltermA}$ is always finite,
  so this poses no particular difficulty.
}
\begin{fact}
Writing $\ltermA\eqBT\ltermA'$ when $\BT{\ltermA}=\BT{\ltermA'}$,
we obtain a $λ$-theory, \emph{i.e.}~a congruence on $λ$-terms
containing $β$-conversion.
This $λ$-theory is moreover \definitive{sensible}: it equates all unsolvable terms.
\end{fact}

It turns out that the only difficult part in establishing 
the previous fact is to show that $\eqBT$ is compatible with application:
if $\ltermA\eqBT\ltermA'$ and $\ltermB\eqBT\ltermB'$ then
$\appl{\ltermA}{\ltermB}\eqBT\appl{\ltermA'}{\ltermB'}$,
which amounts to showing that $\BT{\appl{\ltermA}{\ltermB}}$ is determined
by the sole information of $\BT{\ltermA}$ and $\BT{\ltermB}$.
A standard route to establish the compatibility of $\eqBT$ with syntactic
constructs is to rely on \definitive{finite approximants} of Böhm trees.
The latter are particular \(β\)-normal terms of $\BotTerms$, the $λ$-calculus
augmented with the “undefined” constant $\bot$, whose shape follows that of
Böhm trees:
\begin{itemize}
  \item $\bot$ is an approximant;
  \item if $\fbtA_1,\dotsc,\fbtA_l$ are approximants, then so is
    $\labs{\varA_1}{\cdots\labs{\varA_k}{\appl{\varB}{\fbtA_1\cdots\fbtA_l}}}$,
    which we depict as in \cref{fig:approx}
    (we use closed triangles for finite trees).
\end{itemize}

These approximants are thus both terms of $\BotTerms$
and finite Böhm-like trees (\emph{i.e.}\ trees with internal nodes as 
in \cref{fig:bohm}, and \(\bot\) leaves).
The \definitive{information order} $\Infle$ is defined both on $\BotTerms$ and on Böhm-like trees,
as the partial order such that $\ltermA\Infle\ltermB$ when $\ltermA$ is
obtained from $\ltermB$ by replacing any number of subterms with $\bot$
(possibly infinitely many in case $\ltermA$ is an infinite Böhm-like tree).
The set of finite approximants of a $λ$-term $\ltermA$ is then 
\[\BTApprox{\ltermA}=\set{\fbtA\st \exists \ltermA'\eqBeta\ltermA,\, \fbtA\Infle\ltermA'}\,.\]
The \emph{syntactic approximation}
theorem~\cite[Theorem 2.32]{DBLP:books/cp/BarendregtM22} states that
$\BT{\ltermA}$ is nothing but the supremum (in the directed-complete partial
order of Böhm-like trees equipped with $\Infle$) of $\BTApprox{\ltermA}$.
And the \emph{syntactic continuity}
theorem~\cite[Proposition 2.34]{DBLP:books/cp/BarendregtM22} establishes that 
the notion of approximant is compatible with syntactic constructs:
given a term $\ltermA$ and a context $\tctxA$, $\BTApprox{\tctxA[\ltermA]}$
depends only on $\BTApprox{\ltermA}$ and $\tctxA$.\footnote{
  Another well-known route to show that $\eqBT$ is a congruence,
  also discussed in the \emph{Satellite} \cite{DBLP:books/cp/BarendregtM22},
  is via infinitary $λ$-calculi:
  in some sense (that can be made formal \cite{DBLP:journals/tcs/KennawayKSV97}),
  $\BT{\appl{\ltermA}{\ltermB}}$ is the normal form,
  for an infinitary extension of \(β\)-reduction, of the application
  $\appl{\BT{\ltermA}}{\BT{\ltermB}}$.
}

\paragraph{The resource calculus.}
We have seen that the approximants associated with the Böhm tree interpretation
can be considered as partial $λ$-terms in normal form.
By contrast, the target of the Taylor expansion of $λ$-terms is supported by a
language of approximants called \definitive{resource terms}, that are
not necessarily normal, but whose reduction behaves linearly.
These are just like ordinary $λ$-terms, except for the application constructor:
a resource term $\vtermA$ is applied not just to one argument, but to a
bag (a finite multiset) of arguments $\bagB=\mset{\vtermB_1,\dotsc,\vtermB_l}$,
yielding a new term $\appl{\vtermA}{\bagB}$, which we may depict as 
in \cref{fig:resource:appl}
(we use trapezia for forests, representing bags).

\begin{sidefigure}
   \includegraphics[page=3,width=0.3\textwidth]{figures.pdf}
    \caption{Depiction of \(\appl{\vtermA}{\bagB}\)}
    \label{fig:resource:appl}
    \end{sidefigure}
    \begin{sidefigure}
       \includegraphics[page=4,width=0.4\textwidth]{figures.pdf}
    \caption{Shape of a normal resource term}
    \label{fig:resource:normal}
    \end{sidefigure}

Resource terms retain a dynamics, induced by a \emph{linear} variant 
of $β$-reduction: if $\bagB={\mset{\vtermB_1,\dotsc,\vtermB_l}}$
and $\varA_1,\dotsc,\varA_k$ enumerate the 
occurrences of $\varA$ in $\vtermA$,
then $\appl*{\labs{\varA}{\vtermA}}{\bagB}$
reduces to $\rsubst{\vtermA}{\varA}{\bagB}$,
the \definitive{resource substitution} of $\bagB$ for $\varA$ in $\vtermA$,
defined as the finite sum of resource terms
\[
\sum_{σ\in\Perms k}
\subst{\vtermA}{\varA_{σ(1)},\dotsc,\varA_{σ(k)}}{\vtermB_1,\dotsc,\vtermB_k}
\]
in case \(k=l\)
(\(\Perms k\) denotes the set of permutations of \(\set{1,\dotsc,k}\)),
or the empty sum \(0\) in case \(k\not=l\):
each summand of \(\rsubst{\vtermA}{\varA}{\bagB}\)
is the result of a one-to-one substitution of the elements of \(\bagB\) (taking
multiplicities into account) for the occurrences of \(\varA\) in \(\vtermA\).
The dynamics thus involves finite formal sums of expressions,
and all syntactic constructs are extended to sums by linearity:
e.g., application is bilinear
\(\appl*{\sum_{i\in I}\vtermA_i}*{\sum_{j\in J}\bagB_j}
\eqdef \sum_{i\in I}\sum_{j\in J}\appl{\vtermA_i}{\bagB_j}\).
The base case of reduction is then extended to a \definitive{resource
reduction} relation on finite sums of resource terms by allowing to fire a
redex in any context and inside sums:
for instance, if \(\vtermA\) reduces to \(\vtermsA'=\sum_{i\in I}\vtermA'_i\),
then \(\labs{\varA}{\vtermA}\) reduces to
\(\labs{\varA}{\vtermsA'}=\sum_{i\in I}\labs{\varA}{\vtermA'_i}\),
and \(\vtermA+\vtermsB\) reduces to \(\vtermsA'+\vtermsB\),
for any finite sum \(\vtermsB\).
Again, reducing any redex \(\appl*{\labs{\varA}{\vtermA}}{\bagB}\)
such that the cardinality of \(\bagB\) does not match the number of occurrences
of \(\varA\) in \(\vtermA\) will yield \(0\):
by linearity, this will annihilate the summand containing this redex.

In any summand $\vtermC'$ of $\rsubst{\vtermA}{\varA}{\bagB}$, the elements of
\(\bagB\) are substituted for variable occurrences in \(\vtermA\) but never
duplicated:
it follows that the size (\emph{i.e.}\ the number of syntactic constructs) of
$\vtermC'$ is strictly smaller than that of the redex
$\appl*{\labs{\varA}{\vtermA}}{\bagB}$.
It is then easy to establish that:
\begin{fact}
  Resource reduction is confluent and strongly normalizing.
\end{fact}
So any finite sum of resource terms \(\vtermsA\) reduces to a unique normal
form \(\NF{\vtermsA}\), in such a way that \(\NF{\vtermsA}
=\sum_{i\in I}\NF{\vtermA_i}\) whenever \(\vtermsA=\sum_{i\in I}\vtermA_i\).
A normal resource term is necessarily of the shape
\(\labs{\varA_1}{\cdots\labs{\varA_k}{\appl{\varB}{\bagB_1\cdots\bagB_l}}}\)
-- see \cref{fig:resource:normal}

Observe that non-$\bot$ Böhm approximants are nothing but normal resource terms
with bags of size at most one:
more precisely, any non-$\bot$ approximant corresponds to a normal resource
term using the empty multiset for any $\bot$ subterm, and singleton multisets
for the non-$\bot$ subterms.
Bags of arbitrary size are nonetheless essential for the resource calculus
to also provide an approximation of $β$-reduction and normalization
via Taylor expansion, as we outline below.

\paragraph{Taylor expansion at work.}
The Taylor expansion $\TayExp{\ltermA}$ of a $λ$-term $\ltermA$
is the vector (\emph{i.e.}~the possibly infinite linear combination) of resource terms
inductively defined by:
\[
  \TayExp{\varA} \eqdef \varA
  \qquad
  \TayExp{\labs{\varA}{\ltermA}} \eqdef\labs{\varA}{\TayExp{\ltermA}}
  \qquad
  \TayExp{\appl{\ltermA}{\ltermB}}\eqdef
  \sum_{k\in\N}\frac{1}{k!}\appl{\TayExp{\ltermA}}{{\TayExp{\ltermB}}^k}
\]
where, again, syntactic constructs are extended to arbitrary
weighted sums of terms by linearity, and
\(\TayExp{\ltermB}^k\eqdef\mset{\TayExp{\ltermB},\dotsc,\TayExp{\ltermB}}\)
is a weighted sum of bags all of size $k$ -- and \(k!\) is the cardinality
of \(\Perms{k}\).%
\footnote{
  Note that the case of application in the definition of Taylor expansion is
  nothing but the usual formula defining the Taylor series of an infinitely
  differentiable map at $0$, provided one interprets the resource application
  $\appl{\vtermvA}{\mset{\vtermvB_1,\dotsc,\vtermvB_k}}$
  (where $\vtermvA,\vtermvB_1,\dotsc,\vtermvB_k$ denote $λ$-terms,
  or vectors of resource terms) as the application of the $k$-th
  derivative at $0$ of $\vtermvA$ to
  the tuple $\tuple{\vtermvB_1,\dotsc,\vtermvB_k}$,
  this application being $k$-linear and symmetric.
  The resource calculus is precisely the fragment of the differential
  $λ$-calculus \cite{DBLP:journals/tcs/EhrhardR03} supporting the target of
  this recursive Taylor expansion.
  We will briefly discuss this analytic interpretation again at 
  the end of this subsection.
}
Here we choose to dispense with the technicalities of dealing with infinite sums by
considering coefficients in the extended half line \([0,+\infty]\),
or more generally in any suitably complete semiring -- a precise
definition of the necessary structure will be recalled in 
\cref{section:resourcevectors}.

Given a vector of resource terms \(\vtermsA\),
it is often useful to consider the \definitive{promotion} of \(\vtermsA\),
which is the vector of bags defined as
\(\prom{\vtermsA}\eqdef \sum_{k\in\N}\frac{1}{k!}{\vtermsA}^k\).
In particular, the Taylor expansion of an application can be written as
\(\TayExp{\appl{\ltermA}{\ltermB}}=\appl{\TayExp{\ltermA}}{\prom{\TayExp{\ltermB}}}\)
(the application on the right hand side being that of the resource calculus).
The crucial feature of Taylor expansion is that it allows us to decompose the
usual substitution operation on \(λ\)-terms into resource substitution and promotion:
\begin{equation}
  \label{eqn:substitution}
  \TayExp{\subst{\ltermA}{\varA}{\ltermB}}=
  \rsubst{\TayExp{\ltermA}}{\varA}{\prom{\TayExp{\ltermB}}}
  \,.
\end{equation}
The proof of \cref{eqn:substitution} is by a simple induction on \(\ltermA\),
relying on basic combinatoric arguments in the case of an application
\cite[Lemmas 4.3 and 4.7]{DBLP:journals/lmcs/Vaux19}.\footnote{
  The original proof by Ehrhard and Regnier \cite[Theorem 32]{DBLP:journals/tcs/EhrhardR08}
  follows a more contorted path, because they insist on establishing an
  explicit formula for the coefficients of resource terms in Taylor expansions.
}
Now, we can consider the normal form of any vector as defined by linearity
\(\NF{\sum_{i\in I} \coefA_i\vtermA_i}\eqdef\sum_{i\in I} \coefA_i\NF{\vtermA_i}\)
(where $I$ is not necessarily finite), and then write $\ltermA\eqTay\ltermB$
when $\NF{\TayExp{\ltermA}}=\NF{\TayExp{\ltermB}}$.

Showing that the equivalence relation \(\eqTay\) is a \(λ\)-theory is easy.
Indeed, its compatibility with syntactic constructs follows from the
confluence property of resource reduction:
in particular, \(\NF{\appl{\vtermA}{\mset{\vtermB_1,\dotsc,\vtermB_k}}}\)
is also the normal form of 
\(\appl{\NF{\vtermA}}{\mset{\NF{\vtermB_1},\dotsc,\NF{\vtermB_k}}}\),
which ensures that 
\(\NF{\TayExp{\appl{\ltermA}{\ltermB}}}\)
is also the normal form of 
\( \appl{\NF{\TayExp{\ltermA}}}{\prom{\NF{\TayExp{\ltermB}}}} \),
thus settling the case of application straightforwardly.
Similarly, thanks to \cref{eqn:substitution},
\(\NF{\TayExp{\subst{\ltermA}{\varA}{\ltermB}}}\)
is the normal form of 
\(\rsubst{\TayExp{\ltermA}}{\varA}{\prom{\TayExp{\ltermB}}}\)
hence of 
\(\appl*{\labs{\varA}{\TayExp{\ltermA}}}{\prom{\TayExp{\ltermB}}}
=\TayExp{\appl*{\labs{\varA}{\ltermA}}{\ltermB}}\),
which ensures that \(\eqTay\) contains \(β\)-reduction.

Showing that \(\eqTay\) is also sensible provides a good example of
Taylor expansion at work.
More precisely, we show that:

\begin{lemma}
  \label{lemma:taylor:solvable}
  A \(λ\)-term \(\ltermA\) is head normalizable iff \(\NF{\TayExp{\ltermA}}\not=0\).
\end{lemma}

\begin{proof}
First observe that if \(\ltermA\) is in head normal form,
then \(\TayExp{\ltermA}\) contains a resource term of the shape
\(\labs{\varA_1}{\cdots\labs{\varA_k}{\appl{\varA}{\emptybag\cdots\emptybag}}}\),
which is normal, so \(\NF{\TayExp{\ltermA}}\not=0\).
Now, if \(\ltermA\) is head normalizable, then \(\ltermA\) reduces to some
head normal form \(\ltermA'\),
and then \(\ltermA\eqTay\ltermA'\) because \(\eqTay\) contains \(β\)-reduction,
so \(\NF{\TayExp{\ltermA}}=\NF{\TayExp{\ltermA'}}\not=0\).

For the reverse implication, we start by observing that Taylor expansion
commutes with head reduction, which we define on resource terms in the same way
as on \(λ\)-terms:
\[
\HOf{\labs{\varA_1}{\cdots\labs{\varA_k}{\appl*{\labs{\varB}{\vtermA}}{\bagB_0\cdots\bagB_l}}}}\eqdef
\labs{\varA_1}{\cdots\labs{\varA_k}{\appl{\rsubst{\vtermA}{\varB}{\bagB_0}}{\bagB_1\cdots\bagB_l}}}
\,.
\]
Then we define \(\HOf{\vtermvA}\) for any weighted sum \(\vtermvA\) 
of head reducible terms by linearity, and
\cref{eqn:substitution} gives
\(\TayExp{\HOf{\ltermA}}=\HOf{\TayExp{\ltermA}}\)
for any \(λ\)-term \(\ltermA\) not in head normal form.
Now, assuming \(\NF{\TayExp{\ltermA}}\not=0\), we can pick an element \(\vtermA\)
in the support of \(\TayExp{\ltermA}\) such that \(\NF{\vtermA}\not=0\),
and then we show by induction on the size of \(\vtermA\)
that \(\ltermA\) is head normalizable:
either \(\ltermA\) is already in head normal form;
or at least one resource term \(\vtermA'\) in \(\HOf{\vtermA}\)
is such that \(\NF{\vtermA}\not=0\), and we observe that \(\vtermA'\) is in
\(\HOf{\TayExp{\ltermA}}=\TayExp{\HOf{\ltermA}}\), which ensures
that \(\HOf{\ltermA}\) is head normalizable by induction hypothesis.
\end{proof}

Observe that the previous argument actually provides a proof of
the implication from (ii) to (i) in \cref{fact:solvability} --
this implication and the one from (iii) to (i) are the only ones that are not
easy consequences of the definitions.
This implication is classically proved by 
standardization~\cites[Corollary 11.4.8]{DBLP:books/daglib/0067558}[Corollary 2.7]{DBLP:journals/iandc/Takahashi95}
or reducibility techniques~\cite[Theorem 4.9]{krivine}.
This demonstrates a nice conceptual contribution of Taylor expansion:
instead of reasoning on reduction
paths, we can pick a well-chosen element of the Taylor expansion and use it as
a decreasing measure for a proof by induction, exploiting the fact that
reduction in the resource calculus reflects $β$-reduction.

\paragraph{Taylor expansion as an alternative to Böhm trees.}
The sensible $λ$-theories $\eqBT$ and $\eqTay$ actually coincide.
Indeed, it is straightforward to extend the definition of Taylor expansion to
Böhm-like trees, in such a way that \(\TayExp{\bot}=0\), and
\(\BTApprox{\ltermA}\) is the set of finite approximants
occurring (as resource terms) in \(\TayExp{\BT{\ltermA}}\):
\(\BT{\ltermA}\) is entirely determined by \(\TayExp{\BT{\ltermA}}\).
Ehrhard and Regnier’s commutation theorem~\cite[Corollary 1]{DBLP:conf/cie/EhrhardR06}
establishes the identity:
\begin{equation}
  \label{eqn:TayExpBT}
  \NF{\TayExp{\ltermA}}=\TayExp{\BT{\ltermA}}
\end{equation}
which ensures that \(\ltermA\eqTay\ltermA'\) iff \(\ltermA\eqBT\ltermA'\).
We do not develop the proof of \cref{eqn:TayExpBT}:
the reader may refer to Ehrhard and Regnier's papers for the original
proof~\cite{DBLP:journals/tcs/EhrhardR08,DBLP:conf/cie/EhrhardR06},
or to the arguably more direct approach by the third author~\cite{DBLP:journals/lmcs/Vaux19},
based on the simulation of \(β\)-reduction.
One can thus view Taylor expansion as a practical alternative to Böhm trees:
the normal form of Taylor expansion subsumes the approximation theory
determined by Böhm trees,
but Taylor expansion also provides non-normal approximants, together with an
analysis of \(β\)-reduction via resource reduction.

Barbarossa and Manzonetto~\cite{DBLP:journals/pacmpl/BarbarossaM20} have
demonstrated at length how to leverage this approach to revisit old
results and establish new ones, in a generic and principled way,
systematically reasoning inductively on the size (or the length of a particular
reduction path) of a well-chosen resource term.

It is worth noting that, as these authors focus on
the order-based approximation theory of the \(λ\)-calculus,
this latter work only relies on the qualitative version of Taylor expansion,
that is obtained by replacing \(\TayExp{\ltermA}\) with its support set --
or, equivalently, by taking scalar coefficients in the boolean semiring.
This is all the more justified because, by Ehrhard and Regnier's uniformity
results~\cite{DBLP:journals/tcs/EhrhardR08}, the Taylor expansion of a pure
\(λ\)-term is entirely characterized by its support.
However, the quantitative information of coefficients underpins the analytic
interpretation of Taylor expansion as a sum of iterated derivatives,
and the full (quantitative) Taylor expansion is strictly more
informative as soon as one departs from that uniform setting.

\paragraph{Beyond the pure \(λ\)-calculus and plain \(β\)-reduction.}

Indeed, one strength of Taylor expansion as a framework for programming language
semantics is its modularity, which is essentially inherited from its origins in
\emph{quantitative semantics},
as initiated by Girard~\cite{DBLP:journals/apal/Girard88} and later revisited
by Ehrhard in a typed setting~\cite{DBLP:journals/mscs/Ehrhard05}:
the basic idea of quantitative semantics is to interpret \(λ\)-terms as
generalized power series,
associated with analytic maps between spaces of some suitable category.
Although it will play no explicit rôle in the remainder of the paper,
this analytic interpretation was crucial for the design of the
differential $λ$-calculus and the Taylor expansion of $λ$-terms by Ehrhard and
Regnier:
these models satisfy the usual Taylor formula as an identity,
of which the Taylor expansion of programs can be understood as a syntactic,
computational counterpart.

It then becomes natural to account for various flavours of superposition of
programs via sums.
For instance in a discrete probabilistic setting, one can turn the
probabilistic choice \(\ltermA\oplus_p\ltermB\) (representing a choice between
\(\ltermA\) with probability \(p\) and \(\ltermB\) with probability \(1-p\))
into a weighted sum:
\(\TayExp{\ltermA \oplus_p \ltermB}\eqdef p\TayExp{\ltermA} +(1-p)\TayExp{\ltermB}\).
Dal Lago and Leventis have shown that, again, this corresponds to a notion of
probabilistic Böhm tree~\cite{DBLP:conf/rta/LagoL19}, via normalization:
it is notable that this extension is straightforward on the side of Taylor
expansion, whereas the development of an adequate notion of probabilistic Böhm
trees by Leventis required considerable technical
work~\cite{DBLP:conf/lics/Leventis18}.

Taylor expansion moreover enjoys a tight connection with linear
logic~\cite{DBLP:journals/tcs/Girard87}, which was also founded on quantitative
semantics:
Ehrhard's version of quantitative semantics~\cite{DBLP:journals/mscs/Ehrhard05}
is actually a denotational model of linear logic, and it is possible to
introduce a differential version of linear
logic~\cite{DBLP:journals/tcs/EhrhardR06}, together with a notion of Taylor
expansion which reflects the structure of the model, and refines the Taylor
expansion of \(λ\)-terms.
The paradigm of Taylor expansion can then be ported to various extensions or
variants of the \(λ\)-calculus, and more generally to systems that “play well”
with linear logic~\cite{DBLP:journals/entcs/Chouquet19,DBLP:conf/csl/ChouquetT20,
DBLP:conf/fscd/DufourM24},
at the price of designing a resource calculus providing a suitable
linearization of the source system.

It is reasonable to expect that such notions of Taylor expansion will
yield interesting and robust approximation theories via normalization.
For instance,
Kerinec, Manzonetto and Pagani~\cite{DBLP:journals/lmcs/KerinecMP20}
have followed this path precisely, in the case of the call-by-value
\(λ\)-calculus, and this guided their definition of an adequate notion of
call-by-value Böhm tree.

\subsection{Towards extensionality}\label{section:intro:eta}

Up to the present paper, one notable case falls outside of the scope that we
have just delineated: extensionality and the \(η\)-rule.
A \(λ\)-theory \(\sim\) is \definitive{extensional} when \(\ltermA\sim\ltermA'\)
as soon as, for each term \(\ltermB\), \(\appl{\ltermA}{\ltermB}\sim\appl{\ltermA'}{\ltermB}\).
Equivalently, \(\sim\) is extensional if it contains the \definitive{\(η\)-rule},
reducing \(\labs{\varA}{\appl{\ltermA}{\varA}}\) to \(\ltermA\)
when \(\varA\) is fresh, \emph{i.e.}\ not a free variable of \(\ltermA\).
The least extensional \(λ\)-theory is thus the reflexive, symmetric, and
transitive closure \(\eqBEta\) of the union of \(β\)- and \(η\)-reductions.

\paragraph{Extensionality via a global transformation.}
From the viewpoint of ordinary Taylor expansion, and in contrast with
\(β\)-reduction, \(η\)-reduction cannot be captured as a superposition of
independent reductions on resource terms.
Indeed, picking variables \(\varA\not=\varB\) and any \(λ\)-term \(\ltermA\),
observe that 
\(
  \TayExp{\labs{\varA}{\appl{\varB}{\ltermA}}}
  =
  \labs{\varA}{\appl{\varB}{\prom{\TayExp{\ltermA}}}} 
\)
always contains the resource term \(\labs{\varA}{\appl{\varB}{\emptybag}}\) as
a summand,
but \(\labs{\varA}{\appl{\varB}{\ltermA}}\) \(η\)-reduces to \(\varB\)
only in case \(\ltermA=\varA\).

Nevertheless,
Manzonetto and Ruoppolo~\cite{DBLP:journals/entcs/ManzonettoR14} introduced a
\emph{global} notion of \(η\)-reduction on sets of normal resource terms
(satisfying a technical condition), which they used to characterize Morris'
equivalence (the observational equivalence induced by \(β\)-normal forms).
More precisely:
they first consider the  support set of the normal form of ordinary Taylor
expansion;
then they apply a further step of \(η\)-normalization guided by 
the global structure of this set,
which yields a new set of ordinary resource terms, still in normal form;
and they prove that this construction induces the same equational theory as
Böhm trees up to countably many finitely nested \(η\)-expansions, 
which are known to capture Morris' equivalence~\cite{DBLP:conf/lambda/Hyland75}.

This process is limited to a qualitative setting, and
the \(η\)-rule is not reflected in resource reduction.
By contrast, in the present paper,
we enforce extensionality during Taylor expansion,
so that \(η\)-reduction is treated just like \(β\)-reduction during normalization,
all this in a quantitative setting.
In the typed case, our approach can be reduced to a well-known trick:
considering \(η\)-long forms.%
\footnote{
  For instance, the restriction of the resource calculus to \(η\)-long forms
  was leveraged by Tsukada \emph{et al.}~\cite{DBLP:conf/lics/TsukadaO16,
  DBLP:conf/lics/TsukadaAO17}, as well as the authors of the
  present paper~\cite{BPCVA25},
  in connection with game semantics.
  More basically, the fact that extensionality can be enforced in a typed
  setting by considering \(β\)-reduction on \(η\)-long forms is very standard
  knowledge.
}
We readily expose this piece of folklore as a stepping stone to the untyped
case.

\paragraph{The typed case.}
Consider simply typed \(λ\)-terms \emph{à la} Church,
where the grammar of types
(denoted by \(\stypeA\), \(\stypeB\), \(\stypeC\), …)
is inductively generated from a single base type \(\basetype\)
by the formation of arrow types \(\arrowtype{\stypeA}{\stypeB}\).
Each type \(\stypeA\) can be written uniquely as 
\(\stypeA = \arrowtype{\stypeB_1}*{\arrowtype{\cdots}*{\arrowtype{\stypeB_k}{\basetype}}\cdots} \):
\(k\) is the arity of \(\stypeA\), and we use the notation
\( \arrowtype{\tuple{\stypeB_1,\dotsc,\stypeB_k}}{\basetype} \)
in this case.
A typed term \(\ltermA\) is \definitive{\(η\)-long} if each occurrence of a
subterm with arrow type \(\arrowtype{\stypeA}{\stypeB}\) is either an
abstraction or applied to a subterm of type \(\stypeA\):
performing a step of typed \(η\)-expansion in \(\ltermA\) will always generate
a \(β\)-redex.
Equivalently, \(\ltermA\) is \(η\)-long if each occurrence \(\ltermA'\) of a variable or redex in
\(\ltermA\) is \definitive{fully applied}, \emph{i.e.}\ it occurs in a subterm
\(\appl{\ltermA'}{\ltermB_1\cdots\ltermB_k}\)
where \(k\) is the arity of the type of \(\ltermA'\).
In particular, each application is part of such a full application sequence.

Given a typed \(λ\)-term \(\ltermA\), an \definitive{\(η\)-long form} of \(\ltermA\)
is any \(η\)-expansion of \(\ltermA\) that is \(η\)-long.
We can always compute such an \(η\)-long form for a term of type \(\stypeA\),
by setting:
\[
  \etaExp{\varA}
  \eqdef
  \labs{\varB_{1}}{\cdots\labs{\varB_{k}}{
      \appl{\varA}{{\etaExp{\varB_{1}}}\cdots{\etaExp{\varB_{k}}}}
  }}
  \qquad
  \etaExp*{\labs{\varC}{\ltermA}}
  \eqdef
  \labs{\varC}{\etaExp{\ltermA}}
  \qquad
  \etaExp*{\appl{\ltermB}{\ltermC}}
  \eqdef 
  \labs{\varB_{1}}{\cdots\labs{\varB_{k}}{
    \appl{\etaExp{\ltermB}}
    {{\etaExp{\ltermC}}\,{\etaExp{\varB_{1}}}\cdots{\etaExp{\varB_{k}}}}
  }}
\]
where \(k\) is the arity of 
\( \stypeA=\arrowtype{\tuple{\stypeB_1,\dotsc,\stypeB_k}}{\basetype} \),
and each \(\varB_{i}\) is a fresh variable of type \(\stypeB_{i}\)
(this is a valid inductive definition because recursive calls are either on
immediate subterms or on variables of strictly smaller type).%
\footnote{
  This choice of \(η\)-long form may introduce redexes even if
  \(\ltermA\) is \(β\)-normal.
  One can avoid this defect by inspecting the head structure of
  \(\ltermA\) instead of its top-level constructor:
  we stick to this naïve version to ease the exposition here,
  but both approaches will be used in our treatment of untyped extensional
  Taylor expansion, in \cref{section:taylor}.
}
It is an easy exercise to check that:
if \(\ltermA\) \(βη\)-reduces to \(\ltermB\)
then \(\etaExp{\ltermA}\) \(β\)-reduces to \(\etaExp{\ltermB}\).
Defining \(\ExtTayExp{\ltermA}\eqdef\TayExp{\etaExp{\ltermA}}\),
we can thus leverage the already established results on Taylor expansion
to simulate both \(β\)- and \(η\)-reduction via resource reduction:
it becomes immediate that
\(\NF{\ExtTayExp{\ltermA}}=\NF{\ExtTayExp{\ltermB}}\) as soon as \(\ltermA\)
and \(\ltermB\) are \(βη\)-equivalent terms of the same type.

Interestingly, the typing and extensionality constraints we consider on 
\(η\)-long terms admit straightforward counterparts in resource terms.
Indeed, it is easy to adapt the simple type system to resource terms,
in such a way that the elements of \(\TayExp{\ltermA}\) are all typed resource
terms of the same type as \(\ltermA\).
Again, we say that a typed resource term is in \(η\)-long form if 
each occurrence of a subterm of arrow type \(\arrowtype{\stypeA}{\stypeB}\)
is either an abstraction, or applied to a bag of terms all of type
\(\stypeA\).

We can thus directly define the extensional Taylor expansion
of a typed \(λ\)-term as a vector of \(η\)-long resource terms of the same
type, by setting inductively:
\begin{align*}
  \ExtTayExp{\varA}
  & \eqdef
  \labs{\varB_{1}}{\cdots\labs{\varB_{k}}{
      \appl{\varA}{\prom{\ExtTayExp{\varB_{1}}}\cdots\prom{\ExtTayExp{\varB_{k}}}}
  }}
  \\
  \ExtTayExp{\labs{\varC}{\ltermA}}
  & \eqdef
  \labs{\varC}{\ExtTayExp{\ltermA}}
  \\
  \ExtTayExp{\appl{\ltermB}{\ltermC}}
  & \eqdef 
  \labs{\varB_{1}}{\cdots\labs{\varB_{k}}{
    \appl{\ExtTayExp{\ltermB}}
    {\prom{\ExtTayExp{\ltermC}}\,\prom{\ExtTayExp{\varB_{1}}}\cdots\prom{\ExtTayExp{\varB_{k}}}}
  }}
\end{align*}
where, again, \(k\) is the arity of the type, and each \(\varB_i\) is a fresh
variable of appropriate type.
Moreover, \(η\)-long resource terms are stable under resource reduction, so the
dynamics on which we rely is purely local, without any reference to a side condition
or global rewriting constraints.

\paragraph{Enforcing \(η\)-longness in the untyped setting.}

To guide the design of an extensional version of Taylor expansion, it is thus
essentially sufficient to rely on \(η\)-long terms…
The only issue is that there is no such thing in the untyped setting:
without typing constraint, it is always possible to 
\(η\)-expand a term without creating any \(β\)-redex.
For instance, given a variable \(\varA\),
we can consider iterated \(η\)-expansions of the head structure:
\[
  \varA
  \etarev
  \labs{\varB_{1}}{\appl{\varA}{\varB_{1}}}
  \etarev
  \labs{\varB_{1}}{\labs{\varB_{2}}{\appl{\varA}{\varB_{1}\,\varB_{2}}}}
  \etarev
  \cdots
  \etarev
  \labs{\varB_{1}}{\cdots\labs{\varB_{k}}{
      \appl{\varA}{\varB_{1}\cdots\varB_{k}}
  }}
\]
or even nested \(η\)-expansions of fresh variables:
\[
  \varA
  \etarev
  \labs{\varC_{1}}{\appl{\varA}{\varC_{1}}}
  \etarev
  \labs{\varC_{1}}{\appl{\varA}*{\labs{\varC_{2}}{\varC_{1}\,\varC_{2}}}}
  \etarev
  \cdots
  \etarev
  \labs{\varC_{1}}{\appl{\varA}*{\labs{\varC_{2}}{\appl{\varC_{1}}*{\cdots
        \labs{\varC_{k}}{\appl{\varC_{k-1}}}{\varC_{k}}
  \cdots}}}}
  \,.
\]

\begin{figure}
    \includegraphics[page=5,width=0.5\textwidth]{figures.pdf}
  \caption{Infinite \(η\)-expansion of a variable}
  \label{fig:idexp}
\end{figure}

\begin{cfigure}
    \includegraphics[page=6,width=0.75\textwidth]{figures.pdf}
  \caption{Shape of an infinitely \(η\)-long \(λ\)-term}
  \label{fig:etalong}
\end{cfigure}

We can nonetheless consider the limit of iterating 
the combination of those two processes, as given by an infinite tree
\(\etaExp{\varA}\) that we depict in \cref{fig:idexp},
where \(\svarB\) is a sequence \(\tuple{\varB_0,\varB_1,\dotsc}\) of fresh variables,
and each \(\etaExp{\varB_i}\) is recursively produced in the same way.
If we accept syntactic constructs with countable arity, we may thus write
\(\etaExp{\varA}=\labs{\svarB}{\appl{\varA}{\etaExp{\svarB\,}}}\)
where \(\etaExp{\svarB\,}\) denotes the sequence
\(\tuple{\etaExp{\varB_0},\etaExp{\varB_1},\cdots}\),
and we may understand \(\etaExp{\varA}\) as a kind of infinite term:
\(\labs{\varB_0}{\labs{\varB_1}{\cdots\appl{\varA}{\etaExp{\varB_0}\,\etaExp{\varB_1}\cdots}}}\),
where sequences of abstractions and applications account for countably iterated
head expansions, and recursive calls account for nesting.

More generally, one can consider an intuitive depiction of infinitely
\(η\)-long terms as given by infinite trees, as in \cref{fig:etalong},
where each \(\ltermB_i\) denotes recursively such a tree, and 
\(\ltermA\) is either a variable or a tree itself.
The idea is to ensure that the head of the term is fully applied to
countably many arguments, and this constraint is recursively applied both to
subterms and fresh variables.
In case \(\ltermA=\varA\) and \(k=l=0\), we recover 
the particular case of \(\etaExp{\varA}\).
Now, if we restrict to the case of \(\ltermA\) being a variable,
but extend the construction to allow for \(\bot\)-trees,
the objects we have just described are nothing but Nakajima trees
\cite{DBLP:conf/lambda/Nakajima75}, which are canonical
representatives of Böhm trees up to infinite \(η\)-expansion
\cite[Exercise 19.4.4]{DBLP:books/daglib/0067558}.

One might attempt to equip those infinitely \(η\)-long terms with an
infinitary dynamics, in the style of the infinitary
\(λ\)-calculus~\cite{DBLP:journals/tcs/KennawayKSV97}
-- note that the latter does not account for the application of a term to an
infinite sequence of arguments.
To our knowledge, however, this work has never been carried out, and it would
require to tackle a number of technical issues, 
among which having terms with infinitely many free variables is the least
problematic:
e.g., one also needs to consider countably iterated head reduction,
hence the simultaneous application of countably many substitutions,
which is carefully avoided in the usual infinitary \(λ\)-calculus
approach.

Fortunately, however, we will not need to follow that path:
we only relied on infinitely \(η\)-long \(λ\)-terms as a pedagogical detour,
preparing the reader for the introduction of extensional Taylor expansion,
whose target is supported by a syntax of infinitely \(η\)-long, yet finite,
resource terms.

\subsection{Our contributions}\label{section:intro:contrib}

In the present paper, we introduce a variant of Taylor expansion for 
pure, untyped \(λ\)-terms, in such a way that reduction in the 
associated resource calculus allows us to simulate both \(β\)- and
\(η\)-reduction.
We characterize the equational theory induced via normalization as the
maximal consistent and sensible \(λ\)-theory,
and apply this result to a particular relational model,
demonstrating how this extensional Taylor expansion 
can be leveraged similarly to ordinary Taylor expansion.
We moreover exhibit a precise correspondence between this framework and game
semantics.

\paragraph{The extensional resource calculus.}
With the intuitions and notations of the previous subsection,
one could try to naively apply ordinary Taylor expansion
to an infinitely \(η\)-long term that we may denote
\(
  \labs{\varC_1}{\dotsc{\labs{\varC_k}{\labs{\svarB}{\appl{\ltermA}{
      \ltermB_1\cdots
      \ltermB_l\,
      \etaExp{\varB_0}\,
      \etaExp{\varB_1}\cdots
  }}}}}
\)
or even
\(
  \labs{\svarB}{\appl{\ltermA}{
      \ltermB_1\cdots
      \ltermB_l\,
      \etaExp{\varB_{k}}\,
      \etaExp{\varB_{k+1}}\cdots
  }}
\),
up to \(α\)-equivalence.
In the support of this Taylor expansion, one should find resource terms such as:
\(
  \labs{\svarB}{\appl{\vtermA}{
      \bagB_1\cdots
      \bagB_l\,
      \bagC_0\,
      \bagC_1\cdots
  }}
\)
where each \(\bagB_i\) (resp.\ \(\bagC_j\)) is a bag of terms in the expansion
of \(\ltermB_i\) (resp.\ of \(\etaExp{\varB}\) for some variable \(\varB\)).
Such a term still retains infinite sequences of abstracted variables
and bags of arguments, but there are natural solutions to restrict
this syntax to a finite setting:
\begin{itemize}
  \item we consider \(\svarB\) as a single abstracted variable,
    that we will call a \definitive{sequence variable},
    and refer to the former variables \(\varB_j\) as derived objects;
  \item and we impose bag arguments to be ultimately empty,
    considering only sequences of bags of the shape
    \(\tuple{\bagB_1,\dotsc \bagB_l,\emptybag,\emptybag,\dotsc}\),
    that we will call \definitive{streams}.
\end{itemize}

An extensional resource term will then be \(
\labs{\svarB}{\appl{\vtermA}{\strB}} \) 
where, inductively, \(\vtermA\) is either an ordinary variable or a term
itself, and \(\strB\) is a stream of terms.
Note that we obtain for free that a stream of terms ultimately contains
approximations of successive abstracted variables,
just because empty bags satisfy this condition!
We may again depict such a term as in \cref{fig:resource:ext},
where \(\bagB_i\) is empty for every sufficiently large \(i\).

\begin{sidefigure}
   \includegraphics[page=7,width=.5\textwidth]{figures.pdf}
  \caption{Shape of an extensional resource term}
  \label{fig:resource:ext}
\end{sidefigure}

After preliminary definitions in \cref{section:preliminaries},
we detail the syntax of this extensional resource calculus
in \cref{section:calculus}.
We equip it with a reduction derived from that of the ordinary resource
calculus.
In particular, one can simultaneously fire the countable sequence of redexes at
the head of an expression like
\(\appl*{\labs{\svarA}{\vtermA}}{\strB}\) in a single \emph{full step},
to obtain a finite sum of strictly smaller terms:
this process itself is essentially finite, because the induced sequence of
resource substitutions is ultimately effectless --
replacing non-occurring variables with the elements of empty bags.
The obtained dynamics retains essential properties of resource reduction:
it is confluent in a strong sense,
and the size of terms is non-increasing under reduction, and even strictly
decreasing for full steps.
In particular, each term reduces to a unique normal form, which is a finite sum.

\paragraph{Extensional Taylor expansion.}
Then we turn our attention to vectors of extensional resource terms
in \cref{section:resourcevectors},
and show that ordinary substitution can be obtained as the composition
of resource substitution and promotion, establishing an analogue
of \cref{eqn:substitution} for arbitrary vectors (instead of Taylor
expansions only).
We moreover extend resource reduction to vectors, and show that it is
compatible with promotion.

We leverage these results in \cref{section:taylor}, where we define an
extensional version of Taylor expansion, mapping ordinary $λ$-terms to vectors
of extensional resource terms, subject to the identities:
\begin{align*}
  \ExtTayExp{\varA}
  =
  \labs{\svarB}{\appl{\varA}{\IdStrExp{\svarB}}}
  \qquad
  \ExtTayExp{\labs{\varC}{\ltermA}}
  =
  \labs{\varC}{\ExtTayExp{\ltermA}}
  \qquad
  \ExtTayExp{\appl{\ltermB}{\ltermC}}
  =
  \labs{\svarB}{
    \appl{\ExtTayExp{\ltermB}}
    {\prom{\ExtTayExp{\ltermC}}\cons\IdStrExp{\svarB}}
  }
\end{align*}
where \(\IdStrExp{\svarB}\) (resp.\ \(\prom{\ExtTayExp{\ltermC}}\cons\IdStrExp{\svarB}\))
is the vector of streams induced by the sequence
\(\tuple{\prom{\ExtTayExp{\varB_{0}}},\prom{\ExtTayExp{\varB_{1}}},\dotsc}\)
(resp.\ 
\(\tuple{\prom{
  \ExtTayExp{\ltermC}},\prom{\ExtTayExp{\varB_{0}}},\prom{\ExtTayExp{\varB_{1}}},\dotsc
}\)).
We show that this extensional Taylor expansion also enjoys a version of
\cref{eqn:substitution}, although as a reduction rather than as an identity
--  this is analogous to the fact that, even in a typed setting,
the terms \(\subst{\etaExp{\ltermA}}{\varA}{\etaExp{\ltermB}}\)
and \(\etaExp*{\subst{\ltermA}{\varA}{\ltermB}}\) might differ,
but the former \(β\)-reduces to the latter.
This allows us to simulate both \(β\)- and \(η\)-reduction.

\paragraph{A characterization of \(\Hs\).}
Given the constructions we have outlined, one can reasonably consider the
extensional resource calculus as a language of (non-necessarily normal)
finite approximants of Nakajima trees,
much like ordinary resource terms for Böhm trees.
We are indeed confident that an analogue of \cref{eqn:TayExpBT},
where one replaces ordinary Taylor expansion with extensional Taylor expansion,
and Böhm trees with Nakajima trees, could be established.
But our point is precisely that the technicalities of dealing with infinite
\(η\)-expansion in the already infinite Böhm trees can be avoided, and
that this kind of technology can profitably be replaced with 
Taylor expansion.

In support of this claim, we characterize the \(λ\)-theory \(\eqExtTay\) induced
by the normalization of extensional Taylor expansion, in \cref{section:Hs}.
That \(\eqExtTay\) is indeed a \(λ\)-theory follows from the inductive definition
of Taylor expansion and the simulation of \(β\)-reduction,
like in the ordinary case.
It is moreover extensional, thanks to the simulation of \(η\)-reduction,
and sensible, thanks to a variant of \cref{lemma:taylor:solvable},
that we establish essentially in the same way --
although, like for substitution, extensional Taylor expansion does not commute
with head reduction on the nose.

Finally, we show that \(\eqExtTay\) is nothing but $\Hs$, the greatest consistent
sensible $λ$-theory.
The proof is naturally based on a separability argument, showing that 
\(\eqExtTay\)-distinct terms can be separated by a context,
sending one of them to a head normalizable term,
and the other one to a non-solvable one.
Thanks to the properties of Taylor expansion, we are able to reason on 
the structure of resource terms, which allows us to adapt a well-known proof of
separability for \(η\)-distinct \(β\)-normal forms \cite[Chapter 5]{krivine}:
that we can do so, instead of having to reason on infinite objects, is a
testimony of the applicability of extensional Taylor expansion.

This characterization moreover allows us to revisit previous results about
\(\Hs\) -- or Nakajima trees, which are canonical representatives
for \(\Hs\) \cite[Exercise 19.4.4]{DBLP:books/daglib/0067558}.
E.g., to exhibit a model of $\Hs$, it becomes sufficient to provide a
model of the extensional resource calculus.
Illustrating this strategy, \cref{section:rel} gives a new proof of a result
by Manzonetto \cite{DBLP:conf/mfcs/Manzonetto09}:
$\Hs$ is the $λ$-theory induced by a well-chosen reflexive object \(\RelTypes\)
in the relational model of the simply typed $λ$-calculus
\cite{DBLP:conf/csl/BucciarelliEM07}.

\paragraph{Where this all comes from: Taylor expansion and game semantics.}
The present work was actually motivated by an ongoing effort to expose the close
connections between Taylor expansion and game semantics.
In a typed setting, $η$-long, $β$-normal resource terms were known to be in
bijective correspondence with plays in the sense of Hyland-Ong game semantics
\cite{DBLP:journals/iandc/HylandO00}, up to Opponent's scheduling of the
independent explorations of separate branches of the term, as formalized by
Melliès' homotopy equivalence on plays \cite{DBLP:journals/tcs/Mellies06}:
this correspondence was first unveiled by Tsukada and Ong
\cite{DBLP:conf/lics/TsukadaO16} via two bijections with particular
elements of the relational model of the simply typed \(λ\)-calculus;
and we later exhibited a direct correspondence, underlying a quantitative
denotational interpretation of (non-necessarily normal) resource terms
as strategies \cite{BPCVA25}.

To recast this correspondence in an untyped setting, we needed an untyped
analogue of $η$-long, $β$-normal resource terms: these are the normal forms 
of our extensional resource calculus.
The first sections of the paper make no reference to game semantics, as we
focus on developing the theory of extensional Taylor expansion,
and its applications.
Nonetheless, we dedicate \cref{section:gs} to spelling out the bijection
between normal extensional resource terms and (isomorphism classes of)
augmentations in the universal arena: augmentations were introduced by the
first two authors \cite{DBLP:conf/fscd/Blondeau-Patissier21}, as an alternative
presentation of plays up to homotopy; and the universal arena is the standard
interpretation of the “type” of pure, untyped $λ$-terms in game
semantics~\cite{KerNO02}.\footnote{
  Notably, Ker \emph{et al.}~\cite{KerNO02} have shown that this arena provides
  an extensional reflexive object in the category of innocent strategies, and
  that the induced interpretation of untyped $λ$-terms enjoys a strong
  connection with Nakajima trees.
  In particular, its $λ$-theory is also $\Hs$.
}

We do not go beyond this static correspondence:
as discussed in our concluding \cref{section:conclusion},
we leave for future work the definition of a denotational semantics of
extensional resource terms as strategies,
as well as its relationship with Taylor expansion.
In passing, we also establish a correspondence between \emph{positions}
of the universal arena, which represent states of computation (a position
essentially records the part of the arena that is explored by a play), and
\emph{relational types},
\emph{i.e.}\ the elements of the reflexive object \(\RelTypes\).
We moreover show that, through both correspondences,
the position reached by an augmentation is nothing but the (uniquely defined)
relational type of the associated normal term.

\paragraph{Comparison with the resource calculus with tests.}
It is notable that, in the introduction of their seminal paper
\cite{DBLP:conf/lics/TsukadaO16}, Tsukada and Ong claimed that their results
could be adapted to the untyped setting, relying on the
\emph{resource calculus with tests} of Bucciarelli, Carraro, Ehrhard and
Manzonetto~\cite{DBLP:journals/corr/abs-1209-2890}.
The latter is an extension of the ordinary resource calculus designed to
associate a syntactic counterpart to \emph{every} point of \(\RelTypes\):
from this, the authors derive a full abstraction result for the resource
calculus with tests, that they are able to lift to a version with promotion
(itself an extension of the differential
\(λ\)-calculus~\cite{DBLP:journals/tcs/EhrhardR03}).
Tsukada and Ong's claim was prompted by the fact that this calculus provides
constructions both for applying a term to a denumerable sequence of empty bags
(the \emph{cork} constructor \(\cork{\vtermA}\), yielding a \emph{test}),
and for abstracting over a denumerable sequence of fresh variables
(the dual \emph{uncork} constructor \(\uncork{\btermA}\), where \(\btermA\) is
a test, yielding a term).

It turns out, however, that this calculus is not readily fit for the task:
the original version of its language is too rich (it contains normal forms that
do not correspond to plays) so it must be restricted;
and at the same time its constructions for infinite sequences of abstractions,
and for applications to infinite sequences of bags are not canonical.
As a consequence, even though one can devise an appropriate notion of
\(η\)-longness in that setting,\footnote{
  To our knowledge, such a notion remained to be introduced before our own
  work: Bucciarelli \emph{et al.}\ did not discuss extensionality nor
  \(η\)-longness in the context of their calculus, as their interest was
  elsewhere.
}
the syntax still distinguishes between normal forms that represent the same
play up to homotopy.
For the same reasons, the resource calculus with tests is not an appropriate
target language for extensional Taylor expansion.
For the sake of comparison, we provide a brief account of the resource calculus
with tests in \cref{section:tests}, in light of the key features
of the extensional resource calculus.
In particular, we describe an \(η\)-long fragment of the former
and outline how one could recover the latter as a quotient.

\paragraph{How to read this paper.}
The paper is long, but it is structured in such a way that the reader can
browse through a selection of its content depending on their interests.

A programming language semanticist who is already convinced of the merits of
the Taylor expansion might be content as early as after reaching the end
of \cref{section:taylor}, where we show that extensional Taylor expansion is
compatible with both \(β\)- and \(η\)-reductions.
A \(λ\)-calculist in search of a practical alternative to Nakajima trees 
will want to continue with \cref{section:Hs}, and might also read 
\cref{section:rel} for an example of application.
Both profiles can dispense with \cref{section:gs} altogether.
In any case, we discourage the reader without any background in game semantics
to discover the subject with the present paper:
an introductory account of game semantics in relation to Taylor
expansion can rather be found in the above-cited works
\cite{DBLP:conf/lics/TsukadaO16,BPCVA25}.
On the other hand, a game semanticist might want to jump
to \cref{section:gs} just after getting acquainted with the extensional
resource calculus in \cref{section:calculus} (the correspondence between
positions and relational types additionally refers to the relational model of
the extensional resource calculus given \cref{section:rel:resource}, which
can be read independently from \crefrange{section:resourcevectors}{section:Hs}, as well as
from the rest of \cref{section:rel}).
We summarize those possible routes through the content of the paper in \cref{figure:map}.

All along \crefrange{section:calculus}{section:Hs}, we maintain our bias in
favour of the quantitative version of Taylor expansion.
We have already explained why we consider quantitative Taylor expansion as the
primitive notion, of which the qualitative version is a mere by-product
-- \emph{qui peut le plus, peut le moins}.
The coefficients of Taylor expansion,
as well as those generated by resource reduction,
are moreover relevant in connection with game semantics:
the correspondence between terms and strategies is
quantitative~\cite{BPCVA25}!
Nonetheless, for the results we develop in \cref{section:Hs,section:rel}, the
qualitative version is sufficient;
and \cref{section:gs} involves normal terms only, and does not mention resource
reduction nor Taylor expansion.
A reader interested only in those applications might thus read 
the earlier sections without worrying too much about coefficients,
and even skip some computations.

\begin{figure}
  \scriptsize
  \begin{center}
\scalebox{1.228}{
    \begin{tikzpicture}[
    section/.style={node distance=10pt, font=\em},
    desc/.style={node distance=1pt, align=center, font=\scriptsize,text width=3cm},
    wide desc/.style={desc, text width=3.3cm},
    blob/.style={draw, rounded corners, line width=1pt, inner sep=7pt},
    blob title/.style={node distance=1em, align=center},
    depend/.style={every edge/.style={draw,->,line width=1.5pt}},
  ]
  \node[section](preliminaries)
    {\cref{section:preliminaries}, p.\,\pageref{section:preliminaries}:};
  \node[wide desc,below=of preliminaries, text width=3cm](preliminaries desc)
    {
      Preliminaries:
      notations and
      operations on
      tuples and bags
    };
  \node[section, below=of preliminaries desc](calculus)
    {\cref{section:calculus}, p.\,\pageref{section:calculus}:};
  \node[wide desc,below=of calculus](calculus desc)
    {
      Syntax and dynamics
      of the extensional
      resource calculus
    };
  \node[blob, fit=(preliminaries)(preliminaries desc)(calculus)(calculus desc)](calculus blob){};
  \node[blob title, above=of calculus blob, anchor=base](calculus title)
    {Extensional resource calculus};
  \node[section, right=2.5cm of preliminaries](vectors)
    {\cref{section:resourcevectors}, p.\,\pageref{section:resourcevectors}:};
  \node[wide desc,below=of vectors](vectors desc)
    {
      Properties of\\
      substitution and\\
      reduction on vectors
      of resource terms
    };
  \node[section,below=of vectors desc](taylor)
    {\cref{section:taylor}, p.\,\pageref{section:taylor}:};
  \node[desc,below=of taylor](taylor desc)
    {
      Definition and properties
      of extensional
      Taylor expansion
    };
  \node[blob, fit=(vectors)(vectors desc)(taylor)(taylor desc)](taylor blob){};
  \node[blob title, above=of taylor blob, anchor=base](taylor title)
    {Extensional Taylor expansion};
  \node[blob, fit=(calculus blob)(calculus title)(taylor blob)(taylor title)](core blob){};
  \node[blob title, above=of core blob, anchor=base]
     {Main matter};
  \node[section,right=3cm of vectors,yshift=.4cm](Hs)
    {\cref{section:Hs}, p.\,\pageref{section:Hs}:};
  \node[wide desc,below=of Hs](Hs desc)
    {Characterization of \(\Hs\)};
  \node[section,below=of Hs desc](rel)
    {\cref{section:rel}, p.\,\pageref{section:rel}:};
  \node[desc,below=of rel](rel desc)
    {Relational semantics};
  \node[blob, fit=(Hs)(Hs desc)(rel)(rel desc)](semantics blob){};
  \node[blob title, above=of semantics blob, anchor=base]
    {Denotational semantics\\of the pure \(λ\)-calculus};
  \node[section,below=1.8cm of semantics blob](gs){\cref{section:gs}, p.\,\pageref{section:gs}:};
  \node[wide desc,below=of gs](gs desc)
    {
      Augmentations in the
      universal arena,
      and bijection with
      normal resource terms
    };
  \node[blob, fit=(gs)(gs desc)](gs blob){};
  \node[blob title, above=of gs blob, anchor=base]
    {Connection with\\game semantics};
  \coordinate(out) at (semantics blob.-45);
  \path[depend]
    (calculus blob) edge (taylor blob)
    (core blob) edge (semantics blob)
    (calculus blob) edge[out=-55,in=west, looseness=.8] (gs blob.200)
    (out) edge[dashed] (out|-gs blob.north)
    ;
  \end{tikzpicture}
}\end{center}

  \vskip-.7cm
  \caption{
    A map of the contributions of the paper.
    Arrows represent dependencies;
    the dashed arrow stands for the dependency of some results of 
    \cref{section:gs} on the definition of relational typing in 
    \cref{section:rel:resource}.
  }
  \label{figure:map}
\end{figure}

\section{Preliminaries on sequences and bags}\label{section:preliminaries}

\paragraph{Tuples and bags.} If $X$ is a set, we write
$\WordsOf{X}=\bigcup_{n\in\N}X^n$ for the set of finite lists, or tuples, of
elements of $X$, ranged over by $\word{a},\word{b}$, \emph{etc.} We write
$\tuple{a_1,\dotsc,a_n}=\tuple{a_i}_{1\le i\le n}$ to list the elements of a tuple, 
$\emptyword$ for the empty tuple, $\LengthOf{\word{a}}$ for the length of $\word{a}$,
and denote concatenation simply by juxtaposition, e.g., $\word{a}\seqcat\word{b}$.
If $a\in X$ and $\vec{b}$ is a tuple, we write $a\cons \vec{b}$ 
for the tuple obtained by pushing $a$ at the head of $\vec{b}$:
this \definitive{cons} operation generates $\WordsOf{X}$ inductively from $\emptyword$.

We write $\Mf(X)$ for the
set of finite multisets of elements of $X$, which we call
\definitive{bags}, ranged over by $\bag{a}, \bag{b}$, \emph{etc.}
We write $\mset{a_1, \dotsc, a_n}$ for the bag $\bag{a}$ defined by a list
$\word{a}=\tuple{a_1,\dotsc,a_n}$ of elements:
we say that $\word{a}$ is an \definitive{enumeration} of $\bag{a}$ in this case.
We write $\emptybag$ for the empty bag, and use $\bagcat$ for bag concatenation.
We moreover write $\LengthOf{\bag{a}}$ for the length of $\bag{a}$:
$\LengthOf{\bag{a}}$ is the length of any enumeration of $\bag{a}$.
We may abuse notation and use a tuple $\word{a}$ or a bag $\bag{a}$ for the
set of its elements: e.g., we write $a_i\in\mset{a_1, \dotsc, a_n}$.

We will often need to \emph{partition} bags, which requires some care.
For $k\in\N$, a \definitive{$k$-partitioning} of
$\bag{a}$ is a function
$p:\{1,\dotsc,\LengthOf{\bag{a}}\}\to\{1,\dotsc,k\}$:
we write $p:\bag{a}\splitinto k$.
Given an enumeration $\tuple{a_1,\dotsc,a_n}$ of $\bag{a}$
and $J=\{j_1,\dotsc,j_l\}\subseteq\{1,\dotsc,n\}$ with $\CardOf J=l$,
we write $\bag{a}\restrict J\eqdef \mset{a_{j_i},\dotsc,a_{j_l}}$ 
for the \definitive{restriction} of $\bag{a}$ to $J$.
The \definitive{$k$-partition} of $\bag{a}$ associated with $p:\bag{a}\splitinto k$
is then the tuple
$\tuple{\restrictToPre{\bag{a}}{p}{1},\dotsc,\restrictToPre{\bag{a}}{p}{k}}$,
where $\preImage{p}{i} \eqdef \{ j\mid p(j)=i\}$ for $1\le i \le k$, so that 
\[\bag{a}=\restrictToPre{\bag{a}}{p}{1}\bagcat\cdots\bagcat\restrictToPre{\bag{a}}{p}{k}\;.\]

There is a (temporary) abuse of notation here, as the definitions of
restrictions and $k$-partitions depend on the chosen enumeration of $\bag{a}$.
But having fixed $\bag{a}$ and $k$, neither the set of $k$-partitions
of $\bag{a}$, nor the \emph{number} of partitionings $p$ of $\bag{a}$
yielding a given $\tuple{\bag{a}_1, \dots, \bag{a}_n}$, depend on the enumeration.
So for any function $f : \Mf(X)^k \to \mathcal{M}$ (for $\mathcal{M}$ a
commutative monoid, noted additively), the sum 
\[
\sum_{\bag{a} \splitinto \bag{a}_1\bagcat\cdots\bagcat\bag{a}_k} f(\bag{a}_1,\dotsc,\bag{a}_k)
\quad\eqdef\quad
\sum_{\mathclap{p : \bag{a} \splitinto k}} f(\restrictToPre{\bag{a}}{p}{1}, \dotsc, \restrictToPre{\bag{a}}{p}{k})
\]
is independent of the enumeration.
When indexing a sum with ${\bag{a} \splitinto \bag{a}_1\bagcat\cdots\bagcat\bag{a}_k}$
we thus mean to sum over all partitionings $p : \bag{a} \splitinto k$, $\bag{a}_i$
being shorthand for $\restrictToPre{\bag{a}}{p}{i}$ in the summand, and
the result being independent of the choice of an enumeration.
This construction is easily proved to be associative, in the sense that, e.g.:
\[
  \sum_{\bag{a} \splitinto \bag{a}_1\bagcat\bag{a}'}
  ~
  \sum_{\bag{a}' \splitinto \bag{a}_2\bagcat\bag{a}_3} f(\bag{a}_1,\bag{a}_2,\bag{a}_3)
  \qquad
  =
  \qquad
  \sum_{\mathclap{\bag{a} \splitinto \bag{a}_1\bagcat\bag{a}_2\bagcat\bag{a}_3}} f(\bag{a}_1,\bag{a}_2,\bag{a}_3)
  \;.
\]

The \definitive{isotropy degree}\label{definition:isotropy} $\isodeg{\bag{a}}$
of a bag $\bag{a}$ of length $k$ is the cardinality of the stabilizer of any
enumeration $\tuple{a_1,\dotsc,a_k}$ of $\bag{a}$ under the action of the group
$\Perms k$ of permutations of $\set{1,\dotsc,k}$:
namely,
\(
  \isodeg{\bag{a}}\eqdef\CardOf{\set{
      σ\in\Perms k\st
      \tuple{a_1,\dotsc,a_k}
      =\tuple{a_{σ(1)},\dotsc,a_{σ(k)}}
  }}
\).
The following result is a routine exercise in combinatorics:
\begin{fact}\label{fact:isodeg}
  If $\bag{a}=\bag{a}_1\bagcat\cdots\bagcat\bag{a}_n$ then
  \(
    \isodeg{\bag{a}}
    =\CardOf{\set{
      p:\bag{a}\splitinto n
      \st \restrictToPre{\bag{a}}{p}{i} = \bag{a}_i
      \text{ for }1\le i\le n
    }}\times\prod_{i=1}^{n}\isodeg{\bag{a}_i}
  \).
\end{fact}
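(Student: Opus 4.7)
The plan is a simple double-counting / fibration argument: I shall exhibit a surjection from the stabilizer appearing on the left-hand side onto the set $\mathcal{P}$ of partitionings $p : \bag{a} \splitinto n$ with $\bag{a}\restrict_p i = \bag{a}_i$ for $1 \le i \le n$, and show that each of its fibers has cardinality $\prod_{i=1}^{n} \isodeg{\bag{a}_i}$; the identity then follows by summing over $\mathcal{P}$.

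Since $\bag{a} = \bag{a}_1 \bagcat \cdots \bagcat \bag{a}_n$, I may first pick an enumeration $\tuple{a_1, \dotsc, a_k}$ of $\bag{a}$ (with $k=\LengthOf{\bag{a}}$ and $k_i = \LengthOf{\bag{a}_i}$) whose successive blocks of lengths $k_1, \dotsc, k_n$ enumerate $\bag{a}_1, \dotsc, \bag{a}_n$ in order; call $p_0 \in \mathcal{P}$ the corresponding block partitioning. I then set
\[
\Phi : \set{\sigma \in \Perms k \st a_{\sigma(j)} = a_j \text{ for all } j} \to \mathcal{P}, \qquad \Phi(\sigma) \eqdef p_0 \circ \sigma^{-1}.
\]
Well-definedness is routine unfolding: $\bag{a}\restrict_{\Phi(\sigma)} i$ is the multiset of $a_{\sigma(l)}$ for $l \in p_0^{-1}(i)$, which by the stabilizer condition equals that of $a_l$ for $l \in p_0^{-1}(i)$, that is $\bag{a}_i$.

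For the fiber over an arbitrary $p \in \mathcal{P}$, an element $\sigma$ must map $p^{-1}(i)$ bijectively onto $p_0^{-1}(i)$ for each $i$ while preserving the enumeration; hence $\sigma$ decomposes uniquely as an $n$-tuple of independent such bijections $\sigma_i : p^{-1}(i) \to p_0^{-1}(i)$. Transporting $\sigma_i$ along any chosen enumerations of its source and target, it corresponds to a permutation of $\set{1, \dotsc, k_i}$ sending one enumeration of $\bag{a}_i$ onto another, i.e. to a coset of the stabilizer of a fixed enumeration of $\bag{a}_i$ in $\Perms{k_i}$, which contains exactly $\isodeg{\bag{a}_i}$ elements. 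Every fiber is therefore nonempty (making $\Phi$ surjective) and of size $\prod_i \isodeg{\bag{a}_i}$, whence the identity.

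No step is especially hard; the only point deserving care is the claim that the number of permutations of $\set{1, \dotsc, k_i}$ sending one enumeration of $\bag{a}_i$ onto another depends only on $\bag{a}_i$ and not on the two chosen enumerations. This follows from the fact that the action of $\Perms{k_i}$ on enumerations of $\bag{a}_i$ is transitive, so that all such stabilizers are pairwise conjugate in $\Perms{k_i}$ and hence equicardinal; this already underlies the well-definedness of $\isodeg{\bag{a}_i}$ itself.
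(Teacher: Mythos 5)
Your proof is correct. The paper itself gives no proof of this fact (it is dismissed as ``a routine exercise in combinatorics''), so there is nothing to compare against; your fibration argument --- mapping the stabilizer onto the set of admissible partitionings via $\sigma \mapsto p_0 \circ \sigma^{-1}$ and counting fibers as products of stabilizers of the $\bag{a}_i$ --- is exactly the standard way to discharge it, and all the steps check out. The only blemish is a harmless direction slip: with $\Phi(\sigma) = p_0 \circ \sigma^{-1}$ the fiber condition is that $\sigma$ carries $p_0^{-1}(i)$ onto $p^{-1}(i)$, not the converse as you wrote; since the stabilizer is closed under inversion this changes neither the image nor the fiber sizes.
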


\paragraph{Sequences of bags and streams.}
We will also use possibly infinite sequences of bags, with a finiteness
constraint: only finitely many bags may be non-empty.
We write $\SfOf{X}$ for the set $\WordsOf{\MfOf{X}}$ of tuples of bags,
and we write $\SsOf{X}$ for the subset of $\SeqsOf{\MfOf{X}}$ such that
$\tuple{\bag{a}_i}_{i\in\N}\in\SsOf{X}$ iff 
$\{ i\in\N\st \LengthOf{\bag{a}_i}>0 \}$ is finite.
We denote elements of $\SfOf{X}$ or $\SsOf{X}$ as $\seq{a},\seq{b}$,
\emph{etc.}\ just like for plain tuples,
and we reserve the name \definitive{stream} for the elements of $\SsOf{X}$.

We write $\emptystream\eqdef\tuple{\emptybag}_{i\in\N}$ 
for the \definitive{empty stream}.
Note that streams are inductively generated from $\emptystream$,
by the \definitive{cons} operation defined by 
\[
  (\bag{a}\cons\seq{b})_i\eqdef\begin{cases}
    \bag{a}&\text{if $i=0$}\\
    \bag{b}_{j}&\text{if $i=j+1$}
  \end{cases}
  \qquad \text{(writing $\seq{b}=\tuple{\bag{b}_{j}}_{j\in\N}$)}
\]
subject to the identity $\emptybag\cons\emptystream=\emptystream$.
We can thus reason inductively on streams, treating $\emptystream$ 
as the base case, and considering $\seq{b}$ as a ``strict sub-stream''
of $\bag{a}\cons\seq{b}$ when $\bag{a}\cons\seq{b}\not=\emptystream$.

We also define the \definitive{range} of a stream
$\seq{a}=\tuple{\bag{a}_i}_{i\in\N}\in\SfOf{A}$
as the minimal length of a prefix containing all non-empty bags:
$\rangeOf{\seq{a}}\eqdef\max\set{i+1\in\N\st\bag{a}_i\not=0}$.
Equivalently, we can define $\range$ inductively by setting:
$\rangeOf{\emptystream}\eqdef 0$ and
$\rangeOf{\bag{a}\cons\seq{b}}\eqdef\rangeOf{\seq{b}}+1$
if $\bag{a}\cons\seq{b}\not=\emptystream$.

A \definitive{$k$-partitioning} $p:\seq{a}\splitinto k$ of
$\seq{a}=\tuple{\bag{a}_1,\dotsc,\bag{a}_n}\in\SfOf{X}$
is a tuple $p=\tuple{p_1,\dotsc,p_n}$ of $k$-partitionings 
$p_i: \bag{a}_i \splitinto k$.
This defines a \definitive{partition} $\tuple{\restrictToPre{\seq{a}}{p}{1},\dotsc,\restrictToPre{\seq{a}}{p}{k}}$,
component-wise:
each $\restrictToPre{\seq{a}}{p}{i}$ is the sequence
$\tuple{\restrictToPre{\bag{a}_1}{p_1}{i},\dotsc,\restrictToPre{\bag{a}_n}{p_n}{i}}$.
We obtain $\seq{a}=\restrictToPre{\seq{a}}{p}{1}\bagcat\cdots\bagcat\restrictToPre{\seq{a}}{p}{k}$,
where we apply the concatenation of bags component-wise, to sequences all
of the same length.
Just as before, we write
\[
\sum_{\seq{a} \splitinto \seq{a}_1\bagcat\cdots\bagcat\seq{a}_k} f(\seq{a}_1,\dotsc,\seq{a}_k)
\eqdef
\sum_{p : \seq{a} \splitinto k} f(\restrictToPre{\seq{a}}{p}{1}, \dotsc, \restrictToPre{\seq{a}}{p}{k})
\ ,
\]
the result of the sum being independent from the enumerations
of the bags of $\seq{a}$.

Similarly, a \definitive{$k$-partitioning} $p:\seq{a}\splitinto k$ of
a stream $\seq{a}=\tuple{\bag{a}_i}_{i\in\N}$
is a sequence $p=\tuple{p_i}_{i\in\N}$ of $k$-partitionings 
$p_i: \bag{a}_i \splitinto k$:
note that a stream $\seq{a}$ has only finitely many $k$-partitionings,
because $\bag{a}_i$ is empty for sufficiently large values of $i$.
A $k$-partitioning of a stream $\seq{a}$ defines a \definitive{partition}
$\tuple{\restrictToPre{\seq{a}}{p}{1},\dotsc,\restrictToPre{\seq{a}}{p} {k}}$,
component-wise:
each $\restrictToPre{\seq{a}}{p}{j}$ is the sequence
$\tuple{\restrictToPre{\bag{a}_i}{p_i}{j}}_{i\in\N}$.
We obtain $\seq{a}=\restrictToPre{\seq{a}}{p}{1}\bagcat\cdots\bagcat\restrictToPre{\seq{a}}{p}{k}$,
where we apply the concatenation of bags component-wise.
And we write
\[
\sum_{\seq{a} \splitinto \seq{a}_1\bagcat\cdots\bagcat\seq{a}_k} f(\seq{a}_1,\dotsc,\seq{a}_k)
\eqdef
\sum_{p : \seq{a} \splitinto k} f(\restrictToPre{\seq{a}}{p}{1}, \dotsc, \restrictToPre{\seq{a}}{p}{k})
\ ,
\]
which is always a finite sum, whose result is independent from the enumerations
of the bags of $\seq{a}$.

\section{The extensional resource calculus}\label{section:calculus}

In this section, we introduce our extensional version of the resource calculus,
whose terms are the infinitely \(η\)-long resource terms described in the
introduction: \(\labs{\svarA}{\appl{\hexprA}{\strB}}\) where
\(\hexprA\) is a term or variable and \(\strB\) is a stream of terms.
It will be practical to more generally introduce various syntactic categories,
such as \emph{base terms} of the shape \(\appl{\hexprA}{\strB}\),
as in the body of the previous term.
We will collectively refer to these categories as \emph{resource terms},
calling \emph{value terms} those of the first form.

We then discuss suitable notions of resource reduction, taking into account the
presence of infinite sequences of abstractions, applied to streams of
arguments.

\subsection{Syntax of the calculus}

We fix an infinite countable set $\LVar$ of \definitive{value variables}
(or, simply, \definitive{variables}),
which we denote by letters $\varA,\varB,\varC$.
We also fix an infinite countable set $\SVar$ of \definitive{sequence variables},
which we denote by letters $\svarA,\svarB,\svarC$,
and with each sequence variable $\svarA$,
we associate a sequence $\tuple{\svarA[i]}_{i\in\N}$ of value variables,
in such a way that for each $\varA\in\LVar$, there exists a unique 
pair $\tuple{\svarA,i}$ such that $\varA=\svarA[i]$:
sequence variables partition value variables.
We will in general identify $\svarA$ with the corresponding sequence
of value variables.
We may also abuse notation and use $\svarA$ for its image set:
for instance we may write $\varA\in\svarA$ instead of $\varA\in\set{\svarA[i]\st i\in\N}$.
The use of sequence variables will allow us to manage infinite sequences of
$λ$-abstractions, without needing to resort to de Bruijn indices or other
techniques for dealing with $α$-equivalence.

\paragraph{Terms.}
We define \definitive{value terms} ($\vtermA, \vtermB, \vtermC\in\ValueTerms$),
\definitive{base terms} ($\btermA, \btermB, \btermC \in\BaseTerms$),
\definitive{bag terms} ($\bagA,\bagB,\bagC\in\BagTerms$) and
\definitive{stream terms} ($\strA,\strB,\strC\in\StreamTerms$),
inductively by the rules of \cref{fig:terms}.\footnote{
  For now, we overload notations and use \(\mset{-}\), \(\emptystream\) and
  \(\cons\) as term formers:
  they will soon recover their meaning as constructions of bags and streams.
}
\begin{figure}
\begin{gather*}
  \begin{prooftree}
    \hypo{\svarA\in\SVar}
    \hypo{\btermA\in\BaseTerms}
    \infer2[\ValueRule*]{\labs{\svarA}{\btermA}\in\ValueTerms}
  \end{prooftree}
  \qquad
  \begin{prooftree}
    \hypo{\vtermA\in\ValueTerms}
    \hypo{\strB\in\StreamTerms}
    \infer2[\RedexBaseRule*]{\appl{\vtermA}{\strB}\in\BaseTerms}
  \end{prooftree}
  \qquad
  \begin{prooftree}
    \hypo{\varA\in\LVar}
    \hypo{\strB\in\StreamTerms}
    \infer2[\NormalBaseRule*]{\appl{\varA}{\strB}\in\BaseTerms}
  \end{prooftree}
  \\[1em]
  \begin{prooftree}
    \hypo{\vtermA_1\in\ValueTerms}
    \hypo{\cdots}
    \hypo{\vtermA_k\in\ValueTerms}
    \infer3[\BagRule*]{\mset{\vtermA_1,\dotsc,\vtermA_k}\in\BagTerms}
  \end{prooftree}
  \qquad
  \begin{prooftree}
    \infer0[\EmptyStreamRule*]{\emptystream\in\StreamTerms}
  \end{prooftree}
  \qquad
  \begin{prooftree}
    \hypo{\bagA\in\BagTerms}
    \hypo{\strB\in\StreamTerms}
    \infer2[\ConsStreamRule*]{\bagA\cons\strB\in\StreamTerms}
  \end{prooftree}
\end{gather*}
\caption{Rules for constructing extensional resource terms}
\label{fig:terms}
\end{figure}
A \definitive{head expression} ($\hexprA,\hexprB,\hexprC\in\HeadExprs$) is
a value term or variable,
so that a base term is necessarily of the form $\appl{\hexprA}{\strA}$
where $\hexprA$ is a head expression and $\strA$ is a stream term.
A \definitive{resource term} (denoted by $\termA$, $\termB$, $\termC$) is any of 
a value term, base term, stream term or bag term;
and a \definitive{resource expression} (denoted by $\exprA,\exprB,\exprC$) is any of 
a value variable or a resource term.
Note that, despite the seemingly infinitary flavour of abstractions and
applications, the inductive definition of the syntax yields a countable
set of resource expressions.

The actual objects of the calculus are value terms, as these will form the
target of Taylor expansion.
Nonetheless, base terms, bag terms and stream terms also constitute meaningful
computational entities, as will be clear in the next sections.
On the other hand, plain variables \emph{should not} be considered as entities
of the calculus by themselves: in the context of extensional Taylor expansion,
a variable must always come with the stream it is applied to,
and the Taylor expansion of a single variable will always be an infinite
sum of normal value terms (see \cref{section:taylor}).\footnote{
  This distinction also shows in the semantics.
  The various categories of resource terms will have a finite interpretation in
  the relational semantics (\cref{section:rel}).
  And the normal forms among each category will represent isomorphism classes
  of augmentations on corresponding arenas in game semantics (\cref{section:gs}).
  On the other hand, the relational interpretation of a variable
  is an infinite set; and, in game semantics, it corresponds to a whole
  \emph{copycat} strategy.
}
We consider head expressions only because it allows us to treat uniformly both
forms of base terms;
and we consider resource expressions because, when we reason inductively on
terms, it is sometimes simpler to have variables as a base case.
We may write $\ResourceTerms$ (resp.\ $\ResourceExprs$) for any of
the four sets $\ValueTerms$, $\BaseTerms$, $\BagTerms$, or $\StreamTerms$
(resp.\ $\HeadExprs$, $\BaseTerms$, $\BagTerms$, or $\StreamTerms$).\footnote{
  We do mean these as placeholders, not as unions of sets:
  this will be especially relevant when we consider, e.g., sums of resource terms,
  which we always implicitly restrict to a given syntactic category.
}

The set of \definitive{free variables} of a resource expression is defined as follows:
\begin{gather*}
  \LVarOf{\varA} \eqdef \set{\varA}\,,
  \qquad
  \LVarOf{\labs{\svarA}{\btermA}} \eqdef \LVarOf{\btermA}\setminus\svarA\,,
  \qquad
  \LVarOf{\appl{\hexprA}{\strA}} \eqdef \LVarOf{\hexprA}\cup\LVarOf{\strA}\,,
  \\
  \LVarOf{\mset{\vtermA_1,\dotsc,\vtermA_k}} \eqdef \bigcup_{1\le i\le k}\LVarOf{\vtermA_i}\,,
  \qquad
  \LVarOf{\emptystream} \eqdef \emptyset\,,
  \qquad
  \LVarOf{\bagA\cons\strB} \eqdef \LVarOf{\bagA}\cup\LVarOf{\strB}\,.
\end{gather*}

This is always a finite set.
Then we define
$\SVarOf{\termA}\eqdef\bigcup_{\varA\in\LVarOf{\termA}}\SVarOf{\varA}$
with $\SVarOf{\svarA[i]}\eqdef\set{\svarA}$, which is also finite.
We can thus define $α$-equivalence as usual,
despite the fact that binding a single sequence variable
simultaneously binds an infinite family of value variables.
In particular,
given a value term $\vtermA$ and
a finite set $V$ of sequence variables that are not free in \(\vtermA\),
we can always assume, up to $α$-equivalence,
that $\vtermA$ is of the form $\labs{\svarB}{\btermA}$
where $\svarB$ contains no variable $\svarA[i]$ 
with $\svarA\in V$.

In addition to $α$-equivalence, we consider resource expressions up to
permutations of elements in a bag $\mset{\vtermA_1,\dotsc,\vtermA_k}$,
and up to the identity $\emptybag\cons\emptystream =\emptystream$,
so that $\BagTerms$ is identified with $\MfOf{\ValueTerms}$
and $\StreamTerms$ is identified with $\SsOf{\ValueTerms}$.

We write $\subst{\exprA}{\varA}{\hexprB}$ for the ordinarily defined,
capture avoiding \definitive{substitution} of a head expression $\hexprB$
for a value variable $\varA$ in any resource expression $\exprA$:
note that this preserves the syntactic category of $\exprA$ in the sense
that if $\exprA\in\ValueTerms$ (resp.\ $\HeadExprs$, $\BagTerms$, $\BaseTerms$, $\StreamTerms$)
then $\subst{\exprA}{\varA}{\hexprB}\in\ValueTerms$
(resp.\ $\HeadExprs$, $\BagTerms$, $\BaseTerms$, $\StreamTerms$).

We define the \definitive{size} $\SizeOf{\exprA}\in\N$
of a resource expression $\exprA$ inductively as follows:
\begin{gather*}
  \SizeOf{\varA}
  \eqdef 1
  \qquad
  \SizeOf*{\labs{\svarA}{\btermA}}
  \eqdef 1+\SizeOf{\btermA}
  \qquad
  \SizeOf*{\appl{\hexprA}{\strA}}
  \eqdef 1+\SizeOf{\hexprA}+\SizeOf{\strA}
  \\
  \SizeOf{\mset{\vtermA_1,\dotsc,\vtermA_k}}
  \eqdef \sum_{i=1}^k \SizeOf{\vtermA_i}
  \qquad
  \SizeOf*{\bagA\cons\strB}
  \eqdef
  \begin{cases}
    0
    &\text{if \(\bagA\cons\strB=\emptystream\)}
    \\
    \SizeOf{\bagA}+\SizeOf{\strB}
    &\text{otherwise}
  \end{cases}
  \;.
\end{gather*}

In particular, 
$\SizeOf{\strA}=\sum_{i\in\N}\SizeOf{\bagA_i}$
for any stream term $\strA=\tuple{\bagA_i}_{i\in\N}$.
In short, $\SizeOf{\exprA}$ is nothing but the number of abstractions,
applications and variable occurrences in $\exprA$:
this number is always finite, even though each abstraction on a sequence variable
binds an infinite sequence of value variables,
and each application is to a stream of terms, which is an infinite sequence of
bags.
In particular,
$\SizeOf{\hexprA}\ge 1$ for any head expression $\hexprA$,
$\SizeOf{\btermA}\ge 2$ for any base term $\btermA$,
$\SizeOf{\vtermA}\ge 3$ for any value term $\vtermA$,
and $\SizeOf{\bagA}\ge 3\LengthOf{\bagA}$ for any bag term $\bagA$.

We similarly define the \definitive{number of occurrences}
$\nocc{\varA}{\exprA}$ of \(\varA\in\LVar\) in \(\exprA\)
inductively as follows:
\begin{gather*}
  \nocc{\varA}{\varA}
  \eqdef 1
  \qquad
  \nocc{\varA}{\varB}
  \eqdef 0
  ~\text{(for each \(\varB\not=\varA\))}
  \qquad
  \nocc{\varA}{\labs{\svarA}{\btermA}}
  \eqdef \nocc{\varA}{\btermA}
  \qquad
  \nocc{\varA}{\appl{\hexprA}{\strA}}
  \eqdef \nocc{\varA}{\hexprA}+\nocc{\varA}{\strA}
  \\
  \nocc{\varA}{\mset{\vtermA_1,\dotsc,\vtermA_k}}
  \eqdef \sum_{i=1}^k \nocc{\varA}{\vtermA_i}
  \qquad
  \nocc{\varA}{\bagA\cons\strB}
  \eqdef
  \begin{cases}
    0
    &\text{if \(\bagA\cons\strB=\emptystream\)}
    \\
    \nocc{\varA}{\bagA}+\nocc{\varA}{\strB}
    &\text{otherwise}
  \end{cases}
\end{gather*}
(choosing \(\svarA\not\ni\varA\) in the abstraction case)
and obtain that \(\nocc{\varA}{\exprA}\le\SizeOf{\exprA}\),
and that \(\nocc{\varA}{\exprA}=0\) iff \(\varA\not\in\LVarOf{\exprA}\),
by straightforward inductions.

\begin{example}\label{example:terms}
  Let us review the smallest (in terms of the size just defined,
  as well as in the sense of inductive constructions)
  terms we can consider in each syntactic category.
  The smallest bag term (resp.\ stream term) is of course the empty bag \(\emptybag\)
  (resp.\ the empty stream \(\emptystream\)).
  Given a variable \(\varA\), we can then form the base term
  \(\appl{\varA}{\emptystream}\).
  Abstracting over a sequence variable \(\svarB\),
  we obtain the closed value term
  \(\trProj{i}\eqdef\labs{\svarB}{\appl{\svarB[i]}{\emptystream}}\)
  if \(\varA=\svarB[i]\),
  or the value term \(\trProj{\varA}\eqdef\labs{\svarB}{\appl{\varA}{\emptystream}}\)
  where \(\varA\) occurs free if \(\svarB\not\ni\varA\).

  Then one can form more elaborate terms such as
  \[
    \trCCz{\varA}
    = \labs{\svarB}{
      \appl{\varA}{\mset{\trProj{\svarB[0]}}\cons\emptystream}
    }
    \qquad
    \text{and}
    \qquad
    \trCCp{\varA}
    = \labs{\svarB}{
        \appl{\varA}{
          \mset{ \trCCz{\svarB[0]}, \trCCz{\svarB[0]} }
          \cons\emptystream
        }
    }
  \]
  for each \(\varA\in\LVar\),
  that we will use as running examples below.
\end{example}

\paragraph{Sums of Terms.} As in the ordinary resource calculus, 
the reduction of resource terms produces sums:
a value (resp.\ base, bag, stream) term will reduce to a finite sum of value
(resp.\ base, bag, stream) terms.

If $X$ is a set, we write $\SumsOf{X}$ for the set of finite formal
sums of elements $X$ -- those may be equivalently presented as finite
multisets, but we adopt a distinct notation if only to impose a distinction
with bag terms.
Given a sum $A=\sum_{i\in I} a_i$, we write
$\support{A}\eqdef\set{a_i\st i\in I}$ for its \definitive{support} set.
We may abuse notation and write $a\in A$ instead of $a\in\support{A}$.

We call \definitive{value sums} (resp.\ \definitive{base sums}, \definitive{bag
sums}, \definitive{stream sums})
the elements of $\ValueSums$ (resp.\ $\BaseSums$, $\BagSums$, $\StreamSums$),
which we denote with capital letters $\vtermsA,\vtermsB,\vtermsC$ (resp.\ 
$\btermsA,\btermsB,\btermsC$; $\bagsA,\bagsB,\bagsC$; $\strsA,\strsB,\strsC$).
As announced we may write $\ResourceSums$
for any of $\ValueSums$, $\BaseSums$, $\BagSums$, or $\StreamSums$;
and we may call \definitive{term sum} any value sum, base sum, bag sum, or
stream sum, which we then denote by $\termsA,\termsB,\termsC$.

We also call \definitive{head sum} (resp.\ \definitive{resource sum})
any of a value sum (resp.\ term sum) or of a value variable,
which we denote by $\hexprsA,\hexprsB,\hexprsC$ (resp.\ by $\exprsA,\exprsB,\exprsC$).
We abuse notation and write $\HeadSums$ for $\LVar\cup\ValueSums$, and
then write $\ExprSums$ for any of $\HeadSums$, $\BaseSums$, $\BagSums$, or $\StreamSums$.
Again, we introduce head sums and expression sums only as technical devices
allowing us to simplify some definitions or proofs.
Note that we do not need to consider sums of head expressions mixing value
terms and variables.

We then extend term formers and operations to all syntactic categories by
linearity so that, for any finite sets $I$ and $J$, we have:
\begin{gather*}
  \labs{\svarA}*[\big]{\sum_{i\in I}\btermA_i}
  =\sum_{i\in I}\labs{\svarA}{\btermA_i}
  \qquad
  \appl*[\big]{\sum_{i\in I}{\vtermA_i}}{\strsB}
  =\sum_{i\in I} \appl{\vtermA_i}{\strsB} 
  \qquad
  \appl{\hexprA}*[\big]{\sum_{j\in J}{\strB_j}}
  =\sum_{j\in J} \appl{\hexprA}{\strB_j} 
  \\[1ex]
  \mset[\big]{\sum_{i\in I}\vtermA_i}
  =\sum_{i\in I} \mset{\vtermA_i}
  \qquad
  \pars[\big]{\sum_{i\in I}\bagA_i}\bagcat\pars[\big]{\sum_{j\in J}\bagB_j}
  =\sum_{\tuple{i,j}\in I\times J} \bagA_i\bagcat \bagB_j
  \\[1ex]
  \pars[\big]{\sum_{i\in I}\bagA_i}\cons\pars[\big]{\sum_{j\in J}\strB_j}
  =\sum_{\tuple{i,j}\in I\times J} \bagA_i\cons \strB_j\;.
\end{gather*}

We can thus extend ordinary substitution to sums of resource expressions:
\(\subst{\exprsA}{\varA}{\termsA}\) is obtained by
substituting \(\termsA\) for \(\varA\) in each summand of \(\exprsA\),
up to the previous identities.
Note this operation is linear in \(\exprsA\) but not in \(\termsA\).
In particular, \(\subst{\exprA}{\varA}{0}=0\) if \(\varA\in\LVarOf{\exprA}\),
and \(\subst{\exprA}{\varA}{0}=\exprA\) otherwise.

\paragraph{Two flavours of extensional resource reduction.}
A redex in the extensional resource calculus is naturally a base term of the
shape \(\appl*{\labs{\svarA}{\btermA}}{\strB}\).
Intuitively, this involves an infinite sequence of abstractions,
over the family of value variables \(\tuple{\svarA[i]}_{i\in\N}\),
applied to the sequence of bags in the stream \(\strB=\tuple{\bagB_i}_{i\in\N}\).
One way to reduce this redex is to apply what we will call a \emph{full step}:
this yields the base sum \(\rsubst{\btermA}{\svarA}{\strB}\) obtained from
the body \(\btermA\) by the resource substitution of each bag \(\bagB_i\) for
the corresponding variable \(\svarA[i]\).
As we have argued in the introduction, although this seems to involve
infinitely many simultaneous substitutions, only a finite number of them
have an effect:
for a sufficiently large \(i\), \(\svarA[i]\) does not occur in \(\btermA\),
and \(\bagB_i\) is empty.

Moreover, full-step reduction does not allow us to single out the application of
the first abstraction in the sequence to the first bag in the stream,
which would be the analogue of a redex in the \(λ\)-calculus,
via Taylor expansion.

Writing \(\labs{\svarA}{\btermA}=\labs{\varA}{\vtermA}\) and
\(\strB=\bagB\cons\strC\), we will therefore also consider 
the \emph{fine step} reducing \(\appl*{\labs{\svarA}{\btermA}}{\strB}\)
to \(\appl{\rsubst{\vtermA}{\varA}{\bagB}}{\strC}\).
Beware that the abstraction over \(\varA\) is
just a notation rather than a constructor in our syntax:
by this we mean that we consider the value term \(\vtermA=\labs{\svarB}{\btermB}\),
where \(\btermB\) is like \(\btermA\), but
with \(\svarA[0]\) renamed as \(\varA\),
and each \(\svarA[i+1]\) renamed as \(\svarB[i]\)
-- we will make this definition formal below.
Also note that each term in the base sum
\(\appl{\rsubst{\vtermA}{\varA}{\bagB}}{\strC}\) is again a redex:
to actually consume a redex, some form of full step is still needed.
Since each stream only contains finitely many non empty bags,
it is ultimately sufficient to consider a redex of the shape
\(\appl*{\labs{\svarA}{\btermA}}{\emptystream}\),
whose full-step reduction amounts to erase the whole term
(replacing it with \(0\)) in case any variable \(\svarA[i]\) occurs in \(\btermA\).
We take this as an additional case for fine-step reduction.

Both forms of resource reduction will be useful in our study of
extensional Taylor expansion.
We start with fine-step reduction, for which we prove confluence.
Then we consider full-step reduction, for which we prove strong normalization
(which fails in the fine-step case).
Combining those results, we deduce that each resource term 
has a unique normal form, common to both reductions.

\subsection{Fine-step dynamics}

To define fine-step reduction, we need three ingredients:
the \emph{resource substitution} of a bag for a variable;
a way to formalize the notation
\(\labs{\svarA}{\btermA}=\labs{\varA}{\vtermA}\),
which involves \emph{shifting} the indices of
the value variables \(\svarA[i]\);
and a description of the \emph{erasure} of \(\svarA\)
performed in \(\btermA\) when reducing 
\(\appl*{\labs{\svarA}{\btermA}}{\emptystream}\).

\paragraph{Resource substitution.}
We consider the resource substitution $\rsubst{\exprA}{\varA}{\bagB}$ of a bag
$\bagB=\mset{\vtermB_1,\dotsc,\vtermB_k}$ for a value variable $\varA$ in a resource
expression $\exprA$.
As for the ordinary resource calculus, the definition amounts to enumerate
the occurrences $\varA_1,\dotsc,\varA_l$ of $\varA$ in $\exprA$, and then set:
\[ 
  \rsubst{\exprA}{\varA}{\bagB}\eqdef\begin{cases}\sum_{σ\in\Perms k}
    \subst{\exprA}{\varA_{σ(1)},\dotsc,\varA_{σ(k)}}{\vtermB_1,\dotsc,\vtermB_k}
    &\text{if $k=l$}\\
    0&\text{otherwise}
  \end{cases}
\]
where $\Perms k$ is the set of all permutations of $\set{1,\dotsc,k}$,
and $\subst{\exprA}{\varA_{1},\dotsc,\varA_{k}}{\vtermB_1,\dotsc,\vtermB_k}$
denotes the simultaneous capture avoiding substitution of each term $\vtermA_i$
for the corresponding occurrence~$\varA_i$.

Note that, although it is intuitively clear, this definition relies on a notion
of occurrence that is not well defined because the order of elements in a bag
is not fixed.
To be carried out formally, one should introduce a rigid variant of the calculus,
and then show that the result does not depend on the choice of a rigid
representative.
A more annoying issue is that this global definition does not follow the
inductive structure of expressions.
We will prefer the following presentation:
\begin{definition}\label{definition:rsubst}
  We define the \definitive{resource substitution} $\rsubst{\exprA}{\varA}{\bagB}$
  of a bag term $\bagB$ for a value variable $\varA$ in a resource expression $\exprA$
  by induction on $\exprA$ as follows:
\begin{align*}
  \rsubst{\varB}{\varA}{\bagB}&\eqdef
  \begin{cases}
    \vtermB & \text{if $\varA=\varB$ and $\bagB=\mset{\vtermB}$}\\
    \varB  & \text{if $\varA\not =\varB$ and $\bagB=\emptybag$}\\
      0 & \text{otherwise}
  \end{cases}
  \\
  \rsubst*{\labs{\svarB}{\btermA}}{\varA}{\bagB}&\eqdef
  \labs{\svarB}{\rsubst{\btermA}{\varA}{\bagB}}
  \displaybreak[3]
  \\
  \rsubst*{\appl{\hexprA}{\strA}}{\varA}{\bagB}&\eqdef
  \sum_{\bagB\splitinto\bagB_1\bagcat\bagB_2}
  \appl{\rsubst{\hexprA}{\varA}{\bagB_1}}{\rsubst{\strA}{\varA}{\bagB_2}}
  \displaybreak[3]
  \\
  \rsubst{\mset{\vtermA_1,\dotsc,\vtermA_k}}{\varA}{\bagB}&\eqdef
  \sum_{\bagB\splitinto\bagB_1\bagcat\cdots\bagcat\bagB_k}
  \mset{\rsubst{\vtermA_1}{\varA}{\bagB_1},\dotsc,\rsubst{\vtermA_k}{\varA}{\bagB_k}}
  \displaybreak[3]
  \\
  \rsubst{\emptystream}{\varA}{\bagB}&\eqdef
  \begin{cases}
    \emptystream&\text{if $\bagB=\emptybag$}\\
    0&\text{otherwise}\\
  \end{cases}
  \displaybreak[1]
  \\
  \rsubst*{\bagA\cons\strC}{\varA}{\bagB}&\eqdef
  \sum_{\bagB\splitinto\bagB_1\bagcat\bagB_2}
  {\rsubst{\bagA}{\varA}{\bagB_1}}\cons{\rsubst{\strC}{\varA}{\bagB_2}}
  \qquad\text{if $\bagA\cons\strC\not=\emptystream$}
\end{align*}
where, in the abstraction case, $\svarB$ is chosen so that $\varA\not\in\svarB$ 
and $\svarB\cap\LVarOf{\bagB}=\emptyset$.
\end{definition}

Observe that if $\exprA$ is a variable
(resp.\ value term, base term, bag term, stream term)
then $\rsubst{\exprA}{\varA}{\bagB}$ is a head sum
(resp.\ value sum, base sum, bag sum, stream sum).
So we may write $\rsubst{\exprA}{\varA}{\bagB}\in\ExprSums$
(resp.\ $\rsubst{\termA}{\varA}{\bagB}\in\ResourceSums$)
if $\exprA\in\ResourceExprs$ (resp.\ $\termA\in\ResourceTerms$)
keeping implicit the fact that the underlying
syntactic category is the same.

Moreover note that the distinction between empty and non-empty streams is only made so 
that the definition is inductive and non-ambiguous.
It is nonetheless obvious that:
\[
  \sum_{\bagB\splitinto\bagB_1\bagcat\bagB_2}
  {\rsubst{\emptybag}{\varA}{\bagB_1}}\cons{\rsubst{\emptystream}{\varA}{\bagB_2}}
  =
  \begin{cases}
    \emptybag\cons\emptystream=\emptystream&\text{if $\bagB=\emptybag$}\\
    0&\text{otherwise}\\
  \end{cases}
\]
so that the condition $\bagA\cons\strC\not=\emptystream$
can be ignored in the last case of the definition
(a similar remark straightforwardly applies to the above definitions of
the size of a term, and of the number of occurrences of a variable in a term).

More generally, if $\strA=\bagA_1\cons\cdots\cons\bagA_k\cons\emptystream$,
then 
\[
  \rsubst{\strA}{\varA}{\bagB}=
  \sum_{\bagB\splitinto\bagB_1\bagcat\cdots\bagcat\bagB_k}
  {\rsubst{\bagA_1}{\varA}{\bagB_1}}\cons\cdots\cons
  {\rsubst{\bagA_k}{\varA}{\bagB_k}}\cons\emptystream
\]
and, equivalently,
\[
  \rsubst{\tuple{\bagA_i}_{i\in\N}}{\varA}{\bagB}=
  \sum_{p:\bagB\splitinto\N}
  \tuple{\rsubst{\bagA_i}{\varA}{\restrictToPre{\bagB}{p}{i}}}_{i\in\N}
\]
where we generalize $k$-partitionings to $\N$-partitionings in the obvious way.
Also, one can check that:
\[
  \rsubst{\exprA}{\varA}{\emptybag}
  \quad=\quad
  \begin{cases}
    0&\text{if $\varA\in\LVarOf{\exprA}$}\\
    \exprA&\text{otherwise}
  \end{cases}
  \quad=\quad
  \subst{\exprA}{\varA}{0}
  \;.
\]
And, in general,
\( \rsubst{\exprA}{\varA}{\bagB}\not=0 \)
iff \(\nocc{\varA}{\exprA}=\CardOf{\bagB}\).

\begin{example}\label{example:substitution}
  Using the notations of \cref{example:terms},
  we obtain that \(\rsubst{\trProj{i}}{\varA}{\emptybag}=\trProj{i}\),
  and \(\rsubst{\trProj{i}}{\varA}{\bagA}=0\)
  as soon as \(\bagA\not=\emptybag\).
  And we obtain
  \(\rsubst{\trProj{\varA}}{\varA}{\mset{\vtermA}}
  =\labs{\svarB}{\appl{\vtermA}{\emptystream}}\)
  for any value term \(\vtermA\) (choosing \(\svarB\not\in\SVarOf{\vtermA}\)), 
  and \(\rsubst{\trProj{\varA}}{\varA}{\bagA}=0\)
  as soon as \(\LengthOf{\bagA}\not=1\).

  Minimalistic examples yielding sums are \[
  \rsubst{\mset{\trProj{\varA},\trProj{\varA}}}{\varA}{\mset{\vtermC_1,\vtermC_2}}
  =\mset{
    \labs{\svarB}{\appl{\vtermC_1}{\emptystream}},
    \labs{\svarB}{\appl{\vtermC_2}{\emptystream}}
  }+\mset{
    \labs{\svarB}{\appl{\vtermC_2}{\emptystream}},
    \labs{\svarB}{\appl{\vtermC_1}{\emptystream}}
  }
  =2\mset{
    \labs{\svarB}{\appl{\vtermC_1}{\emptystream}},
    \labs{\svarB}{\appl{\vtermC_2}{\emptystream}}
  }
  \] and \[
  \rsubst*[\big]{\mset{\trProj{\varA}}\cons\mset{\trProj{\varA}}\cons\emptystream}
  {\varA}{\mset{\vtermC_1,\vtermC_2}}
  =
  \mset{\labs{\svarB}{\appl{\vtermC_1}{\emptystream}}}
  \cons\mset{\labs{\svarB}{\appl{\vtermC_2}{\emptystream}}}
  \cons\emptystream
  +
  \mset{\labs{\svarB}{\appl{\vtermC_2}{\emptystream}}}
  \cons\mset{\labs{\svarB}{\appl{\vtermC_1}{\emptystream}}}
  \cons\emptystream
  \]
  for any value terms \(\vtermC_1\) and \(\vtermC_2\)
  (choosing \(\svarB\not\in\SVarOf{\vtermC_1}\cup\SVarOf{\vtermC_2}\)).
\end{example}

Contrarily to what happens with ordinary substitution,
the combinatorics of resource substitution is very regular.
Indeed, except for the occurrences of the variable that is substituted,
nothing is erased or discarded from the substituted bag nor from the expression 
in which the substitution takes place.
In particular, the size of the terms produced by a substitution is determined by
the length and size of the substituted bag and the size of the term in which
the substitution is performed; and free variables are preserved.
Formally, a straightforward induction on resource expressions yields:
\begin{lemma}\label{lemma:rsubst:size}
  If $\exprA'\in\rsubst{\exprA}{\varA}{\bagB}$ then 
  $\nocc{\varA}{\exprA}=\LengthOf{\bagB}$ and
  $\SizeOf{\exprA'}=\SizeOf{\exprA}+\SizeOf{\bagB}-\LengthOf{\bagB}$.
  If moreover $\varB\not=\varA$ then $\nocc{\varB}{\exprA'}=
  \nocc{\varB}{\exprA}+\nocc{\varB}{\bagB}$ (in particular,
  $\LVarOf{\exprA'} = \LVarOf{\exprA}\cup\LVarOf{\bagB}$).
\end{lemma}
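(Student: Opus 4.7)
The plan is to prove all three claims simultaneously by structural induction on $\exprA\in\ResourceExprs$, following the seven clauses of \cref{definition:rsubst}. The key observation I would use is that $\LengthOf{-}$, $\SizeOf{-}$ and $\nocc{\varB}{-}$ (for any fixed $\varB\neq\varA$) are additive with respect to bag concatenation: if $\bagB=\bagB_1\bagcat\cdots\bagcat\bagB_k$ then $\LengthOf{\bagB}=\sum_i\LengthOf{\bagB_i}$, $\SizeOf{\bagB}=\sum_i\SizeOf{\bagB_i}$, and $\nocc{\varB}{\bagB}=\sum_i\nocc{\varB}{\bagB_i}$. These identities will reduce each inductive case to a purely arithmetic check, and the final parenthetical about $\LVarOf{\exprA'}$ is immediate from the occurrence identity.

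For the base case $\exprA=\varC$, I would handle the two sub-cases that contribute to the support: when $\varC=\varA$ and $\bagB=\mset{\vtermB}$ we get $\exprA'=\vtermB$, and the three identities reduce to trivialities using $\SizeOf{\varA}=1$; when $\varC\neq\varA$ and $\bagB=\emptybag$ we get $\exprA'=\varC$, and everything is immediate. The abstraction case I expect to be direct: the freshness conditions on $\svarB$ in \cref{definition:rsubst} guarantee $\varA,\varB\notin\svarB$ (the latter by $α$-renaming if needed), so occurrences pass unchanged through the binder, while the extra unit of size contributed by $\labs{\svarB}{-}$ cancels on both sides of the size identity.

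For the inductive cases that actually partition the bag --- applications, bag constructors, and cons streams --- any $\exprA'\in\support{\rsubst{\exprA}{\varA}{\bagB}}$ arises from some partitioning $p$ of $\bagB$ into sub-bags $\bagB\restrict_p i$, together with a choice of $\exprA'_i\in\support{\rsubst{\exprA_i}{\varA}{\bagB\restrict_p i}}$ for each immediate sub-expression $\exprA_i$. The induction hypothesis applied componentwise, combined with the additivity identities above, will yield the three desired equalities for $\exprA'$. The empty-stream case is an immediate check against its two-branch definition.

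I do not anticipate any serious obstacle: every case is local and routine once one notes the additivity of the three quantities over bag concatenation, and the result is really just a bookkeeping verification that the resource substitution preserves ``mass'' exactly. The only point requiring care will be the $α$-renaming discipline in the abstraction clause, which is already built into \cref{definition:rsubst} itself.
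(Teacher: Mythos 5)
Your proposal is correct and follows essentially the same route as the paper: a structural induction on $\exprA$ over the clauses of \cref{definition:rsubst}, with each partitioning case reduced to an arithmetic check via the additivity of $\SizeOf{-}$, $\LengthOf{-}$ and $\nocc{\varB}{-}$ over bag concatenation. The only cosmetic difference is that the paper spells out the induction for the size identity and remarks that the occurrence counts follow the same pattern, whereas you carry all three claims through the induction simultaneously.
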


Resource substitution is extended to sums by linearity:
\[\rsubst*[\bigg]{\sum_{i=1}^k\exprA_i}{\varA}[\bigg]{\sum_{j=1}^l\bagA_j}
  \eqdef
\sum_{i=1}^k\sum_{j=1}^l \rsubst{\exprA_i}{\varA}{\bagA_j}
\;.
\]

One can easily check that, with that extension,
all the identities defining resource substitution
in \cref{definition:rsubst} also hold if we replace
terms with sums.

The usual result on the commutation of substitutions holds:
\begin{lemma}\label{lemma:rsubst:rsubst}
  We have $
  \rsubst{\rsubst{\exprA}{\varA}{\bagA}}{\varB}{\bagB}=
  \sum_{\bagB\splitinto\bagB_1\bagcat\bagB_2}
  \rsubst{\rsubst{\exprA}{\varB}{\bagB_1}}{\varA}{\rsubst{\bagA}{\varB}{\bagB_2}}$
  whenever $\varA\not\in\LVarOf{\bagB}\cup\set{\varB}$.
\end{lemma}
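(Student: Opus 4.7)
The plan is to proceed by mutual structural induction on $\exprA$, treating variables, value terms, base terms, bag terms, and stream terms simultaneously. Two structural facts will be used throughout: both resource substitutions are bilinear, so nested sums commute freely; and partitionings compose associatively (from the preliminaries), so that a partitioning of $\bagB$ into four pieces can equivalently be regrouped as two nested binary partitionings, in any grouping.

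The base case treats $\exprA$ a variable $\varC$, split according to whether $\varC$ equals $\varA$, $\varB$, or neither. If $\varC = \varA$, then the LHS is nonzero only when $\bagA$ is a singleton $\mset{\vtermA}$, yielding $\rsubst{\vtermA}{\varB}{\bagB}$; on the RHS, $\rsubst{\varA}{\varB}{\bagB_1}$ forces $\bagB_1=\emptybag$ (using $\varA\ne\varB$), so the outer sum collapses and the two sides coincide. The symmetric case $\varC=\varB$ crucially invokes the hypothesis $\varA\notin\LVarOf{\bagB}$ to observe that $\rsubst{\vtermB}{\varA}{\rsubst{\bagA}{\varB}{\bagB_2}}$ with $\vtermB\in\bagB_1$ is nonzero only when $\rsubst{\bagA}{\varB}{\bagB_2}=\emptybag$, forcing $\bagA=\emptybag$ and $\bagB_2=\emptybag$; both sides then reduce to the expected substituted value. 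The third subcase, $\varC\notin\set{\varA,\varB}$, is immediate, both sides reducing to $\varC$ under $\bagA=\bagB=\emptybag$.

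For the inductive cases, the abstraction case is routine after $\alpha$-renaming the bound sequence variable to avoid $\set{\varA,\varB}\cup\LVarOf{\bagA}\cup\LVarOf{\bagB}$. The bag, application, empty-stream, and cons-stream cases all follow a uniform template: unfold the definition of resource substitution on both sides to expose the inner binary partitioning of $\bagA$ (on the LHS) or of $\bagB$ (on the RHS); apply the IH to each component; and reassociate the resulting nested partitionings of $\bagB$ using associativity of $\splitinto$. For example, in the application case $\exprA=\appl{\hexprC}{\strC}$, the LHS expands to a sum indexed by $\bagA\splitinto\bagA_1\bagcat\bagA_2$ and $\bagB\splitinto\bagB_{11}\bagcat\bagB_{12}\bagcat\bagB_{21}\bagcat\bagB_{22}$, and this must be matched with the analogous expansion of the RHS obtained by splitting $\bagB$ into two, then each half into two again.

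The main obstacle is the bookkeeping in the application, bag, and cons-stream cases: to match the RHS's outer substitution $\rsubst{\cdot}{\varA}{\rsubst{\bagA}{\varB}{\bagB_2}}$, one needs to identify a binary partitioning of the bag sum $\rsubst{\bagA}{\varB}{\bagB_2}$ with a pair of binary partitionings, one of $\bagA$ and one of $\bagB_2$, applied componentwise through substitution. Concretely, this is captured by the identity $\rsubst{\bagA_1\bagcat\bagA_2}{\varB}{\bagC}=\sum_{\bagC\splitinto\bagC_1\bagcat\bagC_2}\rsubst{\bagA_1}{\varB}{\bagC_1}\bagcat\rsubst{\bagA_2}{\varB}{\bagC_2}$, a Fubini-style exchange that follows directly from the bag clause of \cref{definition:rsubst} together with associativity of $\splitinto$. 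I would establish this as a separate preparatory lemma and then invoke it, along with bilinearity and associativity, as a black box in each inductive step, so that the remaining verification in every case reduces to a purely mechanical rearrangement of sums.
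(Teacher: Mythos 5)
Your proposal is correct and follows essentially the same route as the paper, which proves this lemma by a straightforward induction on $\exprA$ using the associativity of sums over partitionings; your extra care in the variable base cases (where the hypothesis $\varA\notin\LVarOf{\bagB}\cup\set{\varB}$ is actually used) and your isolation of the Fubini-style identity for substitution into concatenated bags are just explicit renderings of the bookkeeping the paper leaves implicit.
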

\begin{proof}
  The proof is straightforward by induction on $\exprA$,
  using the associativity of sums over partitionings.
\end{proof}

It is also easy to describe the effect of performing an ordinary
substitution after a resource substitution:
\begin{lemma}\label{lemma:rsubst:subst}
  We have $
  \subst{\rsubst{\exprA}{\varA}{\bagA}}{\varB}{\vtermsB}=
  \rsubst{\subst{\exprA}{\varB}{\vtermsB}}{\varA}{\subst{\bagA}{\varB}{\vtermsB}}$
  whenever $\varA\not\in\LVarOf{\vtermsB}\cup\set{\varB}$.
\end{lemma}
Reversing the order of substitutions, there is no straightforward way to
describe the resource sum \(\rsubst{\subst{\exprA}{\varA}{\vtermsA}}{\varB}{\bagB}\),
because the occurrences of \(\varB\) in \(\vtermsA\) may be duplicated.
Still, it is easy to treat the special case where \(\vtermsA\) is reduced to a
variable, as the ordinary substitution amounts to a renaming of variables:
\begin{lemma}\label{lemma:subst:rsubst:var}
  We have \(\rsubst{\subst{\exprA}{\varA}{\varB}}{\varB}{\bagB}=
  \rsubst{\exprA}{\varA}{\bagB}\) if \(\varA=\varB\)
  or \(\varB\not\in\LVarOf{\exprA}\).
\end{lemma}

\paragraph{Shifting and erasing sequence variables.}
If $\exprA$ is a resource expression and $\svarA\in\SVar$,
we write $\shiftup{\exprA}{\svarA}$ for the term obtained
by replacing each $\svarA[i]$ occurring free in $\exprA$ with $\svarA[i+1]$.
Similarly, if $\svarA[0]$ does not occur in $\exprA$, we write 
$\shiftdown{\exprA}{\svarA}$ for the expression obtained
by replacing each $\svarA[i+1]$ in $\exprA$ with $\svarA[i]$,
so that $\exprA=\shiftup{\shiftdown{\exprA}{\svarA}}{\svarA}$
in this case
--- and $\exprB=\shiftdown{\shiftup{\exprB}{\svarA}}{\svarA}$
for any expression $\exprB$.
Given a variable $\varA$ and a value term $\vtermA=\labs{\svarA}{\btermA}$, 
with $\svarA$ chosen so that $\varA\not\in\svarA$,
we define $
\labs{\varA}{\vtermA}\eqdef
\labs{\svarA}{\subst{\shiftup{\btermA}{\svarA}}{\varA}{\svarA[0]}}
$.

\begin{example}\label{example:abstraction}
  Using the terms of \cref{example:terms} again, we have (assuming \(\varA\not\in\set{\varB,\varC}\cup\svarC\) and \(\varB\not\in\svarC\))
  \begin{gather*}
    \trProj{\varA}
    =\labs{\varC}{\trProj{\varA}}
    \qquad
    \trProj{0}
    =\labs{\varC}{\trProj{\varC}}
    \qquad
    \trProj{i+1}
    =\labs{\varC}{\trProj{i}}
    \qquad
    \trCCz{\varA}
    = \labs{\varB}{\labs{\svarC}{
        \appl{\varA}{\mset{\trProj{\varB}}\cons\emptystream}
    }}
    \qquad
    \trCCp{\varA}
    = \labs{\varB}{\labs{\svarC}{
        \appl{\varA}{
          \mset{ \trCCz{\varB}, \trCCz{\varB} }
          \cons\emptystream
        }
    }}
    \;.
  \end{gather*}
\end{example}

We define the \definitive{erasure} of the sequence variable $\svarA$ 
in an expression $\exprA$ by:
\[ \erase{\exprA}{\svarA}\eqdef\begin{cases}\exprA &\text{if $\svarA\cap\LVarOf{\exprA}=\emptyset$}\\0&\text{otherwise}\end{cases}\;.\]
In other words, $\erase{\exprA}{\svarA}=0$ if some $\svarA[i]$ occurs free in $\exprA$
and $\erase{\exprA}{\svarA}=\exprA$ otherwise.

\begin{lemma}\label{lemma:erase:shift}
  We have $\erase{\shiftup{\exprA}{\svarA}}{\svarA}=\erase{\exprA}{\svarA}$.
  Moreover, if $\svarA[0]\not\in\LVarOf{\exprA}$ 
  then $\erase{\exprA}{\svarA}=\erase{\shiftdown{\exprA}{\svarA}}{\svarA}$.
\end{lemma}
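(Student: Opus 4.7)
The plan is to observe that the erasure $\erase{-}{\svarA}$ is a purely coarse-grained operation: by definition it depends only on the binary predicate ``$\svarA\cap\LVarOf{\exprA}=\emptyset$''. So it suffices to show that each of the shifting operations preserves that predicate, which is essentially because both $\shiftup{-}{\svarA}$ and $\shiftdown{-}{\svarA}$ rename variables of the sequence $\svarA$ into other variables of the same sequence $\svarA$.

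For the first identity, I would case-split on whether $\svarA\cap\LVarOf{\exprA}=\emptyset$. If it is empty, then no variable $\svarA[i]$ occurs free in $\exprA$, so the shifting rule has nothing to act on and $\shiftup{\exprA}{\svarA}=\exprA$; both sides thus reduce to $\exprA$. Otherwise, some $\svarA[i]\in\LVarOf{\exprA}$, and after shifting this occurrence is renamed to $\svarA[i+1]$, which still lies in $\svarA$; hence $\svarA\cap\LVarOf{\shiftup{\exprA}{\svarA}}\neq\emptyset$ as well, and both sides equal $0$.

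For the second identity, the cleanest route is to avoid redoing the analysis and instead invoke the first identity via the stated invertibility of shifts. Under the hypothesis $\svarA[0]\notin\LVarOf{\exprA}$, the preamble already asserts $\exprA=\shiftup{\shiftdown{\exprA}{\svarA}}{\svarA}$. Substituting this into the left-hand side and then applying the first identity with $\shiftdown{\exprA}{\svarA}$ in place of $\exprA$ gives
\[
  \erase{\exprA}{\svarA}
  =\erase{\shiftup{\shiftdown{\exprA}{\svarA}}{\svarA}}{\svarA}
  =\erase{\shiftdown{\exprA}{\svarA}}{\svarA},
\]
as required.

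There is no genuine obstacle here: the statement is essentially a bookkeeping observation about erasure interacting well with shifts. The only subtle point to get right in a formal write-up is the justification that shifting truly does preserve membership in $\svarA$ — and for the second identity, that the hypothesis $\svarA[0]\notin\LVarOf{\exprA}$ is exactly what makes $\shiftdown{-}{\svarA}$ meaningful and inverse to $\shiftup{-}{\svarA}$, so that the first part can be reused verbatim.
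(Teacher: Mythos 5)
Your proof is correct and follows essentially the same route as the paper's: both rest on the observation that the shifts rename variables of $\svarA$ into variables of $\svarA$, so the predicate $\svarA\cap\LVarOf{\exprA}=\emptyset$ governing erasure is preserved (the paper states this as the equivalences $\svarA[i+1]\in\LVarOf{\shiftup{\exprA}{\svarA}}$ iff $\svarA[i]\in\LVarOf{\exprA}$, and dually for the downward shift). Your derivation of the second identity from the first via the round-trip $\exprA=\shiftup{\shiftdown{\exprA}{\svarA}}{\svarA}$ is a harmless repackaging of the same fact.
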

\begin{proof}
  Direct from the definitions, since
  $\svarA[i+1]\in\LVarOf{\shiftup{\exprA}{\svarA}}$
  iff
  $\svarA[i]\in\LVarOf{\exprA}$,
  and
  $\svarA[i]\in\LVarOf{\shiftdown{\exprA}{\svarA}}$
  iff
  $\svarA[i+1]\in\LVarOf{\exprA}$.
\end{proof}

Both shifts and erasure are linearly extended to resource sums:
\[
  \shiftup*{\sum_{i=1}^k \exprA_i}{\svarA}
  \eqdef \sum_{i=1}^k \shiftup{\exprA_i}{\svarA}
  \qquad
  \shiftdown*{\sum_{i=1}^k \exprA_i}{\svarA}
  \eqdef \sum_{i=1}^k \shiftdown{\exprA_i}{\svarA}
  \qquad
  \erase*{\sum_{i=1}^k \exprA_i}{\svarA}
  \eqdef \sum_{i=1}^k \erase{\exprA_i}{\svarA}
\]
and these operations commute with substitution in the following sense:
\begin{lemma}\label{lemma:shift:rsubst}
  If $\svarA\not\in\SVarOf{\bagsA}$, then
  $\shiftup{\rsubst{\exprsA}{\svarA[i]}{\bagsA}}{\svarA}=\rsubst{\shiftup{\exprsA}{\svarA}}{\svarA[i+1]}{\bagsA}$ and
  $\shiftdown{\rsubst{\exprsA}{\svarA[i+1]}{\bagsA}}{\svarA}=\rsubst{\shiftdown{\exprsA}{\svarA}}{\svarA[i]}{\bagsA}$
  (assuming $\svarA[0]\not\in\LVarOf{\exprsA}$ in that case).
  And if $\varA\not\in\svarA$, then we have
  $\shiftup{\rsubst{\exprsA}{\varA}{\bagsA}}{\svarA}=\rsubst{\shiftup{\exprsA}{\svarA}}{\varA}{\shiftup{\bagsA}{\svarA}}$,
  $\shiftdown{\rsubst{\exprsA}{\varA}{\bagsA}}{\svarA}=\rsubst{\shiftdown{\exprsA}{\svarA}}{\varA}{\shiftdown{\bagsA}{\svarA}}$
  (assuming $\svarA[0]\not\in\LVarOf{\exprsA}\cup\LVarOf{\bagsA}$ in that case)
  and
  $\erase{\rsubst{\exprsA}{\varA}{\bagsA}}{\svarA}=\rsubst*{\erase{\exprsA}{\svarA}}{\varA}{\erase{\bagsA}{\svarA}}$.
\end{lemma}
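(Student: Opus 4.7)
The plan is to handle the four shift identities by a common structural induction on $\exprsA$, and the erasure identity by a separate non-inductive argument based on \cref{lemma:rsubst:size}.

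For the shift identities, the variable base cases follow by direct computation, splitting on whether the substituted variable coincides with $\exprsA$ or not: the shift operations only rename occurrences of the specific sequence variable $\svarA$, and the three-way case analysis in the variable case of \cref{definition:rsubst} transforms naturally under this renaming. The application, bag, and stream cons cases unfold \cref{definition:rsubst} into sums indexed by partitionings; the inductive hypothesis applies to each component, and we conclude using that the shift operators commute with the term constructors $\appl{\cdot}{\cdot}$, $\mset{\cdot,\ldots,\cdot}$, and $\cons$, as well as with the linear sums (partitionings being insensitive to variable renaming).

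The main obstacle is the abstraction case $\exprsA = \labs{\svarB}{\btermA}$. The hypothesis $\svarA \not\in \SVarOf{\bagsA}$ is precisely what permits us to $\alpha$-rename so that $\svarB \ne \svarA$, in addition to the standard freshness conditions $\varA \not\in \svarB$ and $\svarB \cap \LVarOf{\bagsA} = \emptyset$ required by the abstraction clause of \cref{definition:rsubst}. With $\svarB \ne \svarA$, the shift $\shiftup{\cdot}{\svarA}$ (resp.\ $\shiftdown{\cdot}{\svarA}$) does not touch the bound sequence variable and thus commutes with $\labs{\svarB}{\cdot}$, so that both sides reduce, by the inductive hypothesis applied to $\btermA$, to $\labs{\svarB}{\cdot}$ over the same shifted and resource-substituted base sum. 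The shiftdown variants require the extra hypothesis $\svarA[0] \not\in \LVarOf{\exprsA}$ so that $\shiftdown{\exprsA}{\svarA}$ is defined; \cref{lemma:rsubst:size} then ensures this condition transfers automatically to every summand on the left-hand side, so the same analysis applies symmetrically.

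For the erasure identity no induction is needed. Since $\varA \not\in \svarA$ and $\svarA \not\in \SVarOf{\bagsA}$, i.e.\ $\svarA[j] \not\in \LVarOf{\bagsA}$ for all $j\in\N$, the free-variable clause of \cref{lemma:rsubst:size} gives, for every summand $\exprA' \in \rsubst{\exprsA}{\varA}{\bagsA}$ and every $j\in\N$, that $\svarA[j] \in \LVarOf{\exprA'}$ iff $\svarA[j] \in \LVarOf{\exprsA}$. Hence the predicate ``$\svarA \cap \LVarOf{\cdot} = \emptyset$'' has the same truth value on $\exprsA$ as on each summand $\exprA'$: if it holds, both sides of the equation equal $\rsubst{\exprsA}{\varA}{\bagsA}$; otherwise, every summand on the left is erased to $0$, while the right-hand side is $\rsubst{0}{\varA}{\bagsA} = 0$.
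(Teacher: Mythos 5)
Your proof is correct and follows essentially the same route as the paper, whose own argument is simply ``a straightforward induction on $\exprA$ in the single-expression case, then linearity'': your treatment of the four shift identities, including the $\alpha$-renaming of the bound sequence variable in the abstraction case and the use of \cref{lemma:rsubst:size} to propagate the $\svarA[0]\not\in\LVarOf{\cdot}$ side condition, is exactly that induction spelled out. The one genuine (if small) divergence is the erasure identity, which you derive non-inductively from the free-variable clause of \cref{lemma:rsubst:size}; this is a clean shortcut that works because erasure is an all-or-nothing operation, whereas the paper folds it into the same induction. One minor quibble: the hypothesis $\svarA\not\in\SVarOf{\bagsA}$ is not what makes the $\alpha$-renaming $\svarB\not=\svarA$ possible (a fresh bound variable is always available); it is needed so that the shifts act trivially on the substituted bag, e.g.\ in the variable base case --- but this does not affect the validity of your argument.
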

\begin{proof}
  In case $\exprsA=\exprA\in\ResourceExprs$, each result follows by a
  straightforward induction on $\exprA$.
  We deduce the general result by linearity.
\end{proof}

\paragraph{Resource reduction.}
Now, we can define fine-step resource reduction by:
\[
  \appl*{\labs{\varA}{\vtermA}}{\bagB\cons\strC}
  \resredz \appl{\rsubst{\vtermA}{\varA}{\bagB}}{\strC}
  \qquad
  \text{and}
  \qquad
  \appl*{\labs{\svarA}{\btermA}}{\emptystream}
  \resredz \erase{\btermA}{\svarA}\,.
\]
Note that the first reduction step,
which uses the just introduced notation for the abstraction of a single variable,
amounts to:
\[
  \appl*{\labs{\svarA}{\btermA}}{\bagB\cons\strC} 
  \resredz \appl*[\big]{\labs{\svarA}{\shiftdown{\rsubst{\btermA}{\svarA[0]}{\bagB}}{\svarA}}}{\strC}
\]
(choosing $\svarA\not\in\SVarOf{\bagB}$)
thanks to \cref{lemma:shift:rsubst}.

\begin{figure}
  \begin{gather*}
    \begin{prooftree}
      \infer0[\resredrulebeta*]{
        \appl*{\labs{\varA}{\vtermA}}{\bagB\cons\strC}
        \resred \appl{\rsubst{\vtermA}{\varA}{\bagB}}{\strC}
      }
    \end{prooftree}
    \qquad
    \begin{prooftree}
      \infer0[\resredruleiota*]{
        \appl*{\labs{\svarA}{\btermA}}{\emptystream} \resred \erase{\btermA}{\svarA}
      }
    \end{prooftree}
    \\[1em]
    \begin{prooftree}
      \hypo{\btermA\resred\btermsA'}
      \infer1[\resredruleabs*]{\labs{\svarA}{\btermA}\resred\labs{\svarA}{\btermsA'}}
    \end{prooftree}
    \qquad
    \begin{prooftree}
      \hypo{\vtermA\resred\vtermsA'}
      \infer1[\resredruleappl*]{\appl{\vtermA}{\strB}\resred\appl{\vtermsA'}{\strB}}
    \end{prooftree}
    \qquad
    \begin{prooftree}
      \hypo{\strB\resred\strsB'}
      \infer1[\resredruleappr*]{\appl{\hexprA}{\strB}\resred\appl{\hexprA}{\strsB'}}
    \end{prooftree}
    \\[1em]
    \begin{prooftree}
      \hypo{\vtermA\resred\vtermsA'}
      \infer1[\resredrulebag*]{\mset{\vtermA}\bagcat\bagB\resred\mset{\vtermsA'}\bagcat\bagB}
    \end{prooftree}
    \qquad
    \begin{prooftree}
      \hypo{\bagA\resred\bagsA'}
      \infer1[\resredrulestrl*]{\bagA\cons\strB\resred\bagsA'\cons\strB}
    \end{prooftree}
    \qquad
    \begin{prooftree}
      \hypo{\strB\resred\strsB'}
      \infer1[\resredrulestrr*]{\bagA\cons\strB\resred\bagA\cons\strsB'}
    \end{prooftree}
  \end{gather*}
  \caption{
    Rules of fine-step extensional resource reduction
  }
  \label{fig:finestep}
\end{figure}

We extend these base reduction steps by applying them under any context,
and then applying them in parallel within sums of expressions.
Formally:
\begin{definition}
  \definitive{Fine-step resource reduction} is the relation from resource expressions to
  sums of resource expressions defined by the rules of \cref{fig:finestep}.

  Resource reduction is then extended to a relation on term sums
  by setting $\termsA\sresred\termsA'$ iff 
  $\termsA=\sum_{i=1}^k\termA_i$ and 
  $\termsA'=\sum_{i=1}^k\termsA'_i$ 
  with $\termA_i\resredR \termsA'_i$ for $1\le i\le k$,
  where $\resredR$ is the reflexive closure of $\resred$
  ($\termA\resredR\termsA'$ if $\termA\resred\termsA'$ or $\termsA'=\termA$).

  We also write \(\sresredRT\) for the reflexive and transitive closure
  of \(\sresred\), and \(\sresred^k\) for its \(k\)-th iteration.
\end{definition}

Note that, as particular case,
whenever \(\svarB\not\in\SVarOf{\btermA}\),
we have
\[
  \appl*{\labs{\varA}{\labs{\svarB}{\btermA}}}{\bagB\cons\emptystream}
  \sresred
  \appl*{\labs{\svarB}{\rsubst{\btermA}{\varA}{\bagB}}}{\emptystream}
  \sresred
  \erase{\rsubst{\btermA}{\varA}{\bagB}}{\svarB}
  =\rsubst{\btermA}{\varA}{\bagB}\,.
\]

\begin{example}\label{example:reduction:fine}
  The smallest reducible base terms one can form are
  \(\appl{\trProj{\varA}}{\emptystream}\) for \(\varA\in\LVar\),
  and
  \(\appl{\trProj{i}}{\emptystream}\) for \(i\in\N\).
  Recalling the computations of \cref{example:abstraction},
  we obtain
  \(
  \appl{\trProj{\varA}}{\emptystream}
  \resred
  \appl{\rsubst{\trProj{\varA}}{\varC}{\emptybag}}{\emptystream}
  =
  \appl{\trProj{\varA}}{\emptystream}
  \)
  by \(\resredrulebeta\),
  and 
  \(
  \appl{\trProj{\varA}}{\emptystream}
  \resred
  \erase*{\appl{\varA}{\emptystream}}{\svarB}
  =
  \appl{\varA}{\emptystream}
  \)
  by \(\resredruleiota\).
  Similarly, we obtain 
  \(
    \appl{\trProj{0}}{\emptystream}
    \resred
    \appl{
      \rsubst{\trProj{\varA}}{\varA}{\emptybag}
    }
    {\emptystream}
    =
    \appl{0}{\emptystream}
    = 0
  \)
  and
  \(
    \appl{\trProj{i+1}}{\emptystream}
    \resred
    \appl{
      \rsubst{\trProj{i}}{\varA}{\emptybag}
    }
    {\emptystream}
    =
    \appl{\trProj{i}}{\emptystream}
  \)
  by \(\resredrulebeta\),
  and \(
    \appl{\trProj{i}}{\emptystream}
    \resred 
    \erase*{\appl{\svarB[i]}{\emptystream}}{\svarB}
    =
    \appl{0}{\emptystream}
    = 0
  \)
  by \(\resredruleiota\).

  We moreover have:
  \(
    \rsubst{\trProj{\varB}}{\varB}{\mset{\trProj{\varA}}}
    =
    \labs{\svarC}{\appl{\trProj{\varA}}{\emptystream}}
    \resred
    \labs{\svarC}{\appl{\varA}{\emptystream}}
    =
    \trProj{\varA}
  \)
  for any variables \(\varA\) and \(\varB\),
  whence the following sequence
  of reductions:
  \[
    \appl{\trProj{0}}{\mset{\trProj{\varA}}\cons\emptystream}
    \sresred
    \appl{\rsubst{\trProj{\varA}}{\varA}{\mset{\trProj{\varA}}}}{\emptystream}
    \sresred
    \appl{\trProj{\varA}}{\emptystream}
    \sresred
    \appl{\varA}{\emptystream}
  \]
  ending on an irreducible base term.
  Note that the middle step can be derived either by reducing
  \(\rsubst{\trProj{\varB}}{\varB}{\mset{\trProj{\varA}}}\)
  to \(\trProj{\varA}\) as above,
  or by reducing 
  \(
  \appl{
    \rsubst{\trProj{\varB}}{\varB}{\mset{\trProj{\varA}}}
  }{\emptystream}=
  \appl*{
    \labs{\svarC}{\appl{\trProj{\varA}}{\emptystream}}
  }{\emptystream}
  \resred
  \erase*{\appl{\trProj{\varA}}{\emptystream}}{\svarC}
  =
  \appl{\trProj{\varA}}{\emptystream}
  \)
  by \(\resredruleiota\).

  The previous computations entail
  \(\labs{\svarC}{\appl{\trProj{0}}{\mset{\trProj{\varA}}\cons\emptystream}}
  \sresred^3\trProj{\varA}\)
  and 
  \(\labs{\svarC}{\appl{\trProj{0}}{\mset{\trProj{\svarC[i]}}\cons\emptystream}}
  \sresred^3\trProj{i}\).
  This suggests defining a family of value terms \(\trProj{\varA}[k]\) for \(k\in\N\),
  by setting \(\trProj{\varA}[0]\eqdef\trProj{\varA}\)
  and \(\trProj{\varA}[k+1]\eqdef\labs{\svarC}{\appl{\trProj{0}}{\mset{\trProj{\varA}[k]}\cons\emptystream}}\).
  The previous computation shows \(\trProj{\varA}[1]\sresred^{3}\trProj{\varA}[0]\),
  and an easy generalization entails \(\trProj{\varA}[k]\sresred^{3k}\trProj{\varA}\).
  Similarly setting \(\trProj{i}[0]\eqdef\trProj{i}\)
  and \(\trProj{i}[k+1]\eqdef\labs{\svarC}{\appl{\trProj{0}}{\mset{\trProj{\svarC[i]}[k]}\cons\emptystream}}\),
  we obtain \(\trProj{i}[k]\sresred^{3k}\trProj{i}\).
  A more intricate example is:
  \begin{align*}
    \appl{\trCCp{\varA}}{\mset{\trProj{0},\trProj{0}}\cons\emptystream}
    &
    \sresred^2
    \rsubst*{
      \appl{\varA}{
        \mset{ \trCCz{\varB}, \trCCz{\varB} }
        \cons\emptystream
      }
    }{\varB}{\mset{\trProj{0},\trProj{0}}}
    \\
    &=
    2 \appl{\varA}{
      \mset{ 
        \rsubst{\trCCz{\varB}}{\varB}{\mset{\trProj{0}}},
        \rsubst{\trCCz{\varB}}{\varB}{\mset{\trProj{0}}}
      }
      \cons\emptystream
    }
    \\
    &=
    2 \appl{\varA}{
      \mset{ 
        \labs{\svarC}{
          \appl{\trProj{0}}{\mset{\trProj{\svarC[0]}}\cons\emptystream}
        },
        \labs{\svarC}{
          \appl{\trProj{0}}{\mset{\trProj{\svarC[0]}}\cons\emptystream}
        }
      }
      \cons\emptystream
    }
    \\
    &=
    2 \appl{\varA}{
      \mset{ 
        \trProj{0}[1],
        \trProj{0}[1]
      }
      \cons\emptystream
    }
    \\
    &\sresred^6
    2 \appl{\varA}{
      \mset{
        \trProj{0},
        \trProj{0}
      }
      \cons\emptystream
    }
    \,.
  \end{align*}
\end{example}

Observe that $\sresred$ is automatically reflexive.
The reader acquainted with the ordinary resource calculus might 
be surprised by this choice of definition, as it immediately forbids 
strong normalizability.
However, $\resred$ is already reflexive on terms
containing a redex of the form
$\btermA=\appl*{\labs{\svarA}{\btermB}}{\emptystream}$ 
with $\svarA\cap\LVarOf{\btermB}=\emptyset$:
applying rule $\resredrulebeta$ yields a reduction
$\btermA\resred\btermA$.
Nonetheless, resource reduction is weakly normalizing,
in the sense that any resource sum reduces to a sum of
irreducible expressions:
this will follow from strong normalizability of full-step reduction
--
the latter essentially amounts to iterate $\resredrulebeta$ until
$\resredruleiota$ can be applied.
For now, we establish the confluence of fine-step reduction.

First, the extension of $\resred$ to $\sresred$ ensures that $\sresred$ is
linear and compatible (with syntactic constructs), in the following sense:
\begin{lemma}\label{lemma:sresred:context}
  Resource reduction $\sresred$ is reflexive and:
  \begin{enumerate}
    \item if $\termsA\sresred\termsA'$
      and $\termsB\sresred\termsB'$
      then $\termsA+\termsB\sresred\termsA'+\termsB'$;
    \item if $\btermsA\sresred\btermsA'$
      then $\labs{\svarA}{\btermsA}\sresred\labs{\svarA}{\btermsA'}$;
    \item if $\strsB\sresred\strsB'$
      then $\appl{\hexprsA}{\strsB}\sresred\appl{\hexprsA}{\strsB'}$
      and $\bagsA\cons\strsB\sresred\bagsA\cons\strsB'$;
    \item if $\vtermsA\sresred\vtermsA'$
      then $\labs{\varA}{\vtermsA}\sresred\labs{\varA}{\vtermsA'}$,
      $\appl{\vtermsA}{\bagsB}\sresred\appl{\vtermsA'}{\bagsB}$
      and $\mset{\vtermsA}\sresred\mset{\vtermsA'}$;
    \item if $\bagsA\sresred\bagsA'$
      then $\bagsA\bagcat\bagsB\sresred\bagsA'\bagcat\bagsB$
      and $\bagsA\cons\strsB\sresred\bagsA'\cons\strsB$.
  \end{enumerate}
  Moreover, 
  $\appl*{\labs{\svarA}{\btermsA}}*{\bagsB\cons\strsC}\sresred
  \appl*[\big]{\labs{\svarA}{\shiftdown{\rsubst{\btermsA}{\svarA[0]}{\bagsB}}{\svarA}}}{\strsC}$
  and
  $\appl*{\labs{\svarA}{\btermsA}}{\emptystream} \sresred \erase{\btermsA}{\svarA}$.
\end{lemma}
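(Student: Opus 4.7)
The approach is to unfold the definition of $\sresred$ and reduce each claim to the corresponding base or congruence rule for $\resred$ applied summand-by-summand. Recall that $\termsA \sresred \termsA'$ means $\termsA = \sum_{i=1}^k \termA_i$ and $\termsA' = \sum_{i=1}^k \termsA'_i$ with $\termA_i \resredR \termsA'_i$ for each $i$. Reflexivity is then immediate from the reflexive-closure step of the definition, and additivity~(1) follows by concatenating the two index families.

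For the unary congruences in~(2)--(4), I would first use the linearity of the enclosing term former to write $C[\termsA] = \sum_i C[\termA_i]$ (e.g.\ $\labs{\svarA}{\sum_i\btermA_i} = \sum_i \labs{\svarA}{\btermA_i}$). For each $i$, either $\termA_i = \termsA'_i$, in which case $C[\termA_i] = C[\termsA'_i]$ trivially; or $\termA_i \resred \termsA'_i$, in which case the matching one-step congruence rule ($\resredruleabs*$, $\resredruleappl*$, $\resredruleappr*$, $\resredrulebag*$, $\resredrulestrl*$, or $\resredrulestrr*$) yields $C[\termA_i] \resred C[\termsA'_i]$, the right-hand side being understood by linearity of $C$. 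Either way $C[\termA_i] \resredR C[\termsA'_i]$, and summing recovers $C[\termsA] \sresred C[\termsA']$. The binary cases in~(5), such as $\bagsA \bagcat \bagsB \sresred \bagsA' \bagcat \bagsB$ with $\bagsB = \sum_j \bagB_j$ fixed, follow the same template after distributing on both sides: $\bagsA \bagcat \bagsB = \sum_{i,j} \bagA_i \bagcat \bagB_j$, and only the $\bagA_i$-component reduces.

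For the two base-step extensions, I would write $\btermsA = \sum_i \btermA_i$, $\bagsB = \sum_j \bagB_j$ and $\strsC = \sum_k \strC_k$, expand $\appl*{\labs{\svarA}{\btermsA}}*{\bagsB\cons\strsC}$ into a triple sum of redexes by linearity, pick a single sequence variable $\svarA$ fresh for every $\bagB_j$ at once (possible since there are only finitely many), and apply $\resredrulebeta*$ to each summand. Reassembling via the linearity of resource substitution, shifting, abstraction and application (identities already established after \cref{definition:rsubst}) then yields the displayed reduction. The $\resredruleiota*$ case is analogous, using the linearity of $\erase{-}{-}$ in place of that of $\rsubst{-}{-}{-}$ and $\shiftdown{-}{-}$.

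The only point requiring some vigilance is that rule $\resredrulebag*$ explicitly mentions only reductions on the first element of a bag $\mset{\vtermA}\bagcat\bagB$: reducing an interior element of some $\bagA_i$ thus implicitly relies on the identification of bags up to permutation. Together with the need to choose a common fresh sequence variable across all summands in the base-step extensions, this accounts for essentially all the care in the proof; no combinatorial difficulty beyond the linearity conventions for substitution, shifting and erasure is involved.
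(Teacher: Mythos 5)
Your proposal is correct and matches the paper's approach: the paper's own proof is the single line ``each statement is a direct consequence of the definitions,'' and your argument is exactly the expected unfolding (per-summand application of the congruence rules, linearity of the term formers and of $\rsubst{-}{\varA}{-}$, $\shiftdown{-}{\svarA}$ and $\erase{-}{\svarA}$, plus a common fresh $\svarA$ for the $\resredrulebeta*$ case). The only summand not literally covered by a listed congruence rule is $\labs{\varA}{\vtermsA}$, which is defined via $\shiftup{-}{\svarA}$ and a variable renaming; one extra remark that $\resred$ is stable under such injective renamings closes that case.
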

\begin{proof}
  Each statement is a direct consequence of the definitions,
  by linearity.
\end{proof}

We then obtain that $\sresred$ is compatible with substitution in the following sense:
\begin{lemma}\label{lemma:sresred:rsubst}
  If $\termsA\sresred\termsA'$ then 
  for any bag sum $\bagsB$ we have
  $\rsubst{\termsA}{\varA}{\bagsB}\sresred\rsubst{\termsA'}{\varA}{\bagsB}$.
  And if $\bagsA\sresred\bagsA'$ then
  for any resource sum $\exprsA$ we have
  $\rsubst{\exprsA}{\varA}{\bagsA}\sresred\rsubst{\exprsA}{\varA}{\bagsA'}$.
\end{lemma}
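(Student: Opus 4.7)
The plan is to prove the two statements by induction, after reducing each to a single-step, single-term setting via the linearity of resource substitution and the pointwise nature of $\sresred$.

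For the first statement, observe that if $\termsA=\sum_i\termA_i$ and $\termsA'=\sum_i\termsA'_i$ with $\termA_i\resredR\termsA'_i$, then $\rsubst{\termsA}{\varA}{\bagsB}=\sum_i\rsubst{\termA_i}{\varA}{\bagsB}$ and similarly on the right, so by \cref{lemma:sresred:context} it suffices to prove that $\rsubst{\termA_i}{\varA}{\bagsB}\sresred\rsubst{\termsA'_i}{\varA}{\bagsB}$ for each $i$. Moreover, extending $\bagsB=\sum_j\bagB_j$ by linearity further reduces to the case of a single bag term $\bagB$. We then proceed by induction on the derivation of $\exprA\resred\exprsA'$. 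The contextual rules $\resredruleabs,\resredruleappl,\resredruleappr,\resredrulebag,\resredrulestrl,\resredrulestrr$ follow directly: unfolding the corresponding clause of \cref{definition:rsubst} exposes the redex inside each summand of the partitioning sum, and the induction hypothesis combined with \cref{lemma:sresred:context} yields the reduction in parallel across all summands.

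The two base cases require more bookkeeping. For $\resredruleiota$, we unfold $\rsubst{\appl{\labs{\svarA}{\btermA}}{\emptystream}}{\varA}{\bagB}$, choosing $\svarA$ with $\svarA\cap\LVarOf{\bagB}=\emptyset$ and $\varA\notin\svarA$; in the sum over $\bagB\splitinto\bagB_1\bagcat\bagB_2$ the only surviving summand has $\bagB_2=\emptybag$, leaving $\appl{\labs{\svarA}{\rsubst{\btermA}{\varA}{\bagB}}}{\emptystream}$. Applying $\resredruleiota$ in parallel to each summand yields $\erase{\rsubst{\btermA}{\varA}{\bagB}}{\svarA}$, which equals $\rsubst{\erase{\btermA}{\svarA}}{\varA}{\bagB}$ by \cref{lemma:shift:rsubst}. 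For $\resredrulebeta$, choosing $\svarA$ fresh with respect to both $\bagB$ (the stream head) and $\bagsB$, we unfold both sides as sums indexed by three-part partitionings of $\bagsB$ (one for the function, one for the stream head, one for the stream tail on the left; one for the abstraction body and one for the stream tail on the right). Applying $\resredrulebeta$ in parallel on the left-hand sum produces a sum involving $\rsubst{\rsubst{\btermA}{\varA}{\bagsB_1}}{\svarA[0]}{\rsubst{\bagB}{\varA}{\bagsB_2}}$; \cref{lemma:rsubst:rsubst} rewrites this, up to a repartitioning of the relevant bag, as $\rsubst{\rsubst{\btermA}{\svarA[0]}{\bagB}}{\varA}{\bagsB'}$, and \cref{lemma:shift:rsubst} commutes the outer $\shiftdown{\cdot}{\svarA}$ past $\rsubst{\cdot}{\varA}{\cdot}$, giving exactly the expected right-hand summand.

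For the second statement, again by linearity we reduce to the case $\bagA\resred\bagsA'$ with $\bagA$ a single bag term, and then proceed by induction on the resource expression $\exprA$. In each case of \cref{definition:rsubst} the substitution unfolds as a sum indexed by a partitioning $\bagA\splitinto\bagA_1\bagcat\cdots\bagcat\bagA_n$; since the reduction inside $\bagA$ occurs in exactly one summand of that partitioning (and the corresponding summand of the partitioning of $\bagsA'$ matches), the induction hypothesis applied to the relevant sub-expression, combined with \cref{lemma:sresred:context}, furnishes the desired reduction.

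The main obstacle is the combinatorial bookkeeping in the $\resredrulebeta$ case: the partitionings appearing on the two sides of the desired reduction have different shapes, and the identification of the resulting sums hinges on precisely the commutation provided by \cref{lemma:rsubst:rsubst}, together with the freshness conditions on $\svarA$ that allow shifts to commute with substitutions. Everything else is routine pattern-matching on the inductive definition.
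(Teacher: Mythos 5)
Your proof is correct and follows essentially the same route as the paper: both statements are reduced by linearity to a single reduction step on a single term, the first is handled by induction on the derivation of the reduction with the base cases $\resredrulebeta$ and $\resredruleiota$ resolved via \cref{lemma:rsubst:rsubst,lemma:shift:rsubst}, and the second by induction on the expression, isolating the unique part of each partitioning that receives the reduced element of the bag. The paper states this more tersely (in particular, your "the reduction occurs in exactly one part of the partitioning" is exactly its rewriting of $\sum_{\bag{\vtermA}\splitinto\cdots}$ by singling out where $\vtermA$ lands), but the content is the same.
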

\begin{proof}
  We establish the first statement in the case $\termsA=\termA\resred\termsA'$,
  by induction on that reduction.
  For the two base cases, we use \cref{lemma:rsubst:rsubst,lemma:shift:rsubst}.
  The other cases follow directly from the induction hypothesis,
  and the extension to $\sresred$ is straightforward.

  We establish the second statement in the case $\exprsA=\exprA\in\ResourceExprs$
  and $\bagsA=\bagA\resred\bagsA'$, by induction on $\exprA$.
  Note that we must have $\bagA=\mset{\vtermA}\bagcat\bagB$
  and $\bagsA'=\mset{\vtermsA'}\bagcat\bagB$
  with $\vtermA\resred\vtermsA'$:
  then all cases are straightforward, noting that any sum
  $\sum_{\bagA\splitinto \bagA_1\bagcat\bagA_2} f(\bagA_1,\bagA_2)$
  can be written as
  $\sum_{\bagB\splitinto \bagB_1\bagcat\bagB_2} f(\mset{\vtermA}\bagcat\bagB_1,\bagB_2)+f(\bagB_1,\mset{\vtermA}\bagcat\bagB_2)$.
  The general result follows by linearity.
\end{proof}

Moreover, resource reduction preserves free variables and is compatible with
shifts and erasure:
\begin{lemma}\label{lemma:resred:nocc}
  If $\termA\resred\termsA'$ 
  and $\termA'\in\termsA'$
  then $\nocc{\varA}{\termA}=\nocc{\varA}{\termA'}$.
  In particular, $\LVarOf{\termA}=\LVarOf{\termA'}$.
\end{lemma}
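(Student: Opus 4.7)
The plan is to proceed by induction on the derivation of $\termA \resred \termsA'$, treating $\varA$ as fixed throughout. The inductive (contextual) cases $\resredruleabs, \resredruleappl, \resredruleappr, \resredrulebag, \resredrulestrl, \resredrulestrr$ are immediate: each constructor splits the occurrence count additively, so if the subterm reduces while preserving occurrences of $\varA$, the whole term does too. The content of the lemma therefore lies in the two base cases.

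For the $\resredruleiota$ case, $\termA = \appl{\labs{\svarA}{\btermA}}{\emptystream}$ and $\termsA' = \erase{\btermA}{\svarA}$. By $\alpha$-equivalence we may assume $\varA \notin \svarA$, so $\nocc{\varA}{\termA} = \nocc{\varA}{\btermA}$. Now $\erase{\btermA}{\svarA}$ is either $\btermA$ itself or $0$; in the latter case the support is empty and there is nothing to check, while in the former we simply have $\termA' = \btermA$ and the counts agree.

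For the $\resredrulebeta$ case, $\termA = \appl{\labs{\svarA}{\btermA}}{\bagB \cons \strC}$ and $\termsA' = \appl{\labs{\svarA}{\shiftdown{\rsubst{\btermA}{\svarA[0]}{\bagB}}{\svarA}}}{\strC}$, with $\svarA \notin \SVarOf{\bagB}$; again we may assume $\varA \notin \svarA$. Then $\nocc{\varA}{\termA} = \nocc{\varA}{\btermA} + \nocc{\varA}{\bagB} + \nocc{\varA}{\strC}$. Any $\termA' \in \termsA'$ has the form $\appl{\labs{\svarA}{\shiftdown{\btermA''}{\svarA}}}{\strC}$ with $\btermA'' \in \rsubst{\btermA}{\svarA[0]}{\bagB}$, so by \cref{lemma:rsubst:size} (applied with $\varB = \varA$, which is legal since $\varA \neq \svarA[0]$) we get $\nocc{\varA}{\btermA''} = \nocc{\varA}{\btermA} + \nocc{\varA}{\bagB}$. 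Since $\varA \notin \svarA$, shifting indices of $\svarA$ down and re-abstracting $\svarA$ leaves $\nocc{\varA}{\cdot}$ untouched, and appending $\strC$ adds $\nocc{\varA}{\strC}$. Adding up yields the required equality.

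The main thing to get right is bookkeeping in the $\resredrulebeta$ case: one has to check that all the operations composed on the right-hand side (resource substitution, shifting of a sequence variable, re-abstraction, and application to $\strC$) leave $\nocc{\varA}$ invariant individually, and then combine them by using \cref{lemma:rsubst:size} exactly once to account for the replacement of the $\LengthOf{\bagB}$ occurrences of $\svarA[0]$ in $\btermA$ by the elements of $\bagB$. The ``in particular'' clause about $\LVarOf{\termA} = \LVarOf{\termA'}$ is then immediate, since $\varA \in \LVarOf{\termA}$ iff $\nocc{\varA}{\termA} > 0$.
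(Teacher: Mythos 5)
Your proof is correct and follows essentially the same route as the paper's: induction on the derivation of $\termA\resred\termsA'$, with the contextual cases immediate by additivity and \cref{lemma:rsubst:size} doing the work in the $\resredrulebeta$ case. The paper merely asserts this is "straightforward"; your write-up supplies the bookkeeping (handling of $\resredruleiota$ via the two branches of erasure, and the invariance of $\nocc{\varA}{\cdot}$ under shifting and re-abstraction of $\svarA$ when $\varA\notin\svarA$) that the paper leaves implicit.
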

\begin{proof}
  Straightforward by induction on the reduction
  $\termA\resred\termsA'$, using \cref{lemma:rsubst:size}
  in the case of rule $\resredrulebeta$.
\end{proof}

\begin{lemma}\label{lemma:sresred:shift}
  If $\termsA\sresred\termsA'$ 
  then $\erase{\termsA}{\svarA}\sresred\erase{\termsA'}{\svarA}$
  and $\shiftup{\termsA}{\svarA}\sresred\shiftup{\termsA'}{\svarA}$.
  If moreover $\svarA[0]\not\in\LVarOf{\termsA}$ then 
  $\shiftdown{\termsA}{\svarA}\sresred\shiftdown{\termsA'}{\svarA}$.
\end{lemma}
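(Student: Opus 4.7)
The plan is to reduce, by linearity of shifts and erasure, to the case where $\termsA=\termA$ is a single term and $\termA\resredR\termsA'$, and then to analyze the reduction. For each of the three statements, it suffices to show that the reduced term $\termA$ gives rise to a reduction on the image.

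For the erasure statement, no induction on the reduction derivation is needed. By \cref{lemma:resred:nocc}, $\LVarOf{\termA}=\LVarOf{\termA''}$ for every $\termA''\in\termsA'$. Hence either some $\svarA[i]$ is free in $\termA$ (and then in every $\termA''\in\termsA'$), so both $\erase{\termA}{\svarA}$ and $\erase{\termsA'}{\svarA}$ are $0$; or no $\svarA[i]$ is free in $\termA$, so erasure is the identity on each side and the reduction carries over unchanged. Either case yields the desired step, using reflexivity of $\sresred$ in the first. For the shift-down statement, the hypothesis $\svarA[0]\notin\LVarOf{\termsA}$ is propagated to every $\termA''\in\termsA'$ by the same lemma, so $\shiftdown{\cdot}{\svarA}$ is well-defined on both sides throughout.

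For the two shift statements, I proceed by induction on the derivation $\termA\resred\termsA'$. The structural cases ($\resredruleabs$, $\resredruleappl$, $\resredruleappr$, $\resredrulebag$, $\resredrulestrl$, $\resredrulestrr$) are immediate from the induction hypothesis together with the fact that $\shiftup{\cdot}{\svarA}$ and $\shiftdown{\cdot}{\svarA}$ commute with all term constructors. For the $\iota$-rule $\appl{\labs{\svarB}{\btermA}}{\emptystream}\resred\erase{\btermA}{\svarB}$, I first choose $\svarB\neq\svarA$ up to $\alpha$-equivalence; then since the two shifts act on disjoint sets of value variables, $\erase{\shiftup{\btermA}{\svarA}}{\svarB}=\shiftup{\erase{\btermA}{\svarB}}{\svarA}$ follows directly from the definition of erasure, and similarly for $\shiftdown$.

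The main technical step is the $\beta$-rule case. For $\appl*{\labs{\svarB}{\btermA}}{\bagB\cons\strC}\resred\appl*{\labs{\svarB}{\shiftdown{\rsubst{\btermA}{\svarB[0]}{\bagB}}{\svarB}}}{\strC}$ I again pick $\svarB$ fresh from $\svarA$; note that $\svarB\notin\SVarOf{\shiftup{\bagB}{\svarA}}$ since shifting does not introduce $\svarB$-variables, so the $\beta$-rule applies to the shifted redex and produces $\appl*{\labs{\svarB}{\shiftdown{\rsubst{\shiftup{\btermA}{\svarA}}{\svarB[0]}{\shiftup{\bagB}{\svarA}}}{\svarB}}}{\shiftup{\strC}{\svarA}}$. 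Since $\svarA\neq\svarB$, the operations $\shiftup{\cdot}{\svarA}$ and $\shiftdown{\cdot}{\svarB}$ commute (they rename disjoint variable families, and after the substitution $\svarB[0]$ is no longer free), so the remaining identity to establish is $\shiftup{\rsubst{\btermA}{\svarB[0]}{\bagB}}{\svarA}=\rsubst{\shiftup{\btermA}{\svarA}}{\svarB[0]}{\shiftup{\bagB}{\svarA}}$. This is the commutation of a variable-fixing renaming with resource substitution, which I expect to be the only real obstacle: it follows from a straightforward induction on $\btermA$ of the same flavour as \cref{lemma:shift:rsubst}, now allowing the bag to be shifted in parallel with the expression rather than requiring $\svarA\notin\SVarOf{\bagB}$. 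The argument for $\shiftdown$ is symmetric.
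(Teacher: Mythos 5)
Your proof is correct and follows the same route as the paper, whose own proof simply reduces to a single-term reduction $\termA\resred\termsA'$, declares each case straightforward, and concludes by linearity. Your observation that the $\resredrulebeta$ case requires a mild generalization of \cref{lemma:shift:rsubst} in which the bag is shifted in parallel with the expression (rather than assumed to satisfy $\svarA\not\in\SVarOf{\bagB}$) is accurate, and is precisely the detail the paper's one-line proof glosses over.
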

\begin{proof}
  Again, the result is proved first for a reduction $\termA\resred\termsA'$,
  then generalized by linearity.
  Each step is straightforward.
\end{proof}

\begin{theorem}[Confluence of $\sresred$]\label{theorem:resred:confluent}
  Resource reduction $\sresred$ has the diamond property:
  if $\termsA\sresred\termsA_1$ and $\termsA\sresred\termsA_2$ then
  there exists $\termsA'$ such that $\termsA_1\sresred\termsA'$
  and  $\termsA_2\sresred\termsA'$.
\end{theorem}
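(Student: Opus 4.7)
The plan is to reduce first to the single-term case and then to carry out a standard local-confluence-style analysis on the relative positions of the two redexes, closing the non-trivial diamonds via the substitution, shift and contextuality lemmas already established.

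By the definition of $\sresred$, both reductions $\termsA \sresred \termsA_1$ and $\termsA \sresred \termsA_2$ decompose summand by summand as $\termA_i \resredR \termsA_{1,i}$ and $\termA_i \resredR \termsA_{2,i}$ for a common writing $\termsA = \sum_i \termA_i$. It therefore suffices to close the diamond term-by-term and then to sum the common reducts. Moreover, if on some summand one of the two $\resredR$ steps is reflexive, say $\termsA_{2,i} = \termA_i$, the diamond closes trivially by taking $\termsA'_i = \termsA_{1,i}$; so the only real work is to treat the case of a single term $\termA$ with two genuine reductions $\termA \resred \termsA_1$ and $\termA \resred \termsA_2$.

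I then proceed by induction on $\termA$, with a case analysis on where each reduction takes place. When both steps reduce the same redex at the root of $\termA$, they must invoke the same rule ($\resredrulebeta*$ or $\resredruleiota*$, depending on whether the argument stream is non-empty) and so $\termsA_1 = \termsA_2$. When both steps lie strictly below the root under context rules, the induction hypothesis combined with contextual closure (\cref{lemma:sresred:context}) supplies the common reduct directly. The interesting case is when one step is a root $\resredrulebeta*$-step, say $\termA = \appl{\labs{\svarA}{\btermA}}{\bagB \cons \strC} \resred \appl{\labs{\svarA}{\shiftdown{\rsubst{\btermA}{\svarA[0]}{\bagB}}{\svarA}}}{\strC} = \termsA_1$, while the other reduces a redex inside $\btermA$, $\bagB$ or $\strC$. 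For an inner reduction $\btermA \resred \btermsA'$, I take $\termsA' = \appl{\labs{\svarA}{\shiftdown{\rsubst{\btermsA'}{\svarA[0]}{\bagB}}{\svarA}}}{\strC}$: then $\termsA_2 = \appl{\labs{\svarA}{\btermsA'}}{\bagB \cons \strC}$ reduces to $\termsA'$ by $\resredrulebeta*$, while $\termsA_1 \sresred \termsA'$ follows by transporting the inner reduction through the substitution via \cref{lemma:sresred:rsubst}, through the downshift via \cref{lemma:sresred:shift}, and under the outer abstraction and application via \cref{lemma:sresred:context}. Inner reductions in $\bagB$ or $\strC$, as well as the $\resredruleiota*$ case paired with an inner reduction in $\btermA$, are handled by the analogous combinations of these same lemmas.

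The main obstacle will be the bookkeeping of the side conditions on shifts and on the $\alpha$-representative of the binder: we must maintain $\svarA \notin \SVarOf{\bagB}$, which is arranged by $\alpha$-conversion at the start; and the downshift $\shiftdown{\cdot}{\svarA}$ requires $\svarA[0] \notin \LVarOf{\cdot}$, which holds for $\rsubst{\btermA}{\svarA[0]}{\bagB}$ and for its reducts (by \cref{lemma:resred:nocc}) because the substitution consumes all occurrences of $\svarA[0]$. Once these conditions are tracked, the diamond closes entirely through the compatibility lemmas already proved.
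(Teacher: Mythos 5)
Your overall strategy --- reduce to the single-term case, then analyse the relative positions of the two redexes and close the diamonds with \cref{lemma:sresred:rsubst,lemma:sresred:shift,lemma:sresred:context} --- is the same as the paper's, and the cases you work out (a root $\resredrulebeta$ step against an inner reduction, two inner reductions, the sum-level bookkeeping via a common decomposition of $\termsA$ into individual terms) are handled correctly.

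There is, however, one genuine gap: the claim that when both steps reduce the root redex ``they must invoke the same rule ($\resredrulebeta$ or $\resredruleiota$, depending on whether the argument stream is non-empty) and so $\termsA_1=\termsA_2$'' is false. Since streams are identified up to $\emptybag\cons\emptystream=\emptystream$, a term $\appl*{\labs{\svarA}{\btermA}}{\emptystream}$ matches the pattern of $\resredrulebeta$ with $\bagB=\emptybag$ and $\strC=\emptystream$, so \emph{both} rules apply to it, and their results differ in general: $\resredrulebeta$ yields $0$ if $\svarA[0]\in\LVarOf{\btermA}$ and $\appl*{\labs{\svarA}{\shiftdown{\btermA}{\svarA}}}{\emptystream}$ otherwise, whereas $\resredruleiota$ yields $\erase{\btermA}{\svarA}$, which is $0$ as soon as \emph{any} $\svarA[i]$ occurs free in $\btermA$. (This overlap is precisely the source of the non-termination of $\resred$ pointed out right after its definition.) The critical pair still closes within the diamond property, but not by equality: when $\svarA[0]\in\LVarOf{\btermA}$ both reducts are $0$, and when $\svarA[0]\notin\LVarOf{\btermA}$ one needs the further step $\appl*{\labs{\svarA}{\shiftdown{\btermA}{\svarA}}}{\emptystream}\resred\erase{\shiftdown{\btermA}{\svarA}}{\svarA}$ together with \cref{lemma:erase:shift} to identify this reduct with $\erase{\btermA}{\svarA}=\termsA_2$. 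You should add this analysis; the rest of your argument goes through.
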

\begin{proof}
  We first establish, by induction on resource terms, that:
  if $\termA\resred\termsA_1$ and $\termA\resred\termsA_2$ then 
  there exists $\termsA'$ such that $\termsA_1\sresred\termsA'$
  and  $\termsA_2\sresred\termsA'$.

  The crucial case is that of head reducible base terms.
  Assume, e.g., $\termsA_1$ is obtained by $\resredrulebeta$:
  then $\termA=\appl*{\labs{\svarA}{\btermA}}*{\bagB\cons\strC}$, and
  $\termsA_1=
  \appl*[\big]{\labs{\svarA}{\shiftdown{\rsubst{\btermA}{\svarA[0]}{\bagB}}{\svarA}}}{\strC}$.

  If $\termsA_2$ is obtained by $\resredrulebeta$ too,
  then $\termsA_1=\termsA_2$ and we conclude by the reflexivity of $\sresred$,
  setting $\termsA'\eqdef\termsA_1$.

  If $\termsA_2$ is obtained by $\resredruleiota$ then $\bagB\cons\strC=\emptystream$ and:
  either $\svarA[0]\in\LVarOf{\btermA}$ and $\termsA_1=\termsA_2=0$,
  and we conclude again by reflexivity;
  or $\svarA[0]\not\in\LVarOf{\btermA}$ and
  $\termsA_1
  =\appl*{\labs{\svarA}{\shiftdown{\btermA}{\svarA}}}{\emptystream}
  \resred\erase{\shiftdown{\btermA}{\svarA}}{\svarA}
  =\termsA_2$
  by \cref{lemma:erase:shift}.

  If $\termsA_2$ is obtained by $\resredruleappl$
  then $\termsA_2=\appl*{\labs{\svarA}{\btermsA'}}*{\bagB\cons\strC}$
  with $\btermA\resred\btermsA'$, and we can set 
  $\termsA'\eqdef
  \appl*[\big]{\labs{\svarA}{\shiftdown{\rsubst{\btermsA'}{\svarA[0]}{\bagB}}{\svarA}}}{\strC}$,
  to obtain 
  $\termsA_1\sresred\termsA'$ by \cref{lemma:sresred:rsubst,lemma:sresred:shift,lemma:sresred:context},
  and 
  $\termsA_2\sresred\termsA'$ by \cref{lemma:sresred:context}.
  
  If $\termsA_2$ is obtained by $\resredruleappr$
  then either 
  $\termsA_2=\appl*{\labs{\svarA}{\btermA}}*{\bagsB'\cons\strC}$
  with $\bagB\resred\bagsB'$, and we can set 
  $\termsA'\eqdef
  \appl*[\big]{\labs{\svarA}{\shiftdown{\rsubst{\btermA}{\svarA[0]}{\bagsB'}}{\svarA}}}{\strC}$,
  to obtain 
  $\termsA_1\sresred\termsA'$ by \cref{lemma:sresred:rsubst,lemma:sresred:shift,lemma:sresred:context},
  and 
  $\termsA_2\sresred\termsA'$ by \cref{lemma:sresred:context};
  or 
  $\termsA_2=\appl*{\labs{\svarA}{\btermA}}*{\bagB\cons\strsC'}$
  with $\strC\resred\strsC'$, and we can set 
  $\termsA'\eqdef
  \appl*[\big]{\labs{\svarA}{\shiftdown{\rsubst{\btermA}{\svarA[0]}{\bagB}}{\svarA}}}{\strsC'}$,
  to obtain both
  $\termsA_1\sresred\termsA'$ 
  and 
  $\termsA_2\sresred\termsA'$ by \cref{lemma:sresred:context}.
  
  By symmetry, we have treated all the cases where at least one of $\termsA_1$
  or $\termsA_2$ is obtained by $\resredrulebeta$.
  Now assume, e.g., $\termsA_1$ is obtained by $\resredruleiota$:
  then $\termA=\appl*{\labs{\svarA}{\btermA}}{\emptystream}$, and
  $\termsA_1=\erase{\btermA}{\svarA}$.
  Note that $\resredruleappr$ cannot be applied in this case.
  If $\termsA_2$ is also obtained by $\resredruleiota$,
  we have $\termsA_1=\termsA_2$ and we conclude by reflexivity.
  This leaves only the case of $\termsA_2$ being obtained by 
  $\resredruleappl$:
  $\termsA_2=\appl*{\labs{\svarA}{\btermsA'}}{\emptystream}$,
  with $\btermA\resred\btermsA'$.
  Then we set $\termsA'\eqdef\erase{\btermsA'}{\svarA}$
  and obtain
  $\termsA_1\sresred\termsA'$ by \cref{lemma:sresred:shift} and
  $\termsA_2\sresred\termsA'$ by \cref{lemma:sresred:context}.

  We are only left with compatibility rules, all falling in two cases:
  if both rules reduce the same subterm (\emph{i.e.}\ the rules are the same
  and, in the case of $\resredrulebag$, the reduced subterm is the same),
  we conclude by the induction hypothesis, together with \cref{lemma:sresred:context};
  otherwise, if the rules are distinct
  (\emph{i.e.}\ $\resredruleappl$ \emph{vs} $\resredruleappr$ for base terms,
  or $\resredrulestrl$ \emph{vs} $\resredrulestrr$ for stream terms)
  or are two instances of $\resredrulebag$ on distinct subterms,
  (\emph{i.e.}\ 
  $\termA=\mset{\vtermA_1}\bagcat\mset{\vtermA_2}\bagcat\bagB$,
  $\termsA_1=\mset{\vtermsA'_1}\bagcat\mset{\vtermA_2}\bagcat\bagB$, and
  $\termsA_2=\mset{\vtermA_1}\bagcat\mset{\vtermsA'_2}\bagcat\bagB$,
  with each $\vtermA_i\resred\vtermsA'_i$)
  then we apply \cref{lemma:sresred:context} directly.

  Now, if \(\termA\resredR\termsA_1\) and \(\termA\resredR\termsA_2\),
  then we obtain \(\termsA'\) such that
  \(\termsA_1\sresred\termsA'\) and \(\termsA_2\sresred\termsA'\):
  if \(\termsA_1=\termA\) or \(\termsA_2=\termA\),
  the result is trivial;
  and otherwise, the previous result applies.

  Now we extend the result to the reduction of term sums:
  assume $\termsA\sresred\termsA_1$ and $\termsA\sresred\termsA_2$.
  We can write
  \[
    \termsA=\sum_{i=1}^{k}\termA_i,
    \quad
    \termsA_1=\sum_{i=1}^{k}\termsA^1_i
    \quad\text{and}\quad
    \termsA_2=\sum_{i=1}^{k}\termsA^2_i
  \]
  so that \(\termA_i\resredR\termsA^j_i\) for \(1\le i\le k\) and \(1\le j\le 2\).
  We obtain $\termsA'_i$ such that 
  $\termsA^j_i\sresred\termsA'_i$ for \(1\le i\le k\) and \(1\le j\le 2\),
  and then set \(\termsA'=\sum_{i=1}^{k}\termsA'_i\),
  so that $\termsA_1\sresred\termsA'$ and $\termsA_2\sresred\termsA'$
  by \cref{lemma:sresred:context}.
\end{proof}

\subsection{Full-step dynamics}

We now present the full-step resource reduction whose main quality is that it
enjoys strong normalization.
We first need to introduce the resource substitution of a stream for a sequence
variable.

\begin{definition}\label{definition:rsubst:str}
  We define the \definitive{resource substitution}
  $\rsubst{\exprA}{\svarA}{\strB}$
  of a stream $\strB=\tuple{\bagB_i}_{i\in\N}$ 
  for a sequence variable $\svarA$ in a resource expression $\exprA$
  by induction on $\exprA$ as follows:
\begin{align*}
  \rsubst{\varB}{\svarA}{\strB}&\eqdef
  \begin{cases}
    \vtermB & \text{if $\varB=\svarA[i]$, $\bagB_i=\mset{\vtermB}$ and $\bagB_j=\emptybag$ for $j\in\N\setminus\set i$}\\
    \varB  & \text{if $\varB\not\in\svarA$ and $\strB=\emptystream$}\\
     0 & \text{otherwise}
  \end{cases}
  \displaybreak[3]
  \\
  \rsubst*{\labs{\svarB}{\btermA}}{\svarA}{\strB}&\eqdef
  \labs{\svarB}{\rsubst{\btermA}{\svarA}{\strB}}
  \displaybreak[3]
  \\
  \rsubst*{\appl{\hexprA}{\strA}}{\svarA}{\strB}&\eqdef
  \sum_{\strB\splitinto\strB_1\bagcat\strB_2}
  \appl{\rsubst{\hexprA}{\svarA}{\strB_1}}{\rsubst{\strA}{\svarA}{\strB_2}}
  \displaybreak[3]
  \\
  \rsubst{\mset{\vtermA_1,\dotsc,\vtermA_k}}{\svarA}{\strB}&\eqdef
  \sum_{\strB\splitinto\strB_1\bagcat\cdots\bagcat\strB_k}
  \mset{\rsubst{\vtermA_1}{\svarA}{\strB_1},\dotsc,\rsubst{\vtermA_k}{\svarA}{\strB_k}}
  \\
  \rsubst{\emptystream}{\svarA}{\strB}&\eqdef
  \begin{cases}
    \emptystream&\text{if $\strB=\emptystream$}\\
    0&\text{otherwise}\\
  \end{cases}
  \\
  \rsubst*{\bagA\cons\strC}{\svarA}{\strB}&\eqdef
  \sum_{\strB\splitinto\strB_1\bagcat\strB_2}
  {\rsubst{\bagA}{\svarA}{\strB_1}}\cons{\rsubst{\strC}{\svarA}{\strB_2}}
  \qquad\text{if $\bagA\cons\strC\not=\emptystream$}
\end{align*}
where, in the abstraction case, $\svarB$ is chosen so that $\svarA\neq\svarB$ 
and $\svarB\cap\LVarOf{\strB}=\emptyset$.
\end{definition}

As for fine-step resource substitution, the condition in the last case of the
definition can be dropped.
Moreover, if $\strA=\bagA_1\cons\cdots\cons\bagA_k\cons\emptystream$,
then 
\[
  \rsubst{\strA}{\svarA}{\strB}=
  \sum_{\strB\splitinto\strB_1\bagcat\cdots\bagcat\strB_k}
  {\rsubst{\bagA_1}{\svarA}{\strB_1}}\cons\cdots\cons
  {\rsubst{\bagA_k}{\svarA}{\strB_k}}\cons\emptystream
\]
and, equivalently,
\[
  \rsubst{\tuple{\bagA_i}_{i\in\N}}{\svarA}{\strB}=
  \sum_{p:\strB\splitinto\N}
  \tuple{\rsubst{\bagA_i}{\svarA}{\restrictToPre{\strB}{p}{i}}}_{i\in\N}
  \;.
\]

Also, one can check that:
\begin{lemma}\label{lemma:rsubst:str:def}
  We have:
\[
  \rsubst{\exprA}{\svarA}{\emptystream}=\erase{\exprA}{\svarA}
\]
and, assuming $\svarA\not\in\SVarOf{\bagA}$, we have:
\[
    \rsubst{\exprA}{\svarA}{\bagA\cons\strB}
    =\rsubst{\shiftdown{\rsubst{\exprA}{\svarA[0]}{\bagA}}{\svarA}}{\svarA}{\strB}\;.
\]
More generally, 
if $\strB=\bagB_0\cons\cdots\cons\bagB_{k-1}\cons\emptystream$,
and $\svarA\not\in\SVarOf{\strB}$, we have:
\begin{align*}
    \rsubst{\exprA}{\svarA}{\strB}
    &=\erase{\shiftdown{\rsubst{\shiftdown{\rsubst{\exprA}{\svarA[0]}{\bagB_0}}{\svarA}\cdots}{\svarA[0]}{\bagB_{k-1}}}{\svarA}}{\svarA}
    \\
    &=\erase{\rsubst{\rsubst{\exprA}{\svarA[0]}{\bagB_0}\cdots}{\svarA[k-1]}{\bagB_{k-1}}}{\svarA}\;.
\end{align*}
\end{lemma}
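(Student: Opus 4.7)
Overall, my plan is to prove the three assertions in the order stated, using the earlier parts as established.

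For the first identity, I would proceed by induction on $\exprA$. The variable case splits into sub-cases matching the three branches of \cref{definition:rsubst:str}: when $\varB \in \svarA$ the substitution yields $0$ (no position of $\emptystream$ is a singleton), matching $\erase{\varB}{\svarA}=0$; when $\varB \notin \svarA$, both sides return $\varB$. For the compound constructors, the crucial point is that $\emptystream$ admits a unique $k$-partitioning for every $k$ --- namely the one whose parts are all $\emptystream$ --- so each sum over partitionings collapses to a single summand to which the induction hypothesis applies. I then conclude by checking that $\erase{-}{\svarA}$ commutes with the term constructors: zero-propagation through the multilinear constructors ($\mset{\dotsc, 0, \dotsc} = 0$, $\appl{0}{\strsB}=0$, \emph{etc.}) matches the erasure of an expression containing a free $\svarA[i]$ precisely when at least one subexpression does.

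For the second identity, I would again proceed by induction on $\exprA$. The variable base case splits on whether $\varB = \svarA[i]$ (with a further split on $i = 0$ vs.\ $i > 0$) or $\varB \notin \svarA$; in each case I compare the left side --- which prescribes exactly where the singleton bag (if any) of $\bagA \cons \strB$ may lie --- with the right side obtained by composing $\rsubst{-}{\svarA[0]}{\bagA}$, $\shiftdown{-}{\svarA}$ and $\rsubst{-}{\svarA}{\strB}$. The assumption $\svarA \notin \SVarOf{\bagA}$ is what makes the intermediate shift both well-defined and trivial on any substituted element of $\bagA$. Here I would also use (and first prove in passing, by a simple induction) the auxiliary fact that $\rsubst{\exprA}{\svarA}{\strB}=0$ whenever $\svarA \cap \LVarOf{\exprA} = \emptyset$ and $\strB \neq \emptystream$, which is needed to close the case $\varB = \svarA[0]$ with a non-empty residual stream. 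For the compound cases, the key combinatorial observation is that a $k$-partitioning of $\bagA \cons \strB$ is canonically a pair of a $k$-partitioning of $\bagA$ and a $k$-partitioning of $\strB$; this lets me rewrite any sum $\sum_{\bagA\cons\strB \splitinto \strsB_1 \bagcat \strsB_2}$ as $\sum_{\bagA \splitinto \bagA_1 \bagcat \bagA_2}\sum_{\strB \splitinto \strB_1 \bagcat \strB_2}$ with $\strsB_j = \bagA_j \cons \strB_j$. Combined with the induction hypothesis, with \cref{lemma:shift:rsubst} to commute $\shiftdown{-}{\svarA}$ past the outer substitution by $\strB$, and with the associativity of sums over partitionings, this yields the stated identity.

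The third identity follows by induction on $k$. The base case $k = 0$ is the first identity applied directly. The inductive step applies the second identity with head $\bagB_0$ and tail $\bagB_1 \cons \cdots \cons \bagB_{k-1} \cons \emptystream$, then invokes the induction hypothesis to unfold the tail; this gives the first, shift-interleaved form. To obtain the second, flatter form, I invoke \cref{lemma:shift:rsubst} repeatedly to push all intermediate shifts outward past the subsequent resource substitutions, using that a substitution for $\svarA[0]$ performed after $i$ shifts down is really a substitution for $\svarA[i]$ in the unshifted expression. The main obstacle throughout is the combinatorial book-keeping of partitionings in the second identity, making sure that every capture-avoidance side condition --- on $\svarA \cap \svarB = \emptyset$ in the abstraction case, on $\svarA[0] \notin \LVarOf{-}$ for the shifts to be well-defined --- survives these rearrangements; $\alpha$-conversion of bound sequence variables is what lets me always assume the required freshness.
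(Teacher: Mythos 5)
Your proposal is correct and follows essentially the same route as the paper: the first two identities by induction on $\exprA$ directly from the definitions (the paper leaves these details implicit), and the third by iterating the second, closing with the first, and flattening the interleaved shifts via \cref{lemma:shift:rsubst}. The only detail worth adding is that the final flattening also uses \cref{lemma:erase:shift} to commute the outer erasure past the accumulated shifts.
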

\begin{proof}
  The first two statements are established directly from the definitions, 
  by induction on $\exprA$.
  If $\strB=\bagB_0\cons\cdots\cons\bagB_{k-1}\cons\emptystream$, and $\svarB\cap\LVarOf{\strB}=\emptyset$, 
  we can iterate $k$ times the second statement then use the first one to obtain
\[
    \rsubst{\exprA}{\svarA}{\strB}
    =\erase{\shiftdown{\rsubst{\shiftdown{\rsubst{\exprA}{\svarA[0]}{\bagB_0}}{\svarA}\cdots}{\svarA[0]}{\bagB_{k-1}}}{\svarA}}{\svarA}
\]
and then we obtain the final identity by iterating \cref{lemma:erase:shift,lemma:shift:rsubst}.
\end{proof}

Substitution of streams enjoys the same regularity as substitution of bags,
w.r.t.\ the size of expressions.
Namely, setting $\LengthOf{\strB}\eqdef\sum_{i\in\N}\LengthOf{\bagB_i}$
when $\strB=\tuple{\bagB_i}_{i\in\N}$, we obtain:

\begin{lemma}\label{lemma:rsubst:str:size}
  If $\exprA'\in\rsubst{\exprA}{\svarA}{\strB}$ with
  $\strB=\tuple{\bagB_i}_{i\in\N}$, then
  $\nocc{\svarA[i]}{\exprA}=\LengthOf{\bagB_i}$ for $i\in\N$, and
  $\SizeOf{\exprA'}=\SizeOf{\exprA}+\SizeOf{\strB}-\LengthOf{\strB}$.
  If moreover $\varB\not\in\svarA$ then $\nocc{\varB}{\exprA'}=
  \nocc{\varB}{\exprA}+\nocc{\varB}{\strB}$ (in particular,
  $\varB\in\LVarOf{\exprA'}$ iff $\varB\in\LVarOf{\exprA}\cup\LVarOf{\strB}$).
\end{lemma}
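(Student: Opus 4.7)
The plan is to follow the pattern of \cref{lemma:rsubst:size} and proceed by induction on the resource expression $\exprA$, case by case on the clauses of \cref{definition:rsubst:str}. A more economical alternative is to invoke \cref{lemma:rsubst:str:def}: writing $\strB = \bagB_0 \cons \cdots \cons \bagB_{k-1} \cons \emptystream$ (with $\svarA \not\in \SVarOf{\strB}$, possibly after $α$-renaming of $\svarA$), any $\exprA' \in \rsubst{\exprA}{\svarA}{\strB}$ arises, up to an erasure that either leaves the term unchanged or yields $0$, from an iteration of bag substitutions $\rsubst{-}{\svarA[i]}{\bagB_i}$ for $0 \le i < k$. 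The three claimed identities then reduce to \cref{lemma:rsubst:size} applied $k$ times, combined with the additive identities $\SizeOf{\strB} = \sum_i \SizeOf{\bagB_i}$ and $\LengthOf{\strB} = \sum_i \LengthOf{\bagB_i}$, together with the observation that erasure is size- and occurrence-invariant whenever it does not annihilate its argument.

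For a self-contained direct induction, the interesting cases are as follows. In the variable case $\exprA = \varB$, each of the three clauses of the definition of $\rsubst{\varB}{\svarA}{\strB}$ is checked by inspection: if $\varB = \svarA[i]$ with $\bagB_i = \mset{\vtermB}$ and all other $\bagB_j$ empty, then $\exprA' = \vtermB$, so $\SizeOf{\exprA'} = \SizeOf{\vtermB} = \SizeOf{\strB}$ while $\SizeOf{\varB} = 1 = \LengthOf{\strB}$, and the occurrence identities are immediate; the other two clauses ($\varB \not\in \svarA$ with $\strB = \emptystream$, or the zero summand) are trivial. The abstraction case is immediate since $\rsubst*{\labs{\svarB}{\btermA}}{\svarA}{\strB}$ merely prefixes the inductively obtained terms with a $λ$. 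The application, bag and cons-stream cases introduce a partitioning $\strB \splitinto \strB_1 \bagcat \cdots \bagcat \strB_n$, and applying the induction hypothesis to each $\rsubst{-}{\svarA}{\strB_j}$, the identities follow by summation, using that any such partitioning satisfies $\sum_j \SizeOf{\strB_j} = \SizeOf{\strB}$, $\sum_j \LengthOf{\strB_j} = \LengthOf{\strB}$, and the analogous componentwise identities for $\nocc{\svarA[i]}{-}$ and $\nocc{\varB}{-}$ (all componentwise consequences of the bag case).

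The main obstacle is purely notational: ensuring that the enumeration-independent sums over stream partitionings decompose correctly under each syntactic clause. But since stream partitioning is defined componentwise in terms of bag partitionings, and both $\SizeOf{-}$ and $\LengthOf{-}$ extend additively from bags to streams, no genuine combinatorial difficulty arises, and the proof runs in perfect parallel with that of \cref{lemma:rsubst:size}.
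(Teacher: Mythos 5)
Your ``more economical alternative'' is precisely the paper's proof, which consists of the single line that, by \cref{lemma:rsubst:str:def}, it suffices to iterate \cref{lemma:rsubst:size}. The direct induction you sketch as a fallback is also sound but unnecessary; both routes are correct.
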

\begin{proof}
  Using \cref{lemma:rsubst:str:def}, it is sufficient to iterate 
  \cref{lemma:rsubst:size}.
\end{proof}

\begin{figure}
  \begin{gather*}
    \begin{prooftree}
      \infer0[\Resredrulebeta*]{
        \appl*{\labs{\svarA}{\btermA}}{\strB}\Resred\rsubst{\btermA}{\svarA}{\strB}
      }
    \end{prooftree}
    \quad
    \begin{prooftree}
      \hypo{\btermA\Resred\btermsA'}
      \infer1[\Resredruleabs*]{\labs{\svarA}{\btermA}\Resred\labs{\svarA}{\btermsA'}}
    \end{prooftree}
    \quad
    \begin{prooftree}
      \hypo{\vtermA\Resred\vtermsA'}
      \infer1[\Resredruleappl*]{\appl{\vtermA}{\strB}\Resred\appl{\vtermsA'}{\strB}}
    \end{prooftree}
    \quad
    \begin{prooftree}
      \hypo{\strB\Resred\strsB'}
      \infer1[\Resredruleappr*]{\appl{\hexprA}{\strB}\Resred\appl{\hexprA}{\strsB'}}
    \end{prooftree}
    \\[1em]
    \begin{prooftree}
      \hypo{\vtermA\Resred\vtermsA'}
      \infer1[\Resredrulebag*]{\mset{\vtermA}\bagcat\bagB\Resred\mset{\vtermsA'}\bagcat\bagB}
    \end{prooftree}
    \qquad
    \begin{prooftree}
      \hypo{\bagA\Resred\bagsA'}
      \infer1[\Resredrulestrl*]{\bagA\cons\strB\Resred\bagsA'\cons\strB}
    \end{prooftree}
    \qquad
    \begin{prooftree}
      \hypo{\strB\Resred\strsB'}
      \infer1[\Resredrulestrr*]{\bagA\cons\strB\Resred\bagA\cons\strsB'}
    \end{prooftree}
  \end{gather*}
  \caption{Rules of full-step extensional resource reduction}
  \label{fig:fullstep}
\end{figure}

\begin{definition}\label{definition:sResred}
  \definitive{Full-step resource reduction}
  is the relation from resource terms to
  resource sums defined by the rules of \cref{fig:fullstep},
  and then extended to a relation on term sums
  by setting $\termsA\sResred\termsA'$ iff 
  $\termsA=\sum_{i=0}^k\termA_i$ and 
  $\termsA'=\sum_{i=0}^k\termsA'_i$,
  with $\termA_0\Resred \termsA'_0$ and
  $\termA_i\ResredR \termsA'_i$ for $1\le i\le k$.
\end{definition}

Note that, here, we do impose at least one summand to be reduced.
Using the notation for abstraction over single variables,
we can reformulate the base case of reduction
\(\Resredrulebeta\) as follows:
\begin{lemma}\label{lemma:Resred:lambda}
  We have
  \(
    \appl*{\labs{\varA_1}{\cdots\labs{\varA_k}{\labs{\svarB}{\btermA}}}}
    {\bagB_1\cons\cdots\cons\bagB_k\cons\strC}
    \Resred
    \rsubst{
      \rsubst{
        \rsubst{\btermA}{\varA_1}{\bagB_1}
        \cdots
      }{\varA_1}{\bagB_1}
    }{\svarB}{\strC}
  \).
  In particular, if 
  \(\svarB\not\in\SVarOf{\btermA}\)
  then
  \(
    \appl*{\labs{\varA_1}{\cdots\labs{\varA_k}{\labs{\svarB}{\btermA}}}}
    {\bagB_1\cons\cdots\cons\bagB_k\cons\emptystream}
    \Resred
    \rsubst{
      \rsubst{\btermA}{\varA_1}{\bagB_1}
      \cdots
    }{\varA_1}{\bagB_1}
  \).
\end{lemma}
\begin{proof}
  The first statement implies the second one by the 
  first identity of \cref{lemma:rsubst:str:def}.
  We have:
  \begin{align*}
    \appl*{\labs{\varA_1}{\cdots\labs{\varA_k}{\labs{\svarB}{\btermA}}}}
    {\bagB_1\cons\cdots\cons\bagB_k\cons\strC}
    &=
    \appl*[\big]{\labs{\svarB}{
      \subst{\shiftup{
        \subst{\shiftup{\btermA}{\svarB}}{\varA_k}{\svarB[0]}
        \cdots
      }{\svarB}}{\varA_1}{\svarB[0]}
    }}{\bagB_1\cons\cdots\cons\bagB_k\cons\strC}
    \\
    &\Resred
    \rsubst{
      \subst{\shiftup{
        \subst{\shiftup{\btermA}{\svarB}}{\varA_k}{\svarB[0]}
        \cdots
      }{\svarB}}{\varA_1}{\svarB[0]}
    }{\svarB}{\bagB_1\cons\cdots\cons\bagB_k\cons\strC}
  \end{align*}
  where we chose
  the \(\varA_i\)'s and \(\svarB\) not free in 
  the \(\bagB_i\)'s nor in \(\strC\).
  Then we conclude by iterating \(k\) times the following observation:
  for any term \(\termA\), bag term \(\bagB\), sequence term \(\strC\),
  variable \(\varA\not\in\LVarOf{\bagB\cons\strC}\)
  and sequence variable \(\svarB\not\in\SVarOf{\bagB\cons\strC}\cup\SVarOf{\varA}\),
  \cref{lemma:shift:rsubst,lemma:rsubst:str:def,lemma:subst:rsubst:var} entail
  \begin{align*}
    \rsubst{\subst{\shiftup{\termA}{\svarB}}{\varA}{\svarB[0]}}{\svarB}{\bagB\cons\strC}
    &=
    \rsubst{\shiftdown{
      \rsubst{\subst{\shiftup{\termA}{\svarB}}{\varA}{\svarB[0]}}{\svarB[0]}{\bagB}
    }{\svarB}}{\svarB}{\strC}
    \\
    &=
    \rsubst{\shiftdown{
      \rsubst{\shiftup{\termA}{\svarB}}{\varA}{\bagB}
    }{\svarB}}{\svarB}{\strC}
    \\
    &=
    \rsubst{\rsubst{
        \shiftdown{{\shiftup{\termA}{\svarB}}}{\svarB}
    }{\varA}{\shiftdown{\bagB}{\svarB}}}{\svarB}{\strC}
    \\
    &=
    \rsubst{\rsubst{
        \termA
    }{\varA}{\bagB}}{\svarB}{\strC}\,.
    \qedhere
  \end{align*}
\end{proof}

\begin{example}\label{example:reduction:full}
  We consider the same reducible terms as in \cref{example:reduction:fine}.
  We have reductions
  \(
  \appl{\trProj{\varA}}{\emptystream}
  \Resred
  \rsubst*{\appl{\varA}{\emptystream}}{\svarB}{\emptystream}
  =
  \appl{\varA}{\emptystream}
  \)
  and
  \(
    \appl{\trProj{i}}{\emptystream}
    \Resred
    \rsubst*{\appl{\svarB[i]}{\emptystream}}{\svarB}{\emptystream}
    = 0
  \)
  by \(\Resredrulebeta\),
  ending on normal forms:
  notice that these are the only possible full-step reductions
  starting from those base terms.

  We moreover have
  \(
    \appl{\trProj{0}}{\mset{\trProj{\varA}}\cons\emptystream}
    \sResred
    \rsubst*{\appl{\svarB[0]}{\emptystream}}{\svarB}{\mset{\trProj{\varA}}\cons\emptystream}
    =
    \appl{\trProj{\varA}}{\emptystream}
    \sResred
    \rsubst*{\appl{\varA}{\emptystream}}{\svarC}{\emptystream}
    =
    \appl{\varA}{\emptystream}
  \):
  notice how the first two fine steps (\(\resredrulebeta\) followed by
  \(\resredruleiota\)) of the corresponding reduction sequence of
  \cref{example:reduction:fine} are performed in a single full step,
  but the last one reduces a newly created redex.
  As a consequence, we naturally obtain
  \(\trProj{\varA}[k]\sResred^{2k}\trProj{\varA}\) and
  \(\trProj{i}[k]\sResred^{2k}\trProj{i}\).

  This allows us to revisit the final reduction sequence of
  \cref{example:reduction:fine} as follows:
  \[
    \appl{\trCCp{\varA}}{\mset{\trProj{0},\trProj{0}}\cons\emptystream}
    \sResred
    \rsubst*{
      \appl{\varA}{
        \mset{ \trCCz{\varB}, \trCCz{\varB} }
        \cons\emptystream
      }
    }{\varB}{\mset{\trProj{0},\trProj{0}}}
    =
    2 \appl{\varA}{
      \mset{ 
        \trProj{0}[1],
        \trProj{0}[1]
      }
      \cons\emptystream
    }
    \sResred^4
    2 \appl{\varA}{
      \mset{\trProj{0},\trProj{0}}
      \cons\emptystream
    }
    \,.\qedhere
  \]
\end{example}

Recall that, here, we require at least one element in a sum to be reduced.
This, together with the fact that the reduction of a redex 
yields a sum of smaller terms, ensures that full-step resource reduction 
is strongly normalizing:
\begin{lemma}
  If $\termA\Resred\termsA'$ and $\termA'\in\termsA'$
  then $\SizeOf{\termA}>\SizeOf{\termA'}$.
\end{lemma}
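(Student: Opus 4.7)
The plan is to proceed by induction on the derivation of $\termA\Resred\termsA'$, performing a case analysis on the last rule applied.

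First I would dispatch the base case, namely rule $\Resredrulebeta$. Here $\termA=\appl*{\labs{\svarA}{\btermA}}{\strB}$ and $\termsA'=\rsubst{\btermA}{\svarA}{\strB}$, and I would invoke \cref{lemma:rsubst:str:size} directly: any $\termA'\in\termsA'$ satisfies $\SizeOf{\termA'}=\SizeOf{\btermA}+\SizeOf{\strB}-\LengthOf{\strB}$. On the other hand, unfolding the definition of size at the two topmost constructors of $\termA$ gives $\SizeOf{\termA}=1+(1+\SizeOf{\btermA})+\SizeOf{\strB}=2+\SizeOf{\btermA}+\SizeOf{\strB}$, so $\SizeOf{\termA}-\SizeOf{\termA'}=2+\LengthOf{\strB}\ge 2$, which is strictly positive.

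Next I would handle the six contextual rules ($\Resredruleabs$, $\Resredruleappl$, $\Resredruleappr$, $\Resredrulebag$, $\Resredrulestrl$, $\Resredrulestrr$) uniformly. In each, the reduction rewrites a single immediate subterm $\termB$ of $\termA$ to a sum $\termsB'$ with $\termB\Resred\termsB'$, and any $\termA'\in\termsA'$ is obtained by replacing that occurrence of $\termB$ with some $\termB'\in\support{\termsB'}$ in the same position. The induction hypothesis yields $\SizeOf{\termB}>\SizeOf{\termB'}$, and since the size function is additive through every constructor (with nonnegative constants from the constructors themselves), I would conclude $\SizeOf{\termA}>\SizeOf{\termA'}$ at once.

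There is no real obstacle: the whole statement is essentially a corollary of \cref{lemma:rsubst:str:size}. The only conceptual point worth emphasizing is that the $+2$ contributed by the $@$ and $\lambda$ nodes consumed by the redex always compensates for the variable occurrences of $\svarA$ that the stream substitution turns into actual value terms, so strict decrease is guaranteed even in the degenerate case $\strB=\emptystream$, where the big-step rule $\Resredrulebeta$ subsumes the small-step rule $\resredruleiota$.
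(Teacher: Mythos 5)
Your proof is correct and follows exactly the paper's argument: induction on the derivation of $\termA\Resred\termsA'$, invoking \cref{lemma:rsubst:str:size} in the $\Resredrulebeta$ case (where the explicit computation $\SizeOf{\termA}-\SizeOf{\termA'}=2+\LengthOf{\strB}$ is right) and additivity of $\SizeOf{-}$ through the contextual rules. The paper states this only as "direct by induction, using \cref{lemma:rsubst:str:size} in the redex case," so your write-up is simply a fuller version of the same proof.
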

\begin{proof}
  The proof is direct by induction on the reduction,
  using \cref{lemma:rsubst:str:size} in the redex case.
\end{proof}

By a standard argument, we obtain:
\begin{corollary}[Strong normalization for $\sResred$]\label{corollary:Resred:SN}
  There is no infinite sequence $\tuple{\termsA_i}_{i\in\N}$ with $\termsA_i\sResred\termsA_{i+1}$ for $i\in\N$.
\end{corollary}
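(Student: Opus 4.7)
The plan is to use a standard Dershowitz--Manna multiset ordering argument. To each resource sum $\termsA = \sum_{i \in I} \termA_i$ (written as a formal sum indexed by a finite set $I$, so that repeated summands contribute with multiplicity) I would associate its finite multiset of summand sizes $\mu(\termsA)$, consisting of the naturals $\SizeOf{\termA_i}$ for $i \in I$. The goal is then to show that $\sResred$ strictly decreases $\mu$ in the multiset extension of the usual order on $\N$, which by the classical Dershowitz--Manna theorem is well-founded because $(\N,<)$ is.

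The first step is the single-term base case, which the preceding lemma hands me for free: if $\termA \Resred \termsA'$ with $\termsA' = \sum_{j \in J} \termA'_j$, then $\SizeOf{\termA'_j} < \SizeOf{\termA}$ for every $j \in J$, where $J$ may in particular be empty when the reduction produces the zero sum. Either way, passing to multisets, this corresponds to removing one occurrence of $\SizeOf{\termA}$ and inserting finitely many strictly smaller naturals --- the canonical shape of a single multiset reduction step.

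The second step is to propagate this to $\sResred$. Unfolding the definition, any step $\termsA \sResred \termsB$ decomposes as $\termsA = \sum_{i=0}^k \termA_i$ and $\termsB = \sum_{i=0}^k \termsA'_i$ with $\termA_0 \Resred \termsA'_0$ and $\termA_i \ResredR \termsA'_i$ for $1 \le i \le k$. Applying the base case to each proper $\Resred$ reduction and leaving unchanged those summands for which $\termsA'_i = \termA_i$, I obtain $\mu(\termsB)$ from $\mu(\termsA)$ by a composition of multiset reduction steps, with the mandatory reduction at index $i=0$ ensuring that at least one such step is nontrivial. Hence $\mu(\termsA)$ is strictly greater than $\mu(\termsB)$ in the multiset order, and well-foundedness of that order then rules out any infinite $\sResred$-chain. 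The only delicate point is the case $\termsA'_0 = 0$, where the reduction produces the empty sum, but this fits the argument cleanly: the corresponding multiset step simply removes one element from $\mu$ without adding any, which is still a strict decrease. I do not expect any other obstacle.
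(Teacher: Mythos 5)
Your proposal is correct and is essentially the paper's own argument: the paper also associates to each term sum the multiset of the sizes of its summands and invokes the strict decrease of this measure under $\sResred$ for the multiset order, relying on the preceding lemma that every $\termA' \in \termsA'$ with $\termA \Resred \termsA'$ satisfies $\SizeOf{\termA'} < \SizeOf{\termA}$. Your write-up merely spells out the Dershowitz--Manna details (including the empty-sum case) that the paper leaves implicit under ``by a standard argument.''
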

\begin{proof}
  To each term sum, we associate the multiset of the sizes of its elements:
  under full-step reduction, this measure is strictly decreasing for the multiset order.
\end{proof}

Moreover, full-step reduction is a particular case of iterated fine-step reduction:
\begin{lemma}\label{lemma:sResred:sresredRT}
  If $\exprsA\sResred\exprsA'$
  then $\exprsA\sresredRT\exprsA'$.
\end{lemma}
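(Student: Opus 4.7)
The plan is to proceed in two phases: first establish the claim for a single-term reduction $\termA \Resred \termsA'$ by induction on the derivation, and then lift it to arbitrary term sums. The contextual rules are immediate consequences of the induction hypothesis combined with \cref{lemma:sresred:context}, noting that $\sresredRT$ inherits contextuality from $\sresred$ by iterating one step at a time.

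The heart of the argument, and the main obstacle, is the base rule $\Resredrulebeta$, where one must show
\[
  \appl*{\labs{\svarA}{\btermA}}{\strB} \sresredRT \rsubst{\btermA}{\svarA}{\strB}.
\]
I would decompose the stream as $\strB = \bagB_0 \cons \cdots \cons \bagB_{k-1} \cons \emptystream$, assuming up to $\alpha$-equivalence that $\svarA \notin \SVarOf{\strB}$, and simulate the big-step substitution by $k$ consecutive applications of $\resredrulebeta$, each consuming one bag of the stream, followed by a final $\resredruleiota$ step. Each intermediate stage produces a sum of base terms, so propagating the next small-step reduction inside each summand relies on the contextual clauses of \cref{lemma:sresred:context}; this is automatic since $\sresred$ already acts in parallel on sums. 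The key observation, once these $k{+}1$ small steps have been performed, is that the resulting sum coincides exactly with the nested shift-substitute-erase form of $\rsubst{\btermA}{\svarA}{\strB}$ provided by \cref{lemma:rsubst:str:def}. This identity is the algebraic heart of the telescoping and is precisely what makes the simulation go through.

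Lifting to arbitrary sums is then routine bookkeeping: writing $\termsA = \sum_{i=0}^k \termA_i$ and $\termsA' = \sum_{i=0}^k \termsA'_i$ with $\termA_0 \Resred \termsA'_0$ and $\termA_i \ResredR \termsA'_i$ for $i \geq 1$, the single-term case yields $\termA_i \sresredRT \termsA'_i$ for every $i$ (trivially when $\termA_i = \termsA'_i$, using the reflexivity of $\sresred$), and one chains these by updating the summands one at a time via \cref{lemma:sresred:context}(1) together with the reflexivity of $\sresred$. Modulo this bookkeeping, the whole difficulty is concentrated in the redex case above.
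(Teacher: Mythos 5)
Your proposal is correct and follows essentially the same route as the paper's proof: reduce to a single-term big-step reduction, induct on the derivation, handle the contextual rules by the induction hypothesis together with \cref{lemma:sresred:context}, and discharge the $\Resredrulebeta$ case via the telescoping identity of \cref{lemma:rsubst:str:def}. The paper's proof is just terser — it does not spell out the $k$ applications of $\resredrulebeta$ followed by $\resredruleiota$, but that is exactly the simulation it is implicitly invoking.
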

\begin{proof}
  It is sufficient to consider the case of 
  $\exprsA=\exprA\Resred\exprsA'$.
  The proof is then by induction on this reduction:
  the case of $\Resredrulebeta$ 
  follows from \cref{lemma:rsubst:str:def}.
  All the other cases follow straightforwardly from the induction hypothesis,
  using \cref{lemma:sresred:context}.
\end{proof}

\begin{theorem}[Weak normalization for $\sresred$]
  For every resource sum $\exprsA$
  there exists a sum $\exprsA'$ of $\resred$-irreducible expressions
  such that $\exprsA\sresredRT\exprsA'$,
  and this sum is uniquely defined.
\end{theorem}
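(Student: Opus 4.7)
The plan is to combine the strong normalization of big-step reduction (\cref{corollary:Resred:SN}) to obtain existence, with the diamond property of small-step reduction (\cref{theorem:resred:confluent}) to obtain uniqueness. A first, preliminary observation is that on individual resource terms, $\Resred$-reducibility and $\resred$-reducibility coincide: the two systems share the same contextual rules, and a head redex in either system requires the shape $\appl{\labs{\svarA}{\btermA}}{\strB}$, which the small-step system dispatches between $\resredrulebeta$ (when $\strB=\bagB\cons\strC$) and $\resredruleiota$ (when $\strB=\emptystream$). A straightforward induction on terms then shows that a term is $\Resred$-irreducible if and only if it is $\resred$-irreducible. Lifting this to sums, a $\sResred$-normal sum is exactly a sum of $\resred$-irreducible terms, since $\sResred$ by definition requires at least one summand to step.

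For existence, I would iterate $\sResred$ starting from $\exprsA$; by \cref{corollary:Resred:SN} this process terminates in a $\sResred$-normal sum $\exprsA'$, which by the preceding paragraph is a sum of $\resred$-irreducible expressions. By \cref{lemma:sResred:sresredRT}, each $\sResred$-step unfolds to a finite sequence of $\sresred$-steps, hence $\exprsA\sresredRT\exprsA'$.

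For uniqueness, I would first promote the diamond property of \cref{theorem:resred:confluent} to confluence of the reflexive-transitive closure $\sresredRT$ by the standard tiling argument. Given two normal reducts $\exprsA_1$ and $\exprsA_2$ of $\exprsA$, confluence yields some $\exprsA''$ with $\exprsA_i\sresredRT\exprsA''$ for $i=1,2$. To conclude, I would observe that a sum of $\resred$-irreducible terms admits no nontrivial $\sresred$-reduction: by definition $\exprsA_i\sresred\exprsB$ proceeds summand-wise via $\resredR$, and the only $\resredR$-reduct of a $\resred$-irreducible term is itself, so $\exprsA''=\exprsA_1=\exprsA_2$. The most delicate point, and where I expect the argument to require the most care, is precisely this last step: because $\sresred$ is deliberately reflexive, one must ensure that an irreducible sum cannot step to anything but itself, which ultimately rests on the summand-wise definition of $\sresred$ together with the irreflexivity of $\resred$ on $\resred$-irreducible terms.
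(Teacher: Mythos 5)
Your proposal is correct and follows essentially the same route as the paper: existence via strong normalization of $\sResred$ together with the observation that $\resred$- and $\Resred$-reducibility coincide and that big steps unfold to small-step sequences, and uniqueness via confluence of $\sresredRT$ plus the fact that a sum of irreducible expressions only $\sresred$-reduces to itself. Your write-up merely spells out details the paper leaves implicit.
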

\begin{proof}
  We obtain $\exprsA'$ by the previous lemma,
  observing that an expression is $\resred$-reducible iff
  it is $\Resred$-reducible.
  Unicity follows from the confluence of $\sresred$,
  together with the fact that if $\exprsA'$ is a sum of $\resred$-irreducible
  expressions and $\exprsA'\sresred\exprsA''$ then $\exprsA'=\exprsA''$.
\end{proof}

Given any resource sum $\exprsA$, we denote by $\NF{\exprsA}$ the unique sum
of irreducible expressions such that $\exprsA\sresredRT\NF{\exprsA}$
and call $\NF{\exprsA}$ the \definitive{normal form} of $\exprsA$.

\begin{theorem}[Confluence of $\sResredRT$]
  The reduction $\sResred$ is confluent
  and $\exprsA\sResredRT\NF{\exprsA}$.
\end{theorem}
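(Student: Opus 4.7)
The plan is to show that, given $\termsA \sResredRT \termsA_1$ and $\termsA \sResredRT \termsA_2$, the unique normal form $\NF{\termsA}$ serves as a common $\sResredRT$-reduct. This follows almost immediately from the material already in hand: strong normalization of $\sResred$ (\cref{corollary:Resred:SN}), confluence of $\sresred$ (\cref{theorem:resred:confluent}), the embedding $\sResred \subseteq \sresredRT$ from \cref{lemma:sResred:sresredRT}, and the coincidence between $\sResred$-irreducible and $\sresred$-irreducible expressions.

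First I would apply \cref{lemma:sResred:sresredRT} along each of the two big-step reduction chains to obtain $\termsA \sresredRT \termsA_1$ and $\termsA \sresredRT \termsA_2$. By the previous theorem (weak normalization for $\sresred$), the normal forms $\NF{\termsA}$, $\NF{\termsA_1}$ and $\NF{\termsA_2}$ are well defined and unique; composing $\termsA \sresredRT \termsA_i \sresredRT \NF{\termsA_i}$ and invoking uniqueness of the normal form of $\termsA$ forces $\NF{\termsA_1} = \NF{\termsA_2} = \NF{\termsA}$.

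Next, using the observation recalled immediately before the statement (namely $\exprsA \sResredRT \NF{\exprsA}$ for any expression sum), I conclude $\termsA_1 \sResredRT \NF{\termsA}$ and $\termsA_2 \sResredRT \NF{\termsA}$, so setting $\termsA' \eqdef \NF{\termsA}$ yields the sought common reduct. There is no real obstacle here: all the difficult work — strong normalization, confluence of small-step, and simulation of $\Resred$-steps by $\resred$-steps — has been done earlier. The only point worth double-checking is that $\sResred$ and $\sresred$ share the same notion of irreducibility on single resource terms, which is immediate from the rule schemes of the two reductions (both fire exactly at $\appl{\labs{\svarA}{\btermA}}{\strB}$ redexes), so that the $\sresred$-normal form $\NF{\exprsA}$ is also a valid $\sResredRT$-endpoint.
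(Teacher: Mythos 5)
Your proposal is correct and follows essentially the same route as the paper: both arguments reduce confluence of $\sResred$ to the uniqueness of $\sresred$-normal forms (via \cref{lemma:sResred:sresredRT} and the confluence of $\sresred$) and then use the fact that $\exprsA\sResredRT\NF{\exprsA}$ to exhibit $\NF{\termsA}$ as the common reduct. Your extra remark that $\resred$- and $\Resred$-irreducibility coincide is the same observation the paper makes in its weak-normalization proof, so nothing is missing.
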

\begin{proof}
  By the previous two results, \(\NF{\exprsA}\) is the unique
  \(\Resred\)-irreducible form of \(\exprsA\).
\end{proof}

Recall that a base term is either a redex or of the shape \(\appl{\varA}{\strA}\).
In particular, there is no closed normal base term:
it follows that \(\NF{\btermA}=0\) for any closed base term \(\btermA\).

\subsection{Comparison with the resource calculus with tests}
\label{section:tests}

Although it was designed following the intuitions exposed in our introduction
(and by analogy with game semantics, along the lines of \cref{section:gs}),
the extensional resource calculus shares striking features with the
\emph{resource calculus with tests}.
The latter was designed by Bucciarelli
\emph{et al.}~\cite{DBLP:journals/corr/abs-1209-2890}
to ensure the definability of each element of a particular reflexive object in
the cartesian closed category of sets and multirelations (the relational model
of the simply typed $λ$-calculus), which induces an extensional model of the
pure \(λ\)-calculus \cite{DBLP:conf/csl/BucciarelliEM07}.\footnote{
  Knowledge of the latter model helps, but is not necessary for the discussion
  below. Optionally, the reader may refer to the first paragraphs of
  \cref{section:rel}, where we recall its definition.
}

The syntax and dynamics of both calculi involve a finitary notion of resource
reduction in presence of infinite sequences of abstractions, and applications
to infinite sequences of bags: this concomitance is not fortuitous, and it 
deserves a more detailed comparison.
In the remainder of this section, we thus review those similarities more in
detail, and also outline key differences: in particular, we explain how one can
view value terms (resp.\ base terms) in the extensional resource calculus as
\emph{particular} terms (resp.\ tests) of the resource calculus with tests,
up to some natural quotient on the syntax.%
\footnote{
  The reader not interested in the details of the comparison
  can safely jump to \cref{section:resourcevectors}, on the way to extensional
  Taylor expansion;
  or to \cref{section:gs} for the correspondence with
  plays up to homotopy in game semantics.
}

\paragraph{The resource calculus with tests.}
The syntax of the resource calculus with tests involves three categories of
expressions:
\emph{terms}, which include the terms of the ordinary resource calculus;
\emph{bags} of terms;
and \emph{tests}, which are obtained from bags by the so-called
\emph{cork} construct \(\cork{}\),
and are injected back into terms by the dual \emph{uncork} construct \(\uncork{}\).
Intuitively, a test \(\btermA=\cork{\mset{\vtermA_1,\dotsc,\vtermA_k}}\)
is interpreted as the result of feeding each term \(\vtermA_i\)
with an infinite sequence of empty bags
(the empty stream \(\emptystream\) in our terminology);
and a term \(\uncork{\btermA}\) is viewed as the abstraction of \(\btermA\) over
a denumerable sequence of dummy variables
(\(\labs{\svarC}{\btermA}\) with \(\svarC\) chosen fresh,
in our notation).
The reduction reflects these intuitions.

Formally, one defines a reduction relation \(\testresred\) from expressions
(terms, bags or tests) to sums of expressions, as the compatible closure of
the four base steps:
\begin{align*}
  \appl*{\labs{\varA}{\vtermA}}{\bagB}
  &\testresred\rsubst{\vtermA}{\varA}{\bagB}
  &
  \appl*{\uncork{\btermA}}{\bagB}
  &\testresred\begin{cases}
    \uncork{\btermA}&\text{if \(\bagB=\emptybag\)}\\
    0&\text{otherwise}
  \end{cases}
  \\
  \btermB\testcat\cork{\mset{\labs{\varA}{\vtermA}}}
  &\testresred
  \btermB\testcat\cork{\mset{\subst{\vtermA}{\varA}{0}}}
  &
  \btermB\testcat\cork{\mset{\uncork{\btermA}}}
  &\testresred
  \btermB\testcat\btermA
\end{align*}
where \(\btermB\testcat\btermA\) denotes the parallel composition of tests,
obtained by concatenating the underlying bags:
\(\cork{\bagA}\testcat\cork{\bagB}=\cork*{\bagA\bagcat\bagB}\).
This relation from expressions to sums of expressions is extended to a binary
reduction relation on sums of expressions \(\stestresred\) by linearity,
in the same fashion as in \cref{definition:sResred}:
one requires that at least one term of a sum is reduced.
The obtained reduction \(\stestresred\) is then easily shown to be confluent
and strongly normalizing, as for ordinary resource
reduction~\cite[Theorem 3.22]{DBLP:journals/corr/abs-1209-2890}.

\paragraph{Parallel composition of tests.}
A first discrepancy with the extensional resource calculus is the presence of
an operation of parallel composition on tests: this feature was crucially
used by Bucciarelli \emph{et al.}\ to obtain their full abstraction
results.

Indeed, by inspecting the shape of normal forms, one observes that the only
closed test in normal form is the empty test
\(\cork{\emptybag}\)~\cite[Lemma 3.23]{DBLP:journals/corr/abs-1209-2890}.
If one closes an arbitrary test by substituting all its free variables
with closed bags, and then considers (the support of) the resulting normal
form, the outcome is thus boolean:
it is either \(\emptyset\) (failure) or \(\set{\cork{\emptybag}}\) (success).
Semantically, the effect of parallel composition is then to sum over the
resources (denoted by closed bags) required by successful tests:
this is the key to defining separating test contexts, representing all points of
the model, and constitutes the main ingredient of the full abstraction
proof~\cite[Section 5]{DBLP:journals/corr/abs-1209-2890}.

The absence of this feature in the extensional resource calculus shows in both
syntax and semantics:
a normal base term is necessarily the application of a free variable to a
normal stream, so the normal form of a closed base term can only be \(0\)
(all tests fail!);
and in \cref{section:rel}, we will give examples of points of the relational
model that are not in the interpretation of any normal extensional resource
term (see \cref{proposition:rel:nondefinable} in particular).

In the remaining stages of this comparison, we will thus only consider tests of
the shape \(\cork{\mset{\vtermA}}\) that we simply write \(\cork{\vtermA}\).
The two reduction rules on tests become
\(\cork*{\labs{\varA}{\vtermA}}\testresred\cork*{\subst{\vtermA}{\varA}{0}}\)
and
\(\cork*{\uncork{\btermA}}\testresred\btermA\),
which are direct analogues of 
\(\appl*{\labs{\varA}{\vtermA}}{\emptystream}
=\appl*{\labs{\varA}{\vtermA}}{\emptybag\cons\emptystream}
\resred\appl{\rsubst{\vtermA}{\varA}{\emptybag}}{\emptystream}\)
by \(\resredrulebeta\)
and 
\(\appl*{\labs{\svarA}{\btermA}}{\emptystream}
\resred
\erase{\btermA}{\svarA}
=\btermA\)
by \(\resredruleiota\)
when \(\svarA\not\in\SVarOf{\btermA}\).

\paragraph{Extensional resource terms as \(η\)-long terms with tests.}
We are now ready to make the correspondence between both calculi more precise.
Intuitively, given an expression of the extensional resource calculus, one 
can obtain an expression of the resource calculus with tests by:
\begin{itemize}
  \item replacing each base term \(\appl{\hexprA}{\strB}\) with a finite
    sequence of applications to the bags of some prefix of
    \(\strB=\tuple{\bagB_i}_{i\in \N}\),
    containing all non empty bags, terminated by a cork:
    this yields a test \(\cork*{\appl{\hexprA}{\bagB_0\cdots\bagB_{k-1}}}\)
    -- in particular, \(\appl{\hexprA}{\emptystream}\) can be mapped
    to \(\cork{\hexprA}\), but also to \(\cork*{\appl{\hexprA}{\emptybag}}\);
  \item replacing each value term \(\labs{\svarA}{\btermA}\)
    with a sequence of abstractions on a finite prefix of \(\svarA\),
    binding all value variables \(\svarA[i]\) occurring free 
    in \(\btermA\), on top of an uncorked version of \(\btermA\):
    this yields a term
    \(\labs{\svarA[0]}{\cdots\labs{\svarA[k-1]}{\uncork{\btermA}}}\)
    -- in particular, we may obtain \(\uncork{\btermA}\) when no \(\svarA[i]\)
    is free in \(\btermA\), but also \(\labs{\varC}{\uncork{\btermA}}\)
    for any fresh variable \(\varC\).
\end{itemize}

More formally, we define the \definitive{representation relation}
\(\extIsTest{\termA}{\termA'}\) where
\(\termA\) is a value term \(\vtermA\) (resp.\ base term \(\btermA\), bag term
\(\bagA\), or stream term \(\strB\))
and \(\termA'\) is a term \(\vtermA'\) (resp.\ test \(\btermA'\), bag
\(\bagA'\), or finite tuple of bags \(\strB'\)),
by the following inductive rules:
\begin{gather*}
  \begin{prooftree}
    \hypo{\extIsTest{\btermA}{\btermA'}}
    \hypo{\svarA[i]\not\in \LVarOf{\btermA'}\text{ for }i\ge k}
    \infer{2}{
      \strut
      \extIsTest
      {\labs{\svarA}{\btermA}}
      {\labs{\svarA[0]}{\cdots\labs{\svarA[k-1]}{\uncork{\btermA'}}}}
    }
  \end{prooftree}
  \qquad
  \begin{prooftree}
    \hypo{\extIsTest{\strB}{\strB'}}
    \infer{1}{
      \strut
      \extIsTest
      {\appl{\varA}{\strB}}
      {\cork*{\appl{\varA}{\strB'}}}
    }
  \end{prooftree}
  \qquad
  \begin{prooftree}
    \hypo{\extIsTest{\vtermA}{\vtermA'}}
    \hypo{\extIsTest{\strB}{\strB'}}
    \infer{2}{
      \strut
      \extIsTest
      {\appl{\vtermA}{\strB}}
      {\cork*{\appl{\vtermA'}{\strB'}}}
    }
  \end{prooftree}
  \\
  \begin{prooftree}
    \hypo{
      \extIsTest{\vtermA_1}{\vtermA'_1}
    }
    \hypo{
      \cdots
    }
    \hypo{
      \extIsTest{\vtermA_{k}}{\vtermA'_{k}}
    }
    \infer{3}{
      \strut
      \extIsTest
      {\mset{\vtermA_1,\dotsc,\vtermA_k}}
      {\mset{\vtermA'_1,\dotsc,\vtermA'_k}}
    }
  \end{prooftree}
  \qquad
  \begin{prooftree}
    \infer{0}{
      \strut
      \extIsTest
      {\emptystream}
      {\emptyword}
    }
  \end{prooftree}
  \qquad
  \begin{prooftree}
    \hypo{\extIsTest{\bagA}{\bagA'}}
    \hypo{\extIsTest{\strB}{\strB'}}
    \infer{2}{
      \strut
      \extIsTest
      {\bagA\cons\strB}
      {\bagA'\cons\strB'}
    }
  \end{prooftree}
\end{gather*}
where \(\appl{\vtermA'}{\strB'}\) in the application rule denotes 
the iterated application of \(\vtermA'\) to the bags of \(\strB'\).
This relation yields a notion of \(η\)-longness for resource terms with tests:
a term (resp.\ a test, or a bag) is in \definitive{\(η\)-long form} when it
represents a value term (resp.\ a base term, or a bag term).
Of course, one could devise a direct definition, without reference to 
the extensional resource calculus, that would essentially amount to
forgetting about the left-hand side of each judgement in the above rules.

Note in particular that such an \(η\)-long term starts with a denumerable sequence of 
abstractions (a finite sequence of regular abstractions, followed by \(\uncork{}\)),
and that each variable and each redex is applied to a denumerable sequence of bags
(a finite sequence of bags, followed by \(\cork{}\)).
Clearly, each value term admits infinitely many representations:
it is sufficient to write each stream as a sequence of bags followed by
\(\emptystream\), and to pick a rank \(k\) for each abstraction
\(\labs{\svarA}{\btermA}\) so that no \(\svarA[i]\) occurs in \(\btermA\) for
\(i\ge k\).
Moreover, considered from right to left, the representation relation is
functional, so that it defines a map sending each \(η\)-long term \(\vtermA\)
to the unique value term \(\testToExt{\vtermA}\) such that
\(\extIsTest{\testToExt{\vtermA}}{\vtermA}\).
This map induces an equivalence relation \(\testEq\) on \(η\)-long terms:
one can check that it is the congruence generated by the identities
\(\cork*{\appl{\vtermA}{\emptybag}}\testEq\cork{\vtermA}\) and
\(\labs{\varA}{\uncork{\btermA}}\testEq\uncork{\btermA}\) when
\(\varA\not\in\LVarOf{\btermA}\).
One can thus see value terms as \(\testEq\)-classes of \(η\)-long terms with
tests.

A reader comfortable with relational semantics will easily check that the
representation relation is semantically valid:
\(\sem{\exprA}=\sem{\exprA'}\) as soon as
\(\extIsTest{\exprA}{\exprA'}\), where \(\sem{\exprA}\) is defined in
\cref{section:rel} and \(\sem{\exprA'}\) is the semantics given by
Bucciarelli \emph{et al.}~\cite[Section 4.3]{DBLP:journals/corr/abs-1209-2890}
(the reverse implication cannot hold, as the semantics is not even injective
on extensional resource terms, \emph{cf.}\ \cref{example:rel:noninjective}).

\paragraph{Fine-step reduction as resource reduction on terms with tests.}
A simple inspection of the definitions shows that any representation
of a redex is a reducible test,
and that \(\testToExt{\vtermA}\) is normal iff \(\vtermA\) is.
Note, however, that \(\testresred\) does not commute with \(\testEq\) on the
nose. For instance, consider
\(
\extIsTest{\appl{\trProj{\varA}}{\emptystream}}
{\cork{\uncork{\cork{\varA}}}}
\testEq \cork*{\appl*{\uncork{\cork{\varA}}}{\emptybag}}
\testEq \cork*{\labs{\varB}{\uncork{\cork{\varA}}}}
\testEq \cork*{\appl*{\labs{\varB}{\uncork{\cork{\varA}}}}{\emptybag}}
\) with \(\varB\not=\varA\).
Mimicking the reduction
\(\appl{\trProj{\varA}}{\emptystream}\resred\appl{\trProj{\varA}}{\emptystream}\)
by \(\resredrulebeta\),
the last three representations reduce to the first one:
\[
\cork*{\appl*{\uncork{\cork{\varA}}}{\emptybag}}
\testresred
\cork*{\uncork{\cork{\varA}}}
\qquad
\cork*{\labs{\varB}{\uncork{\cork{\varA}}}}
\testresred
\cork{\subst*{\uncork{\cork{\varA}}}{\varB}{0}}
\qquad
\cork*{\labs{\varB}{\appl*{\uncork{\cork{\varA}}}{\emptybag}}}
\testresred
\cork{\rsubst*{\uncork{\cork{\varA}}}{\varB}{\emptybag}}
\]
each time using a different reduction rule.
And reducing the first representation yields 
\(\cork{\uncork{\cork{\varA}}}\testresred\cork{\varA}\),
reflecting the reduction
\(\appl{\trProj{\varA}}{\emptystream}\resred\appl{\varA}{\emptystream}\)
by \(\resredruleiota\).
One can nonetheless observe that (sums of) \(η\)-long terms are stable under
reduction, and that extensional resource reduction amounts to the reduction of
\(η\)-long terms \emph{up to} \(\testEq\) (extended to sums in the obvious way).

We could thus reconstruct the extensional resource calculus,
\emph{a posteriori}, as the quotient by \(\testEq\) of the \(η\)-long fragment
of the resource calculus with tests.
It is nonetheless more convenient to work with a direct definition
of the language and (especially) of the dynamics, rather than having to reason
up to \(\testEq\) -- and a workable presentation of full-step reduction
\(\sResred\) through that lens seems out of reach.

Finally, note that the resource calculus with tests does not feature a
primitive construction for abstractions over infinitely many variables,
all being potentially free in the immediate subterm.
This is of little consequence when dealing with single terms or finite sums of
terms, which have finitely many free variables, but the extensional
Taylor expansion, described in \cref{section:taylor}, naturally involves
such infinite abstractions
(see, e.g., the expansion of variables in \cref{section:taylor:variables}):
having them reflected in the syntax of resource terms simplifies the exposition
dramatically.

\section{Resource vectors}\label{section:resourcevectors}

Extensional Taylor expansion will map \(λ\)-terms to infinite linear
combinations of value terms.
Since resource reduction generates term sums, we will have to consider infinite
weighted sums of term sums which, \emph{a priori}, involve infinite sums of coefficients
(see \cref{example:vresred} below).
It is reasonable to expect that, like in the ordinary case,
the uniformity of Taylor expansion would allow us to consider finite scalar sums only,
but we leave this for future work, as discussed in the conclusion of the paper
(\cref{section:conclusion}).

Instead, we impose that (countably) infinite sums are always defined by taking coefficients
in a \definitive{complete commutative semiring}~\cite[Section VI.2]{EilenbergALMvolA},
\emph{i.e.}\ a set $\K$ equipped with:
a sum operator $\mathord{\sum}:\K^I\to\K$ on countable families that we denote
by $\sum_{i\in I}α_i\eqdef\sum\tuple{α_i}_{i\in A}$,
satisfying $\sum_{i\in\set{j}} α_i=α_j$,
and $\sum_{i\in I}α_i=\sum_{j\in J}\sum_{i\in I_j}α_{i}$
for any partitioning of $I$ into $\set{I_j\st j\in J}$;
and a commutative monoid structure, denoted multiplicatively, which distributes over $\sum$.
A direct consequence of the axioms is that finite sums are associative and commutative.
We write $0\in\K$ for the empty sum and denote binary and finite sums as usual.
Equipped with finite sums and products, $\K$ is then a commutative semiring in the usual sense.
Moreover, $\K$ is automatically \definitive{positive}:
if $α_1+α_2=0$ then $α_1=α_2=0$.
We also write $1\in\K$ for the multiplicative unit.
Then there is a unique semiring morphism from \(\N\) to \(\K\):
this sends $n\in\N$ to $\sum_{i=1}^n1\in\K$,
and is not necessarily injective.

In order to account for the coefficients of Taylor expansion,
we moreover assume that $\K$ \definitive{has fractions},
meaning that each $n\in\N\setminus\set{0}$, seen as an element of \(\K\),
has a multiplicative inverse in $\K$.
The semiring of booleans $\Bool$ and the extended real half line $\Realpc$,
both equipped with the usual operations,
are such complete semirings with fractions.

We write $\LinearCombinationsOf{X}$ for the semimodule of possibly
infinite linear combinations of elements of a given countable set $X$,
with coefficients in $\K$:
equivalently, these are the $X$-indexed families of elements of $\K$.
We will call any such $A\in\LinearCombinationsOf{X}$ a \definitive{vector},
and we write $\Coef{A}{a}\in\K$ for the value of $A$ at $a$,
\emph{i.e.}\ the \definitive{coefficient} of \(a\) in \(A\).
We write $\support{A}=\set{a\in X\st\Coef{A}{a}\not=0}$ for its \definitive{support}.
We will often abuse notation and write $a\in A$ for $\Coef{A}{a}\not=0$.
Given countable families $\tuple{A_i}_{i\in I}\in\pars{\LinearCombinationsOf{X}}^{I}$ of vectors
and $\tuple{α_i}_{i\in I}\in\K^{I}$ of coefficients,
we write $\sum_{i\in I} α_iA_i\in\LinearCombinationsOf{X}$
for the vector $A$ defined by 
$\Coef{A}{a}\eqdef\sum_{i\in I} α_i\Coef{A_i}{a}\in\K$.

Using the additive monoid structure of $\K$,
each finite sum $A\in\SumsOf{X}$ (and in particular each element of $X$)
induces a vector with finite support $\vembed{A}\in\LinearCombinationsOf{X}$.
Then, for any vector $A$, we have $A=\sum_{a\in A}(\Coef{A}{a})\vembed{a}$.
Note that, again, this embedding of $\SumsOf{X}$ in $\LinearCombinationsOf{X}$
need not be injective: for instance if $\K=\Bool$,
$\vembed{A}$ is nothing but the support of $A$.
We will however abuse notation and generally write $A$ instead of $\vembed{A}$:
the implicit application of the embedding should be clear from the context.
E.g., if we write a vector $\sum_{i\in I}α_i A_i\in\LinearCombinationsOf{X}$
where $α_i\in\K$ and $A_i\in\SumsOf{X}$ for every $i\in I$,
this should be read as  $\sum_{i\in I}α_i\vembed{A_i}$.

\subsection{Vectors of resource terms}
As we have already announced above, the extensional Taylor expansion
of a pure \(λ\)-term will yield a vector of value terms;
moreover, as for ordinary Taylor expansion, the case of an application term
will rely on the \emph{promotion} of a vector of value terms to a vector of bag
terms.
And the latter operation of promotion allows relating ordinary substitution
with resource substitution:
an analogue of \cref{eqn:substitution} (\cpageref{eqn:substitution}) holds for
vectors of terms, without reference to Taylor expansion itself
(\cref{lemma:subst:taylor}).
In order to deal with the possible renaming of bound variables during 
substitutions, we will restrict our attention to vectors of terms whose 
global set of free sequence variables is finite.
In the present subsection, we present this notion formally, and introduce both
kinds of substitution.


We call \definitive{value vector} any vector
$\vtermvA\in\LinearCombinationsOf{\ValueTerms}$ such that
$\SVarOf{\vtermvA}\eqdef\bigcup_{\vtermA\in\vtermvA}\SVarOf{\vtermA}$
is finite ---
note that $\LVarOf{\vtermvA}\eqdef\bigcup_{\vtermA\in\vtermvA}\LVarOf{\vtermA}$
might very well be infinite,
but the hypothesis on $\SVarOf{\vtermvA}$ is sufficient to 
ensure that we can always find variables that are not free in $\vtermvA$.
We use the same typographic conventions for value vectors as for value sums
and write $\ValueVectors$ for the set of value vectors
(thus denoted by $\vtermvA,\vtermvB,\vtermvC$).
We similarly define
\definitive{base vectors} (denoted by $\btermvA,\btermvB,\btermvC\in\BaseVectors$),
\definitive{bag vectors} (denoted by $\bagvA,\bagvB,\bagvC\in\BagVectors$), and
\definitive{stream vectors} (denoted by $\strvA,\strvB,\strvC\in\StreamVectors$).
Note that we do not impose any other bound on the shape of terms in the
definition of vectors: e.g., the length of bags in $\bagvA\in\BagVectors$
is not bounded in general.

As for sums, we may call \definitive{head vector}
(denoted $\hexprvA,\hexprvB,\hexprvC\in\HeadVectors$)
any of a value vector or of a value variable.
And we call \definitive{term vector} (resp.\ \definitive{resource vector})
any value vector (resp.\ head vector), base vector, bag vector, or stream vector,
which we then denote by a letter among $\termvA,\termvB,\termvC$
(resp.\ $\exprvA,\exprvB,\exprvC$).
And we write $\ResourceVectors$ (resp.\ $\ExprVectors$) any of 
$\ValueVectors$ (resp.\ $\HeadVectors$), $\BaseVectors$, $\BagVectors$, or $\StreamVectors$.

We extend term constructors to term vectors by linearity 
as we did for sums, and
we extend resource substitution to vectors by bilinearity, by setting:
\[\rsubst{\exprvA}{\varA}{\bagvB}
  \eqdef
  \sum_{\exprA\in\ResourceExprs}
  \sum_{\bagB\in\BagTerms}
  (\Coef{\exprvA}{\exprA})
  (\Coef{\bagvB}{\bagB})
  \,
  \rsubst{\exprA}{\varA}{\bagB}
\;.
\]
Again, one can easily check that, with that extension,
all the identities defining resource substitution
in \cref{definition:rsubst} also hold if we replace
terms with vectors.
Similarly,  for full-step resource substitution, we set:
\[\rsubst{\exprvA}{\svarA}{\strvB}
  \eqdef
  \sum_{\exprA\in\ResourceExprs}
  \sum_{\strB\in\BagTerms}
  (\Coef{\exprvA}{\exprA})
  (\Coef{\strvB}{\strB})
  \,
  \rsubst{\exprA}{\svarA}{\strB}
\;.
\]

Having extended term constructors to resource vectors, it is also
straightforward to define the ordinary (capture avoiding)
\definitive{substitution} $\subst{\exprA}{\varA}{\hexprvB}\in\ExprVectors$
of a head vector $\hexprvB\in\HeadVectors$
for a value variable $\varA$ in any resource expression
$\exprA\in\ResourceExprs$:
\begin{definition}\label{definition:subst}
  Let $\exprA\in\ResourceExprs$ and $\hexprvB\in\HeadVectors$.
  We define $\subst{\exprA}{\varA}{\hexprvB}\in\ExprVectors$
  by induction on $\exprA$:
\begin{align*}
  \subst{\varB}{\varA}{\hexprvB}&\eqdef
  \begin{cases}
    \hexprvB & \text{if $\varA=\varB$}\\
    \varB  & \text{otherwise}
  \end{cases}
  \\
  \subst*{\labs{\svarB}{\btermA}}{\varA}{\hexprvB}&\eqdef
  \labs{\svarB}{\subst{\btermA}{\varA}{\hexprvB}}
  \\
  \subst*{\appl{\hexprA}{\strA}}{\varA}{\hexprvB}&\eqdef
  \appl{\subst{\hexprA}{\varA}{\hexprvB}}{\subst{\strA}{\varA}{\hexprvB}}
  \\
  \subst{\mset{\vtermA_1,\dotsc,\vtermA_k}}{\varA}{\hexprvB}&\eqdef
  \mset{\subst{\vtermA_1}{\varA}{\hexprvB},\dotsc,\subst{\vtermA_k}{\varA}{\hexprvB}}
  \\
  \subst{\emptystream}{\varA}{\hexprvB}&\eqdef \emptystream
  \\
  \subst*{\bagA\cons\strC}{\varA}{\hexprvB}&\eqdef
  {\subst{\bagA}{\varA}{\hexprvB}}\cons{\subst{\strC}{\varA}{\hexprvB}}
  \qquad\text{if $\bagA\cons\strC\not=\emptystream$}
\end{align*}
where, in the abstraction case, $\svarB$ is chosen so that $\varA\not\in\svarB$
and $\LVarOf{\hexprvB}\cap\svarB=\emptyset$.\footnote{
  The assumption that $\SVarOf{\hexprvB}$ is finite ensures that this
  requirement can be fulfilled.
}
\end{definition}

Note that, as for the substitution of finite sums for variables, \[
  \subst{\exprA}{\varA}{0}=\rsubst{\exprA}{\varA}{\emptybag}
  =\begin{cases}0
    &\text{if $\varA\in\LVarOf{\exprA}$}
    \\
    \exprA&\text{otherwise}
  \end{cases}
  \;.
\]

It is also easy to check that, if $\nocc{\varA}{\exprA}=1$,
then $\subst{\exprA}{\varA}{\hexprvB}=\rsubst{\exprA}{\varA}{\mset{\hexprvB}}$,
which is linear in $\hexprvB$.
In general, however,
$\subst{\exprA}{\varA}{\hexprvB}$ is not linear in $\hexprvB$:
e.g., when $\varA\not\in\LVarOf{\exprA}$,
$\subst{\exprA}{\varA}{0}=\exprA\not=0$.
On the other hand, we can extend this definition to substitution inside a
resource vector, by linearity: we set $\subst{\exprvA}{\varA}{\hexprvB}
\eqdef\sum_{\exprA\in\ResourceExprs}(\Coef{\exprvA}{\exprA})\subst{\exprA}{\varA}{\hexprvB}
\in\ExprVectors$
for any $\exprvA\in\ExprVectors$.
Then one can check that, with that extension, all the identities in the
previous definition also hold if we replace terms with vectors.

Following a similar pattern, one defines the simultaneous substitution 
$\subst{\exprvA}{\seq{\varA}}{\seq{\hexprvB}}$
of the tuple $\seq{\hexprvB}=\tuple{\hexprvB_0,\dotsc,\hexprvB_{k-1}}$
(resp.\ the sequence $\seq{\hexprvB}=\tuple{\hexprvB_i}_{i\in\N}$,
assuming $\SVarOf{\seq{\hexprvB}}$ is finite) of head vectors
(note that, despite the similar notations, these are \emph{not} stream vectors)
for the tuple $\seq{\varA}=\tuple{\varA_0,\dotsc,\varA_{k-1}}$ of variables
(resp.\ the sequence $\seq{\varA}=\tuple{\varA_i}_{i\in\N}$)
of variables in $\exprvA$.
In the finite case, moreover assuming
$\seq{\varA}\cap\LVarOf{\seq{\hexprvB}}=\emptyset$,
we have
\[
  \subst{\exprvA}{\svarA}{\seq{\hexprvB}}
  =\subst{\subst{\exprvA}{\varA_0}{\hexprvB_0}\cdots}{\varA_{k-1}}{\hexprvB_{k-1}}\;.
\]

And in the infinite case,
again assuming $\seq{\varA}\cap\LVarOf{\seq{\hexprvB}}=\emptyset$,
we intuitively have
\[
  \subst{\exprvA}{\svarA}{\seq{\hexprvB}}
  =\subst{\subst{\exprvA}{\varA_0}{\hexprvB_0}}{\varA_1}{\hexprvB_1}\cdots
  \;.
\]

Formally, if we also assume $\LVarOf{\exprvA}\cap\svarA$ is finite
(which is automatic when $\exprvA\in\ExprSums$), then we have
\[
  \subst{\exprvA}{\svarA}{\seq{\hexprvB}}
  =\subst{\subst{\exprvA}{\varA_0}{\hexprvB_0}\cdots}{\varA_{k-1}}{\hexprvB_{k-1}}
\]
for any $k$ such that $\varA_i\in\LVarOf{\exprvA}$ implies $i<k$.
We will most often consider the case where $\seq{\varA}$ is in fact a sequence
variable and $\svarA\not\in\SVarOf{\seq{\hexprvB}}$ --- identifying $\svarA$ with
$\tuple{\svarA[i]}_{i\in\N}$ as we announced.
The latter condition is not restrictive: $\exprvA$ being a resource vector, the
additional condition on $\SVarOf{\seq{\hexprvB}}$ ensures that we can always
find $\svarB\not\in\SVarOf{\exprvA}\cup\SVarOf{\seq{\hexprvB}}$ and write
$\subst{\exprvA}{\svarA}{\seq{\hexprvB}}
=\subst{\subst{\exprvA}{\svarA}{\svarB}}{\svarB}{\seq{\hexprvB}}$.

Note that, writing \(\vec0\) for the sequence of empty value sums, we have
\(\subst{\exprvA}{\svarA}{\vec0}=\erase{\exprvA}=\rsubst{\exprvA}{\svarA}{\emptystream}\):
the erasure of \(\svarA\) amounts to substituting \(0\) for each \(\svarA[i]\).

\subsection{Promotion and the Taylor expansion formula for substitution}

We are now ready to define the promotion of a value vector,
and show that the ordinary substitution of a vector amounts to
the resource substitution of its promotion:
this is the Taylor expansion formula for substitution.
Note that the operation of promotion here is exactly the same as for the
ordinary Taylor expansion, and both substitution mechanisms are the usual
ones: as a consequence, this result (\cref{lemma:subst:taylor}) is stated and
proved similarly, although the underlying term language is different.
We thus only provide references and sketches of proofs.

On the other hand, it will also be useful to define the promotion of a
\emph{sequence} of value vectors, and to establish a similar formula
for the substitution of a sequence of value vectors for a sequence variable
(\cref{lemma:subst:taylor:seq}).
The latter requires a bit of care to deal with infinite sequences of terms,
but the actually novel phenomenon regarding substitution will only appear
when we consider extensional Taylor expansion of \(λ\)-terms, in
\cref{section:taylor}: there, the analogue of \cref{eqn:substitution}
does not hold as an identity.


Given a value vector $\vtermvA$,
and $\bagA=\mset{\vtermA_1,\dotsc,\vtermA_k}\in\BagTerms$,
we write $\BagCoef{\vtermvA}{\bagA}\eqdef\prod_{i=1}^k\Coef{\vtermvA}{\vtermA_i}$.
Then we define the bag vector $\vtermvA^k\eqdef\mset{\vtermvA,\dotsc,\vtermvA}$
(with $k$ copies of $\vtermvA$), and obtain:
\[ \vtermvA^k \quad = \quad 
  \sum_{\tuple{\vtermA_1,\dotsc,\vtermA_k}\in\ValueTerms^k} 
  \BagCoef{\vtermvA}{\mset{\vtermA_1,\dotsc,\vtermA_k}}
  \mset{\vtermA_1,\dotsc,\vtermA_k}\;.
\]

Then we set 
\[\prom{\vtermvA}=\sum_{k\in\N} \frac{1}{k!} \vtermvA^k\in\BagVectors\]
which we call the \definitive{promotion} of $\vtermvA$.

A straightforward computation shows that promotion commutes with substitution:
\begin{lemma}\label{lemma:subst:prom}
  For any $\vtermvA$ and $\vtermvB\in\ValueVectors$, 
  $\subst{\prom{\vtermvB}}{\varA}{\vtermvA}=\prom{\subst{\vtermvB}{\varA}{\vtermvA}}$.
  And for any $\seq{\vtermvA}\in\ValueVectors^\N$ 
  such that $\SVarOf{\seq{\vtermvA}}$ is finite, we have
  $\subst{\prom{\vtermvB}}{\svarA}{\seq{\vtermvA}}
  =\prom{\subst{\vtermvB}{\svarA}{\seq{\vtermvA}}}$.
\end{lemma}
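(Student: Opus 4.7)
The plan is to expand the promotion as an infinite sum, push the substitution inside by its linearity in the expression argument, and then establish the key commutation
\[
  \subst{\vtermvB^k}{\varA}{\vtermvA}
  =(\subst{\vtermvB}{\varA}{\vtermvA})^k
\]
for every $k\in\N$, from which the result will follow simply by reindexing the sum and reapplying the definition of $\prom{-}$.

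First I would unfold $\subst{\prom{\vtermvB}}{\varA}{\vtermvA}=\subst*{\sum_{k\in\N}\frac{1}{k!}\vtermvB^k}{\varA}{\vtermvA}$ and use the fact that $\subst{\cdot}{\varA}{\vtermvA}$ is extended to vectors by linearity in its first argument to commute it with $\sum_{k\in\N}\frac{1}{k!}(-)$. The nontrivial step is then the commutation displayed above. Expanding $\vtermvB^k$ according to the explicit formula given just before \cref{lemma:prom:coef}, namely
\[
  \vtermvB^k=\sum_{(\vtermA_1,\dotsc,\vtermA_k)\in\ValueTerms^k}
  \BagCoef{\vtermvB}{\mset{\vtermA_1,\dotsc,\vtermA_k}}
  \mset{\vtermA_1,\dotsc,\vtermA_k}\;,
\]
linearity of substitution in the first argument lets me push $\subst{\cdot}{\varA}{\vtermvA}$ inside the sum, and the bag clause of \cref{definition:subst} lets me push it further inside each $\mset{\vtermA_1,\dotsc,\vtermA_k}$. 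What remains is to recognise that what I obtain is precisely the multilinear expansion of $\mset{\subst{\vtermvB}{\varA}{\vtermvA},\dotsc,\subst{\vtermvB}{\varA}{\vtermvA}}$, because the coefficient $\BagCoef{\vtermvB}{\mset{\vtermA_1,\dotsc,\vtermA_k}}=\prod_{i=1}^{k}\Coef{\vtermvB}{\vtermA_i}$ factors componentwise over the $k$ positions. This gives $(\subst{\vtermvB}{\varA}{\vtermvA})^k$, and plugging this back into the sum yields $\sum_{k\in\N}\frac{1}{k!}(\subst{\vtermvB}{\varA}{\vtermvA})^k=\prom{\subst{\vtermvB}{\varA}{\vtermvA}}$.

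For the sequence variable statement, the same argument applies almost verbatim: the only clauses of \cref{definition:subst} that are involved in the manipulation above are the bag clause and the linearity extension to vectors, and neither is sensitive to whether we substitute a single variable $\varA$ or the elements of a sequence variable $\svarA$ --- bag formation binds nothing, so substitution commutes with $\mset{\cdot}$ position by position in either case. After using the remark following \cref{definition:subst} to reduce $\subst{\prom{\vtermvB}}{\svarA}{\seq{\vtermvA}}$ to an iterated single-variable substitution (up to $\alpha$-renaming $\svarA$ to a fresh $\svarB\not\in\SVarOf{\vtermvB}\cup\SVarOf{\seq{\vtermvA}}$), the sequence case follows either by iterating the single-variable case or by rerunning the same three-line argument with $\svarA$ in place of $\varA$.

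The only mildly subtle point I anticipate is bookkeeping in the multilinear expansion step: I need to be careful that the identification $\subst{\mset{\vtermA_1,\dotsc,\vtermA_k}}{\varA}{\vtermvA}=\mset{\subst{\vtermA_1}{\varA}{\vtermvA},\dotsc,\subst{\vtermA_k}{\varA}{\vtermvA}}$, which is stated in \cref{definition:subst} only for value \emph{terms} $\vtermA_i$, really does upgrade to vectors on both sides under the linearity extension; this is precisely the reassurance given in the paragraph following \cref{definition:subst} that ``all the identities in the previous definition also hold if we replace terms with vectors,'' so the work is already done. Everything else is a routine manipulation of formal sums.
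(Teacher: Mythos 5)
Your proof is correct and follows essentially the same route as the paper's: expand $\prom{\vtermvB}$ as $\sum_k\frac{1}{k!}\vtermvB^k$, push the substitution through by linearity and the bag clause of \cref{definition:subst}, and use the componentwise factorisation of $\BagCoef{\vtermvB}{-}$ to recognise $(\subst{\vtermvB}{\varA}{\vtermvA})^k$; the paper likewise dispatches the sequence case by rerunning the identical computation. The only caution is your alternative of reducing the sequence case to a finite iterated single-variable substitution: that reduction requires $\LVarOf{\prom{\vtermvB}}\cap\svarA$ to be finite, which need not hold for a general value vector, so stick with your other option of replaying the argument with $\svarA$ in place of $\varA$.
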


It will also be useful to compute the coefficient of a bag in the promotion of a
value vector (recalling that $\isodeg{\bagA}$ is the isotropy degree of $\bagA$,
as defined in \cref{section:preliminaries}):
\begin{lemma}\label{lemma:prom:coef}
  If $\bagA\in\prom{\vtermvA}$ and $\LengthOf{\bagA}=k$ then
  $\Coef{\prom{\vtermvA}}{\bagA}=\frac{\BagCoef{\vtermvA}{\bagA}}{\isodeg{\bagA}}$.
\end{lemma}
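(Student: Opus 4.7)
The plan is to compute $\Coef{\prom{\vtermvA}}{\bagA}$ directly from the definitions. Since $\prom{\vtermvA} = \sum_{k\in\N} \frac{1}{k!} \vtermvA^k$ and $\vtermvA^k$ is a sum of bags of length exactly $k$, only the $k$-th summand contributes to the coefficient of $\bagA$. So the task reduces to computing $\Coef{\vtermvA^k}{\bagA}$ and then multiplying by $\frac{1}{k!}$.

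To compute $\Coef{\vtermvA^k}{\bagA}$, I would observe that in the defining sum
\[
\vtermvA^k = \sum_{(\vtermA_1,\dotsc,\vtermA_k)\in\ValueTerms^k}
\BagCoef{\vtermvA}{\mset{\vtermA_1,\dotsc,\vtermA_k}}\mset{\vtermA_1,\dotsc,\vtermA_k},
\]
the tuples indexing terms equal to $\bagA$ (as a bag) are exactly the enumerations of $\bagA$, and for each such enumeration the scalar factor $\BagCoef{\vtermvA}{\bagA}$ is the same (it only depends on the underlying bag). So $\Coef{\vtermvA^k}{\bagA}$ equals the number of enumerations of $\bagA$ multiplied by $\BagCoef{\vtermvA}{\bagA}$.

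The main (and only nontrivial) step is thus the combinatorial identity that the number of enumerations of a bag $\bagA$ of length $k$ is $k!/\isodeg{\bagA}$. This is an instance of the orbit–stabilizer theorem: the symmetric group $\Perms{k}$ acts transitively on the set of enumerations of $\bagA$ by permuting indices, and by the very definition of $\isodeg{\bagA}$, the stabilizer of any enumeration has cardinality $\isodeg{\bagA}$; hence the orbit has cardinality $k!/\isodeg{\bagA}$.

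Putting this together, I get
\[
\Coef{\prom{\vtermvA}}{\bagA} = \frac{1}{k!}\cdot \frac{k!}{\isodeg{\bagA}}\cdot \BagCoef{\vtermvA}{\bagA} = \frac{\BagCoef{\vtermvA}{\bagA}}{\isodeg{\bagA}},
\]
as required. No obstacle is expected beyond checking the orbit–stabilizer count; all other steps are unpacking of definitions.
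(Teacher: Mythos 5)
Your proof is correct. The paper itself only cites the analogous result for ordinary Taylor expansion (Lemma 4.4 of Olimpieri--Vaux Auclair) rather than spelling out an argument, and your computation --- only the length-$k$ summand of $\prom{\vtermvA}$ contributes, the tuples mapping to $\bagA$ are exactly its enumerations, and orbit--stabilizer gives $k!/\isodeg{\bagA}$ of them --- is precisely the standard argument being invoked there.
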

\begin{proof}
  This is an easy result on the combinatorics of multisets
  (see, e.g., \cite[Lemma 4.4]{DBLP:journals/lmcs/OlimpieriA22}).
\end{proof}

\begin{lemma}[Taylor expansion of substitution]\label{lemma:subst:taylor}
  For any $\exprvA\in\ExprVectors$ and $\vtermvA\in\ValueVectors$, 
  $\subst{\exprvA}{\varA}{\vtermvA}=\rsubst{\exprvA}{\varA}{\prom{\vtermvA}}$.
\end{lemma}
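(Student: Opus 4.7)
The plan is to first reduce by bilinearity to the case where $\exprvA = \exprA$ is a single resource expression, since both sides of the identity are linear in $\exprvA$ --- the left-hand side by the vector extension of ordinary substitution, and the right-hand side by the bilinear extension of resource substitution. Then I would proceed by induction on $\exprA$.

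For the base cases I would compute both sides directly. When $\exprA = \varA$, only the $k = 1$ summand $\vtermvA^1 = \mset{\vtermvA}$ of $\prom{\vtermvA}$ survives $\rsubst{\varA}{\varA}{\cdot}$ and yields $\vtermvA$, matching $\subst{\varA}{\varA}{\vtermvA} = \vtermvA$. When $\exprA = \varB \neq \varA$, only the $k = 0$ summand $\emptybag$ contributes and yields $\varB = \subst{\varB}{\varA}{\vtermvA}$. The abstraction case follows from the induction hypothesis applied to the body, after choosing the bound sequence variable $\svarB$ fresh for $\vtermvA$ (which suffices, since $\LVarOf{\prom{\vtermvA}} = \LVarOf{\vtermvA}$); the empty stream case is immediate.

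The heart of the argument, and the main obstacle, is a grouplike identity for $\prom{\vtermvA}$: for every $k \geq 1$ and every bags $\bagA_1, \dotsc, \bagA_k$, setting $\bagA \eqdef \bagA_1 \bagcat \cdots \bagcat \bagA_k$,
\[
\CardOf{\set{p : \bagA \splitinto k \st \bagA \restrict_p i = \bagA_i \text{ for } 1 \leq i \leq k}} \cdot \Coef{\prom{\vtermvA}}{\bagA} = \prod_{i=1}^k \Coef{\prom{\vtermvA}}{\bagA_i}.
\]
I would derive this by combining three ingredients: \cref{lemma:prom:coef}, which gives $\Coef{\prom{\vtermvA}}{\bagA} = \BagCoef{\vtermvA}{\bagA}/\isodeg{\bagA}$; the multiplicativity $\BagCoef{\vtermvA}{\bagA_1 \bagcat \cdots \bagcat \bagA_k} = \prod_i \BagCoef{\vtermvA}{\bagA_i}$, immediate from the definition of $\BagCoef{\vtermvA}{\cdot}$; and \cref{fact:isodeg}, which evaluates the cardinality on the left as $\isodeg{\bagA}/\prod_i \isodeg{\bagA_i}$. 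A one-line cancellation then closes the identity.

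With the identity in hand, the remaining inductive cases --- application $\appl{\hexprA}{\strA}$, bag $\mset{\vtermA_1, \dotsc, \vtermA_k}$, and cons stream $\bagA \cons \strC$ --- all proceed uniformly. Expanding $\rsubst{\exprA}{\varA}{\prom{\vtermvA}}$ by bilinearity and its defining clauses produces a double sum, over bags $\bagA$ weighted by $\Coef{\prom{\vtermvA}}{\bagA}$ and over partitionings of $\bagA$ into the appropriate number of parts (two for application and cons stream, $k$ for a length-$k$ bag); swapping the order of summation and invoking the grouplike identity converts this into a sum indexed by independent tuples $(\bagA_1, \dotsc)$ weighted by $\prod_i \Coef{\prom{\vtermvA}}{\bagA_i}$, which reassembles into a product of resource substitutions $\rsubst{\cdot}{\varA}{\prom{\vtermvA}}$ on the immediate subterms. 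The induction hypothesis then rewrites each factor as an ordinary substitution $\subst{\cdot}{\varA}{\vtermvA}$, matching the corresponding clause of \cref{definition:subst}.
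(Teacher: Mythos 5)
Your proposal is correct and follows essentially the same route as the paper: reduce to a single resource expression by linearity, then show by induction that the clauses of \cref{definition:subst} are also satisfied by $\exprA\mapsto\rsubst{\exprA}{\varA}{\prom{\vtermvA}}$, with \cref{fact:isodeg} and \cref{lemma:prom:coef} supplying the combinatorial cancellation. Your explicit ``grouplike identity'' is exactly the step the paper leaves implicit (deferring to the analogous argument for ordinary Taylor expansion), and your derivation of it is sound.
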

\begin{proof}
  The proof is essentially the same as in the ordinary resource calculus
  \cite[Lemma 4.8]{DBLP:journals/lmcs/Vaux19}.
  By linearity, it is sufficient to consider the case of 
  $\exprvA=\exprA\in\ResourceExprs$.
  We first show that the identities defining
  $\exprA\mapsto\subst{\exprA}{\varA}{\vtermvA}$
  (as in \cref{definition:subst}) are also valid for
  $\exprA\mapsto\rsubst{\exprA}{\varA}{\prom{\vtermvA}}$:
  here the definition of sums over partitionings of bags,
  as used in \cref{definition:rsubst}, is crucial,
  in conjunction with \cref{fact:isodeg,lemma:prom:coef}.
  The result follows by induction on $\exprA$.
\end{proof}

Similarly, we associate a \definitive{promotion stream vector}
$\prom{\seq{\vtermvA}}\in\StreamVectors$ with each sequence
$\seq{\vtermvA}=\tuple{\vtermvA_i}_{i\in\N}\in\ValueVectors^\N$ of value vectors
such that $\SVarOf{\seq{\vtermvA}}$ is finite
(again note that, despite the similar notation, the latter sequence is
\emph{not} a stream vector).
First observe that, by construction, $\Coef*{\prom{\vtermvA_i}}{\emptybag}=1$ for each $i\in\N$.
Then we define $\prom{\seq{\vtermvA}}$ by its coefficients:
for every $\strA=\tuple{\bagA_i}_{i\in\N}\in\StreamTerms$,
we can set $\Coef*{\prom{\seq{\vtermvA}}}{\strA}\eqdef
\prod_{i\in\N}\Coef*{\prom{\vtermvA_i}}{\bagA_i}$,
where only finitely many factors $\Coef*{\prom{\vtermvA_i}}{\bagA_i}$ are not $1$.
In particular, we have $\Coef*{\prom{\seq{\vtermvA}}}{\emptystream}=1$ and 
$\Coef*{\prom{\vtermvA\cons\strvB}}*{\bagA\cons\strB}
=\Coef*{\prom{\vtermvA}}{\bagA}\times\Coef*{\prom{\strvB}}{\strB}$,
so that $\prom*{\vtermvA\cons\seq{\vtermvB}}=\prom{\vtermvA}\cons\prom{\seq{\vtermvB}}$.

Then we obtain the analogue of \cref{lemma:subst:prom} for the promotion of
sequences of value vectors:
\begin{lemma}\label{lemma:subst:prom:seq}
  For any $\vtermvA\in\ValueVectors$ and $\seq{\vtermvB}\in\ValueVectors^\N$
  such that $\SVarOf{\seq{\vtermvB}}$ is finite, we have
  $\subst{\prom{\seq{\vtermvB}}}{\varA}{\vtermvA}
  =\prom{\subst{\seq{\vtermvB}}{\varA}{\vtermvA}}$.
  And for any $\seq{\vtermvA}\in\ValueVectors^\N$ with
  $\SVarOf{\seq{\vtermvA}}$ finite, we have
  $\subst{\prom{\seq{\vtermvB}}}{\svarA}{\seq{\vtermvA}}
  =\prom{\subst{\seq{\vtermvB}}{\svarA}{\seq{\vtermvA}}}$.
\end{lemma}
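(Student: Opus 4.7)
The plan is to generalize the coefficient computation behind \cref{lemma:subst:prom} from bags to streams. Since both sides of each identity are stream vectors, I would verify the identities coefficient-wise at an arbitrary target stream term $\strA = \tuple{\bagA_i}_{i\in\N}$, reducing matters to the bag-level statement applied componentwise.

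Concretely, for the first identity, I would expand $\subst{\prom{\seq{\vtermvB}}}{\varA}{\vtermvA}$ by linearity in $\prom{\seq{\vtermvB}}$, and use that substitution on a stream term distributes over $\cons$ (from \cref{definition:subst}) to write $\Coef*{\subst{\strB}{\varA}{\vtermvA}}{\strA} = \prod_{i\in\N}\Coef*{\subst{\bagB_i}{\varA}{\vtermvA}}{\bagA_i}$ for any $\strB = \tuple{\bagB_i}_{i\in\N}$ --- a product with all but finitely many factors equal to $1$, since $\subst{\emptybag}{\varA}{\vtermvA} = \emptybag$. Combined with the defining product formula $\Coef*{\prom{\seq{\vtermvB}}}{\strB} = \prod_{i\in\N}\Coef*{\prom{\vtermvB_i}}{\bagB_i}$, this exhibits $\Coef*{\subst{\prom{\seq{\vtermvB}}}{\varA}{\vtermvA}}{\strA}$ as a sum over $\strB$ of products over $i$. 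Swapping the sum and product yields $\prod_{i\in\N}\Coef*{\subst{\prom{\vtermvB_i}}{\varA}{\vtermvA}}{\bagA_i}$; invoking \cref{lemma:subst:prom} at each component and recombining via the product formula for $\prom{\subst{\seq{\vtermvB}}{\varA}{\vtermvA}}$ yields the coefficient of $\strA$ on the RHS. The second identity follows by the same pattern, applying the second half of \cref{lemma:subst:prom} at each component (taking $\svarA\notin\SVarOf{\seq{\vtermvA}}\cup\SVarOf{\seq{\vtermvB}}$ to avoid capture).

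The main obstacle is justifying the sum-product exchange cleanly in the complete-semiring setting. For fixed $\strA$, only components $\bagB_i$ of bounded shape can contribute (in particular, when $\bagA_i = \emptybag$, one checks that $\bagB_i = \emptybag$ is forced, since substitution preserves bag length), so the infinite product is effectively finite for each $\strA$ and the swap reduces to finite distributivity combined with a single complete-semiring sum per component. This bookkeeping mirrors the partitioning arguments from the preliminaries, after which the proof is a direct adaptation of the computation in \cref{lemma:subst:prom}.
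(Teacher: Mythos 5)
Your proposal is correct and follows essentially the same route as the paper: both verify the identities coefficient-wise at an arbitrary stream term and reduce to \cref{lemma:subst:prom} applied componentwise. The paper merely streamlines the bookkeeping you describe by writing the target stream as a finite cons $\bagC_1\cons\cdots\cons\bagC_k\cons\emptystream$ and using $\prom{\seq{\vtermvB}}=\prom{\vtermvB_1}\cons\cdots\cons\prom{\vtermvB_k}\cons\prom{\seq{\vtermvC}}$ together with the multiplicativity of coefficients under $\cons$, which sidesteps the explicit sum--product interchange you justify by hand.
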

\begin{proof}
  Fix $\strC\in\StreamTerms$:
  we can write $\strC=\bagC_1\cons\cdots\cons\bagC_k\cons\emptystream$.
  Then write $\seq{\vtermvB}=\vtermvB_1\cons\cdots\cons\vtermvB_k\cons\seq{\vtermvC}$.
  We obtain
  \begin{align*}
    \Coef*{\subst{\prom{\seq{\vtermvB}}}{\varA}{\vtermvA}}{\strC}
    &=\Coef*{\subst{\prom{\vtermvB_1}}{\varA}{\vtermvA}\cons\cdots\cons\subst{\prom{\vtermvB_k}}{\varA}{\vtermvA}\cons\subst{\prom{\seq{\vtermvC}}}{\varA}{\vtermvA}}{\strC}
    \\
    &=\Coef*{\subst{\prom{\vtermvB_1}}{\varA}{\vtermvA}}{\bagC_1}
    \times\cdots\times
    \Coef*{\subst{\prom{\vtermvB_k}}{\varA}{\vtermvA}}{\bagC_k}
    \times\Coef*{\subst{\prom{\seq{\vtermvC}}}{\varA}{\vtermvA}}{\emptystream}
    \displaybreak[3]
    \\
    &=\Coef*{\prom{\subst{\vtermvB_1}{\varA}{\vtermvA}}}{\bagC_1}
    \times\cdots\times
    \Coef*{\prom{\subst{\vtermvB_k}{\varA}{\vtermvA}}}{{\bagC_k}}
    \times\Coef*{\prom{\subst{\seq{\vtermvC}}{\varA}{\vtermvA}}}{\emptystream}
    \displaybreak[3]
    \\
    &=\Coef*{\prom{\subst{\vtermvB_1}{\varA}{\vtermvA}}\cons\cdots\cons\prom{\subst{\vtermvB_k}{\varA}{\vtermvA}}\cons\prom{\subst{\seq{\vtermvC}}{\varA}{\vtermvA}}}{\strC}
    \\
    &=\Coef*{\prom{\subst{\seq{\vtermvB}}{\varA}{\vtermvA}}}{\strC}
  \end{align*}
  where each identity 
  $\Coef*{\subst{\prom{\vtermvB_i}}{\varA}{\vtermvA}}{\bagC_i}
  =\Coef*{\prom{\subst{\vtermvB_i}{\varA}{\vtermvA}}}{\bagC_i}$
  follows from \cref{lemma:subst:prom}, and
  $\Coef*{\subst{\prom{\seq{\vtermvC}}}{\varA}{\vtermvA}}{\emptystream}
  =\Coef*{\prom{\seq{\vtermvC}}}{\emptystream}
  =1
  =\Coef*{\prom{\subst{\seq{\vtermvC}}{\varA}{\vtermvA}}}{\emptystream}$
  by definition.
  The proof of the second statement is similar.
\end{proof}

We obtain the analogue of \cref{lemma:subst:taylor} as well:
\begin{lemma}\label{lemma:subst:taylor:seq}
  For any $\exprvA\in\ExprVectors$,
  and any $\seq{\vtermvA}\in\ValueVectors^\N$
  such that $\SVarOf{\seq{\vtermvB}}$ is finite,
  we have
  $\subst{\exprvA}{\svarA}{\seq{\vtermvA}}=\rsubst{\exprvA}{\svarA}{\prom{\seq{\vtermvA}}}$.
\end{lemma}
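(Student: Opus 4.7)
The plan is to reduce this identity to its bag version (\cref{lemma:subst:taylor}) by decomposing the stream substitution into iterated bag substitutions. By linearity in $\exprvA$, it suffices to establish the identity for $\exprvA=\exprA\in\ResourceExprs$. Up to $\alpha$-renaming we may further assume $\svarA\notin\SVarOf{\seq{\vtermvA}}$, and since $\LVarOf{\exprA}$ is finite we can fix $k\in\N$ such that $\svarA[i]\notin\LVarOf{\exprA}$ for every $i\ge k$.

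Expanding the right-hand side by linearity in the substituted stream yields
\[
\rsubst{\exprA}{\svarA}{\prom{\seq{\vtermvA}}}=\sum_{\strB\in\StreamTerms}\Coef{\prom{\seq{\vtermvA}}}{\strB}\;\rsubst{\exprA}{\svarA}{\strB}\,.
\]
By \cref{lemma:rsubst:str:size}, the summand vanishes unless $\LengthOf{\bagB_i}=\nocc{\svarA[i]}{\exprA}$ for every $i\in\N$, so only streams of the form $\strB=\bagB_0\cons\cdots\cons\bagB_{k-1}\cons\emptystream$ contribute. For each such $\strB$, the coefficient factors as $\Coef{\prom{\seq{\vtermvA}}}{\strB}=\prod_{i=0}^{k-1}\Coef{\prom{\vtermvA_i}}{\bagB_i}$, using $\Coef{\prom{\vtermvA_j}}{\emptybag}=1$, while \cref{lemma:rsubst:str:def} gives
\[
\rsubst{\exprA}{\svarA}{\strB}=\erase{\rsubst{\rsubst{\exprA}{\svarA[0]}{\bagB_0}\cdots}{\svarA[k-1]}{\bagB_{k-1}}}{\svarA}\,.
\]
By \cref{lemma:rsubst:size} and our disjointness assumption $\svarA\notin\SVarOf{\seq{\vtermvA}}$, no $\svarA[j]$ remains free in the nested substitution (those with $j<k$ are consumed by the substitutions, those with $j\ge k$ do not occur in $\exprA$, and the $\bagB_j$ contribute none), so the outer erasure acts as the identity.

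Collecting these pieces and using linearity of resource substitution in the substituted bag, we obtain
\[
\rsubst{\exprA}{\svarA}{\prom{\seq{\vtermvA}}}=\rsubst{\rsubst{\exprA}{\svarA[0]}{\prom{\vtermvA_0}}\cdots}{\svarA[k-1]}{\prom{\vtermvA_{k-1}}}\,.
\]
Applying \cref{lemma:subst:taylor} $k$ times transforms each $\rsubst{\cdot}{\svarA[i]}{\prom{\vtermvA_i}}$ into $\subst{\cdot}{\svarA[i]}{\vtermvA_i}$, yielding $\subst{\subst{\exprA}{\svarA[0]}{\vtermvA_0}\cdots}{\svarA[k-1]}{\vtermvA_{k-1}}$; by the choice of $k$ and the definition of simultaneous substitution recalled just before the lemma, this equals $\subst{\exprA}{\svarA}{\seq{\vtermvA}}$. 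The main subtlety is verifying that the outer erasure in \cref{lemma:rsubst:str:def} collapses to the identity in our setting—this is precisely what the disjointness $\svarA\notin\SVarOf{\seq{\vtermvA}}$ and the choice of $k$ guarantee, turning the stream substitution into a plain iteration of bag substitutions to which \cref{lemma:subst:taylor} applies.
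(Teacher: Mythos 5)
Your proof is correct and follows essentially the same route as the paper's: reduce to a single resource expression by linearity, truncate to the first $k$ components using finiteness of $\LVarOf{\exprA}$, unfold the stream substitution into iterated bag substitutions via \cref{lemma:rsubst:str:def}, and iterate \cref{lemma:subst:taylor}. The extra detail you supply (which streams contribute, why the erasure is trivial) is exactly what the paper leaves implicit when it invokes \cref{lemma:rsubst:str:def} at the vector level.
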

\begin{proof}
  Write $\seq{\vtermvA}=\tuple{\vtermvA_i}_{i\in\N}$.
  By linearity, it is sufficient to consider the case of 
  $\exprvA=\exprA\in\ResourceExprs$.
  It is possible to follow the same pattern as in the proof of 
  \cref{lemma:subst:taylor}, but we can also deduce the
  present result from \cref{lemma:subst:taylor} itself.
  Indeed, $\LVarOf{\exprA}$ is finite, hence
  we can choose $k$ such that $i\ge k$ implies $\svarA[i]\not\in\LVarOf{\exprA}$, 
  to obtain
  \[
    \subst{\exprA}{\svarA}{\seq{\vtermvA}}
    =\subst{\subst{\exprA}{\svarA[0]}{\vtermvA_0}\cdots}{\svarA[k-1]}{\vtermvA_{k-1}}
  \]
  as discussed above, and also
  \[
    \rsubst{\exprA}{\svarA}{\prom{\seq{\vtermvA}}}
    =\rsubst{\rsubst{\exprA}{\svarA[0]}{\prom{\vtermvA_0}}\cdots}{\svarA[k-1]}{\prom{\vtermvA_{k-1}}}
  \]
  by \cref{lemma:rsubst:str:def}
  ---
  assuming, w.l.o.g., that $\svarA\cap\bigcup_{i<k}\LVarOf{\vtermvA_{i}}=\emptyset$.
  It is then sufficient to iterate \cref{lemma:subst:taylor}.
\end{proof}

Alternatively to the above definition, we could introduce
$\prom{\seq{\vtermvA}}$ similarly to the promotion of value vectors,
as follows.
First, we call \definitive{degree stream} any sequence of natural numbers
$\dstrA=\tuple{k_i}_{i\in\N}\in\N^{\N}$ 
with finite support: $\{i\in\N\st k_i\not=0\}$ is finite.
We will write $\DegreeStreams$ for the set of degree streams.
Degree streams could thus be identified with finite multisets
of natural numbers, but we use different notations
to fit the way we use them.
A more relevant intuition is to identify \(\DegreeStreams\) with the set of
streams over a singleton set \(\SsOf{\set{*}}\):
we denote $\emptystream\eqdef\tuple{0}_{i\in\N}\in\DegreeStreams$,
and if $k\in\N$ and $\dstrB\in\DegreeStreams$, we write
$k\cons\dstrB\in\DegreeStreams$ for the stream obtained
by pushing $k$ at the head of $\dstrB$.
Given $\seq{\vtermvA}\in\ValueVectors^\N$ and $\dstrA\in\DegreeStreams$,
we define $\seq{\vtermvA}^{\dstrA}$ inductively as follows:
$\seq{\vtermvA}^{\emptystream}\eqdef\emptystream$
and 
$(\vtermvA\cons\seq{\vtermvB})^{k\cons\dstrB}\eqdef\vtermvA^k\cons\vtermvB^{\dstrB}$
when $k\cons\dstrB\not=\emptystream$.
We moreover define $\dstrA!\in\N$ by setting $\dstrA!=\prod_{i\in\N}k_i!$,
which satisfies: $\emptystream!=1$ and $\pars{k\cons\dstrB}!=k!\times\dstrB!$.

We obtain:
\begin{lemma}\label{lemma:prom:seq}
  For any sequence $\seq{\vtermvA}\in\ValueVectors^\N$ of value vectors
  such that $\SVarOf{\seq{\vtermvA}}$ is finite,
  we have $\prom{\seq{\vtermvA}}
  =\sum_{\dstrA\in\DegreeStreams}\frac{1}{\dstrA!}\seq{\vtermvA}^{\dstrA}$.
\end{lemma}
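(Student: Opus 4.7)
The plan is to compare coefficients of both sides on each stream term. Fix $\strA=\tuple{\bagA_i}_{i\in\N}\in\StreamTerms$ and set $k_i\eqdef\LengthOf{\bagA_i}$, so that $\dstrA_\strA\eqdef\tuple{k_i}_{i\in\N}$ is a degree stream (only finitely many $\bagA_i$ are non-empty). Since $\prom{\seq{\vtermvA}}$ was \emph{defined} coefficient-wise by $\Coef*{\prom{\seq{\vtermvA}}}{\strA}=\prod_{i\in\N}\Coef*{\prom{\vtermvA_i}}{\bagA_i}$, the desired identity reduces to computing the coefficient of $\strA$ in the right-hand side.

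First, I would establish, by a short induction on the inductive generation of degree streams, that $\seq{\vtermvA}^{\dstrA}$ is the stream vector whose $i$-th bag component is $\vtermvA_i^{l_i}$ whenever $\dstrA=\tuple{l_i}_{i\in\N}$; this uses exactly the defining equations $\seq{\vtermvA}^{\emptystream}\eqdef\emptystream$ and $(\vtermvA\cons\seq{\vtermvB})^{k\cons\dstrB}\eqdef\vtermvA^k\cons\seq{\vtermvB}^{\dstrB}$, together with the coefficient-wise behaviour of $\cons$ extended to vectors. It follows that $\Coef*{\seq{\vtermvA}^{\dstrA}}{\strA}=\prod_{i\in\N}\Coef*{\vtermvA_i^{l_i}}{\bagA_i}$. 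Since $\vtermvA_i^{l_i}$ is supported only on bags of length $l_i$, this product vanishes unless $l_i=k_i$ for all $i$; hence the sum on the right collapses to the single contribution indexed by $\dstrA_\strA$.

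Next, I would compute $\Coef*{\vtermvA_i^{k_i}}{\bagA_i}$. Unfolding the expansion of $\vtermvA_i^{k_i}$ as a sum over $k_i$-tuples, each tuple enumerating $\bagA_i$ contributes $\BagCoef{\vtermvA_i}{\bagA_i}$, and the number of such enumerations is $k_i!/\isodeg{\bagA_i}$ (a standard count, using \cref{fact:isodeg} in the trivial case where the partitioning is a single block). Therefore $\Coef*{\vtermvA_i^{k_i}}{\bagA_i}=\frac{k_i!}{\isodeg{\bagA_i}}\BagCoef{\vtermvA_i}{\bagA_i}$, and dividing by $\dstrA_\strA!=\prod_i k_i!$ gives, using \cref{lemma:prom:coef},
\[
\frac{1}{\dstrA_\strA!}\prod_{i\in\N}\Coef*{\vtermvA_i^{k_i}}{\bagA_i}
=\prod_{i\in\N}\frac{\BagCoef{\vtermvA_i}{\bagA_i}}{\isodeg{\bagA_i}}
=\prod_{i\in\N}\Coef*{\prom{\vtermvA_i}}{\bagA_i}
=\Coef*{\prom{\seq{\vtermvA}}}{\strA}\;,
\]
which is the required equality of coefficients.

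The only subtle point is making sense of the sum $\sum_{\dstrA\in\DegreeStreams}\frac{1}{\dstrA!}\seq{\vtermvA}^{\dstrA}$, which is indexed by an infinite set: this is well-defined coefficient-wise because, as observed above, for any fixed stream term $\strA$ exactly one summand contributes a non-zero coefficient. Apart from that, the proof is a routine combinatorial calculation, essentially the coefficient-wise lift of the finite-case computation underlying \cref{lemma:prom:coef}.
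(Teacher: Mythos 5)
Your proof is correct, and it reaches the conclusion by a somewhat different route than the paper. The paper's proof relies on the recursive characterization of $\prom{\seq{\vtermvA}}$ (coefficient $1$ at $\emptystream$, and the cons-factorization $\Coef*{\prom*{\vtermvA\cons\seq{\vtermvB}}}{\bagA\cons\strB}=\Coef*{\prom{\vtermvA}}{\bagA}\times\Coef*{\prom{\seq{\vtermvB}}}{\strB}$): it checks that $\sum_{\dstrA\in\DegreeStreams}\frac{1}{\dstrA!}\seq{\vtermvA}^{\dstrA}$ satisfies the same two identities and concludes by the inductive generation of streams, leaving the per-bag combinatorics implicit (``directly from the definitions''). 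You instead compute the coefficient of the right-hand side at an arbitrary stream term in closed form: the only degree stream that can contribute is the one recording the bag lengths of $\strA$, and for that one the enumeration count gives $\Coef*{\vtermvA_i^{k_i}}{\bagA_i}=\frac{k_i!}{\isodeg{\bagA_i}}\BagCoef{\vtermvA_i}{\bagA_i}$, which after division by $\dstrA!$ matches \cref{lemma:prom:coef}. Your version makes explicit why the sum over the infinite index set $\DegreeStreams$ is well defined (exactly one summand contributes per stream term), a point the paper leaves tacit; the paper's version is shorter because the stream induction absorbs the bookkeeping. One minor slip: \cref{fact:isodeg} instantiated with a single block is a tautology and does not yield the count $k_i!/\isodeg{\bagA_i}$ of enumerations of $\bagA_i$; that count is the orbit--stabilizer argument built into the definition of the isotropy degree (the same one underlying \cref{lemma:prom:coef}). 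The number is right, only the citation is off.
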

\begin{proof}
  It is sufficient to check that:
  \(\Coef*[\bigg]{\sum_{\dstrA\in\DegreeStreams}\frac{1}{\dstrA!}\seq{\vtermvA}^{\dstrA}}{\emptystream}=1\)
  and that
  \[
    \Coef*[\bigg]{
      \sum_{\dstrA\in\DegreeStreams}\frac{1}{\dstrA!}(\vtermvA\cons\seq{\vtermvB})^{\dstrA}
    }*{\bagA\cons\strB}
    =
    \Coef*{\prom{\vtermvA}}{\bagA}\times
    \Coef*[\bigg]{\sum_{\dstrA\in\DegreeStreams}\frac{1}{\dstrA!}\seq{\vtermvB}^{\dstrA}}{\strB}
  \] which follows directly from the definitions.
\end{proof}

This alternative presentation of the promotion of a sequence of value vectors
will be useful to establish its compatibility with reduction
(\cref{lemma:vresred:prom} below).

\subsection{Reduction of resource vectors}\label{section:resourcevectors:reduction}

Now we present the notion of reduction on term vectors that we will use to
simulate both \(β\)- and \(η\)-reductions via extensional Taylor expansion:
we define the \definitive{resource reduction on term vectors}
by setting $\termvA\vresred\termvA'$ if $\termvA=\sum_{i\in I}α_i\termA_i$ 
and $\termvA'=\sum_{i\in I}α_i\termsA'_i$ 
with $\termA_i\in\ResourceTerms$, $\termsA'_i\in\ResourceSums$ and
$\termA_i\sresredRT \termsA'_i$ for $i\in I$.
We also define the \definitive{normal form of a term vector}, point-wise:
\[\NF{\termvA}\eqdef \sum_{\termA\in\termvA} \pars{\Coef{\termvA}{\termA}} \NF{\termA}\;.\]

Note in particular that, in the definition of resource reduction, we do not
impose the terms $\termA_i$ to be pairwise distinct, and that the number of
$\sresred$ reduction steps from each $\termA_i$ is not bounded.
We will see that the latter observation is crucial for our purposes:
contrasting with the ordinary case where parallel reduction is
sufficient to capture \(β\)-reduction~\cite{DBLP:journals/lmcs/Vaux19},
the ability to iterate reductions on each element of a term vector is
essential to obtain our simulation results in~\cref{section:taylor}
-- see the proof of \cref{lemma:cc} and the subsequent discussion.

\begin{example}\label{example:vresred}
  Recalling the reductions of \cref{example:reduction:fine},
  observe that we have \(\sum_{i\in\N}\appl{\trProj{i}}{\emptystream}\vresred 0\)
  by picking a reduction \(\appl{\trProj{i}}{\emptystream}\sresred 0\) for each \(i\in\N\).
  On the other hand, we have, e.g., \(\sum_{k\in\N}\trProj{0}[k]\vresred\sum_{k\in\N}\trProj{0}\),
  which demonstrates that, without any particular constraint on the support of term vectors,
  infinite sums of coefficients are generated by reduction and normalization.
\end{example}

\begin{lemma}\label{lemma:vresred:NF}
  For any term vector $\termvA$, we have $\termvA\vresred\NF{\termvA}$.
  If moreover $\termvA=\sum_{i\in I}α_i\termvA_i$
  with $\termvA_i\in\ResourceVectors$ for $i\in I$,
  then $\NF{\termvA}=\sum_{i\in I}α_i\NF{\termvA_i}$.
  Finally, if $\termvA\vresred\termvA'$ then $\NF{\termvA}=\NF{\termvA'}$.
\end{lemma}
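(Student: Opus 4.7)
The plan is to address the three statements in order, with the second acting as a linearity tool for the first and third. Everything rests on the pointwise definition $\NF{\termvA}=\sum_{\termA\in\termvA}\Coef{\termvA}{\termA}\NF{\termA}$ and on the weak normalization and confluence of small-step reduction established previously.

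For the first statement, I would exploit the canonical decomposition $\termvA=\sum_{\termA\in\ResourceTerms}\Coef{\termvA}{\termA}\,\termA$, viewed as an $I$-indexed linear combination with $I=\support{\termvA}$. By the weak normalization theorem each $\termA\sresredRT\NF{\termA}$ (in particular, each $\termA\sresredRT\NF{\termA}$ is a valid sequence of $\sresred$ steps, possibly of length zero). Applying the definition of $\vresred$ to this decomposition directly yields $\termvA\vresred\sum_{\termA\in\ResourceTerms}\Coef{\termvA}{\termA}\NF{\termA}=\NF{\termvA}$.

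For the second statement, I would compute the coefficient of an arbitrary term $\termB\in\ResourceTerms$ on both sides. Writing $\Coef{\termvA}{\termA}=\sum_{i\in I}\alpha_i\Coef{\termvA_i}{\termA}$, the pointwise definition of $\NF{\cdot}$ gives
\[
  \Coef{\NF{\termvA}}{\termB}
  =\sum_{\termA\in\ResourceTerms}\Coef{\termvA}{\termA}\Coef{\NF{\termA}}{\termB}
  =\sum_{\termA\in\ResourceTerms}\sum_{i\in I}\alpha_i\Coef{\termvA_i}{\termA}\Coef{\NF{\termA}}{\termB}
  \ ,
\]
and the axioms of the complete semiring $\K$ (unrestricted associativity/commutativity and distributivity of countable sums) allow us to exchange the two summations to obtain $\sum_{i\in I}\alpha_i\Coef{\NF{\termvA_i}}{\termB}$, which is precisely $\Coef{\sum_{i\in I}\alpha_i\NF{\termvA_i}}{\termB}$. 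This gives the desired equality of vectors.

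For the third statement, by definition of $\termvA\vresred\termvA'$ there exists a common index set $I$ with $\termvA=\sum_{i\in I}\alpha_i\termA_i$, $\termvA'=\sum_{i\in I}\alpha_i\termsA'_i$ and $\termA_i\sresredRT\termsA'_i$. By Theorem~\ref{theorem:resred:confluent} applied inductively (i.e.\ the already-established unicity of normal forms for $\sresred$), we get $\NF{\termA_i}=\NF{\termsA'_i}$ for each $i$: more precisely, writing $\termsA'_i=\sum_j\beta_{i,j}\termB_{i,j}$ with $\termB_{i,j}\in\ResourceTerms$, one has $\termA_i\sresredRT\sum_j\beta_{i,j}\NF{\termB_{i,j}}=\NF{\termsA'_i}$, which forces $\NF{\termA_i}=\NF{\termsA'_i}$ by confluence. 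Applying the second statement to both decompositions then yields $\NF{\termvA}=\sum_{i\in I}\alpha_i\NF{\termA_i}=\sum_{i\in I}\alpha_i\NF{\termsA'_i}=\NF{\termvA'}$.

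The only subtle point — the main obstacle I would watch for — is justifying the interchange of summations in the second statement, which crucially requires the completeness hypothesis on $\K$ and the fact that $\NF{\termA}$ has (finite, hence summable) support. Once the bookkeeping of coefficients is set up correctly, the rest is essentially a one-line consequence of confluence and weak normalization of small-step reduction.
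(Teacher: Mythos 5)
Your proof is correct and follows essentially the same route as the paper's: the first statement from the pointwise definition of $\NF{-}$ together with $\termA\sresredRT\NF{\termA}$, the second from linearity (which the paper states without spelling out the interchange of sums you justify via completeness of $\K$), and the third by applying confluence of $\sresredRT$ componentwise to get $\NF{\termA_i}=\NF{\termsA'_i}$ and then invoking the second statement. The only cosmetic remark is that elements of $\ResourceSums$ are formal (multiset) sums rather than $\K$-linear combinations, so the coefficients $\beta_{i,j}$ in your third step are really multiplicities read through the embedding $\vembed{-}$; this does not affect the argument.
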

\begin{proof}
  The first statement follows from the definitions,
  observing that $\termA\sresredRT\NF{\termA}$.
  The second one follows from the linearity of $\NF{-}$.
  And if $\termvA\vresred\termvA'$ then we can write
  $\termvA=\sum_{i\in I}α_i\termA_i$ and $\termvA'=\sum_{i\in I}α_i\termsA'_i$
  with $\termA_i\sresredRT \termsA'_i$ for $i\in I$:
  then, by confluence of $\sresredRT$, $\NF{\termA_i}=\NF{\termsA'_i}$
  for each $i\in I$,
  and we conclude by the previous point.
\end{proof}

The confluence of $\vresred$ follows directly. Moreover, \(\vresred\) is linear
and compatible, in the following sense:

\begin{lemma}\label{lemma:vresred:context}
  The relation $\vresred$ is reflexive and:
  \begin{enumerate}
    \item if $\termvA_i\vresred\termvA'_i$ with
      $\termvA_i,\termvA'_i\in\ResourceVectors$ for $i\in I$,
      and $\bigcup_{i\in I}\SVarOf{\termvA_i}$ is finite,
      then $\sum_{i\in I}α_i\termvA_i\vresred\sum_{i\in I}α_i\termvA'_i$;
      \label{lemma:vresred:context:sums}
    \item if $\btermvA\vresred\btermvA'$
      then $\labs{\svarA}{\btermvA}\vresred\labs{\svarA}{\btermvA'}$;
      and if $\vtermvA\vresred\vtermvA'$ then
      $\labs{\varA}{\vtermvA}\vresred\labs{\varA}{\vtermvA'}$;
      \label{lemma:vresred:context:vterms}
    \item if $\strvB\vresred\strvB'$
      then $\appl{\varA}{\strvB}\vresred\appl{\varA}{\strvB'}$;
      if moreover $\vtermvA\vresred\vtermvA'$,
      then $\appl{\vtermvA}{\strvB}\vresred\appl{\vtermvA'}{\strvB'}$;
      \label{lemma:vresred:context:bterms}
    \item if $\vtermvA\vresred\vtermvA'$ then
      $\mset{\vtermvA}\vresred\mset{\vtermvA'}$;
      and if $\bagvA\vresred\bagvA'$ and $\bagvB\vresred\bagvB'$
      then $\bagvA\bagcat\bagvB\vresred\bagvA'\bagcat\bagvB'$;
      \label{lemma:vresred:context:bags}
    \item if $\bagvA\vresred\bagvA'$ and $\strvB\vresred\strvB'$
      then $\bagvA\cons\strvB\vresred\bagvA'\cons\strvB'$;
      \label{lemma:vresred:context:strs}
    \item if $\termvA\vresred\termvA'$ and $\bagvB\vresred\bagvB'$ then
      \(\rsubst{\termvA}{\varA}{\bagvB}\vresred\rsubst{\termvA'}{\varA}{\bagvB'}\).
      \label{lemma:vresred:context:rsubst}
  \end{enumerate}
  Moreover, 
  $\appl*{\labs{\svarA}{\btermvA}}*{\bagvB\cons\strvC}\vresred
  \appl*[\big]{\labs{\svarA}{\shiftdown{\rsubst{\btermvA}{\svarA[0]}{\bagvB}}{\svarA}}}{\strvC}$
  and 
  $\appl*{\labs{\svarA}{\btermvA}}{\strvA} \vresred \rsubst{\btermvA}{\svarA}{\strvA}$.%
\end{lemma}
\begin{proof}
  Each result follows from the definitions, 
  also using \cref{lemma:sresred:context} for
  \cref{lemma:vresred:context:vterms,lemma:vresred:context:bterms,%
    lemma:vresred:context:bags,lemma:vresred:context:strs},
  \cref{lemma:sresred:rsubst} for item 6,
  and \cref{lemma:sResred:sresredRT} for the big-step redex case.
\end{proof}

The reduction of value vectors is moreover compatible with promotion.
We first establish:
\begin{lemma}\label{lemma:vresred:dstr}
  If $\vtermvA_i\vresred\vtermvA'_i$ for $i\in\N$,
  then for all $\dstrA\in\DegreeStreams$,
  $\tuple{\vtermvA_i}_{i\in\N}^{\dstrA}\vresred\tuple{\vtermvA'_i}_{i\in\N}^{\dstrA}$.
\end{lemma}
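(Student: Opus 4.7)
The plan is to proceed by induction on the degree stream $\dstrA$, following the recursive definition of $\seq{\vtermvA}^\dstrA$. In the base case $\dstrA = \emptystream$, both sides equal $\emptystream$ and we conclude by reflexivity of $\vresred$, which is part of \cref{lemma:vresred:context}. In the inductive step, I would write $\dstrA = k \cons \dstrB$ with $k\cons\dstrB\not=\emptystream$ and split $\seq{\vtermvA} = \vtermvA_0 \cons \tuple{\vtermvA_{i+1}}_{i\in\N}$ (likewise for $\seq{\vtermvA}'$), so that
\[
  \tuple{\vtermvA_i}_{i\in\N}^{k\cons\dstrB}
  = \vtermvA_0^{\,k} \cons \tuple{\vtermvA_{i+1}}_{i\in\N}^{\dstrB}\;,
\]
and similarly for the primed sequence.

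It then suffices to show separately that $\vtermvA_0^{\,k} \vresred \vtermvA_0'^{\,k}$ at the bag level and that the tails reduce at the stream level, and to combine them with the cons clause~\ref{lemma:vresred:context:strs} of \cref{lemma:vresred:context}. For the bag component, I would observe that by definition $\vtermvA_0^{\,k} = \mset{\vtermvA_0}\bagcat\cdots\bagcat\mset{\vtermvA_0}$ ($k$ copies), so starting from the singleton reduction $\mset{\vtermvA_0}\vresred\mset{\vtermvA_0'}$ (clause~\ref{lemma:vresred:context:bags}, first half) and iterating the bag concatenation clause (same item, second half), I obtain $\vtermvA_0^{\,k}\vresred\vtermvA_0'^{\,k}$. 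For the tail, the induction hypothesis applied to the shifted sequences and the smaller degree stream $\dstrB$ yields directly $\tuple{\vtermvA_{i+1}}_{i\in\N}^{\dstrB} \vresred \tuple{\vtermvA'_{i+1}}_{i\in\N}^{\dstrB}$.

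There is no real obstacle here: the only mild subtlety is checking that the induction hypothesis is legitimately applicable (since $\dstrB$ is strictly smaller than $k\cons\dstrB$ in the obvious well-founded order on $\DegreeStreams$ generated by the cons operation), and that the side conditions on finiteness of sequence-variable sets needed when invoking \cref{lemma:vresred:context} are automatic, as only finitely many components of $\seq{\vtermvA}$ appear non-trivially in the construction of $\seq{\vtermvA}^\dstrA$.
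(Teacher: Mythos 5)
Your proof is correct and follows essentially the same route as the paper: induction on the degree stream, reflexivity in the base case, clause~\ref{lemma:vresred:context:bags} of \cref{lemma:vresred:context} (iterated) for the head bag $\vtermvA_0^{\,k}$, the induction hypothesis for the tail, and the cons clause to recombine. The only cosmetic difference is that the paper also invokes clause~\ref{lemma:vresred:context:sums} when concluding, which changes nothing of substance.
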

\begin{proof}
  We reason by induction on $\dstrA$.
  If $\dstrA=\emptystream$, the result holds by reflexivity.
  Otherwise, we write $\dstrA=k\cons\dstrB$,
  and $\vtermvB_i=\vtermvA_{i+1}$ and $\vtermvB'_i=\vtermvA'_{i+1}$ for
  $i\in\N$, so that
  $\tuple{\vtermvA_i}_{i\in\N}^{\dstrA}
  =\vtermvA_0^k\cons\tuple{\vtermvB_i}_{i\in\N}^{\dstrB}$
  and similarly for 
  $\tuple{\vtermvA'_i}_{i\in\N}^{\dstrA}$.
  We have $\vtermvA_0^k\vresred{\vtermvA'_0}^k$ by \cref{lemma:vresred:context:bags}
  of \cref{lemma:vresred:context},
  and $\tuple{\vtermvB_i}_{i\in\N}^{\dstrB}\vresred\tuple{\vtermvB'_i}_{i\in\N}^{\dstrB}$
  by induction hypothesis.
  We conclude by \cref{lemma:vresred:context:strs} of \cref{lemma:vresred:context}.
\end{proof}

\begin{lemma}\label{lemma:vresred:prom}
  If $\vtermvA\vresred\vtermvA'$ then $\prom{\vtermvA}\vresred\prom{{\vtermvA'}}$.
  And if $\vtermvA_i\vresred\vtermvA'_i$ with $\bigcup_{i\in\N}\SVarOf{\vtermvA_i}$ finite,
  then $\prom*{\tuple{\vtermvA_i}_{i\in\N}}\vresred\prom*{\tuple{\vtermvA'_i}_{i\in\N}}$.
\end{lemma}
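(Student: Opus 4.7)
The plan is to reduce both statements to item~\ref{lemma:vresred:context:sums} of \cref{lemma:vresred:context} by expressing each promotion as a suitable infinite sum and showing that $\vresred$ acts componentwise on such sums.

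For the first statement, observe that $\prom{\vtermvA}=\sum_{k\in\N}\frac{1}{k!}\vtermvA^k$ by definition, and similarly for $\vtermvA'$. I would first prove, by an easy induction on $k\in\N$, that $\vtermvA\vresred\vtermvA'$ implies $\vtermvA^k\vresred(\vtermvA')^k$: the base case $k=0$ gives $\emptybag\vresred\emptybag$ by reflexivity, and the induction step uses $\vtermvA^{k+1}=\mset{\vtermvA}\bagcat\vtermvA^k$ together with items~\ref{lemma:vresred:context:bags} of \cref{lemma:vresred:context}. Then, noting that $\SVarOf{\vtermvA^k}\subseteq\SVarOf{\vtermvA}$ is finite and independent of $k$ (since $\vtermvA$ is a value vector), the hypothesis of item~\ref{lemma:vresred:context:sums} is met, and we obtain $\prom{\vtermvA}=\sum_{k\in\N}\frac{1}{k!}\vtermvA^k\vresred\sum_{k\in\N}\frac{1}{k!}(\vtermvA')^k=\prom{\vtermvA'}$.

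For the second statement, I would use \cref{lemma:prom:seq} to write
\[
  \prom*{\tuple{\vtermvA_i}_{i\in\N}}
  =\sum_{\dstrA\in\DegreeStreams}\frac{1}{\dstrA!}\tuple{\vtermvA_i}_{i\in\N}^{\dstrA}
  \quad\text{and}\quad
  \prom*{\tuple{\vtermvA'_i}_{i\in\N}}
  =\sum_{\dstrA\in\DegreeStreams}\frac{1}{\dstrA!}\tuple{\vtermvA'_i}_{i\in\N}^{\dstrA}.
\]
By \cref{lemma:vresred:dstr}, each summand reduces: $\tuple{\vtermvA_i}_{i\in\N}^{\dstrA}\vresred\tuple{\vtermvA'_i}_{i\in\N}^{\dstrA}$. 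Since $\SVarOf{\tuple{\vtermvA_i}_{i\in\N}^{\dstrA}}\subseteq\bigcup_{i\in\N}\SVarOf{\vtermvA_i}$, which is finite by hypothesis and uniformly so in $\dstrA$, one more application of item~\ref{lemma:vresred:context:sums} of \cref{lemma:vresred:context} yields the desired reduction between the two sums.

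The main obstacle is purely bookkeeping: checking that the finiteness condition on sequence variables required by item~\ref{lemma:vresred:context:sums} is preserved through the $k$-fold concatenation in the first statement, and through the parametrization by degree streams in the second. Once \cref{lemma:prom:seq,lemma:vresred:dstr} and the contextuality lemma are in hand, no further combinatorial work is needed, since $\vresred$ is defined precisely so as to be stable under $\K$-linear combinations of pointwise $\sresredRT$-reductions.
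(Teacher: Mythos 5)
Your proof is correct and follows essentially the same route as the paper: decompose $\prom{\vtermvA}$ as $\sum_{k\in\N}\frac{1}{k!}\vtermvA^k$, reduce each power via the bag clause of \cref{lemma:vresred:context}, and conclude with its linear-combination clause; the second statement likewise goes through \cref{lemma:prom:seq,lemma:vresred:dstr} exactly as in the paper. The only difference is that you spell out the induction on $k$ and the finiteness bookkeeping that the paper leaves implicit.
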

\begin{proof}
  We have $\prom{\vtermvA}=\sum_{k\in\N}\frac{1}{k!}\vtermvA^k$
  and similarly for $\prom{{\vtermvA'}}$, with
  $\vtermvA^k\vresred{\vtermvA'}^k$ 
  for each $k$ by \cref{lemma:vresred:context:bags} of \cref{lemma:vresred:context},
  and we conclude by \cref{lemma:vresred:context:sums} of \cref{lemma:vresred:context}.
  The second statement is established similarly, thanks to
  \cref{lemma:prom:seq,lemma:vresred:dstr}.
\end{proof}
Together with \cref{lemma:vresred:context}
(\cref{lemma:vresred:context:rsubst})
and \cref{lemma:subst:taylor}, the previous lemma entails:
\begin{corollary}\label{corollary:vresred:subst}
  If $\termvA\vresred\termvA'$ and $\vtermvB\vresred\vtermvB'$ then
  \(\subst{\termvA}{\varA}{\vtermvB}\vresred\subst{\termvA'}{\varA}{\vtermvB'}\).
\end{corollary}

We refer to \cref{lemma:vresred:context,lemma:vresred:prom,corollary:vresred:subst}
collectively as the \definitive{compatibility properties} of \(\vresred\),
which straightforwardly extend to its (reflexive and) transitive closure \(\vresredRT\):
we use these properties extensively in the following sections,
and keep them implicit most of the time.

Note that we do not establish the transitivity of $\vresred$.
Consider two reductions
$\termvA=\sum_{i\in I}α_i\termA_i\vresred \sum_{i\in I}α_i\termsA'_i=\termvA'$
and 
$\termvA'=\sum_{j\in J}β_j\termA'_j\vresred \sum_{j\in J}β_j\termsA''_j=\termvA''$.
Intuitively, to deduce a reduction $\termvA\vresred\termvA''$ using the
transitivity of $\sresredRT$, we would need to ``synchronize'' the two writings
of $\termvA'$ in a way that is compatible with the families of reductions
$\termA_i\sresredRT\termsA'_i$ \emph{and} $\termA'_j\sresredRT\termsA''_j$:
there is no obvious way to perform this synchronization, 
and the transitivity of $\vresred$ is an open question at the time of writing.
Fortunately, we do not need to rely on it:
we rather use $\vresredRT$, or resort to reason component-wise and use the
transitivity of $\sresredRT$ instead.

\section{Extensional Taylor expansion}\label{section:taylor}

A first possible definition of extensional Taylor expansion amounts to
performing
infinite $\eta$-expansion on the fly, in order to produce value vectors,
setting:
\[
  \TayExp{\varA}
  \eqdef
  \labs{\svarB}{\appl{\varA}{\prom{\SeqTayExp{\svarB}}}}
  \qquad
  \TayExp{\labs{\varA}{\ltermA}}
  \eqdef
  \labs{\varA}{\TayExp{\ltermA}}
  \qquad
  \TayExp{\appl{\ltermA}{\ltermB}}
  \eqdef
  \labs{\svarB}{\appl{\TayExp{\ltermA}}{\prom{\TayExp N}\cons{\prom{\SeqTayExp{\svarB}}}}}
\]
where $\SeqTayExp{\svarB}$ denotes the sequence 
$\tuple{\TayExp{\svarB[i]}}_{i\in\N}$ of value vectors.

Note that, although this definition seems to be circular in the variable case,
it can be done by defining the coefficient of a resource term
in the Taylor expansion of a $\lambda$-term,
by induction on the resource term, as we will do below.
Moreover observe that this definition does not preserve normal forms:
the application case always introduces redexes.
This can be fixed by defining Taylor expansion based on the head structure of terms,
where \(η\)-expansion does not create redexes.
We thus define extensional Taylor expansion in several steps:
\begin{itemize}
  \item first the \emph{expansion of variables},
    where all the apparent circularity resides: \(\IdExp{\varA}\);
  \item then the \emph{structural expansion} of $λ$-terms,
    straightforwardly as above: \(\ExtTayExp{\ltermA}\);
  \item and the \emph{head expansion}, avoiding the creation of redexes:
    \(\HeadTayExp{\ltermA}\).
\end{itemize}

Then we show that \(\ExtTayExp{\ltermA}\) reduces to \(\HeadTayExp{\ltermA}\),
and that it allows us to simulate $β$-reduction and $η$-reduction steps 
by resource reduction.
The crux of the construction is to show that the expansion of a variable
is compatible with substitution:
\(\subst{\IdExp{\varA}}{\varA}{\vtermA}\) 
and
\(\subst{\vtermA}{\varA}{\IdExp{\varA}}\) reduce to \(\vtermA\).

We thus obtain not just one, but two notions of extensional Taylor
expansion of a \(λ\)-term:
its structural expansion and its head expansion.
And since one reduces to the other, they have the same normal form,
so both notions induce the same \(λ\)-theory via normalization,
and we may pick one or the other depending on our objectives.
For instance, the simulation of \(β\)- and \(η\)-reductions is easier to
establish for the structural version,
but the head version will be used crucially in the next section,
to show that the induced \(λ\)-theory is sensible (it equates all
unsolvable terms).

\subsection{Infinitely \pdfeta-expanded variables}
\label{section:taylor:variables}

We define simultaneously the \definitive{value expansion} $\IdExp{\varA}\in\ValueVectors$
of a variable $\varA$ and the \definitive{stream expansion}
$\IdStrExp{\svarA}\in\StreamVectors$ of a sequence variable $\svarA$
so that:
\begin{align*}
  \IdExp{\varA}&=\labs{\svarB}{\appl{\varA}{\IdStrExp{\svarB}}}\quad\text{(choosing $\svarB\not\ni\varA$)}
  \\
  \text{and}\quad\IdStrExp{\svarA}&= \prom*{\IdSeqExp{\svarA}}\quad\text{where $\IdSeqExp{\svarA}\eqdef\tuple{\IdExp{\svarA[i]}}_{i\in\N}$}
  \;.
\end{align*}

To be formal, we define the coefficients of these vectors by mutual induction
on value terms and on stream terms:
\begin{align*}
  \IdExpCoef{\varA}{\termA}&\eqdef
  \begin{cases}
    \IdStrExpCoef{\svarB}{\strA}&\text{if $\termA=\labs{\svarB}{\appl{\varA}{\strA}}$ with $\svarB\not\ni\varA$,}
    \\
    0&\text{otherwise}
  \end{cases}
  \\
  \IdStrExpCoef{\svarA}{\strA}&\eqdef \prod_{i\in\N}\Coef{\prom*{\IdExp{\svarA[i]}}}{\bagA_i}
  \qquad\text{if $\strA=\tuple{\bagA_i}_{i\in\N}$}
\end{align*}
which ensures that the previous two identities hold.
We moreover write $\IdBagExp{\varA}\eqdef\prom*{\IdExp{\varA}}$.

The stream expansion of a sequence variable is subject to 
the following recursive characterization,
where $\shiftup{\exprvA}{\svarA}[k]$ denotes
$\shiftup{\shiftup{\exprvA}{\svarA}\cdots}{\svarA}$
(with $k$ applications of $\shiftup{-}{\svarA}$):
\begin{lemma}\label{lemma:cc:shift}
  We have
  \[ \IdStrExp{\svarA}=\IdBagExp{\svarA[0]}\cons\shiftup{\IdStrExp{\svarA}}{\svarA} \]
  and, more generally,
  for every $k\in\N$:
  \[
    \IdStrExp{\svarA}
    =\IdBagExp{\svarA[0]}\cons\cdots\cons\IdBagExp{\svarA[k-1]}
    \cons\shiftup{\IdStrExp{\svarA}}{\svarA}[k]
    \,.
  \]
\end{lemma}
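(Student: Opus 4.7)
The plan is to prove the general identity by induction on $k$, the $k=0$ case being trivial since the zero-fold iteration of shift is the identity. The heart of the argument is the first displayed identity $\IdStrExp{\svarA}=\IdBagExp{\svarA[0]}\cons\shiftup{\IdStrExp{\svarA}}{\svarA}$, which I would obtain by unfolding $\IdStrExp{\svarA}=\prom*{\IdSeqExp{\svarA}}$ and applying the formula $\prom*{\vtermvA\cons\seq{\vtermvB}}=\prom{\vtermvA}\cons\prom{\seq{\vtermvB}}$ to split off the first factor:
\[
\IdStrExp{\svarA}
=\prom{\IdExp{\svarA[0]}}\cons\prom*{\tuple{\IdExp{\svarA[i+1]}}_{i\in\N}}
=\IdBagExp{\svarA[0]}\cons\prom*{\tuple{\IdExp{\svarA[i+1]}}_{i\in\N}}.
\]
It then suffices to identify the second factor with $\shiftup{\IdStrExp{\svarA}}{\svarA}$.

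To that end I would record two auxiliary facts: (A) $\shiftup{\IdExp{\svarA[i]}}{\svarA}=\IdExp{\svarA[i+1]}$; and (B) shifting by $\svarA$ commutes with promotion of value vectors and of sequences of value vectors, and distributes over cons. Given these, the second factor equals $\shiftup{\prom*{\IdSeqExp{\svarA}}}{\svarA}=\shiftup{\IdStrExp{\svarA}}{\svarA}$, closing the base case. Fact (B) is immediate from the linearity of shift together with $\shiftup{\vtermvA^k}{\svarA}=(\shiftup{\vtermvA}{\svarA})^k$, combined with the coefficient expression for $\prom*{\seq{\vtermvA}}$ furnished by \cref{lemma:prom:seq}.

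The main obstacle is fact (A). For this I would pick, using $\alpha$-equivalence, a representative $\IdExp{\varA}=\labs{\svarB}{\appl{\varA}{\IdStrExp{\svarB}}}$ with $\svarB$ disjoint from $\svarA$; then $\shiftup{\IdExp{\varA}}{\svarA}=\labs{\svarB}{\appl{\shiftup{\varA}{\svarA}}{\shiftup{\IdStrExp{\svarB}}{\svarA}}}$. A routine induction on the coefficient-defining structure establishes that every term in the support of $\IdExp{\varA}$ has free variables $\subseteq\set{\varA}$; consequently every stream term in the support of $\IdStrExp{\svarB}$ has free variables in $\svarB$, a set disjoint from $\svarA$, so $\shiftup{\IdStrExp{\svarB}}{\svarA}=\IdStrExp{\svarB}$, whence (A) upon setting $\varA=\svarA[i]$.

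For the inductive step, iterating (A) and (B) yields $\shiftup{\IdStrExp{\svarA}}{\svarA}[k]=\prom*{\tuple{\IdExp{\svarA[i+k]}}_{i\in\N}}$, and re-running the unfolding of the base case on this sequence gives $\shiftup{\IdStrExp{\svarA}}{\svarA}[k]=\IdBagExp{\svarA[k]}\cons\shiftup{\IdStrExp{\svarA}}{\svarA}[k+1]$. Substituting this into the inductive hypothesis for $k$ delivers the identity for $k+1$.
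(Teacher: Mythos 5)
Your proof is correct and follows essentially the same route as the paper, which simply establishes the first identity component-wise by checking that $\IdStrExpCoef{\svarA}{\strA}=\Coef*{\IdBagExp{\svarA[0]}\cons\shiftup{\IdStrExp{\svarA}}{\svarA}}{\strA}$ for every $\strA$ and then iterates it; your decomposition via $\prom*{\vtermvA\cons\seq{\vtermvB}}=\prom{\vtermvA}\cons\prom{\seq{\vtermvB}}$ together with facts (A) and (B) is just a more explicit organization of that same coefficient computation. In particular your fact (A), that $\shiftup{\IdExp{\svarA[i]}}{\svarA}=\IdExp{\svarA[i+1]}$ because the support of $\IdExp{\varA}$ only has $\varA$ free, is exactly the content the paper leaves implicit in its ``easy to check from the definition.''
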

\begin{proof}
  The second statement just iterates the first one,
  which is established component-wise:
  it is easy to check from the definition that 
  $\IdStrExpCoef{\svarA}{\strA}
  =\Coef*{\IdBagExp{\svarA[0]}\cons\shiftup{\IdStrExp{\svarA}}{\svarA}}{\strA}$
  for every $\strA\in\StreamTerms$..
\end{proof}

We can compute the coefficient of a value term in $\IdExp{\varA}$ directly.
We first define the \definitive{multiplicity coefficient}\label{definition:muldeg}
$\muldeg{\exprA}$ of any resource expression $\exprA$ inductively as follows:
\begin{align*}
  \muldeg{\varA}
  &\eqdef 1
  &
  \muldeg{\appl{\exprA}{\strB}}
  &\eqdef\muldeg{\exprA}\times\muldeg{\strB}
  \\
  \muldeg{\labs{\svarA}{\btermA}}
  &\eqdef\muldeg{\btermA}
  &
  \muldeg{\mset{\vtermA_1,\dotsc,\vtermA_k}}
  &\eqdef\isodeg{\mset{\vtermA_1,\dotsc,\vtermA_k}}\times\prod_{i=1}^k\muldeg{\vtermA_i}
  \\
  \muldeg{\emptystream}
  &\eqdef 1
  &
  \muldeg{\bagA\cons\strB}
  &\eqdef\muldeg{\bagA}\times\muldeg{\strB}
\end{align*}
so that \(\muldeg{\termA}\) is the product of isotropy degrees
(see \cref{section:preliminaries}, \cpageref{definition:isotropy}) of
the bags of \(\termA\).
Then:
\begin{lemma}\label{lemma:cc:coef}
  If $\vtermA\in\IdExp{\varA}$
  (resp.\ $\bagA\in\IdBagExp{\varA}$; $\strA\in\IdStrExp{\svarA}$)
  then $\IdExpCoef{\varA}{\vtermA}=\frac{1}{\muldeg{\vtermA}}$
  (resp.\ $\IdBagExpCoef{\varA}{\bagA}=\frac{1}{\muldeg{\bagA}}$;
  $\IdStrExpCoef{\svarA}{\strA}=\frac{1}{\muldeg{\strA}}$).
\end{lemma}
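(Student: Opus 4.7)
The plan is to prove the three assertions simultaneously by mutual induction, following exactly the recursion pattern that defines $\IdExpCoef{\varA}{-}$ and $\IdStrExpCoef{\svarA}{-}$, with the bag case handled as an intermediate step via Lemma \ref{lemma:prom:coef}. Each case reduces immediately to one of the others via the recursive defining equations, so the main work is assembling the arithmetic and ensuring that the induction measure is well-founded.

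For the bag case, $\IdBagExp{\varA} = \prom{\IdExp{\varA}}$, so writing $\bagA = \mset{\vtermA_1,\dotsc,\vtermA_k}$, Lemma \ref{lemma:prom:coef} gives
\[
  \IdBagExpCoef{\varA}{\bagA}
  = \frac{\BagCoef{\IdExp{\varA}}{\bagA}}{\isodeg{\bagA}}
  = \frac{1}{\isodeg{\bagA}}\prod_{i=1}^k \IdExpCoef{\varA}{\vtermA_i}.
\]
Invoking the value case on each $\vtermA_i$ (which is in $\IdExp{\varA}$ because $\bagA \in \prom{\IdExp{\varA}}$) yields $1/\bigl(\isodeg{\bagA}\prod_i \muldeg{\vtermA_i}\bigr) = 1/\muldeg{\bagA}$ by the definition of $\muldeg$ on bags. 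The stream case follows even more directly: when $\strA = \tuple{\bagA_i}_{i\in\N}$, the definition gives $\IdStrExpCoef{\svarA}{\strA} = \prod_{i\in\N} \IdBagExpCoef{\svarA[i]}{\bagA_i}$, a product with only finitely many factors different from $1$; the bag case then yields $\prod_i 1/\muldeg{\bagA_i} = 1/\muldeg{\strA}$. Finally for the value case, if $\vtermA \in \IdExp{\varA}$ then by definition $\vtermA = \labs{\svarB}{\appl{\varA}{\strA}}$ with $\svarB \not\ni \varA$ and $\strA \in \IdStrExp{\svarB}$, so $\IdExpCoef{\varA}{\vtermA} = \IdStrExpCoef{\svarB}{\strA} = 1/\muldeg{\strA} = 1/\muldeg{\vtermA}$ using the stream case and the identity $\muldeg{\vtermA} = \muldeg{\strA}$ that follows from $\muldeg{\varA} = 1$.

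The one delicate point is making the induction strictly decreasing. The value case genuinely decreases size, since $\SizeOf{\strA} < \SizeOf{\vtermA}$ for $\vtermA = \labs{\svarB}{\appl{\varA}{\strA}}$; but going from a singleton bag $\mset{\vtermA}$ to $\vtermA$, or from a stream with a single non-empty component $\bagA_i$ to that $\bagA_i$, preserves the size. I resolve this by taking the lexicographic order on pairs $(\SizeOf{-},\text{category})$ with category ranked $\text{value} < \text{bag} < \text{stream}$: the value case strictly decreases the size, while the bag and stream cases may keep the size constant but strictly drop the category. Equivalently, this is precisely the structural recursion already used to define the coefficients in the first place, so no separate termination argument is really needed beyond pointing it out.
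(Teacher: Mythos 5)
Your proof is correct and follows exactly the paper's argument: the paper also proceeds by a straightforward mutual induction on resource terms, invoking \cref{lemma:prom:coef} in the bag case, and your verification of the arithmetic and of the well-foundedness of the recursion just spells out details the paper leaves implicit.
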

\begin{proof}
  The proof is straightforward by induction on terms,
  using \cref{lemma:prom:coef} in the case of a bag.
\end{proof}

\begin{example}
  \label{example:idexp}
  Let us review some of the smallest terms in \(\IdExp{\varA}\) and
  \(\IdStrExp{\svarB}\) (or, rather, in their respective support sets),
  together with their multiplicity coefficients.
  First observe that, by \cref{lemma:cc:shift},
  any element of \(\IdStrExp{\svarB}\)
  is of the shape \(\bagB_1\cons\cdots\cons\bagB_k\cons\emptystream\)
  with \(\bagB_i\in\IdBagExp{\svarB[i]}\) for \(1\le i\le k\).

  For any variable \(\varA\),
  the smallest value term in \(\IdExp{\varA}\) is the term
  \(\trProj{\varA}=\labs{\svarB}{\appl{\varA}{\emptystream}}\) of
  \cref{example:terms}.
  We thus have \(\mset{\trProj{\svarB[0]}}\cons\emptystream\in\IdStrExp{\svarB}\),
  hence \(\trCCz{\varA}
  =\labs{\svarB}{
    \appl{\varA}{\mset{\trProj{\svarB[0]}}\cons\emptystream}
  }\in\IdExp{\varA}\),
  and thus also \(\labs{\svarB}{
    \appl{\varA}{\mset{\trCCz{\svarB[0]}}\cons\emptystream}
  }\in\IdExp{\varA}\).
  This suggests defining \(\trCCz{\varA}[k]\) for \(k\in\N\) inductively by
  \(\trCCz{\varA}[0]\eqdef\trProj{\varA}\) and 
  \(\trCCz{\varA}[k+1]\eqdef\labs{\svarB}{
    \appl{\varA}{\mset{\trCCz{\svarB[0]}[k]}\cons\emptystream}
  }\) so that
  \(\trCCz{\varA}=\trCCz{\varA}[1]\) and
  \(\trCCz{\varA}[k]\in\IdExp{\varA}\) for each \(k\in\N\).

  The multiplicity coefficient of each of the previous examples is \(1\),
  but we also have, e.g.,
  \(\trCCp{\varA}
  =\labs{\svarB}{
    \appl{\varA}{\mset{\trCCz{\svarB[0]},\trCCz{\svarB[0]}}\cons\emptystream}
  }\in\IdExp{\varA}\),
  with multiplicity coefficient \(2\),
  observing that 
  \(\muldeg{\mset{\trCCz{\svarB[0]},\trCCz{\svarB[0]}}}=
  \isodeg{\mset{\trCCz{\svarB[0]},\trCCz{\svarB[0]}}}=2\).
\end{example}

The value expansion of a variable is meant to behave like an identity morphism
on value terms:
as we have announced, we have
$\subst{\IdExp{\varA}}{\varA}{\vtermA}\vresred\vtermA$
and
$\subst{\vtermA}{\varA}{\IdExp{\varA}}\vresred\vtermA$
for any $\vtermA\in\ValueTerms$.
To establish these properties, we will actually show that, for every
$\vtermA\in\ValueTerms$, there is exactly one element of $\IdExp{\varA}$
contributing to each of those reductions.
More precisely:

\begin{lemma}\label{lemma:cc}
  For any resource term $\termA$,
  any value variable $\varA$,
  and any sequence variable $\svarA$,
  the following holds:
  \begin{enumerate}
    \item\label{lemma:cc:var}
      there exists $\CCBag{\varA}{\termA}\in\IdBagExp{\varA}$ such that 
      $\rsubst{\termA}{\varA}{\CCBag{\varA}{\termA}}\sresredRT
      \muldeg{\CCBag{\varA}{\termA}}\termA$,
      and $\rsubst{\termA}{\varA}{\bagC}\sresredRT 0$
      for any other $\bagC\in\IdBagExp{\varA}$;
    \item\label{lemma:cc:svar}
      there exists $\CCStr{\svarA}{\termA}\in\IdStrExp{\svarA}$ such that 
      $\rsubst{\termA}{\svarA}{\CCStr{\svarA}{\termA}}\sresredRT
      \muldeg{\CCStr{\svarA}{\termA}}\termA$,
      and $\rsubst{\termA}{\svarA}{\strC}\sresredRT 0$
      for any other $\strC\in\IdStrExp{\svarA}$;
    \item\label{lemma:cc:vterm}
      if $\termA=\vtermA\in\ValueTerms$ then 
      there exists $\CCValCtx{\varA}{\vtermA}\in\IdExp{\varA}$ such that 
      $\subst{\CCValCtx{\varA}{\vtermA}}{\varA}{\vtermA}\sresredRT
      \muldeg{\CCValCtx{\varA}{\vtermA}}\vtermA$,
      and $\subst{\vtermC}{\varA}{\vtermA}\sresredRT 0$
      for any other $\vtermC\in\IdExp{\varA}$;
    \item\label{lemma:cc:bag}
      if $\termA=\bagA\in\BagTerms$ then
      there exists $\CCBagCtx{\varA}{\bagA}\in\IdBagExp{\varA}$ such that 
      $\rsubst{\CCBagCtx{\varA}{\bagA}}{\varA}{\bagA}\sresredRT
      \muldeg{\CCBagCtx{\varA}{\bagA}}\bagA$,
      and $\rsubst{\bagC}{\varA}{\bagA}\sresredRT 0$
      for any other $\bagC\in\IdBagExp{\varA}$;
    \item\label{lemma:cc:str}
      if $\termA=\strA\in\StreamTerms$ then:
      \begin{itemize}
        \item there exists $\CCStrCtx{\svarA}{\strA}\in\IdStrExp{\svarA}$ such
          that $\rsubst{\CCStrCtx{\svarA}{\strA}}{\svarA}{\strA}\sresredRT
          \muldeg{\CCStrCtx{\svarA}{\strA}}\strA$,
          and $\rsubst{\strC}{\svarA}{\strA}\sresredRT 0$ for any
          other $\strC\in\IdStrExp{\svarA}$;
        \item and there exists $\CCVar{\varA}{\strA}\in\IdExp{\varA}$ such that
          $\appl{\CCVar{\varA}{\strA}}{\strA}\sresredRT
          \muldeg{\CCVar{\varA}{\strA}}\pars{\appl{\varA}{\strA}}$,
          and $\appl{\vtermC}{\strA}\sresredRT 0$ for any other
          $\vtermC\in\IdExp{\varA}$.
      \end{itemize}
  \end{enumerate}
\end{lemma}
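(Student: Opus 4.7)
I would prove all six statements simultaneously by strong induction on the size of the resource expression involved (with 5(a) organised as a structural recursion on $\strA$). For each statement the existence witness is given by an explicit canonical construction; uniqueness then follows from the length constraints of \cref{lemma:rsubst:size,lemma:rsubst:str:size} (forcing the substituted bag, resp.\ stream, to match the occurrence count of the variable) combined with the inductive uniqueness clauses of the sub-statements invoked.

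The canonical constructions are:
\begin{align*}
\CCValCtx{\varA}{\labs{\svarB}{\btermA}} &\eqdef \labs{\svarB}{\appl{\varA}{\CCStr{\svarB}{\btermA}}} \quad (\varA\not\in\svarB), \\
\CCBagCtx{\varA}{\mset{\vtermA_1,\dotsc,\vtermA_n}} &\eqdef \mset{\CCValCtx{\varA}{\vtermA_1},\dotsc,\CCValCtx{\varA}{\vtermA_n}}, \\
\CCStrCtx{\svarA}{\emptystream} &\eqdef \emptystream, \\
\CCStrCtx{\svarA}{\bagA\cons\strB} &\eqdef \CCBagCtx{\svarA[0]}{\bagA}\cons\shiftup{\CCStrCtx{\svarA}{\strB}}{\svarA}, \\
\CCVar{\varA}{\strA} &\eqdef \labs{\svarB}{\appl{\varA}{\CCStrCtx{\svarB}{\strA}}} \quad (\svarB\text{ fresh}), \\
\CCBag{\varA}{\termA} &\eqdef \mset{\CCVar{\varA}{\strA_1},\dotsc,\CCVar{\varA}{\strA_n}},
\end{align*}
where in the last line $\strA_1,\dotsc,\strA_n$ enumerate the streams to which the $n=\nocc{\varA}{\termA}$ free occurrences of $\varA$ in $\termA$ are applied. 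The object $\CCStr{\svarA}{\termA}$ is constructed analogously: its bag at index $i$ lists $\CCVar{\svarA[i]}{\strA^i_j}$ over the streams $\strA^i_j$ of the free occurrences of $\svarA[i]$ in $\termA$. The fourth line belongs to $\IdStrExp{\svarA}$ by the decomposition of \cref{lemma:cc:shift}.

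For statement 1, $\rsubst{\termA}{\varA}{\CCBag{\varA}{\termA}}$ expands as a sum over $\sigma\in\Perms n$ of substitutions distributing the $\CCVar{\varA}{\strA_{\sigma(j)}}$ to each occurrence $\appl{\varA}{\strA_j}$; those $\sigma$ for which $\CCVar{\varA}{\strA_{\sigma(j)}}\neq\CCVar{\varA}{\strA_j}$ at some $j$ kill the contribution by the uniqueness part of 5(b) at the smaller $\strA_j$, while the surviving $\sigma$ form the stabilizer of $\tuple{\CCVar{\varA}{\strA_j}}_{1\le j\le n}$, of cardinality $\isodeg{\CCBag{\varA}{\termA}}$ by \cref{fact:isodeg}. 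Each surviving $\sigma$ yields $\prod_j\muldeg{\CCVar{\varA}{\strA_j}}\cdot\termA$ after reduction, reconstructing exactly $\muldeg{\CCBag{\varA}{\termA}}\cdot\termA$. Statements 2 and 4 are entirely analogous; statement 3 reduces via $\Resredrulebeta$ to $\labs{\svarB}{\rsubst{\btermA}{\svarB}{\CCStr{\svarB}{\btermA}}}$ and invokes 2 at smaller $\btermA$; statement 5(a) unfolds iteratively through \cref{lemma:rsubst:str:def} (the head bag absorbs all of the substituted $\bagA$ because the shifted tail contains no $\svarA[0]$, and the remaining tail reduces by the induction hypothesis at smaller $\strB$); and statement 5(b) $β$-reduces to $\appl{\varA}{\rsubst{\CCStrCtx{\svarB}{\strA}}{\svarB}{\strA}}$ (the stream partition forces $\emptystream$ on the $\varA$ side) and reduces further by 5(a) at $\strA$.

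The induction order within each step is: first 5(a) by structural recursion on $\strA$ (invoking only strictly smaller substreams and bags); then 5(b), which relies on 5(a) at the same $\strA$ now established; finally 1, 2, 3, 4, each invoking 5(b) only on strictly smaller substreams. The main obstacle is the combinatorial bookkeeping of $\muldeg{-}$ through the nested partitionings, permutations, and shifting/erasure operations --- in particular, verifying that the interplay between \cref{lemma:rsubst:str:def,lemma:erase:shift,lemma:shift:rsubst,fact:isodeg,lemma:cc:coef} reconstructs the prescribed multiplicity at every step, and that the $\alpha$-renaming inherent in the construction of $\CCValCtx{\varA}{\labs{\svarB}{\btermA}}$ does not spoil the uniqueness argument.
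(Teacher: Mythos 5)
Your proposal is correct and follows essentially the same route as the paper: identical canonical witnesses (your global characterization of $\CCBag{\varA}{\termA}$ as the bag of $\CCVar{\varA}{\strA_j}$ over the occurrences of $\varA$ unfolds to the paper's case-by-case inductive construction), the same reliance on \cref{fact:isodeg}, \cref{lemma:cc:shift} and \cref{lemma:rsubst:str:def}, and the same uniqueness argument combining occurrence counts with the inductive uniqueness clauses. The only (harmless) organizational differences are that you induct on size rather than on term structure, and you treat item~2 directly whereas the paper derives it by iterating item~1 through the decomposition of stream substitution into bag substitutions.
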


We will deduce the announced identity behaviour from 
\cref{lemma:cc:var,lemma:cc:vterm}.
The other items will be leveraged similarly (see the corresponding items of
\cref{lemma:cc:vresred}), but they are already useful here as intermediate
steps in the proof of \cref{lemma:cc}:
we prove all five items simultaneously by induction on \(\termA\).
Along the proof, we define
$\CCBag{\varA}{\termA}\in\IdBagExp{\varA}$,
$\CCStr{\svarA}{\termA}\in\IdStrExp{\svarA}$,
$\CCValCtx{\varA}{\vtermA}\in\IdExp{\varA}$,
$\CCBagCtx{\varA}{\bagA}\in\IdBagExp{\varA}$,
$\CCStrCtx{\svarA}{\strA}\in\IdStrExp{\svarA}$
and $\CCVar{\varA}{\strA}\in\IdExp{\varA}$,
that we call \definitive{copycat terms}:\footnote{
  Although we do not develop the analogy further,
  a game semanticist reader may infer a precise correspondence between copycat
  terms and the isomorphism classes of augmentations in a copycat strategy
  over the universal arena,
  in light of \cref{section:gs}.
}
we summarize their mutually inductive definitions in \cref{fig:cc},
so that it is safe to skip the proof.

\begin{figure}
  \begin{gather*}
    \CCBag{\varA}{\labs{\svarB}{\btermB}}\eqdef\CCBag{\varA}{\btermB}
    \qquad
    \CCBag{\varA}{\mset{\vtermA_1,\dotsc,\vtermA_k}}
    \eqdef\CCBag{\varA}{\vtermA_1}\bagcat\cdots\bagcat\CCBag{\varA}{\vtermA_k}
    \\
    \CCBag{\varA}{\appl{\varB}{\strB}}\eqdef\CCBag{\varA}{\strB}
    \qquad
    \CCBag{\varA}{\appl{\varA}{\strB}}
    \eqdef\mset{\CCVar{\varA}{\strB}}\bagcat\CCBag{\varA}{\strB}
    \qquad
    \CCBag{\varA}{\appl{\vtermA}{\strB}}
    \eqdef\CCBag{\varA}{\vtermA}\bagcat\CCBag{\varA}{\strB}
    \\
    \CCBag{\varA}{\emptystream}\eqdef\emptybag
    \qquad
    \CCBag{\varA}{\bagA\cons\strB}
    \eqdef\CCBag{\varA}{\bagA}\bagcat\CCBag{\varA}{\strB}
    \\
    \CCStr{\svarA}{\termA}\eqdef
    \CCBag{\svarA[0]}{\termA}\cons\cdots\cons\CCBag{\svarA[l-1]}{\termA}\cons\emptystream
    \\
    \CCValCtx{\varA}{\labs{\svarB}{\btermB}}\eqdef\labs{\svarB}{\appl{\varA}{\CCStr{\svarB}{\btermB}}}
    \qquad
    \CCBagCtx{\varA}{\mset{\vtermA_1,\dotsc,\vtermA_k}}
    \eqdef\mset{\CCValCtx{\varA}{\vtermA_1},\dotsc,\CCValCtx{\varA}{\vtermA_k}}
    \\
    \CCStrCtx{\svarA}{\emptystream}\eqdef\emptystream
    \qquad
    \CCStrCtx{\svarA}{\bagA\cons\strB}\eqdef
    \CCBagCtx{\svarA[0]}{\bagA}\cons\shiftup{\CCStrCtx{\svarA}{\strB}}{\svarA}
    \\
    \CCVar{\varA}{\strA}\eqdef\labs{\svarB}{\appl{\varA}{\CCStrCtx{\svarB}{\strA}}}
\end{gather*}
  \caption{
    Definition of copycat terms
    (assuming \(\varB\not=\varA\), and choosing \(\svarB\not\ni\varA\) whenever relevant;
    and choosing \(l\) such that no \(\svarA[i]\) with \(i\ge l\) is free in \(\termA\) in the
    definition of \(\CCStr{\svarA}{\termA}\))
  }
  \label{fig:cc}
\end{figure}

\begin{proof}[Proof of Lemma~\ref{lemma:cc}]
  Note that \cref{lemma:cc:svar} is obtained by iterating
  \cref{lemma:cc:var}.
  Indeed, if $\strC=\tuple{\bagC_i}_{i\in\N}\in\IdStrExp{\svarA}$
  and $k\in\N$ is such that $\svarA[i]\not\in\LVarOf{\termA}$ when $i\ge k$,
  then $\rsubst{\termA}{\svarA}{\strC}\sresredRT0$
  as soon as $\bagC_i\not=\emptybag$ for some $i\ge k$.
  And, otherwise,
  $\rsubst{\termA}{\svarA}{\strC}
  =\rsubst{\rsubst{\termA}{\svarA[0]}{\bagC_0}\cdots}{\svarA[k-1]}{\bagC_{k-1}}$
  --- observing that no $\svarA[i]$ is free in $\bagC_j$ when $i\not=j$.
  We then apply \cref{lemma:cc:var} to the same term \(\termA\) and each \(\svarA[i]\) for \(i<k\), and write
  \[ 
    \termA'\eqdef
    \rsubst{\rsubst{\termA}{\svarA[0]}{\CCBag{\svarA[0]}{\termA}}\cdots}{\svarA[k-1]}{\CCBag{\svarA[k-1]}{\termA}}
  \]
  and
  \[
    \CCStr{\svarA}{\termA}\eqdef
    \CCBag{\svarA[0]}{\termA}\cons\cdots\cons\CCBag{\svarA[k-1]}{\termA}\cons\emptystream
  \]
  so that 
  $\termA'=\rsubst{\termA}{\svarA}{\CCStr{\svarA}{\termA}}$.
  By the properties of each \(\CCBag{\svarA[i]}{\termA}\),
  together with \cref{lemma:sresred:rsubst}, we obtain
  $\termA'
  \sresredRT\pars{\prod_{i=0}^{k-1}\muldeg{\CCBag{\svarA[i]}{\termA}}}\termA
  =\muldeg{\CCStr{\svarA}{\termA}}\termA$.
  Moreover,
  we have $\rsubst{\termA}{\svarA[i]}{\bagC}\sresredRT 0$
  for any $\bagC\in\IdBagExp{\svarA[i]}$ other than $\CCBag{\svarA[i]}{\termA}$,
  for $0\le i<k$:
  recalling that we have $\rsubst{\termA}{\svarA[i]}{\bagC}\sresredRT 0$
  for any $\bagC\not=\emptybag$ for $i\ge k$,
  $\CCStr{\svarA}{\termA}$ is the only $\strC\in\IdStrExp{\svarA}$
  with $\NF{\rsubst{\termA}{\svarA}{\strC}}\not =0$.

  So it is sufficient to establish 
  \cref{lemma:cc:var}, and possibly one of
  \cref{lemma:cc:vterm,lemma:cc:bag,lemma:cc:str},
  depending on the shape of \(\termA\),
  simultaneously by induction on \(\termA\).

  \subparagraph{Value terms:}
  If $\termA=\vtermA\in\ValueTerms$, we prove \cref{lemma:cc:var,lemma:cc:vterm}.
  We can write $\vtermA=\labs{\svarB}{\btermB}$,
  assuming w.l.o.g. that $\varA\not\in\svarB$.
  Applying the induction hypothesis (\cref{lemma:cc:var})
  to $\btermB$ entails 
  $\rsubst{\btermB}{\varA}{\CCBag{\varA}{\btermB}}\sresredRT
  \muldeg{\CCBag{\varA}{\btermB}}\btermB$
  and 
  $\rsubst{\btermB}{\varA}{\bagC}\sresredRT 0$
  for any other $\bagC\in\IdBagExp{\varA}$:
  we set $\CCBag{\varA}{\vtermA}\eqdef\CCBag{\varA}{\btermB}$,
  and deduce \cref{lemma:cc:var} for $\vtermA$ by \cref{lemma:sresred:context},
  observing that $\rsubst{\vtermA}{\varA}{\bagC}=\labs{\svarB}{\rsubst{\btermB}{\varA}{\bagC}}$
  for any $\bagC\in\BagTerms$.
  Moreover, we can write $\IdExp{\varA}=\labs{\svarB}{\appl{\varA}{\IdStrExp{\svarB}}}$:
  for each $\vtermC\in\IdExp{\varA}$, we have
  $\vtermC=\labs{\svarB}{\appl{\varA}{\strC}}$
  with $\strC\in\IdStrExp{\svarB}$.
  The induction hypothesis (\cref{lemma:cc:svar}) yields
  $\rsubst{\btermB}{\svarB}{\CCStr{\svarB}{\btermB}}\sresredRT
  \muldeg{\CCStr{\svarB}{\btermB}}\btermB$,
  and
  $\rsubst{\btermB}{\svarB}{\strC}\sresredRT 0$ for any other 
  $\strC\in\IdStrExp{\svarB}$.
  We set $\CCValCtx{\varA}{\vtermA}\eqdef\labs{\svarB}{\appl{\varA}{\CCStr{\svarB}{\btermB}}}\in\IdExp{\varA}$,
  and obtain \cref{lemma:cc:vterm} by \cref{lemma:sresred:context},
  observing that 
  $\subst*{\labs{\svarB}{\appl{\varA}{\strC}}}{\varA}{\vtermA}
  =\labs{\svarB}{\appl{\vtermA}{\strC}}
  \sresred
  \labs{\svarB}{\rsubst{\btermB}{\svarB}{\strC}}
  $
  for any $\strC\in\StreamTerms$.

  \subparagraph{Base terms:}
  If $\termA=\btermA\in\BaseTerms$, we only have to prove \cref{lemma:cc:var}.
  There are three possible cases:
  either $\btermA=\appl{\varB}{\strB}$ with $\varB\not=\varA$,
  or $\btermA=\appl{\varA}{\strB}$,
  or $\btermA=\appl{\vtermA}{\strB}$.
  If $\btermA=\appl{\varB}{\strB}$ with $\varB\not=\varA$,
  we apply the induction hypothesis (\cref{lemma:cc:var}) to $\strB$,
  set $\CCBag{\varA}{\btermA}\eqdef\CCBag{\varA}{\strB}$,
  and conclude as in the abstraction case.
 
  If $\btermA=\appl{\varA}{\strB}$,
  then for each $\bagC=\mset{\vtermC_1,\dotsc,\vtermC_k}\in\IdBagExp{\varA}$,
  we have $\rsubst{\btermA}{\varA}{\bagC}=0$ if $k=0$ and, otherwise,
  $\rsubst{\btermA}{\varA}{\bagC}
  =\sum_{i=1}^k\appl{\vtermC_i}{\rsubst{\strB}{\varA}{\bagC'_i}}$
  where each $\bagC'_i$ is such that $\bagC=\mset{\vtermC_i}\bagcat\bagC'_i$.
  The induction hypothesis (\cref{lemma:cc:var,lemma:cc:str}) applied
  to $\strB$ yields $\CCBag{\varA}{\strB}\in\IdBagExp{\varA}$
  and $\CCVar{\varA}{\strB}\in\IdExp{\varA}$,
  and we set $\CCBag{\varA}{\btermA}
  \eqdef\mset{\CCVar{\varA}{\strB}}\bagcat\CCBag{\varA}{\strB}
  \in\IdBagExp{\varA}$.
  If $\bagC\not=\CCBag{\varA}{\btermA}$,
  then each element in the previous sum normalizes to $0$.
  And, if $\bagC=\CCBag{\varA}{\btermA}$,
  we obtain $\rsubst{\btermA}{\varA}{\bagC}
  \sresredRT
  \pars{l\times\muldeg{\CCVar{\varA}{\strB}}\times\muldeg{\CCBag{\varA}{\strB}}}
  \btermA
  $
  where $l=\CardOf{\set{i\st \vtermC_i=\CCVar{\varA}{\strB}}}$.
  \Cref{fact:isodeg} entails $\muldeg{\CCBag{\varA}{\btermA}}
  =l\times\muldeg{\CCVar{\varA}{\strB}}\times\muldeg{\CCBag{\varA}{\strB}}$,
  and we obtain \cref{lemma:cc:var} for $\btermA$.

  If $\btermA=\appl{\vtermA}{\strB}$,
  then for each $\bagC\in\IdBagExp{\varA}$,
  we have 
  $\rsubst{\btermA}{\varA}{\bagC}
  =\sum_{\bagC\splitinto\bagC_1\bagcat\bagC_2}
  \appl{\rsubst{\vtermA}{\varA}{\bagC_1}}{\rsubst{\strB}{\varA}{\bagC_2}}$.
  The induction hypothesis (\cref{lemma:cc:var}) applied
  to $\vtermA$ and $\strB$ yields $\CCBag{\varA}{\vtermA}$ and $\CCBag{\varA}{\strB}\in\IdBagExp{\varA}$,
  and we set $\CCBag{\varA}{\btermA}
  \eqdef\CCBag{\varA}{\vtermA}\bagcat\CCBag{\varA}{\strB}
  \in\IdBagExp{\varA}$.
  If $\bagC\not=\CCBag{\varA}{\btermA}$,
  then each element in the previous sum normalizes to $0$.
  And, if $\bagC=\CCBag{\varA}{\btermA}$,
  we obtain $\rsubst{\btermA}{\varA}{\bagC}
  \sresredRT
  \pars[\big]{l\times\muldeg{\CCBag{\varA}{\vtermA}}\times\muldeg{\CCBag{\varA}{\strB}}}
  \btermA
  $
  where $l=\CardOf{\set{p:\bagC\splitinto 2\st \restrictToPre{\bagC}{p}{1}=\CCBag{\varA}{\vtermA}}}$.
  Again, we apply \cref{fact:isodeg} to check that
  $\muldeg{\CCBag{\varA}{\btermA}}
  =l\times\muldeg{\CCBag{\varA}{\vtermA}}\times\muldeg{\CCBag{\varA}{\strB}}$,
  which yields \cref{lemma:cc:var} for $\btermA$.

  \subparagraph{Bag terms:}
  If $\termA=\bagA\in\BagTerms$, we prove \cref{lemma:cc:var,lemma:cc:bag}.
  We can write
  $\bagA=\mset{\vtermA_1,\dotsc,\vtermA_k}$, and
  $\rsubst{\bagA}{\varA}{\bagC}
  =\sum_{\bagC\splitinto\bagC_1\bagcat\dotsc\bagcat\bagC_k}
  \mset{\rsubst{\vtermA_1}{\varA}{\bagC_1},\dotsc,\rsubst{\vtermA_k}{\varA}{\bagC_k}}$.
  The induction hypothesis (\cref{lemma:cc:var}) applied
  to each $\vtermA_i$ yields $\CCBag{\varA}{\vtermA_i}\in\IdBagExp{\varA}$
  and we set $\CCBag{\varA}{\bagA}
  \eqdef\CCBag{\varA}{\vtermA_1}\bagcat\cdots\bagcat\CCBag{\varA}{\vtermA_k}$.
  If $\bagC\not=\CCBag{\varA}{\bagA}$,
  then each element in the previous sum normalizes to $0$.
  And, if $\bagC=\CCBag{\varA}{\bagA}$,
  we obtain $\rsubst{\btermA}{\varA}{\bagC}
  \sresredRT
  \pars{l\times\prod_{i=1}^k\muldeg{\CCBag{\varA}{\vtermA_i}}}
  \bagA
  $
  where $l=\CardOf{\set{p:\bagC\splitinto k\st \restrictToPre{\bagC}{p}{i}=\CCBag{\varA}{\vtermA_i}\text{ for }1\le i\le k}}$.
  \Cref{lemma:cc:var} for $\bagA$ follows, again applying \cref{fact:isodeg}.
  The induction hypothesis (\cref{lemma:cc:vterm}) applied
  to each $\vtermA_i$ also yields $\CCValCtx{\varA}{\vtermA_i}\in\IdExp{\varA}$
  and we set $\CCBagCtx{\varA}{\bagA}
  \eqdef\mset{\CCValCtx{\varA}{\vtermA_1},\dotsc,\CCValCtx{\varA}{\vtermA_k}}\in\IdBagExp{\varA}$.
  Assume $\bagC=\mset{\vtermC_1,\dotsc,\vtermC_l}\in\IdBagExp{\varA}$.
  Since $\varA$ occurs exactly once in each $\vtermC_i$,
  we have $\rsubst{\bagC}{\varA}{\bagA}\sresredRT 0$ when $k\not=l$.
  And if $k=l$, we have 
  \[
    \rsubst{\bagC}{\varA}{\bagA}=
    \sum_{σ\in\Perms k}\mset{\subst{\vtermC_1}{\varA}{\vtermA_{σ(1)}},\dotsc,\subst{\vtermC_1}{\varA}{\vtermA_{σ(k)}}}
  \;.\]

  Hence, if $\bagC\not=\CCBagCtx{\varA}{\bagA}$,
  then each element in the previous sum normalizes to $0$.
  And, if $\bagC=\CCBagCtx{\varA}{\bagA}$,
  we obtain $\rsubst{\btermA}{\varA}{\bagC}
  \sresredRT
  \pars{l\times\prod_{i=1}^k\muldeg{\CCValCtx{\varA}{\vtermA_i}}}
  \bagA
  $
  where
  \( l
  =\CardOf{\set{σ\in\Perms k\st \vtermC_i =\CCValCtx{\varA}{\vtermA_{σ(i)}}\text{ for }1\le i\le k}}
  =\isodeg{\CCBagCtx{\varA}{\bagA}}
  \).
  \Cref{lemma:cc:bag} for $\bagA$ follows
  since $l\times\prod_{i=1}^k\muldeg{\CCValCtx{\varA}{\vtermA_i}}
  =\muldeg{\CCBagCtx{\varA}{\bagA}}$ by definition.

  \subparagraph{Stream terms:}
  Finally, if $\termA=\strA\in\StreamTerms$ then we prove
  \cref{lemma:cc:var,lemma:cc:str}.
  Note that the first statement of \cref{lemma:cc:str} entails the second one.
  Indeed, if $\vtermC\in\IdExp{\varA}$
  then we can write $\vtermC=\labs{\svarB}{\appl{\varA}{\strC}}$
  with $\strC\in\IdStrExp{\svarB}$,
  and then $\appl{\vtermC}{\strA}\sresred\appl{\varA}{\rsubst{\strC}{\svarB}{\strA}}$:
  we set
  $\CCVar{\varA}{\strA}\eqdef\labs{\svarB}{\appl{\varA}{\CCStrCtx{\svarB}{\strA}}}$.
 
  In case $\termA=\emptystream$, \cref{lemma:cc:var,lemma:cc:str} are straightforward,
  with $\CCBag{\varA}{\emptystream}\eqdef\emptybag$,
  and $\CCStrCtx{\svarA}{\emptystream}\eqdef\emptystream$.

  It remains only to establish \cref{lemma:cc:var} and the first statement of \cref{lemma:cc:str}
  for $\termA=\bagA\cons\strB\not=\emptystream$,
  assuming the induction hypothesis holds for $\bagA$ and $\strB$.
  We obtain \cref{lemma:cc:var}
  from the induction hypothesis (\cref{lemma:cc:var})
  applied to $\bagA$ and $\strB$,
  similarly to the case of $\termA=\appl{\vtermA}{\strB}$.
  By \cref{lemma:cc:shift},
  for any $\strC\in\IdStrExp{\svarA}$,
  we can write $\strC=\bagC_0\cons\shiftup{\strC'}{\svarA}$
  with $\bagC_0\in\IdBagExp{\svarA[0]}$ and $\strC'\in\IdStrExp{\svarA}$,
  to obtain 
  $\rsubst{\strC}{\svarA}{\termA}=\rsubst{\bagC_0}{\svarA[0]}{\bagA}\cons\rsubst{\strC'}{\svarA}{\strB}$.
  We apply the induction hypothesis to $\bagA$ (\cref{lemma:cc:bag})
  and $\strB$ (\cref{lemma:cc:str}) to obtain 
  $\CCBagCtx{\svarA[0]}{\bagA}$ and $\CCStrCtx{\svarA}{\strB}$.
  We set $\CCStrCtx{\svarA}{\termA}\eqdef
  \CCBagCtx{\svarA[0]}{\bagA}\cons\shiftup{\CCStrCtx{\svarA}{\strB}}{\svarA}
  \in\IdStrExp{\svarA}$
  (\cref{lemma:cc:shift} again).
  We have
  \(\muldeg{\CCBagCtx{\svarA}{\termA}}\termA
  = \muldeg{\CCBagCtx{\svarA[0]}{\bagA}}\times \muldeg{\shiftup{\CCStrCtx{\svarA}{\strB}}{\svarA}}
  = \muldeg{\CCBagCtx{\svarA[0]}{\bagA}}\times \muldeg{\CCStrCtx{\svarA}{\strB}}
  \) (shifts clearly preserve the isotropy degree).
  Finally,
  we apply 
  \cref{lemma:rsubst:str:def}
  to obtain
  $\rsubst{\CCStrCtx{\svarA}{\termA}}{\svarA}{\termA}
  ={\rsubst{\CCBagCtx{\svarA[0]}{\bagA}}{\svarA[0]}{\bagA}}
  \cons{\rsubst{\CCStrCtx{\svarA}{\strB}}{\svarA}{\strB}}
  \sresredRT
  \pars{\muldeg{\CCBagCtx{\svarA[0]}{\bagA}}\times \muldeg{\CCStrCtx{\svarA}{\strB}}}
  \pars{\bagA\cons\strB}
  =
  \muldeg{\CCBagCtx{\svarA}{\termA}}\termA
  $, and 
  $\rsubst{\strC}{\svarA}{\termA}\sresredRT 0$ for any other
  $\strC\in\IdStrExp{\svarA}$.
\end{proof}

Observe that the use of iterated reduction \(\sresredRT\) is crucial in the
previous proof:
in particular, to obtain \cref{lemma:cc:vterm} in the value case, 
we define \(\CCValCtx{\varA}{\labs{\svarB}{\btermB}}\eqdef\labs{\svarB}{\appl{\varA}{\CCStr{\svarB}{\btermB}}}\),
and use one step of resource reduction to apply \cref{lemma:cc:var} to the underlying base term;
and to obtain the second part of \cref{lemma:cc:str},
we define \(\CCVar{\varA}{\strA}\eqdef\labs{\svarB}{\appl{\varA}{\CCStrCtx{\svarB}{\strA}}}\),
and use one step of resource reduction to apply the first part;
since each item inductively depends on the others,
we do need to iterate those reductions.

A straightforward induction allows checking that 
\(\LengthOf{\CCBag{\varA}{\termA}}=\nocc{\varA}{\termA}\):
in particular, if \(\varA\not\in\LVarOf{\termA}\), then 
\(\CCBag{\varA}{\termA}=\emptybag\).
Moreover, if \(\svarA\not\in\SVarOf{\termA}\)
then \(\CCStr{\svarA}{\termA}=\emptystream\)
directly by definition.

\begin{example}
  \label{example:cc}
  The reader may check that
  \(\CCVar{\varA}{\emptystream}=\trProj{\varA}\),
  hence
  \(
  \CCBag{\varA}{\trProj{\varA}}=
  \CCBag{\varA}{\appl{\varA}{\emptystream}}=
  \mset{\trProj{\varA}}
  \).
  Indeed,
  \(\appl{\trProj{\varA}{\emptystream}}\sresred\appl{\varA}{\emptystream}\)
  and
  \(
  \subst{\trProj{\varA}}{\varA}{\trProj{\varA}}
  =
  \rsubst{\trProj{\varA}}{\varA}{\mset{\trProj{\varA}}}
  \sresred
  \trProj{\varA}
  \) as in \cref{example:reduction:fine}.
  Moreover, 
  \(\CCBag{\varA}{\appl{\varB}{\emptystream}}=
  \emptybag\) when \(\varA\not=\varB\),
  hence
  \(\CCStr{\svarB}{\appl{\svarB[0]}{\emptystream}}
  =\mset{\trProj{\svarB[0]}}\cons\emptystream\)
  and
  \(\CCStr{\svarB}{\appl{\varA}{\emptystream}}
  =\emptystream\)
  when \(\varA\not\in\svarB\).
  We obtain
  \(\CCValCtx{\varA}{\trProj{0}}=
  \labs{\svarB}{\appl{\varA}{\mset{\trProj{\svarB[0]}}\cons\emptystream}}
  =\trCCz{\varA}\)
  (as defined in \cref{example:terms})
  and 
  \(\CCValCtx{\varB}{\trProj{\varA}}=
  \trProj{\varB}\) for any (non-necessarily distinct) variables
  \(\varA\) and \(\varB\).
  And, indeed,
  \(
  \subst{\trCCz{\varA}}{\varA}{\trProj{0}}
  =
  \labs{\svarB}{\appl{\trProj{0}}{\mset{\trProj{\svarB[0]}}\cons\emptystream}}
  \sresredRT
  \trProj{0}
  \)
  as in \cref{example:reduction:fine},
  and we have already recalled that 
  \(\subst{\trProj{\varB}}{\varB}{\trProj{\varA}}\sresred\trProj{\varA}\).

  We also obtain
  \(\CCVar{\varA}{\mset{\trProj{\varB}}\cons\emptystream}
  =
  \labs{\svarC}{\appl{\varA}{\CCStrCtx{\svarC}{\mset{\trProj{\varB}}\cons\emptystream}}}
  =
  \labs{\svarC}{\appl{\varA}{\mset{\CCValCtx{\svarC[0]}{\trProj{\varB}}}\cons\emptystream}}
  =
  \labs{\svarC}{\appl{\varA}{\mset{\trProj{\svarC[0]}}}\cons\emptystream}
  =
  \labs{\varB}{\labs{\svarC}{\appl{\varA}{\mset{\trProj{\varB}}}\cons\emptystream}}
  =
  \trCCz{\varA}
  \)
  for any variables \(\varA\) and \(\varB\).
  And, indeed,
  \(
  \appl{\trCCz{\varA}}{\mset{\trProj{\varB}}\cons\emptystream}
  \sresredRT
  \rsubst*{\appl{\varA}{\mset{\trProj{\varB}}}\cons\emptystream}{\varB}{\mset{\trProj{\varB}}}
  =
  \appl{\varA}{\mset{\rsubst{\trProj{\varB}}{\varB}{\trProj{\varB}}}\cons\emptystream}
  \sresredRT
  \appl{\varA}{\mset{\trProj{\varB}}\cons\emptystream}
  \).

  Now consider the resource sequence
  \(\strA
  = \mset{ \trProj{0}, \trProj{0} } \cons\emptystream
  = \trProj{0}^2 \cons\emptystream\)
  with \(\muldeg{\strA}=2\).
  We have \(
  \CCStrCtx{\svarB}{\strA}
  =\CCBagCtx{\svarB[0]}{\trProj{0}^2}\cons\emptystream
  =\CCValCtx{\svarB[0]}{\trProj{0}}^2\cons\emptystream
  =\trCCz{\svarB[0]}^2\cons\emptystream
  \).
  So \(
  \CCVar{\varA}{\strA}
  =\labs{\svarB}{\appl{\varA}{\trCCz{\svarB[0]}^2\cons\emptystream}}
  =\trCCp{\varA}
  \).
  And, indeed,
  \(\appl{\CCVar{\varA}{\strA}}{\strA}
  =\appl{\trCCp{\varA}}{\trProj{0}^2 \cons\emptystream}
  \sresredRT 2\appl{\varA}{\trProj{0}^2 \cons\emptystream}
  =\muldeg{\strA}\appl{\varA}{\strA}
  \)
  as in \cref{example:reduction:fine}.
\end{example}
\begin{remark}
    It is no accident that
    \(\CCBag{\varA}{\trProj{\varA}}=\mset{\trProj{\varA}}\)
    and 
    \(\CCValCtx{\varB}{\trProj{\varA}}=\trProj{\varB}=\subst{\trProj{\varA}}{\varA}{\varB}\) in \cref{example:cc}:
    as a general fact, elements of the expansion of variables are their own copycat terms.
    More precisely, one can prove that, for any value variables \(\varA\) and
    \(\varB\) and any sequence variables \(\svarA\) and \(\svarB\), we have:
    \begin{enumerate}
      \item if \(\vtermA\in\IdExp{\varA}\) then \(\CCBag{\varA}{\vtermA}=\mset{\vtermA}\)
        and \(\CCValCtx{\varB}{\vtermA}=\subst{\vtermA}{\varA}{\varB}\);
      \item if \(\bagA\in\IdBagExp{\varA}\) then \(\CCBag{\varA}{\bagA}=\bagA\)
        and \(\CCBagCtx{\varB}{\bagA}=\subst{\bagA}{\varA}{\varB}\);
      \item
        if \(\strA\in\IdStrExp{\svarA}\) then \(\CCStr{\svarA}{\strA}=\strA\),
        \(\CCStrCtx{\svarB}{\strA}=\subst{\strA}{\svarA}{\svarB}\) and
        \(\CCVar{\varB}{\strA}=\labs{\svarA}{\appl{\varB}{\strA}}\).
    \end{enumerate}

    We do not develop this result further as it plays no role in the sequel.
    A game semanticist reader detailing the proof might be reminded of the
    composition of copycat plays.
\end{remark}

\begin{lemma}\label{lemma:cc:vresred}
  For any resource term $\termA$,
  any value variable $\varA$,
  and any sequence variable $\svarA$,
  the following holds:
  \begin{enumerate}
    \item\label{lemma:cc:vresred:var}
      $\subst{\termA}{\varA}{\IdExp{\varA}}\vresred\termA$;
    \item\label{lemma:cc:vresred:svar}
      $\subst{\termA}{\svarA}{\IdSeqExp{\svarA}}\vresred\termA$;
    \item\label{lemma:cc:vresred:vterm}
      if $\termA=\vtermA\in\ValueTerms$ then 
      $
      \subst{\IdExp{\varA}}{\varA}{\vtermA}
      \vresred\vtermA$;
    \item\label{lemma:cc:vresred:bag}
      if $\termA=\bagA\in\BagTerms$ then
      $\rsubst{\IdBagExp{\varA}}{\varA}{\bagA}\vresred\bagA$;
    \item\label{lemma:cc:vresred:str}
      if $\termA=\strA\in\StreamTerms$ then 
      $\rsubst{\IdStrExp{\svarA}}{\svarA}{\strA}\vresred\strA$
      and 
      $\appl{\IdExp{\varA}}{\strA}\vresred\appl{\varA}{\strA}$;
    \item\label{lemma:cc:vresred:base}
      if $\termA=\btermA\in\BaseTerms$ then 
      $\appl*{\labs{\svarA}{\btermA}}{\IdStrExp{\svarA}}\vresred\btermA$.
  \end{enumerate}
\end{lemma}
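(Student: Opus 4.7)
My plan is to reduce each item uniformly to \cref{lemma:cc} together with the coefficient formula of \cref{lemma:cc:coef}. In every case, \cref{lemma:cc:coef} lets us write the identity expansion appearing in the statement as the explicit sum $\sum_X \frac{1}{\muldeg{X}} X$ over its (possibly infinite) support; after distributing the relevant substitution or application by (bi)linearity, we obtain a weighted sum of individual resource terms. \Cref{lemma:cc} then singles out a unique ``copycat'' summand whose contribution $\sresredRT$-normalises to $\muldeg{X}$ times the target, every other summand $\sresredRT$-reducing to $0$. The factor $\frac{1}{\muldeg{X}}$ cancels $\muldeg{X}$ exactly, and \cref{lemma:vresred:context:sums} of \cref{lemma:vresred:context} lifts the componentwise $\sresredRT$-reductions to the required $\vresred$-step on the whole sum.

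The individual items differ only in intermediate bookkeeping. For \cref{lemma:cc:vresred:var}, I would first invoke \cref{lemma:subst:taylor} to rewrite $\subst{\termA}{\varA}{\IdExp{\varA}}$ as $\rsubst{\termA}{\varA}{\IdBagExp{\varA}}$, and then apply \cref{lemma:cc:var}. \Cref{lemma:cc:vresred:svar} is analogous via \cref{lemma:subst:taylor:seq} and \cref{lemma:cc:svar}. Items \cref{lemma:cc:vresred:vterm,lemma:cc:vresred:bag} and both halves of \cref{lemma:cc:vresred:str} bypass the Taylor-expansion rewriting entirely: the identity expansion already occupies the outer position, and distributing it directly produces a sum to which \cref{lemma:cc:vterm,lemma:cc:bag,lemma:cc:str} apply respectively.

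For \cref{lemma:cc:vresred:base}, each individual summand $\appl{\labs{\svarA}{\btermA}}{\strB}$ arising from decomposing $\appl{\labs{\svarA}{\btermvA}}{\IdStrExp{\svarA}}$ is a big-step β-redex; by \cref{lemma:sResred:sresredRT} it $\sresredRT$-reduces to $\rsubst{\btermA}{\svarA}{\strB}$, which then $\sresredRT$-normalises either to $0$ or to $\muldeg{\CCStr{\svarA}{\btermA}} \btermA$ by \cref{lemma:cc:svar} applied to $\btermA$. Composing these two $\sresredRT$-steps at the level of individual components yields the single $\vresred$-step required, thereby sidestepping the transitivity of $\vresred$ (which the excerpt explicitly does not establish).

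The main obstacle will be administrative: reorganising the doubly-indexed infinite sums obtained after distribution into the single-indexed decompositions required by the definition of $\vresred$, and verifying that the finite-sequence-variable hypothesis of \cref{lemma:vresred:context} is preserved throughout. Once this bookkeeping is under control, every item reduces to a direct invocation of the copycat analysis of \cref{lemma:cc}.
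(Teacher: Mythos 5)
Your proposal is correct and follows essentially the same route as the paper's proof: items \ref{lemma:cc:vresred:var}--\ref{lemma:cc:vresred:str} are obtained by writing the identity expansion as the weighted sum $\sum_X\frac{1}{\muldeg{X}}X$ via \cref{lemma:cc:coef} (using \cref{lemma:subst:taylor,lemma:subst:taylor:seq} where the ordinary substitution must first be converted to a resource substitution against the promotion), distributing by linearity, and letting the unique copycat witness of \cref{lemma:cc} cancel the factor $\muldeg{X}$; item \ref{lemma:cc:vresred:base} is handled exactly as in the paper by firing the big-step redex in each summand and invoking \cref{lemma:cc:svar} of \cref{lemma:cc}. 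The extra steps you make explicit (the Taylor-expansion rewriting for items \ref{lemma:cc:vresred:var} and \ref{lemma:cc:vresred:svar}, the composition of the two componentwise $\sresredRT$-reductions to avoid relying on transitivity of $\vresred$, and the bookkeeping on sums and on the finiteness of sequence variables) are precisely what the paper's terse proof leaves implicit.
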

\begin{proof}
  \Crefrange{lemma:cc:vresred:var}{lemma:cc:vresred:str} follow from the
  corresponding items of \cref{lemma:cc}, combined with \cref{lemma:cc:coef}.
  For \cref{lemma:cc:vresred:base}, 
  write $\appl*{\labs{\svarA}{\btermA}}{\IdStrExp{\svarA}}
  =\sum_{\strC\in\IdStrExp{\svarA}}\frac{1}{\muldeg{\strC}}
  \appl*{\labs{\svarA}{\btermA}}{\strC}$
  thanks to  \cref{lemma:cc:coef},
  observe that 
  $\appl*{\labs{\svarA}{\btermA}}{\strC}\resred
  \rsubst{\btermA}{\varA}{\strC}$
  and conclude by \cref{lemma:cc:svar} of  \cref{lemma:cc}.
\end{proof}

\begin{lemma}\label{lemma:cc:vectors}
  For any term vector $\termvA$,
  any value variable $\varA$,
  and any sequence variable $\svarA$,
  the following holds:
  \begin{enumerate}
    \item\label{lemma:cc:vectors:var}
      $\subst{\termvA}{\varA}{\IdExp{\varA}}\vresred\termvA$
      and, for any variable \(\varB\), 
      $\subst{\termvA}{\varB}{\IdExp{\varA}}\vresred\subst{\termvA}{\varB}{\varA}$;
    \item\label{lemma:cc:vectors:svar}
      $\subst{\termvA}{\svarA}{\IdSeqExp{\svarA}}\vresred\termvA$
      and, for any sequence variable \(\svarB\), 
      $\subst{\termvA}{\svarB}{\IdSeqExp{\svarA}}\vresred\subst{\termvA}{\svarB}{\svarA}$;
    \item\label{lemma:cc:vectors:vterm}
      if $\termvA=\vtermvA\in\ValueVectors$ then 
      $
      \subst{\IdExp{\varA}}{\varA}{\vtermvA}
      \vresred\vtermvA$;
      moreover, if $\svarA\not\in\SVarOf{\vtermvA}$, then
      $\labs{\svarA}{\appl{\vtermvA}{\IdStrExp{\svarA}}}\vresred\vtermvA$;
    \item\label{lemma:cc:vectors:bag}
      if $\termvA=\bagvA\in\BagVectors$ then
      $\rsubst{\IdBagExp{\varA}}{\varA}{\bagvA}\vresred\bagvA$;
    \item\label{lemma:cc:vectors:str}
      if $\termvA=\strvA\in\StreamVectors$ then 
      $\rsubst{\IdStrExp{\svarA}}{\svarA}{\strvA}\vresred\strvA$
      and 
      $\appl{\IdExp{\varA}}{\strvA}\vresred\appl{\varA}{\strvA}$.
    \item\label{lemma:cc:vectors:base}
      if $\termvA=\btermvA\in\BaseVectors$ then 
      $\appl*{\labs{\svarA}{\btermvA}}{\IdStrExp{\svarA}}\vresred\btermvA$
      and, for any sequence variable \(\svarB\),
      $\appl*{\labs{\svarB}{\btermvA}}{\IdStrExp{\svarA}}\vresred
      \subst{\btermvA}{\svarB}{\svarA}$.
  \end{enumerate}

  Finally, for any sequence $\seq{\vtermvA}$ of value vectors
  such that $\SVarOf{\seq{\vtermvA}}$ is finite, we have
  $\subst{\IdStrExp{\svarA}}{\svarA}{\seq{\vtermvA}}\vresred\prom{\seq{\vtermvA}}$.
\end{lemma}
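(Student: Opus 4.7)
The plan is to reduce the final statement to the already-proved componentwise claim (item~\ref{lemma:cc:vectors:vterm}), and then use that promotion preserves the reduction $\vresred$ (Lemma~\ref{lemma:vresred:prom}).

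First I would unfold $\IdStrExp{\svarA}=\prom*{\IdSeqExp{\svarA}}$ and apply Lemma~\ref{lemma:subst:prom:seq} (the sequence version of the compatibility of promotion with substitution) to obtain
\[
  \subst{\IdStrExp{\svarA}}{\svarA}{\seq{\vtermvA}}
  =\subst{\prom{\IdSeqExp{\svarA}}}{\svarA}{\seq{\vtermvA}}
  =\prom{\subst{\IdSeqExp{\svarA}}{\svarA}{\seq{\vtermvA}}}\;.
\]
This turns the goal into a reduction of promotions of sequences of value vectors, namely
$\prom{\subst{\IdSeqExp{\svarA}}{\svarA}{\seq{\vtermvA}}}\vresred\prom{\seq{\vtermvA}}$.

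Next, I would compute the substituted sequence componentwise: writing $\IdSeqExp{\svarA}=\tuple{\IdExp{\svarA[i]}}_{i\in\N}$, and choosing $\alpha$-representatives of each $\IdExp{\svarA[i]}$ so that its bound sequence variables avoid $\svarA$, we have that only $\svarA[i]$ among the $\svarA[j]$ is free in $\IdExp{\svarA[i]}$. Therefore
\[
  \subst{\IdExp{\svarA[i]}}{\svarA}{\seq{\vtermvA}}
  =\subst{\IdExp{\svarA[i]}}{\svarA[i]}{\vtermvA_i}\;,
\]
and item~\ref{lemma:cc:vectors:vterm} of the lemma (which is already established at this point) gives the componentwise reduction
$\subst{\IdExp{\svarA[i]}}{\svarA[i]}{\vtermvA_i}\vresred\vtermvA_i$ for each $i\in\N$.

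Finally, I would apply the second statement of Lemma~\ref{lemma:vresred:prom} to lift these componentwise reductions to a reduction on promotions:
\[
  \prom*{\tuple{\subst{\IdExp{\svarA[i]}}{\svarA[i]}{\vtermvA_i}}_{i\in\N}}
  \vresred\prom*{\tuple{\vtermvA_i}_{i\in\N}}
  =\prom{\seq{\vtermvA}}\;,
\]
and chain this with the equality from the first step to conclude.

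The one non-routine point is verifying the hypothesis of Lemma~\ref{lemma:vresred:prom} on finiteness of $\bigcup_{i\in\N}\SVarOf{\subst{\IdExp{\svarA[i]}}{\svarA[i]}{\vtermvA_i}}$. Since $\SVarOf{\seq{\vtermvA}}$ is finite by assumption, it suffices to fix a uniform $\alpha$-renaming scheme for the $\IdExp{\svarA[i]}$: by choosing the fresh binder sequence variable(s) used in their definition once and for all (independently of $i$), we keep $\bigcup_i\SVarOf{\IdExp{\svarA[i]}}$ bounded by a fixed finite set, and the substitution adds only $\SVarOf{\seq{\vtermvA}}$. This is the only delicate bookkeeping in the proof; everything else is a direct application of the previously established lemmas.
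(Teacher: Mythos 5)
Your argument is correct, but it takes a genuinely different route from the paper's. The paper's proof of the final statement is essentially a one-liner: it rewrites $\subst{\IdStrExp{\svarA}}{\svarA}{\seq{\vtermvA}}$ as $\rsubst{\IdStrExp{\svarA}}{\svarA}{\prom{\seq{\vtermvA}}}$ using the sequence version of the Taylor expansion of substitution (\cref{lemma:subst:taylor:seq}; the paper cites \cref{lemma:subst:prom:seq} there, but the identity used is that of \cref{lemma:subst:taylor:seq}), and then concludes by item~\ref{lemma:cc:vectors:str} of the present lemma, applied to the stream vector $\prom{\seq{\vtermvA}}$. You instead push the substitution under the promotion via \cref{lemma:subst:prom:seq}, observe that the substituted sequence is computed componentwise as $\subst{\IdExp{\svarA[i]}}{\svarA[i]}{\vtermvA_i}$, reduce each component using item~\ref{lemma:cc:vectors:vterm}, and lift back through promotion with the second statement of \cref{lemma:vresred:prom}. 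So where the paper trades on the stream case of the lemma (which already packages the interaction between $\IdStrExp{\svarA}$ and an arbitrary stream vector), you trade on the value case plus the compatibility of promotion with $\vresred$; both decompositions are sound and rest on previously established machinery, the paper's being slightly more economical. One remark: the ``delicate bookkeeping'' you flag about $\alpha$-renaming the binders of the $\IdExp{\svarA[i]}$ is not actually needed, because $\SVarOf{-}$ is defined from \emph{free} variables only ($\SVarOf{\termA}=\bigcup_{\varA\in\LVarOf{\termA}}\SVarOf{\varA}$), so that $\SVarOf*{\subst{\IdExp{\svarA[i]}}{\svarA[i]}{\vtermvA_i}}\subseteq\SVarOf{\vtermvA_i}$ and the union over $i$ is bounded by $\SVarOf{\seq{\vtermvA}}$, which is finite by hypothesis, independently of any choice of representatives.
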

\begin{proof}
  Except for the second part of \cref{lemma:cc:vectors:vterm},
  \crefrange{lemma:cc:vectors:var}{lemma:cc:vectors:base} follow directly from
  the corresponding items of \cref{lemma:cc:vresred} by linearity
  and compatibility.
  If $\vtermvA\in\ValueVectors$ and $\svarA\not\in\SVarOf{\vtermvA}$,
  we can write $\vtermvA=\labs{\svarA}{\btermvA}$, and we obtain 
  $\labs{\svarA}{\appl{\vtermvA}{\IdStrExp{\svarA}}}\vresred\vtermvA$
  by \cref{lemma:cc:vectors:base}.
  Finally, for any sequence $\seq{\vtermvA}$ of value vectors
  such that $\SVarOf{\seq{\vtermvA}}$ is finite,
  \cref{lemma:subst:prom:seq} gives
  $\subst{\IdStrExp{\svarA}}{\svarA}{\seq{\vtermvA}}
  =\rsubst{\IdStrExp{\svarA}}{\svarA}{\prom{\seq{\vtermvA}}}
  $ and we conclude by \cref{lemma:cc:vectors:str}.
\end{proof}

\subsection{Two flavours of extensional Taylor expansion}

Write $\LambdaTerms$ for the set of pure, ordinary $λ$-terms,
that we denote by letters $\ltermA,\ltermB,\ltermC$,
and write \(\bred\) (resp.\ \(\etared\)) for the 
\(β\)-reduction (resp.\ \(\eta\)-reduction) relation,
induced by the application of the base case
\(\appl*{\labs{\varA}{\ltermA}}{\ltermB}\bred\subst{\ltermA}{\varA}{\ltermB}\)
(resp.\ \(\labs{\varA}{\appl{\ltermA}{\varA}}\etared\ltermA\) when
\(\varA\) fresh)
in any context.
We write \(\betaetared\) for their union.
We may also write \(\ltermA\etarev\ltermB\) when \(\ltermB\etared\ltermA\).

\begin{example}
  \label{example:IJ}
  We will use the following \(λ\)-terms as running examples:
  \begin{itemize}
    \item the identity \(\tlId\eqdef\labs{\varA}{\varA}\);
    \item the Church numeral \(\tlOne\eqdef\labs{\varA}{\labs{\varB}{\appl{\varA}{\varB}}}\);
    \item the term \(\tlJ=\appl{\tlJaux}{\tlJaux}\) with
      \(\tlJaux\eqdef\labs{\varC}{\labs{\varA}{\labs{\varB}{\appl{\varA}*{\appl{\varC}{\varC\,\varB}}}}}\)
      (a version of Wadsworth's \(\tlJ\) combinator).
  \end{itemize}

  Observe that \(\tlId\etarev\tlOne\).
  Moreover, \(\tlJ\bred\labs{\varA}{\labs{\varB}{\appl{\varA}*{\appl{\tlJ}{\varB}}}}\),
  hence
  \(
  \appl{\tlJ}{\varB}\bred^2
  \labs{\varB'}{\appl{\varB}*{\appl{\tlJ}{\varB'}}}
  \).
  It follows that
  \[
  \tlJ \bred
  \labs{\varA}{\labs{\varB_0}{\appl{\varA}*{\appl{\tlJ}{\varB_0}}}}
  \bred^2
  \labs{\varA}{\labs{\varB_0}{\appl{\varA}*{
    \labs{\varB_1}{\appl{\varB_0}*{\appl{\tlJ}{\varB_1}}}
  }}}
  \bred^2
  \labs{\varA}{\labs{\varB_0}{\appl{\varA}*{
    \labs{\varB_1}{\appl{\varB_0}*{
      \labs{\varB_2}{\appl{\varB_1}*{
        \appl{\tlJ}{\varB_2}
      }}
    }}
  }}}
  \bred^2
  \cdots
  \]
  to be compared with nested \(η\)-expansions of \(\tlId\):
  \[
  \tlId \etarev
  \labs{\varA}{\labs{\varB_0}{\appl{\varA}{{\varB_0}}}}
  \etarev
  \labs{\varA}{\labs{\varB_0}{\appl{\varA}*{
    \labs{\varB_1}{\appl{\varB_0}{{\varB_1}}}
  }}}
  \etarev
  \labs{\varA}{\labs{\varB_0}{\appl{\varA}*{
    \labs{\varB_1}{\appl{\varB_0}*{
      \labs{\varB_2}{\appl{\varB_1}{
        {\varB_2}
      }}
    }}
  }}}
  \etarev
  \cdots
  \]

  Intuitively (and this can be made formal), both sequences converge to the same
  infinite tree, which is the Böhm tree of \(\tlJ\):
  in that sense, \(\tlJ\) computes the result of infinitely many nested
  \(η\)-expansions of the identity.
\end{example}

We define the \definitive{structural Taylor expansion}
$\ExtTayExp{\ltermA}$ of an ordinary $λ$-term $\ltermA$
by induction on $\ltermA$ as follows:
\[
  \ExtTayExp{\varA}\eqdef\IdExp{\varA}
  \qquad
  \ExtTayExp{\labs{\varA}{\ltermA}}\eqdef\labs{\varA}{\ExtTayExp{\ltermA}}
  \qquad
  \ExtTayExp{\appl{\ltermA}{\ltermB}}\eqdef
  \labs{\svarB}{\appl{\ExtTayExp{\ltermA}}{\ExtArgTayExp{\ltermB}\cons\IdStrExp{\svarB}}}
\]
together with 
\(\ExtArgTayExp{\ltermA}\eqdef\prom{\ExtTayExp{\ltermA}}\),
where $\svarB$ is chosen fresh in the application case.

The \definitive{head Taylor expansion} $\HeadTayExp{\ltermA}$
is defined inductively on the head structure of $\ltermA\in\LambdaTerms$,
for which we first need to introduce some notations.
Given a sequence $\ltermsB=\tuple{\ltermB_1,\dotsc,\ltermB_k}$ of $λ$-terms,
we write the \definitive{iterated application}
$\appl{\ltermA}{\ltermsB}\eqdef\appl{\ltermA}{\ltermB_1\cdots\ltermB_k}$.
Similarly, if $\strvB=\tuple{\bagvB_1,\dotsc,\bagvB_k}\in\BagVectors^k$ is a sequence 
of bag vectors and $\strvA\in\StreamVectors$ is a stream vector,
we write the \definitive{concatenation}
$\strvB\seqcat\strvA\eqdef\bagvB_1\cons\cdots\cons\bagvB_k\cons\strvA$.

Recall that if $\ltermA$ is a $λ$-term, then:
\begin{itemize}
  \item either $\ltermA$ is an abstraction;
  \item or we can write $\ltermA=\appl{\varA}{\ltermsB}$;
  \item or we can write $\ltermA=\appl{\ltermC}{\ltermsB}$
    where $\ltermC$ is an abstraction and $\ltermsB\not=\emptyword$.
\end{itemize}
Then we define:
\[
  \HeadTayExp{\labs{\varA}{\ltermA}}
  \eqdef\labs{\varA}{\HeadTayExp{\ltermA}}
  \qquad
  \HeadTayExp{\appl{\varA}{\ltermsB}}
  \eqdef\labs{\svarB}{\appl{\varA}{\HeadArgsTayExp{\ltermsB}\seqcat\IdStrExp{\svarB}}}
  \qquad
  \HeadTayExp{\appl{\ltermC}{\ltermsB}}
  \eqdef\labs{\svarB}{\appl{\HeadTayExp{\ltermC}}{\HeadArgsTayExp{\ltermsB}\seqcat\IdStrExp{\svarB}}}
\]
together with 
$\HeadArgTayExp{\ltermA}\eqdef \prom{\HeadTayExp{\ltermA}}$
and
$\HeadArgsTayExp{\tuple{\ltermA_1,\dotsc,\ltermA_k}}
\eqdef \tuple{\HeadArgTayExp{\ltermA_1},\dotsc,\HeadArgTayExp{\ltermA_k}}$,
choosing $\svarB$ fresh in the last two cases,
and assuming $\ltermC$ is an abstraction and $\ltermsB\not=\emptyword$ in the application case.
We may also write $\HeadTayExp{\tuple{\ltermA_1,\dotsc,\ltermA_k}}
\eqdef \tuple{\HeadTayExp{\ltermA_1},\dotsc,\HeadTayExp{\ltermA_k}}$,
a tuple of value vectors.
Observe that 
$\HeadTayExp{\varA}=\HeadTayExp{\appl{\varA}{\emptyword}}
=\labs{\svarB}{\appl{\varA}{\emptyword\seqcat\IdStrExp{\svarB}}}
=\IdExp{\varA}
=\ExtTayExp{\varA}$.

Finally, if $\seq{\varA}=\tuple{\varA_1,\dotsc,\varA_k}$ we write
$\labs{\seq{\varA}}{\ltermA}\eqdef\labs{\varA_1}{\cdots\labs{\varA_k}{\ltermA}}$
for $\ltermA$ a $λ$-term or a value vector, so that:
$\ExtTayExp{\labs{\seq{\varA}}{\ltermA}}=\labs{\seq{\varA}}{\ExtTayExp{\ltermA}}$ and
$\HeadTayExp{\labs{\seq{\varA}}{\ltermA}}=\labs{\seq{\varA}}{\HeadTayExp{\ltermA}}$.

Although this plays no particular role in the present section,
as we do not rely on uniformity in our study of extensional Taylor expansion,
we obtain the same characterization of coefficients as for ordinary Taylor expansion
(the multiplicity degree \(\muldeg{\vtermA}\) was defined
on \cpageref{definition:muldeg}):\footnote{
  This result, together with \cref{lemma:NFT:coef}, will ensure that the
  \(λ\)-theory induced by the normal form of extensional Taylor expansion,
  to be considered in \cref{section:Hs},
  does not depend on the choice of the semiring of coefficients:
  it is entirely determined by the support of extensional Taylor expansion.
}
\begin{lemma}\label{lemma:taylor:coef}
  If $\vtermA\in\ExtTayExp{\ltermA}$ (resp.\ $\vtermA\in\HeadTayExp{\ltermA}$)
  then $\ExtTayExpCoef{\ltermA}{\vtermA}=\frac{1}{\muldeg{\vtermA}}$
  (resp.\ $\HeadTayExpCoef{\ltermA}{\vtermA}=\frac{1}{\muldeg{\vtermA}}$).
\end{lemma}
\begin{proof}
  As for \cref{lemma:cc:coef}, the proof is straightforward by induction on
  terms, using \cref{lemma:prom:coef} in the case of a bag.
\end{proof}

\begin{example}
  \label{example:taylor}
  We have \(\ExtTayExp{\tlId}=\HeadTayExp{\tlId}\): indeed 
  \(\HeadTayExp{\tlId}=\labs{\varA}{\labs{\svarB}{\appl{\varA}{\IdStrExp{\svarB}}}}
  =\labs{\varA}{\IdExp{\varA}}=\ExtTayExp{\tlId}
  \).
  The elements of \(\ExtTayExp{\tlId}=\HeadTayExp{\tlId}\)
  are the value terms of the shape
  \(\labs{\varA}{\vtermA}\) with \(\vtermA\in\IdExp{\varA}\):
  the smallest such term is \(\labs{\varA}{\trProj{\varA}}=\trProj{0}\),
  but we can consider arbitrarily complex ones such as
  \(\labs{\varA}{\trCCz{\varA}}=
  \labs{\svarB}{\appl{\svarB[0]}{\mset{\trProj{\svarB[1]}}\cons\emptystream}}\),
  or
  \(\labs{\varA}{\trCCp{\varA}}=
  \labs{\svarB}{\appl{\svarB[0]}{\trCCz{\svarB[1]}^2\cons\emptystream}}\),
  all in normal form.
  This contrasts with ordinary Taylor expansion,
  where \(\TayExp{\varA}=\varA\) and \(\TayExp{\tlId}=\tlId\).

  We have \(\HeadTayExp{\tlOne}
  =\labs{\varA}{\labs{\varB}{\labs{\svarC}{\appl{\varA}{\IdBagExp{\varB}\cons\IdStrExp{\svarC}}}}}
  =\labs{\varA}{\labs{\svarC}{\appl{\varA}{\IdStrExp{\svarC}}}}
  =\HeadTayExp{\tlId}\).
  On the other hand, \(\ExtTayExp{\tlOne}
  =\labs{\varA}{\labs{\varB}{\labs{\svarC}{\appl{\IdExp{\varA}}{\IdBagExp{\varB}\cons\IdStrExp{\svarC}}}}}
  =\labs{\varA}{\labs{\svarC}{\appl{\IdExp{\varA}}{\IdStrExp{\svarC}}}}
  \).
  Its elements are of the shape
  \(\labs{\varA}{\labs{\svarC}{\appl{\vtermA}{\strC}}}\)
  with \(\vtermA\in\IdExp{\varA}\) and \(\strC\in\IdStrExp{\svarC}\):
  the smallest one is thus
  \(\labs{\varA}{\labs{\svarC}{\appl{\trProj{\varA}}{\emptystream}}}\),
  whose normal form is
  \(\trProj{0}\in\HeadTayExp{\tlOne}\).
  More generally, by \cref{lemma:cc}, the normal form of such an element is \(0\),
  unless \(\vtermA=\CCVar{\varA}{\strC}\), in which case its normal form
  is \(\muldeg{\strC}\labs{\varA}{\labs{\svarC}{\appl{\varA}{\strC}}}\).
  This ensures that 
  \(\ExtTayExp{\tlOne}\vresred\HeadTayExp{\tlOne}\),
  a particular instance of \cref{theorem:exttohead}
  that we establish below.

  For \(\tlJ\) and \(\tlJaux\), 
  we will only consider head Taylor expansion
  to keep the notations manageable.
  We obtain
  \(\HeadTayExp{\tlJ}=
  \labs{\svarC}{\appl{\HeadTayExp{\tlJaux}}{\HeadArgTayExp{\tlJaux}\cons\IdStrExp{\svarC}}}\),
  where \(\HeadTayExp{\tlJaux}=
  \labs{\varC}{\labs{\varA}{\labs{\varB}{\labs{\svarC_1}{
    \appl{\varA}{\prom*[\big]{\labs{\svarC_2}{
      \appl{\varC}{\IdBagExp{\varC}\cons\IdBagExp{\varB}\cons\IdStrExp{\svarC_2}}
    }}\cons{\IdStrExp{\svarC_1}}}
  }}}}\).
  Elements of \(\HeadTayExp{\tlJaux}\) are all in normal form, and 
  the smallest one is 
  \[
    \trJaux{0}\eqdef\labs{\varC}{\labs{\varA}{\labs{\varB}{\labs{\svarC_1}{
      \appl{\varA}{\emptystream}
    }}}}
    =\labs{\varC}{\labs{\varA}{\labs{\svarC}{
      \appl{\varA}{\emptystream}
    }}}
    =\labs{\varC}{\labs{\varA}{\trProj{\varA}}}
    =\labs{\varC}{\trProj{0}}
  \]
  so that the smallest element of \(\HeadTayExp{\tlJ}\) 
  is \(\labs{\svarC}{\appl{\trJaux{0}}{\emptystream}}\), for which we observe
  \(
  \labs{\svarC}{\appl{\trJaux{0}}{\emptystream}}
  \sresred
  \labs{\svarC}{\appl{\trProj{0}}{\emptystream}}
  \sresred
  0
  \).
  To get a non-zero normal form, we can observe that
  \(
    \appl{\trJaux{0}}{\emptybag\cons\mset{\trProj{\varA}}\cons\emptystream}
    \sresred
    \appl{\trProj{0}}{\mset{\trProj{\varA}}\cons\emptystream}
    \sresredRT
    \appl{\varA}{\emptystream}
  \)
  so that
  \[
    \HeadTayExp{\tlJ}
    \ni
    \labs{\svarC}{\appl{\trJaux{0}}{\emptybag\cons\mset{\trProj{\svarC[0]}}\cons\emptystream}}
    =
    \labs{\varA}{\labs{\svarC}{\appl{\trJaux{0}}{\emptybag\cons\mset{\trProj{\varA}}\cons\emptystream}}}
    \sresredRT
    \labs{\varA}{\labs{\svarC}{
      \appl{\varA}{\emptystream}
    }}=\trProj{0}
    \in\ExtTayExp{\tlId}
    \;.
  \]

  This suggests defining
  \[
    \trJaux{1}\eqdef
    \labs{\varC}{\labs{\varA}{\labs{\varB}{\labs{\svarC_1}{
            \appl{\varA}{\mset[\big]{\labs{\svarC_2}{
                \appl{\varC}{\emptybag\cons\mset{\trProj{\varB}}\cons\emptystream}
            }}\cons{\emptystream}}
    }}}}
    \in\HeadTayExp{\tlJaux}
  \]
  to obtain
  (recalling from \cref{example:cc} that 
  \(\CCVar{\varA}{\mset{\trProj{\varB}}\cons\emptystream}=\trCCz{\varA}\)):
  \begin{eqnarray*}
  \HeadTayExp{\tlJ}\ni
  \labs{\svarC}{
    \appl{\trJaux{1}}{\mset{\trJaux{0}}\cons\mset{\trCCz{\svarC[0]}}\cons\mset{\trProj{\svarC[1]}}\cons\emptystream}
  }
  &=&
  \labs{\varA}{\labs{\varB}{\labs{\svarC}{
    \appl{\trJaux{1}}{\mset{\trJaux{0}}\cons\mset{\trCCz{\varA}}\cons\mset{\trProj{\varB}}\cons\emptystream}
  }}}
  \\
  &\sresredRT&
  \labs{\varA}{\labs{\varB}{\labs{\svarC_1}{
    \appl{\trCCz{\varA}}{
      \mset[\big]{
        \labs{\svarC_2}{\appl{\trJaux{0}}{
          \emptybag\cons\mset[\big]{\rsubst{\trProj{\varB}}{\varB}{\mset{\trProj{\varB}}}}
          \cons\emptystream
        }}
      }\cons\emptystream
    }
  }}}
  \\
  &\sresredRT&
  \labs{\varA}{\labs{\varB}{\labs{\svarC_1}{
    \appl{\trCCz{\varA}}{
      \mset[\big]{
        \labs{\svarC_2}{\appl{\trJaux{0}}{
            \emptybag\cons\mset{\trProj{\varB}}\cons\emptystream
        }}
      }\cons\emptystream
    }
  }}}
  \\
  &\sresredRT&
  \labs{\varA}{\labs{\varB}{\labs{\svarC_1}{
    \appl{\trCCz{\varA}}{
      \mset[\big]{
        \labs{\svarC_2}{\appl{\varB}{\emptystream}}
      }\cons\emptystream
    }
  }}}
  \\
  &\sresredRT&
  \labs{\varA}{\labs{\varB}{\labs{\svarC_1}{
    \appl{\varA}{
      \mset[\big]{
        \trProj{\varB}
      }\cons\emptystream
    }
  }}}
  \in\ExtTayExp{\tlId}
  \;.
  \end{eqnarray*}

  These examples are hints to the fact that
  \(\HeadTayExp{\tlJ}\vresred\ExtTayExp{\tlId}\),
  as we will establish below (\cref{example:NFJ}).
\end{example}

We show that the reduction of resource vectors
simulates both \(β\)-reduction and \(η\)-reduction, through
structural Taylor expansion.
The case of \(η\)-reduction is easy:
\begin{theorem}\label{theorem:simulation:eta}
  If $\ltermA\etared\ltermA'$ then
  $\ExtTayExp{\ltermA}\vresred\ExtTayExp{\ltermA'}$.
\end{theorem}
\begin{proof}
  By compatibility,
  it is sufficient to consider the base case:
  $\ltermA=\labs{\varA}{\appl{\ltermA'}{\varA}}$ with $\varA$ fresh.
  We have
  \(\ExtTayExp{\ltermA}
  =\labs{\varA}{\labs{\svarB}{\appl{\ExtTayExp{\ltermA'}}{\prom*{\IdExp{\varA}}\cons\IdStrExp{\svarB}}}}
  =\labs{\svarB}{\appl{\ExtTayExp{\ltermA'}}{\prom*{\IdExp{\svarB[0]}}\cons\shiftup*{\IdStrExp{\svarB}}{\svarB}}}
  =\labs{\svarB}{\appl{\ExtTayExp{\ltermA'}}{\IdStrExp{\svarB}}}
  \)
  by \cref{lemma:cc:shift}, and we conclude by \cref{lemma:cc:vectors}.
\end{proof}

To deal with \(β\)-reduction, we first need to consider the interplay between
structural Taylor expansion and substitution.
As we have announced, they do not commute on the nose, and the
analogue of \cref{eqn:substitution} (\cpageref{eqn:substitution})
is a reduction rather than an identity:
\begin{lemma}\label{lemma:exttayexp:subst}
  For all $λ$-terms $\ltermA$ and $\ltermB$,
  $\subst{\ExtTayExp{\ltermA}}{\varA}{\ExtTayExp{\ltermB}}
  \vresred\ExtTayExp{\subst{\ltermA}{\varA}{\ltermB}}$.
\end{lemma}
\begin{proof}
  The proof is by induction on $\ltermA$.

  If $\ltermA=\varA$ then 
  \(
    \subst{\ExtTayExp{\ltermA}}{\varA}{\ExtTayExp{\ltermB}}
    =\subst{\IdExp{\varA}}{\varA}{\ExtTayExp{\ltermB}}
    \vresred\ExtTayExp{\ltermB}
  \)
  by \cref{lemma:cc:vectors} and we conclude since
  $\subst{\ltermA}{\varA}{\ltermB}=\ltermB$.

  If $\ltermA=\varB\not=\varA$ then 
  \(
    \subst{\ExtTayExp{\ltermA}}{\varA}{\ExtTayExp{\ltermB}}
    =\subst{\IdExp{\varB}}{\varA}{\ExtTayExp{\ltermB}}
    =\IdExp{\varB}
    =\ExtTayExp{\varB}
  \)
  and we conclude since
  $\subst{\ltermA}{\varA}{\ltermB}=\varB$.

  If $\ltermA=\labs{\varC}{\ltermA'}$ (choosing \(\varC\not\in\LVarOf{\ltermB}\cup\set{\varA}\)) then 
  \begin{align*}
  \subst{\ExtTayExp{\ltermA}}{\varA}{\ExtTayExp{\ltermB}}
  &=\subst*{\labs{\varC}{\ExtTayExp{\ltermA'}}}{\varA}{\ExtTayExp{\ltermB}}
  \\
  &=\labs{\varC}{\subst{\ExtTayExp{\ltermA'}}{\varA}{\ExtTayExp{\ltermB}}}
  \\
  &\vresred\labs{\varC}{\ExtTayExp{\subst{\ltermA'}{\varA}{\ltermB}}}
  &&\text{by induction hypothesis}
  \\
  &=\ExtTayExp{\subst{\ltermA}{\varA}{\ltermB}}\;.
  \end{align*}

  If $\ltermA=\appl{\ltermA'}{\ltermA''}$ then
  \begin{align*}
  &\subst{\ExtTayExp{\ltermA}}{\varA}{\ExtTayExp{\ltermB}}
  \\
  &=\subst*{\labs{\svarB}{\appl{\ExtTayExp{\ltermA'}}{\prom*{\ExtTayExp{\ltermA''}}\cons\IdStrExp{\svarB}}}}{\varA}{\ExtTayExp{\ltermB}}
  \\
  &=\labs{\svarB}{\appl*{\subst{\ExtTayExp{\ltermA'}}{\varA}{\ExtTayExp{\ltermB}}}{\prom*{\subst{\ExtTayExp{\ltermA''}}{\varA}{\ExtTayExp{\ltermB}}}\cons\IdStrExp{\svarB}}}
  &&\text{by \cref{lemma:subst:prom}}
  \\
  &\vresred\labs{\svarB}{\appl*{\ExtTayExp{\subst{\ltermA'}{\varA}{\ltermB}}}{\prom{\ExtTayExp{\subst{\ltermA''}{\varA}{\ltermB}}}\cons\IdStrExp{\svarB}}}
  &&\text{by induction hypothesis}
  \\
  &=\ExtTayExp{\appl{\subst{\ltermA'}{\varA}{\ltermB}}{\subst{\ltermA''}{\varA}{\ltermB}}}\;.
  \\
  &=\ExtTayExp{\subst{\ltermA}{\varA}{\ltermB}}\;.
  && \qedhere
  \end{align*}
\end{proof}

\begin{theorem}\label{theorem:simulation:beta}
  If $\ltermA\bred\ltermA'$ then
  $\ExtTayExp{\ltermA}\vresredRT\ExtTayExp{\ltermA'}$.
\end{theorem}
\begin{proof}
  By compatibility,
  it is sufficient to consider the case of a redex,
  $\ltermA=\appl*{\labs{\varA}{\ltermB}}{\ltermC}$:
  \begin{align*}
    \ExtTayExp{\ltermA}
    =
    \labs{\svarB}{
      \appl*{\labs{\varA}{\ExtTayExp{\ltermB}}}
      {\prom{\ExtTayExp{\ltermC}}\cons\IdStrExp{\svarB}}
    }
    &\vresred
    \labs{\svarB}{
      \appl{\rsubst{\ExtTayExp{\ltermB}}{\varA}{\prom{\ExtTayExp{\ltermC}}}}
      {\IdStrExp{\svarB}}
    }
    \\
    &=
    \labs{\svarB}{
      \appl{\subst{\ExtTayExp{\ltermB}}{\varA}{\ExtTayExp{\ltermC}}}
      {\IdStrExp{\svarB}}
    }
    &&\text{by \cref{lemma:subst:taylor}}
    \\
    &\vresred
    \labs{\svarB}{
      \appl{\ExtTayExp{\subst{\ltermB}{\varA}{\ltermC}}}
      {\IdStrExp{\svarB}}
    }
    &&\text{by \cref{lemma:exttayexp:subst}}
    \\
    &\vresred\ExtTayExp{\subst{\ltermB}{\varA}{\ltermC}}
    &&\text{by \cref{lemma:cc:vectors}}\;.
    \qedhere
  \end{align*}
\end{proof}

Versions of \cref{theorem:simulation:eta,theorem:simulation:beta}
also hold for \(\HeadTayExp[\nodelim]{}\), although the proof is a bit more
contorted, due to the necessary inspection of the head structure of terms:
for that reason, we postpone this result to the end of the present
section (\cref{theorem:simulation:head}).
Instead, we proceed to showing that both flavours
of extensional Taylor expansion have the same normal forms.
We will rely on the following technical lemma, which 
will also be used repeatedly later on:
\begin{lemma}\label{lemma:taylor:head}
  For every $\ltermA\in\LambdaTerms$ and $\ltermsB\in\LambdaTerms^k$
  such that $\svarB\not\in\SVarOf{\ltermA}\cup\SVarOf{\ltermsB}$,
  $\labs{\svarB}{\appl{\HeadTayExp{\ltermA}}{\HeadArgsTayExp{\ltermsB}\seqcat\IdStrExp{\svarB}}}
  \vresredRT\HeadTayExp{\appl{\ltermA}{\ltermsB}}$.
\end{lemma}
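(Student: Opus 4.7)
The plan is to proceed by case analysis on the head structure of $\ltermA$, following the three cases used to define $\HeadTayExp{-}$: $\ltermA$ is an abstraction, $\ltermA = \appl{\varA}{\ltermsC}$ for some variable $\varA$, or $\ltermA = \appl{\ltermC}{\ltermsC}$ for some abstraction $\ltermC$ and $\ltermsC \not= \emptyword$.

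If $\ltermA$ is an abstraction, there are two sub-cases. When $\ltermsB \not= \emptyword$, the term $\appl{\ltermA}{\ltermsB}$ falls in the third defining clause of $\HeadTayExp{-}$, which yields exactly the left-hand side, so we conclude by reflexivity of $\vresredRT$. When $\ltermsB = \emptyword$, we have $\HeadArgsTayExp{\ltermsB} = \emptyword$, so the left-hand side simplifies to $\labs{\svarB}{\appl{\HeadTayExp{\ltermA}}{\IdStrExp{\svarB}}}$; since $\HeadTayExp{\ltermA}$ is a value vector and $\svarB \not\in \SVarOf{\HeadTayExp{\ltermA}}$ by hypothesis, the second part of \cref{lemma:cc:vectors} gives $\labs{\svarB}{\appl{\HeadTayExp{\ltermA}}{\IdStrExp{\svarB}}} \vresred \HeadTayExp{\ltermA} = \HeadTayExp{\appl{\ltermA}{\emptyword}}$.

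For the two remaining cases, $\HeadTayExp{\ltermA}$ has the common shape $\labs{\svarA}{\appl{\hexprA}{\HeadArgsTayExp{\ltermsC} \seqcat \IdStrExp{\svarA}}}$ for a fresh $\svarA$, where $\hexprA$ is either $\varA$ or $\HeadTayExp{\ltermC}$. The key step is to fire one big-step $β$-redex inside the outer abstraction, using \cref{lemma:vresred:context}: this rewrites the left-hand side to $\labs{\svarB}{\rsubst{\appl{\hexprA}{\HeadArgsTayExp{\ltermsC} \seqcat \IdStrExp{\svarA}}}{\svarA}{\HeadArgsTayExp{\ltermsB} \seqcat \IdStrExp{\svarB}}}$. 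Since $\svarA$ is fresh with respect to $\hexprA$ and to $\HeadArgsTayExp{\ltermsC}$, unfolding the definition of stream substitution (through the application clause and then iteratively through the cons clauses) makes all the partition sums collapse to a unique non-zero summand, yielding the equality $\labs{\svarB}{\appl{\hexprA}{\HeadArgsTayExp{\ltermsC} \seqcat \rsubst{\IdStrExp{\svarA}}{\svarA}{\HeadArgsTayExp{\ltermsB} \seqcat \IdStrExp{\svarB}}}}$. Applying \cref{lemma:cc:vectors} (the item on $\rsubst{\IdStrExp{\svarA}}{\svarA}{\strvA} \vresred \strvA$), together with contextual closure, we further reduce to $\labs{\svarB}{\appl{\hexprA}{\HeadArgsTayExp{\ltermsC} \seqcat \HeadArgsTayExp{\ltermsB} \seqcat \IdStrExp{\svarB}}}$, which equals $\HeadTayExp{\appl{\ltermA}{\ltermsB}}$ by definition and since $\HeadArgsTayExp{\ltermsC \seqcat \ltermsB} = \HeadArgsTayExp{\ltermsC} \seqcat \HeadArgsTayExp{\ltermsB}$.

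The main obstacle is the bookkeeping in the substitution simplification step: verifying that, because $\svarA$ is fresh in $\hexprA$ and in each $\bagvC_i$ of $\HeadArgsTayExp{\ltermsC}$, the stream-substitution partition sum collapses so that the prefix $\HeadArgsTayExp{\ltermsC}$ is transmitted unchanged and only $\IdStrExp{\svarA}$ actually absorbs $\HeadArgsTayExp{\ltermsB} \seqcat \IdStrExp{\svarB}$. The rest is a straightforward combination of $β$-reduction, the contextuality lemmas for $\vresred$, and the identity-absorption property of the $η$-expanded variable.
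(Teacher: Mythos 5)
Your proof is correct and follows essentially the same route as the paper's: the same case split on the head structure of $\ltermA$ (abstraction handled by reflexivity for $k>0$ and by the identity-absorption part of \cref{lemma:cc:vectors} for $k=0$; the two head-application cases handled by firing the outer $\beta$-redex and absorbing the new arguments into $\IdStrExp{\svarB}$). The only cosmetic difference is in the middle step: where you push the resource substitution through the application and cons constructors by hand (using freshness of the bound sequence variable to collapse the partition sums) and then invoke $\rsubst{\IdStrExp{\svarA}}{\svarA}{\strvA}\vresred\strvA$, the paper first converts to ordinary substitution via \cref{lemma:subst:taylor:seq} and applies the final statement of \cref{lemma:cc:vectors} — the two justifications are interderivable.
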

\begin{proof}
  If $k=0$, then we conclude directly by \cref{lemma:cc:vectors}.
  If $\ltermA$ is an abstraction and $k>0$, then we apply the reflexivity of $\vresredRT$.

  If $\ltermA=\appl{\varC}{\ltermsC}$ then
  \(
    \HeadTayExp{\ltermA}
    =\labs{\svarB}{\appl{\varC}{\HeadArgsTayExp{\ltermsC}\seqcat\IdStrExp{\svarB}}}
  \)
  hence
  \begin{align*}
    \labs{\svarB}{\appl{\HeadTayExp{\ltermA}}{\HeadArgsTayExp{\ltermsB}\seqcat\IdStrExp{\svarB}}}
    &\vresred
    \labs{\svarB}{\rsubst*{\appl{\varC}{\HeadArgsTayExp{\ltermsC}\seqcat\IdStrExp{\svarB}}}{\svarB}{\HeadArgsTayExp{\ltermsB}\seqcat\IdStrExp{\svarB}}}
    \\
    &=
    \labs{\svarB}{\subst*{\appl{\varC}{\HeadArgsTayExp{\ltermsC}\seqcat\IdStrExp{\svarB}}}{\svarB}{\HeadArgsTayExp{\ltermsB}\seqcat\IdSeqExp{\svarB}}}
    &&\text{by \cref{lemma:subst:taylor:seq}}
    \\
    &=
    \labs{\svarB}{\appl{\varC}{\HeadArgsTayExp{\ltermsC}\seqcat\subst{\IdStrExp{\svarB}}{\svarB}{\HeadArgsTayExp{\ltermsB}\seqcat\IdSeqExp{\svarB}}}}
    \\
    &\vresred
    \labs{\svarB}{\appl{\varC}{\HeadArgsTayExp{\ltermsC}\seqcat\HeadArgsTayExp{\ltermsB}\seqcat\IdSeqExp{\svarB}}}
    &&\text{by \cref{lemma:cc:vectors}}
    \\
    &=
    \HeadTayExp{\appl{\varC}{\ltermsC\seqcat\ltermsB}}
    \;.
  \end{align*}

  If $\ltermA=\appl{\ltermA'}{\ltermsC}$ where $\ltermA'$ is an abstraction and
  $\LengthOf{\ltermsC}>0$, then 
  \(
    \HeadTayExp{\ltermA}
    =\labs{\svarB}{\appl{\HeadTayExp{\ltermsA'}}{\HeadArgsTayExp{\ltermsC}\seqcat\IdStrExp{\svarB}}}
  \)
  and we reason as in the previous case.
\end{proof}

\newpage
\begin{theorem}\label{theorem:exttohead}
  For every $λ$-term $\ltermA$,
  $\ExtTayExp{\ltermA}\vresredRT\HeadTayExp{\ltermA}$.
\end{theorem}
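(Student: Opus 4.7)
The plan is to proceed by straightforward structural induction on $\ltermA$, using \Cref{lemma:taylor:head} as the main workhorse for the application case.

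The base case $\ltermA = \varA$ is immediate: both $\ExtTayExp{\varA}$ and $\HeadTayExp{\varA}$ are equal to $\IdExp{\varA}$ (the latter equality being observed right after the definition of $\HeadTayExp$), so reflexivity of $\vresredRT$ suffices. The abstraction case $\ltermA = \labs{\varA}{\ltermA'}$ reduces immediately to the induction hypothesis applied to $\ltermA'$, pushed under the binder via contextuality (\Cref{lemma:vresred:context}, item~\ref{lemma:vresred:context:vterms}).

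The application case $\ltermA = \appl{\ltermA'}{\ltermB}$ is the interesting one. By definition, $\ExtTayExp{\ltermA} = \labs{\svarB}{\appl{\ExtTayExp{\ltermA'}}{\prom{\ExtTayExp{\ltermB}}\cons\IdStrExp{\svarB}}}$ with $\svarB$ fresh. The induction hypothesis gives two reductions $\ExtTayExp{\ltermA'} \vresredRT \HeadTayExp{\ltermA'}$ and $\ExtTayExp{\ltermB} \vresredRT \HeadTayExp{\ltermB}$. I would promote the second via \Cref{lemma:vresred:prom} to get $\prom{\ExtTayExp{\ltermB}} \vresredRT \prom{\HeadTayExp{\ltermB}}$, then apply contextuality (\Cref{lemma:vresred:context}, items \ref{lemma:vresred:context:vterms}, \ref{lemma:vresred:context:bterms}, \ref{lemma:vresred:context:strs}) under the abstraction, application, and cons to obtain
\[
  \ExtTayExp{\ltermA}
  \vresredRT
  \labs{\svarB}{\appl{\HeadTayExp{\ltermA'}}{\prom{\HeadTayExp{\ltermB}}\cons\IdStrExp{\svarB}}}\;.
\]

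The key observation is that $\prom{\HeadTayExp{\ltermB}}\cons\IdStrExp{\svarB}$ is literally $\HeadArgsTayExp{\tuple{\ltermB}}\seqcat\IdStrExp{\svarB}$, so the right-hand side matches the hypothesis of \Cref{lemma:taylor:head} with $k=1$ and $\ltermsB=\tuple{\ltermB}$, yielding a further reduction to $\HeadTayExp{\appl{\ltermA'}{\ltermB}}$. Transitivity of $\vresredRT$ then closes the case. No single step is difficult: the whole argument is really just bookkeeping that glues together the induction hypothesis, compatibility of promotion with reduction, and \Cref{lemma:taylor:head}. If there is any subtlety, it is only in making sure the fresh $\svarB$ chosen in the definition of $\ExtTayExp{\ltermA}$ also satisfies the freshness side conditions of \Cref{lemma:taylor:head}, which is automatic up to $α$-equivalence.
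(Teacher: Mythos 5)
Your proof is correct and follows essentially the same route as the paper's: structural induction with the variable and abstraction cases handled by the observed equality and contextuality respectively, and the application case resolved by the induction hypothesis, compatibility of promotion with reduction, and \Cref{lemma:taylor:head} instantiated at $k=1$. The only cosmetic difference is that you spell out the bookkeeping (including the freshness of $\svarB$) in slightly more detail than the paper does.
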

\begin{proof}
  We reason by induction on $\ltermA$.
  If $\ltermA=\varA\in\LVar$, we have already observed that 
  $\HeadTayExp{\ltermA}=\IdExp{\varA}=\ExtTayExp{\ltermA}$.
  If $\ltermA$ is an abstraction,
  we apply the induction hypothesis.
  Otherwise, $\ltermA=\appl{\ltermB}{\ltermC}$ so that:
  \begin{align*}
    \ExtTayExp{\ltermA}
    =
    \labs{\svarB}{\appl{\ExtTayExp{\ltermB}}{\prom{\ExtTayExp{\ltermC}}\cons\IdStrExp{\svarB}}}
    &\vresredRT
    \labs{\svarB}{\appl{\HeadTayExp{\ltermB}}{\prom{\HeadTayExp{\ltermC}}\cons\IdStrExp{\svarB}}}
    &&\qquad\text{by induction hypothesis}
    \\
    &\vresredRT
    \HeadTayExp{\appl{\ltermB}{\ltermC}}
    &&\qquad\text{by \cref{lemma:taylor:head}.}
    \qedhere
  \end{align*}
\end{proof}

\begin{corollary}\label{corollary:NFT}
  For any $λ$-term $\ltermA$, $\NF{\ExtTayExp{\ltermA}}=\NF{\HeadTayExp{\ltermA}}$.
  If moreover $\ltermA\eqBEta\ltermA'$
  then $\NF{\HeadTayExp{\ltermA}}=\NF{\HeadTayExp{\ltermA'}}$.
  In particular if $\ltermA$ is normalizable then
  $\NF{\HeadTayExp{\ltermA}}=\HeadTayExp{\NF{\ltermA}}$.
\end{corollary}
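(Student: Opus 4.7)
The plan is to handle the three claims in order, reducing each to tools established above. For the first equation, \cref{theorem:exttohead} gives $\ExtTayExp{\ltermA} \vresredRT \HeadTayExp{\ltermA}$, and iterating \cref{lemma:vresred:NF} along this chain of $\vresred$-steps yields $\NF{\ExtTayExp{\ltermA}} = \NF{\HeadTayExp{\ltermA}}$. For the second equation, the two theorems immediately preceding this corollary give $\ExtTayExp{\ltermA} \vresredRT \ExtTayExp{\ltermA'}$ in both the $\beta$ and $\eta$ cases; \cref{lemma:vresred:NF} then yields $\NF{\ExtTayExp{\ltermA}} = \NF{\ExtTayExp{\ltermA'}}$, and rewriting both sides via the first equation produces $\NF{\HeadTayExp{\ltermA}} = \NF{\HeadTayExp{\ltermA'}}$.

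For the final assertion, the plan is to iterate the $\beta$ case of the second equation along a reduction sequence $\ltermA \bred^* \NF{\ltermA}$ witnessing normalizability (iteration is just transitivity of equality), yielding $\NF{\HeadTayExp{\ltermA}} = \NF{\HeadTayExp{\NF{\ltermA}}}$. It then suffices to establish that $\HeadTayExp{\ltermB}$ is already $\resred$-irreducible whenever $\ltermB$ is a $\beta$-normal $\lambda$-term: applied to $\NF{\ltermA}$, this gives $\NF{\HeadTayExp{\NF{\ltermA}}} = \HeadTayExp{\NF{\ltermA}}$, from which the stated identity follows.

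The main obstacle is this last auxiliary lemma. I would prove it by induction on $\beta$-normal forms, exploiting the standard decomposition $\ltermB = \labs{\seq{\varA}}{\appl{\varB}{\ltermsC}}$ with each $\ltermC_i$ itself $\beta$-normal: unfolding $\HeadTayExp{\ltermB}$ produces value terms with variable head $\varB$, applied to a stream whose initial bags lie in $\prom{\HeadTayExp{\ltermC_i}}$ (normal by the induction hypothesis, since $\prom{-}$ preserves normality of the underlying support pointwise) and whose tail lies in $\IdStrExp{\svarB}$. This reduces the argument to a side lemma, to be proved separately: the supports of $\IdExp{\varA}$, $\IdBagExp{\varA}$ and $\IdStrExp{\svarA}$ consist only of $\resred$-irreducible resource expressions. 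The latter I would prove by induction on the size of resource terms, using that every $\vtermA \in \IdExp{\varA}$ has shape $\labs{\svarB}{\appl{\varA}{\strA}}$ with variable head, and every non-empty bag of $\strA$ is a strict sub-expression, hence $\resred$-irreducible by the size induction.
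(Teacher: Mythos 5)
Your proof is correct and follows exactly the route the paper intends: the paper states this as a corollary with no separate argument, the first two claims being immediate from \cref{theorem:exttohead}, the two preceding simulation theorems, and \cref{lemma:vresred:NF}. For the last claim you correctly identify and sketch the one piece the paper leaves implicit, namely that $\HeadTayExp{\ltermB}$ has $\resred$-irreducible support when $\ltermB$ is $\beta$-normal (every base subterm then has a variable head, including those coming from $\IdStrExp{\svarB}$), so your argument is complete.
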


For each $\ltermA\in\LambdaTerms$, we write $\NFTayExp{\ltermA}$ for
$\NF{\ExtTayExp{\ltermA}}$:
by \cref{corollary:NFT}, we also have
$\NFTayExp{\ltermA}=\NF{\HeadTayExp{\ltermA}}$.
In our introduction, we announced that extensional Taylor expansion would
provide a practical alternative to Nakajima trees
\cites{DBLP:conf/lambda/Nakajima75},
\emph{i.e.}\ Böhm trees of \(λ\)-terms quotiented by infinitely
nested infinite \(η\)-expansions.
We will show in the next section that, indeed, the normalization of Taylor
expansion induces the same \(λ\)-theory \(\Hs\) as Nakajima trees, but here we
first demonstrate extensional Taylor expansion at work, by showing that
\(\NFTayExp{\tlJ}=\ExtTayExp{\tlId}\).
It is intuitively obvious that \(\tlJ\) converges to the result of applying
infinitely nested \(η\)-expansions to \(\tlId\), but the latter notion is not so
easy to describe formally nor to reason about;
on the other hand, we show that reasoning inductively on extensional
resource terms is sufficient to obtain the announced identity.

\begin{example}
  \label{example:NFJ}
  We prove \(\NFTayExp{\tlJ}=\ExtTayExp{\tlId}\).
  Recall that
  \(\ExtTayExp{\tlId}=\labs{\varA}{\IdExp{\varA}}\)
  and,
  by \cref{corollary:NFT},
  \(
  \NFTayExp{\tlJ}
  =\NFTayExp{\labs{\varA}{\appl{\tlJ}{\varA}}}
  =\labs{\varA}{\NFTayExp{\appl{\tlJ}{\varA}}}
  \):
  it will thus be sufficient to prove that
  \(\NFTayExp{\appl{\tlJ}{\varA}}=\IdExp{\varA}\)
  for any variable \(\varA\).
  Moreover recall from \cref{example:IJ} that 
  \(\appl{\tlJ}{\varA}\eqBeta\labs{\varB}{\appl{\varA}*{\appl{\tlJ}{\varB}}}\):
  writing
  \(\tvJ{\varA}\eqdef\NFTayExp{\appl{\tlJ}{\varA}}\)
  and
  \(\tvJBag{\varA}\eqdef\prom*{\tvJ{\varA}}\),
  \cref{corollary:NFT} again entails
  \(
  \tvJ{\varA}
  =\NFTayExp{\labs{\varB}{\appl{\varA}*{\appl{\tlJ}{\varB}}}}
  =\labs{\varB}{\labs{\svarC}{\appl{\varA}{\tvJBag{\varB}\cons\IdStrExp{\svarC}}}}
  =\labs{\svarC}{\appl{\varA}{\tvJBag{\svarC[0]}\cons\shiftup{\IdStrExp{\svarC}}}{\svarC}}
  \).
  We establish by induction on resource terms that:
  \begin{itemize}
    \item for any value term \(\vtermA\),
      \(\IdExpCoef{\varA}{\vtermA}=\Coef{\tvJ{\varA}}{\vtermA}\);
    \item for any bag term \(\bagA\),
      \(\IdBagExpCoef{\varA}{\bagA}=\Coef{\tvJBag{\varA}}{\bagA}\);
    \item for any stream term \(\strA\),
      \(\IdStrExpCoef{\svarA}{\strA}=
      \Coef*{\tvJBag{\svarA[0]}\cons\shiftup{\IdStrExp{\svarA}}{\svarA}}{\strA}\).
  \end{itemize}

  Consider a value term \(\vtermA=\labs{\svarC}{\appl{\varB}{\strB}}\).
  If \(\varB\not=\varA\), then 
  \(
  \Coef{\tvJ{\varA}}{\vtermA}
  =0
  =\IdExpCoef{\varA}{\vtermA}
  \).
  Otherwise, by induction hypothesis on \(\strB\),
  \(
  \Coef{\tvJ{\varA}}{\vtermA}
  =\Coef*{\tvJBag{\svarC[0]}\cons\shiftup{\IdStrExp{\svarC}}{\svarC}}{\strB}
  =\IdStrExpCoef{\svarC}{\strB}
  =\IdExpCoef{\varA}{\vtermA}
  \).
  Now consider a bag term \(\bagA=\mset{\vtermA_1,\dotsc,\vtermA_k}\):
  applying the induction to each \(\vtermA_i\), we obtain
  \(
  \Coef{\tvJBag{\varA}}{\bagA}
  =\frac{1}{k!}\prod_{i=1}^k \Coef{\tvJ{\varA}}{\vtermA_i}
  =\frac{1}{k!}\prod_{i=1}^k \IdExpCoef{\varA}{\vtermA_i}
  =\IdBagExpCoef{\varA}{\bagA}
  \).
  Finally consider a stream term \(\strA\).
  If \(\strA=\emptystream\) then 
  \(
  \Coef*{\tvJBag{\svarA[0]}\cons\shiftup{\IdStrExp{\svarA}}{\svarA}}{\strA}
  =
  \Coef{\tvJBag{\svarA[0]}}{\emptybag}
  \times
  \Coef{\shiftup{\IdStrExp{\svarA}}{\svarA}}{\emptystream}
  =1
  =\IdStrExpCoef{\svarA}{\strA}
  \).
  Otherwise, we can write \(\strA=\bagA\cons\strB\)
  and apply the induction hypothesis to \(\bagA\),
  to obtain:
  \(
  \Coef*{\tvJBag{\svarA[0]}\cons\shiftup{\IdStrExp{\svarA}}{\svarA}}{\strA}
  =
  \Coef{\tvJBag{\svarA[0]}}{\bagA}
  \times
  \Coef{\shiftup{\IdStrExp{\svarA}}{\svarA}}{\strB}
  =
  \IdBagExpCoef{\svarA[0]}{\bagA}
  \times
  \Coef{\shiftup{\IdStrExp{\svarA}}{\svarA}}{\strB}
  = 
  \IdStrExpCoef{\svarA}{\strA}
  \)
  by \cref{lemma:cc:shift}.
\end{example}

We conclude this section by showing that head Taylor expansion,
like structural Taylor expansion,
sends \(βη\)-reduction steps to resource reductions on value
vectors.
This result can be taken as an indication of the robustness of the notion,
but is not actually needed: \cref{corollary:NFT}
is sufficient to ensure that \(\NFTayExp{\ltermA}\)
depends only on the \(βη\)-class of \(\ltermA\);
and the important property of \(\HeadTayExp[\nodelim]{}\)
is rather its commutation with head reduction, 
that we establish and exploit in the next section.
The reader may thus safely skip \cref{theorem:simulation:head}
below.
On the other hand, the following variant of \cref{lemma:exttayexp:subst} for
\(\HeadTayExp[\nodelim]{}\) will be useful:
\begin{lemma}\label{lemma:headtayexp:subst}
  For all $λ$-terms $\ltermA$ and $\ltermB$,
  $\subst{\HeadTayExp{\ltermA}}{\varA}{\HeadTayExp{\ltermB}}
  \vresredRT\HeadTayExp{\subst{\ltermA}{\varA}{\ltermB}}$.
\end{lemma}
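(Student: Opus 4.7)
The plan is to mirror the proof of \cref{lemma:exttayexp:subst}, but to induct on the head structure of $\ltermA$ exploited by the definition of $\HeadTayExp{-}$, rather than on its ordinary term structure. Accordingly the cases are: (i) $\ltermA = \labs{\varC}{\ltermA'}$; (ii) $\ltermA = \appl{\varC}{\ltermsB}$; and (iii) $\ltermA = \appl{\ltermC}{\ltermsB}$ with $\ltermC$ an abstraction and $\ltermsB \neq \emptyword$. The bare variable case $\ltermA = \varC$ is subsumed by (ii) with $\ltermsB = \emptyword$, since $\HeadTayExp{\varC} = \IdExp{\varC}$. Throughout I rename bound variables freshly so that the outermost $\labs{\svarB}{-}$ appearing in $\HeadTayExp{\ltermA}$ and any $\labs{\varC}{-}$ are disjoint from $\varA$ and from $\LVarOf{\HeadTayExp{\ltermB}}$; this lets capture-avoiding substitution slide through without side conditions.

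The abstraction case (i) is routine: $\subst{\HeadTayExp{\labs{\varC}{\ltermA'}}}{\varA}{\HeadTayExp{\ltermB}} = \labs{\varC}{\subst{\HeadTayExp{\ltermA'}}{\varA}{\HeadTayExp{\ltermB}}}$, which by induction and \cref{lemma:vresred:context} reduces to $\labs{\varC}{\HeadTayExp{\subst{\ltermA'}{\varA}{\ltermB}}} = \HeadTayExp{\subst{\ltermA}{\varA}{\ltermB}}$. In cases (ii) and (iii), pushing the substitution through yields an intermediate vector of the form $\labs{\svarB}{\appl{\hexprvA}{\subst{\HeadArgsTayExp{\ltermsB}}{\varA}{\HeadTayExp{\ltermB}} \seqcat \IdStrExp{\svarB}}}$, where $\hexprvA$ is either $\varC$ (if $\varC \neq \varA$ in case (ii)), or $\HeadTayExp{\ltermB}$ (if $\varC = \varA$ in case (ii)), or $\subst{\HeadTayExp{\ltermC}}{\varA}{\HeadTayExp{\ltermB}}$ (in case (iii)). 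For each argument, the induction hypothesis gives $\subst{\HeadTayExp{\ltermB_i}}{\varA}{\HeadTayExp{\ltermB}} \vresredRT \HeadTayExp{\subst{\ltermB_i}{\varA}{\ltermB}}$, and \cref{lemma:subst:prom} together with \cref{lemma:vresred:prom} lifts this to the promotions, so that by \cref{lemma:vresred:context} the whole stream reduces to $\HeadArgsTayExp{\subst{\ltermsB}{\varA}{\ltermB}} \seqcat \IdStrExp{\svarB}$; in case (iii) the induction hypothesis also reduces the head to $\HeadTayExp{\subst{\ltermC}{\varA}{\ltermB}}$.

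The main obstacle is the final step in cases (ii) and (iii) when the head has been replaced: we must re-fold the arising expression into a single head Taylor expansion. This is precisely what \cref{lemma:taylor:head} does. In case (ii) with $\varC = \varA$ we invoke it with $\ltermB$ as the head term and $\subst{\ltermsB}{\varA}{\ltermB}$ as the arguments, obtaining a reduction to $\HeadTayExp{\appl{\ltermB}{\subst{\ltermsB}{\varA}{\ltermB}}} = \HeadTayExp{\subst{\ltermA}{\varA}{\ltermB}}$. In case (iii) we first observe that $\subst{\ltermC}{\varA}{\ltermB}$ is still an abstraction (substitution preserves the outermost head symbol of an abstraction) and $\subst{\ltermsB}{\varA}{\ltermB}$ is still non-empty, so \cref{lemma:taylor:head} again yields $\HeadTayExp{\appl{\subst{\ltermC}{\varA}{\ltermB}}{\subst{\ltermsB}{\varA}{\ltermB}}} = \HeadTayExp{\subst{\ltermA}{\varA}{\ltermB}}$. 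In case (ii) with $\varC \neq \varA$ no reabsorption is needed, and we land directly on $\HeadTayExp{\appl{\varC}{\subst{\ltermsB}{\varA}{\ltermB}}}$ by definition of $\HeadTayExp$. Concatenating these $\vresredRT$-steps via \cref{lemma:vresred:context} closes the induction.
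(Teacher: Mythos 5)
Your proposal is correct and follows essentially the same route as the paper: structural induction organized by the head shape of $\ltermA$, pushing the substitution through the promotions via \cref{lemma:subst:prom} (the paper cites the sequence version, \cref{lemma:subst:prom:seq}), applying the induction hypothesis to the head and each argument with \cref{lemma:vresred:prom,lemma:vresred:context}, and closing either by the definition of $\HeadTayExp{-}$ or by \cref{lemma:taylor:head} when the head variable is the one being substituted. The only cosmetic difference is that in your case (iii) you invoke \cref{lemma:taylor:head} where the paper just observes the result is definitionally a head Taylor expansion, but that lemma covers this case by reflexivity, so nothing is lost.
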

\begin{proof}
  The proof is by induction on $\ltermA$.
  The case of $\labs{\varA}{\ltermB}$ is settled by applying the induction
  hypothesis to $\ltermB$ as in the proof of \cref{lemma:exttayexp:subst}.

  If $\ltermA=\appl{\ltermA'}{\ltermsC}$ where $\ltermA'$ is an abstraction
  and $\LengthOf{\ltermsC}>0$ then
  \[
    \subst{\HeadTayExp{\ltermA}}{\varA}{\HeadTayExp{\ltermB}}
    =\labs{\svarB}{
      \appl*[\big]{\subst{\HeadTayExp{\ltermA'}}{\varA}{\HeadTayExp{\ltermB}}}
      *[\big]{\subst{\HeadArgsTayExp{\ltermsC}}{\varA}{\HeadTayExp{\ltermB}}\seqcat\IdStrExp{\svarB}}
    }
  \]
  where, by \cref{lemma:subst:prom:seq},
  \[
    \subst{\HeadArgsTayExp{\ltermsC}}{\varA}{\HeadTayExp{\ltermB}}\seqcat\IdStrExp{\svarB}
    =\prom*{\subst{\HeadTayExp{\ltermC_0}}{\varA}{\HeadTayExp{\ltermB}}}
    \cons\cdots\cons
    \prom*{\subst{\HeadTayExp{\ltermC_k}}{\varA}{\HeadTayExp{\ltermB}}}
    \cons\IdStrExp{\svarB}
  \]
  if $\ltermsC=\tuple{\ltermC_0,\dotsc,\ltermC_k}$.
  Applying the induction hypothesis to $\ltermA'$ and to each $\ltermC_i$,
  we obtain
  \[
    \subst{\HeadTayExp{\ltermA}}{\varA}{\HeadTayExp{\ltermB}}
    \vresredRT\labs{\svarB}{
      \appl*[\big]{\HeadTayExp{\subst{\ltermA'}{\varA}{\ltermB}}}
      *[\big]{\HeadArgsTayExp{\subst{\ltermsC}{\varA}{\ltermB}}\seqcat\IdStrExp{\svarB}}
    }
  \]
  and conclude, observing that $\subst{\ltermA'}{\varA}{\ltermB}$ is an abstraction
  and $\LengthOf{\subst{\ltermsC}{\varA}{\ltermB}}>0$.

  If $\ltermA=\appl{\varC}{\ltermsC}$ with $\varC\not=\varA$ then
  \[
    \subst{\HeadTayExp{\ltermA}}{\varA}{\HeadTayExp{\ltermB}}
    =\labs{\svarB}{
      \appl{\varC}
      *[\big]{\subst{\HeadArgsTayExp{\ltermsC}}{\varA}{\HeadTayExp{\ltermB}}\seqcat\IdStrExp{\svarB}}
    }
  \]
  and we obtain
  \(
    \subst{\HeadTayExp{\ltermA}}{\varA}{\HeadTayExp{\ltermB}}
    \vresredRT\labs{\svarB}{
      \appl{\varC}
      *[\big]{\HeadArgsTayExp{\subst{\ltermsC}{\varA}{\ltermB}}\seqcat\IdStrExp{\svarB}}
    }
  \)
  as in the previous case.

  If $\ltermA=\appl{\varA}{\ltermsC}$ then
  \[
    \subst{\HeadTayExp{\ltermA}}{\varA}{\HeadTayExp{\ltermB}}
    =\labs{\svarB}{
      \appl{\HeadTayExp{\ltermB}}
      *[\big]{\subst{\HeadArgsTayExp{\ltermsC}}{\varA}{\HeadTayExp{\ltermB}}\seqcat\IdStrExp{\svarB}}
    }
  \]
  and again
  \(
    \subst{\HeadTayExp{\ltermA}}{\varA}{\HeadTayExp{\ltermB}}
    \vresredRT\labs{\svarB}{
      \appl{\HeadTayExp{\ltermB}}
      *[\big]{\HeadArgsTayExp{\subst{\ltermsC}{\varA}{\ltermB}}\seqcat\IdStrExp{\svarB}}
    }
  \).
  We conclude by \cref{lemma:taylor:head}.
\end{proof}

\begin{theorem}\label{theorem:simulation:head}
  If $\ltermA\betaetared\ltermA'$ then
  $\HeadTayExp{\ltermA}\vresred\HeadTayExp{\ltermA'}$.
\end{theorem}
\begin{proof}
  The proof is by induction on \(\ltermA\).

  We first treat the case of an abstraction \(\ltermA=\labs{\varA}{\ltermB}\),
  necessarily with \(\varA\not\in\LVarOf{\ltermA'}\).
  If moreover \(\ltermA'=\labs{\varA}{\ltermB'}\)
  with \(\ltermB\betaetared\ltermB'\),
  then we can apply the induction hypothesis directly,
  using the compatibility of \(\vresred\).
  Otherwise, we must have \(\ltermB=\appl{\ltermA'}{\varA}\)
  so that \(\ltermA\etared\ltermA'\), 
  and \(\ltermA'\) is not an abstraction
  (if it is, we can write
  \(\ltermA'=\labs{\varA}{\ltermB'}\) and have
  \(\ltermB\bred\ltermB'\)).
  In that case, we write
  \(\ltermA'=\appl{\ltermA_0}{\ltermsC}\)
  where \(\ltermA_0\) is a variable or an abstraction
  (with \(\LengthOf{\ltermsC}>0\) in the latter case).
  Writing \(\hexprvA=\ltermA_0\) if 
  \(\ltermA_0\) is a variable,
  and \(\hexprvA=\HeadTayExp{\ltermA_0}\) otherwise,
  we obtain
  \(
  \HeadTayExp{\ltermA}
  =\labs{\varA}{
    \labs{\svarB}{
      \appl{\hexprvA}{
        \HeadArgsTayExp{\ltermsC}
        \seqcat\IdBagExp{\varA}
        \cons\IdStrExp{\svarB}
      }
    }
  }
  = \labs{\svarB}{
    \appl{\hexprvA}{
      \HeadArgsTayExp{\ltermsC}
      \cons\IdStrExp{\svarB}
    }
  }
  = \HeadTayExp{\ltermA'}
  \).

  If \(\ltermA=\appl{\varA}{\ltermsC}\),
  then the reduction
  \(\ltermA\betaetared\ltermA'\)
  occurs necessarily in an element of \(\ltermsC\)
  and we apply the induction hypothesis to that element
  and the compatibility of \(\vresred\).

  We finally treat the case of
  \(\ltermA=\appl*{\labs{\varA}{\ltermA_0}}*{\ltermB\cons\ltermsC}\).
  If the reduction occurs in \(\ltermA_0\), in \(\ltermB\),
  or in an element of \(\ltermsC\), then again
  we apply the induction hypothesis
  and the compatibility of \(\vresred\),
  the head structure of \(\ltermA'\) being the same.
  If \(\ltermA'=\appl*{\subst{\ltermA_0}{\varA}{\ltermB}}{\ltermsC}\)
  so that \(\ltermA\bred\ltermA'\),
  then
  \begin{align*}
    \HeadTayExp{\ltermA}
    &=\labs{\svarB}{
      \appl*{\labs{\varA}{\HeadTayExp{\ltermA_0}}}
      {\HeadArgTayExp{\ltermB}\cons\HeadArgsTayExp{\ltermsC}\seqcat\IdStrExp{\svarB}}
    }
    \\
    &\vresred
    \labs{\svarB}{
      \appl*{\subst{\HeadTayExp{\ltermA_0}}{\varA}{\HeadTayExp{\ltermB}}}
      {\HeadArgsTayExp{\ltermsC}\seqcat\IdStrExp{\svarB}}
    }
    &&\text{by \cref{lemma:subst:taylor}}
    \\
    &\vresred
    \labs{\svarB}{
      \appl{\HeadTayExp{\subst{\ltermA_0}{\varA}{\ltermB}}}
      {\HeadArgsTayExp{\ltermsC}\seqcat\IdStrExp{\svarB}}
    }
    &&\text{by \cref{lemma:headtayexp:subst}}
    \\
    &\vresred
    \HeadTayExp{\ltermA'}
    &&\text{by \cref{lemma:taylor:head}.}
  \end{align*}

  Finally, if \(\ltermA_0=\appl{\ltermA'_0}{\varA}\)
  with \(\varA\not\in\LVarOf{\ltermA'_0}\)
  and \(\ltermA'=\appl{\ltermA'_0}*{\ltermB\cons\ltermsC}\)
  so that \(\ltermA\etared\ltermA'\),
  then observe that we also have
  \(\ltermA'=\appl*{\subst{\ltermA_0}{\varA}{\ltermB}}{\ltermsC}\),
  and we are back to the previous case.
\end{proof}

\section{Characterization of \texorpdfstring{$\Hs$}{H*}}\label{section:Hs}

In this section, we study the equational theory induced by the normalization
of extensional Taylor expansion.
Writing $\ltermA\eqExtTay\ltermA'$ if $\NFTayExp{\ltermA}=\NFTayExp{\ltermA'}$,
we will show that \(\eqExtTay\) is the maximum consistent and sensible
\(λ\)-theory $\Hs$ (\cref{theorem:eqTay:is:eqHs}).
The latter is known to be characterized by Nakajima trees
\cites{DBLP:conf/lambda/Nakajima75}[Exercise 19.4.4]{DBLP:books/daglib/0067558},
but our results demonstrate how extensional Taylor expansion
allows us to dispense with such intrinsically infinite objects,
in the same fashion as ordinary Taylor expansion provides an alternative
to Böhm trees, based on finite syntactic approximants.\footnote{
  A game semanticist reader could rightfully argue that classical game
  semantics already provide an alternative to Nakajima trees, based
  on finite approximants.
  A key difference is that (ordinary as well as extensional)
  resource terms are not necessarily normal:
  they provide a language of finite, syntactic approximants of
  \(λ\)-terms \emph{and} of their (potentially infinite) normal forms.
}

\begin{lemma}\label{lemma:eqTay:extensional}
  The equivalence relation $\eqExtTay$ on $\LambdaTerms$ is an extensional $λ$-theory.
\end{lemma}
\begin{proof}
  That \(\eqExtTay\) is a congruence follows directly from the compatibility of
  \(\vresred\).
  Moreover, by \cref{corollary:NFT}, $\eqExtTay$ contains $\bred$ and $\etared$.
\end{proof}

\subsection{Sensibility}

We now show that, like ordinary Taylor expansion, extensional Taylor
expansion allows us to characterize the head normalizability of $λ$-terms.

\begin{theorem}\label{theorem:unsolvable}
  A \(λ\)-term is head normalizable iff $\NFTayExp{\ltermA}\not=0$.
\end{theorem}
As in the proof of \cref{lemma:taylor:solvable}, the forward implication is easy:
\begin{lemma}\label{lemma:hn:NFTn0}
  If $\ltermA$ is $βη$-equivalent to a head normal form
  then $\NFTayExp{\ltermA}\not=0$.
\end{lemma}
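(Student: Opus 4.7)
The plan is to use the preceding lemma --- namely that $\eqTay$ is a $\beta\eta$-theory --- to reduce the statement to the case where $\ltermA$ is itself a head normal form: if $\ltermA\eqBEta\ltermB$ with $\ltermB$ a head normal form, then $\NFTayExp{\ltermA}=\NFTayExp{\ltermB}$. So I may assume $\ltermA=\labs{\seq{\varA}}{\appl{\varA}{\ltermsB}}$ with $\ltermsB=\tuple{\ltermB_1,\dotsc,\ltermB_k}$. By \cref{corollary:NFT}, $\NFTayExp{\ltermA}=\NF{\HeadTayExp{\ltermA}}$, which unfolds to
\[
  \NF[\big]{\labs{\seq{\varA}}{\labs{\svarB}{\appl{\varA}{\HeadArgsTayExp{\ltermsB}\seqcat\IdStrExp{\svarB}}}}}
\]
with $\svarB$ fresh.

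The strategy is to exhibit a single $\resred$-normal value term appearing in $\NF{\HeadTayExp{\ltermA}}$ with nonzero coefficient. The candidate is
\[
  \termA_0\eqdef\labs{\seq{\varA}}{\labs{\svarB}{\appl{\varA}{\emptystream}}}\,,
\]
which is irreducible since its only application has a value variable (namely $\varA$) as its head expression, precluding both $\resredrulebeta$ and $\resredruleiota$. To compute $\Coef{\HeadTayExp{\ltermA}}{\termA_0}$, I would use the linearity of term constructors to reduce the question to that of $\Coef{\HeadArgsTayExp{\ltermsB}\seqcat\IdStrExp{\svarB}}{\emptystream}$, and use the identity $\emptybag\cons\emptystream=\emptystream$ to unfold this as a product: each $\Coef{\prom{\HeadTayExp{\ltermB_i}}}{\emptybag}$ equals $1$ by the very definition of promotion (the empty bag being the unique index-$0$ summand), and $\Coef{\IdStrExp{\svarB}}{\emptystream}=\prod_{i\in\N}\Coef{\prom{\IdExp{\svarB[i]}}}{\emptybag}=1$. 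Hence $\Coef{\HeadTayExp{\ltermA}}{\termA_0}=1$.

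The last and main technical step is to verify that no other element of the support of $\HeadTayExp{\ltermA}$ contributes to $\termA_0$ in the normal form. A generic element has shape $\labs{\seq{\varA}}{\labs{\svarB}{\appl{\varA}{\strD}}}$ for some $\strD$; since the head of the inner application is a variable, its reduction leaves all outer layers unchanged and happens entirely inside $\strD$. So everything reduces to showing that $\emptystream\in\NF{\strD}$ forces $\strD=\emptystream$ (modulo the identification $\emptybag\cons\emptystream=\emptystream$). This follows from a length-preservation invariant obtained by direct inspection of the contextual rules $\resredrulebag$, $\resredrulestrl$ and $\resredrulestrr$: no $\resred$-step on a stream changes either the number of its bags or the length of any of them, as the only rules that could affect lengths ($\resredrulebeta$ and $\resredruleiota$) act on base terms and need an abstraction at the head. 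Combining uniqueness of the contributor with $\Coef{\NF{\termA_0}}{\termA_0}=1$, we obtain $\Coef{\NFTayExp{\ltermA}}{\termA_0}=1\neq 0$, hence $\NFTayExp{\ltermA}\neq 0$. The main obstacle is precisely this last length-preservation argument, but it amounts to a routine inspection of the one-step rules.
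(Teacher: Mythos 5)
Your proof is correct and follows essentially the same route as the paper: reduce to a head normal form via \cref{corollary:NFT} and exhibit the witness $\labs{\seq{\varA}}{\labs{\svarB}{\appl{\varA}{\emptystream}}}$ inside $\HeadTayExp{-}$ of that head normal form. The only divergence is your final step: the paper concludes directly from the fact that this witness is normal and lies in the support of the head Taylor expansion, because coefficients cannot cancel in a positive complete semiring, so your length-preservation/uniqueness argument, while valid (and what one would need over a ring), is not required here.
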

\begin{proof}
  If $\ltermA\eqBEta\ltermA'=\labs{\svarA}{\appl{\varC}{\ltermsC}}$,
  then $\vtermA\eqdef\labs{\svarA}{\labs{\svarB}{\appl{\varC}{\emptystream}}}\in\HeadTayExp{\ltermA'}$,
  hence $\vtermA\in\NFTayExp{\ltermA}$.
\end{proof}

For the other implication, as for \cref{lemma:taylor:solvable},
we apply the head reduction strategy for resource reduction until we
reach a head normal form of some term in the image of Taylor expansion;
and we show that this reduction path can be reflected as a head normalization
sequence for the source \(λ\)-term.
To ensure that head reduction terminates in the resource calculus,
we have to consider full-step reduction:
this will make the proof significantly more complex than that of 
\cref{lemma:taylor:solvable}, because such a full-step is not necessarily the
translation of a single head \(β\)-reduction step.

Moreover, it will be more practical to rely on \(\HeadTayExp[\nodelim]{}\)
instead of \(\ExtTayExp[\nodelim]{}\),
because the latter does not preserve the head structure of terms:
when \(\ltermA\) is in head normal form,
the elements of \(\ExtTayExp{\ltermA}\) need not be
(consider \(\ltermA=\appl{\varB}{\ltermB}\));
and if \(\ltermA\) has a head redex,
the head redex in an element of \(\ExtTayExp{\ltermA}\)
is not necessarily the image of that of \(\ltermA\)
(consider \(\ltermA=\appl*{\labs{\varA}{\ltermB}}{\ltermC_1\,\ltermC_2}\)).
This will require additional work to establish the properties 
of \(\HeadTayExp[\nodelim]{}\) w.r.t.\ reduction,
instead of relying on those of \(\ExtTayExp[\nodelim]{}\) via
\cref{corollary:NFT}, as was sufficient before.

Given a redex $\btermA\in\BaseTerms$, we write \(\HROf{\btermA}\) for the base
sum obtained by fully reducing it:
namely, if $\btermA=\appl*{\labs{\svarB}{\btermB}}{\strB}$ then
$\HROf{\btermA}\eqdef\rsubst{\btermB}{\svarB}{\strB}$.
We say that a value term $\vtermA=\labs{\svarA}{\btermA}$ is
\definitive{head reducible} if \(\btermA\) is a redex, and then set 
$\HROf{\vtermA}\eqdef\labs{\svarA}{\HROf{\btermA}}$
so that \(\vtermA\Resred\HROf{\vtermA}\);
otherwise we say that $\vtermA$ is in \definitive{head normal form}.
\definitive{Head reduction} is defined on sums and vectors by linearity:
\(\HROf{\vtermvA}\eqdef\sum_{\vtermA\in\vtermvA}\pars{\Coef{\vtermvA}{\vtermA}}\HROf{\vtermA}\)
for any $\vtermvA\in\ValueVectors$ whose support contains head reducible
terms only, so that \(\vtermvA\vresred\HROf{\vtermvA}\) in this case.
By \cref{lemma:rsubst:str:size}, if $\vtermA$
is head reducible and $\vtermA'\in\HROf{\vtermA}$
then $\SizeOf{\vtermA'}<\SizeOf{\vtermA}$:
the head reduction sequence of a value sum always terminates.

With these definitions in place, a \(λ\)-term \(\ltermA\) is head reducible
iff one value term in \(\HeadTayExp{\ltermA}\) is
(and then every term in \(\HeadTayExp{\ltermA}\) is head reducible).
However, as we have already stated, head reduction does not commute with head
Taylor expansion on the nose:
in general \(\HROf{\HeadTayExp{\ltermA}}\not=\HeadTayExp{\HOf{\ltermA}}\)
(recall that \(\HOf{\ltermA}\) denotes the head reduct of \(\ltermA\)).
Consider, for instance, \(\ltermA=\appl*{\labs{\varA_1}{\labs{\varA_2}{\varA_1}}}{\varB_1\,\varB_2}\):
it is easy to check that
\(\HROf{\HeadTayExp{\ltermA}}
= \labs{\svarC}{\appl{\IdExp{\varB_1}}{\IdStrExp{\svarC}}}\),
while \(\varB_2\) still occurs in \({\HOf{\ltermA}}\),
hence in \(\HeadTayExp{\HOf{\ltermA}}\);
on the other hand, 
\(\HROf{\HeadTayExp{\ltermA}}\vresred\IdExp{\varB_1}\) and
\(\HOf{\HOf{\ltermA}}=\varB_1\).
And we will indeed show that a head reduction step from the
expansion of a head redex can always be extended to the image of a head
reduction sequence (\cref{lemma:taylor:H}).

\begin{lemma}\label{lemma:HR}
  For any base vector of the form
  \(\btermvA=\appl*{\labs{\varA_1}{\cdots\labs{\varA_{k}}{\labs{\svarB}{\btermvB}}}}
  {\bagvB_1\cons\cdots\cons\bagvB_k\cons\strvC}\in\BaseVectors\)
  we have \(\HROf{\btermvA}=
  \rsubst{\rsubst{\rsubst{\btermvB}{\varA_1}{\bagvB_1}\cdots}{\varA_k}{\bagvB_k}}{\svarB}{\strvC}\).
\end{lemma}
\begin{proof}
  By linearity, it is sufficient to consider the case of 
  a base term:
  \[
    \btermvA=\btermA=
    \appl*{\labs{\varA_1}{\cdots\labs{\varA_{k}}{\labs{\svarB}{\btermB}}}}
    {\bagB_1\cons\cdots\cons\bagB_k\cons\strC}\in\BaseTerms
    \;.
  \]

  By definition, 
  \[
    \btermA=
    \appl*[\big]{\labs{\svarB}{
      \subst{\shiftup{
        \subst{\shiftup{\btermB}{\svarB}}{\varA_k}{\svarB[0]}\cdots
      }{\svarB}}{\varA_1}{\svarB[0]}
    }}{\bagB_1\cons\cdots\cons\bagB_k\cons\strC}
  \]
  hence
  \[
    \HROf{\btermA}=
    \rsubst[\big]{
      \subst{\shiftup{
        \subst{\shiftup{\btermB}{\svarB}}{\varA_k}{\svarB[0]}\cdots
      }{\svarB}}{\varA_1}{\svarB[0]}
    }{\svarB}{\bagB_1\cons\cdots\cons\bagB_k\cons\strC}
  \]
  and we conclude by iterating \cref{lemma:rsubst:str:def,lemma:shift:rsubst}.
\end{proof}

\begin{lemma}\label{lemma:taylor:H}
  For every head reducible $\ltermA\in\LambdaTerms$, there exists $k\in\N$ such that
  \(\HkOf{k}{\ltermA}\) is defined and
  $\HROf{\HeadTayExp{\ltermA}}\vresredRT\HeadTayExp{\HkOf{k}{\ltermA}}$.
\end{lemma}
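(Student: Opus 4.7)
The plan is to perform a case analysis on whether $\ltermA$ is in head normal form. If so, then every term in the support of $\HeadTayExp{\ltermA}$ is itself in head normal form (as remarked in the text when discussing head reduction on resource terms), so $\HROf{\HeadTayExp{\ltermA}} = \HeadTayExp{\ltermA}$ and $k = 0$ suffices.

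Otherwise, write $\ltermA = \labs{\seq{\varA}}{\appl{\labs{\seq{\varC}}{\ltermE}}{\ltermsB}}$ with $\seq{\varC}$ of maximal length $l \geq 1$ (so that $\ltermE$ is not an abstraction) and $\ltermsB$ of length $m \geq 1$; I claim $k = \min(l, m)$ works. Unfolding $\HeadTayExp{\ltermA}$ exposes a head redex whose function is a value term with $l$ single-variable abstractions on top of a sequence abstraction, since $\HeadTayExp{\ltermE} = \labs{\svarE}{\btermvF}$ for a fresh $\svarE$: this is exactly the shape required by Lemma~\ref{lemma:HR}. One $\HROf{-}$ step thus performs $l$ bag substitutions for the components of $\seq{\varC}$ and one stream substitution for $\svarE$: the first $\min(l, m)$ bags are $\prom{\HeadTayExp{\ltermB_i}}$, any further bags needed (when $l > m$) are drawn from $\IdStrExp{\svarB}$ in the form $\prom{\IdExp{\svarB[j]}}$, and the residual stream is either $\HeadArgsTayExp{\ltermsB''}\seqcat\IdStrExp{\svarB}$ (where $\ltermsB''$ is the suffix of $\ltermsB$ beyond the first $l$, in the case $l \leq m$) or a shifted $\IdStrExp{\svarB}$ (in the case $l > m$).

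Next, I use Lemma~\ref{lemma:subst:taylor} to rewrite each $\rsubst{-}{\varC_i}{\prom{-}}$ as an ordinary substitution $\subst{-}{\varC_i}{-}$, and exploit the identification $\IdExp{\svarB[j]} = \HeadTayExp{\svarB[j]}$ noted in the text (via $\HeadTayExp{\varA} = \IdExp{\varA}$), so that iterated Lemma~\ref{lemma:headtayexp:subst} reduces the body substitution to $\HeadTayExp{\ltermE'}$ up to $\vresredRT$, where $\ltermE'$ is the $\lambda$-calculus result of $\min(l, m)$ head reductions (supplemented with an $\alpha$-renaming of the remaining variables of $\seq{\varC}$ when $l > m$). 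The remaining stream substitution is then absorbed: in the case $l \leq m$, the resulting expression is precisely the head reduction step that opens the proof of Lemma~\ref{lemma:taylor:head} applied to $\ltermE'$ and $\ltermsB''$, and the rest of that proof carries us to $\HeadTayExp{\appl{\ltermE'}{\ltermsB''}} = \HeadTayExp{\HkOf{l}{\ltermA}}$ (after wrapping with $\labs{\seq{\varA}}{-}$); in the case $l > m$, Lemma~\ref{lemma:cc:vectors}---notably the reduction $\rsubst{\IdStrExp{\svarE}}{\svarE}{\strvC} \vresred \strvC$---absorbs the shifted identity residue, and the outer $\labs{\svarB}{-}$ binder realigns via $\alpha$-renaming with the trailing abstractions $\labs{\varC_m}{\cdots\labs{\varC_{l-1}}{-}}$ of $\HkOf{m}{\ltermA}$. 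The main obstacle is the bookkeeping of sequence-variable renamings in this last case, where matching the result to $\HeadTayExp{\HkOf{m}{\ltermA}}$ relies crucially on the identification $\IdExp{\svarB[j]} = \HeadTayExp{\svarB[j]}$ that lets Lemma~\ref{lemma:headtayexp:subst} uniformly simulate both $\beta$-substitutions and $\alpha$-renamings.
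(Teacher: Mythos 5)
Your proposal is correct and follows essentially the same route as the paper's proof: the same case split on head normal forms, the same two sub-cases depending on whether the abstraction prefix or the argument list is longer, and the same chain of lemmas (\cref{lemma:HR} to compute the big-step head reduct, \cref{lemma:subst:taylor} to turn resource substitutions of promotions into ordinary substitutions, \cref{lemma:headtayexp:subst} and \cref{lemma:cc:vectors} to absorb the substituted expansions, and \cref{lemma:taylor:head} to re-normalize the head structure at the end). The only cosmetic difference is that you strip the leading abstractions $\labs{\seq{\varA}}{-}$ explicitly and name $k=\min(l,m)$ up front, where the paper dispatches the abstraction case by induction on $\ltermA$.
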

\begin{proof}
  We reason by induction on $\ltermA$.
  If $\ltermA=\labs{\varB}{\ltermB}$ then we apply the induction hypothesis to
  $\ltermB$, and conclude by compatibility.
  Otherwise, $\ltermA=\appl{\ltermC}{\ltermsB}$ 
  where $\ltermC$ is an abstraction
  and $\ltermsB=\tuple{\ltermB_0,\dotsc,\ltermB_l}$
  is a non empty sequence of $λ$-terms.
  We write $\ltermC=\labs{\seq{\varA}}{\ltermC'}$
  where $\seq{\varA}=\tuple{\varA_0,\dotsc\varA_k}$ is a non empty tuple of
  pairwise distinct variables and $\ltermC'$ is not an abstraction:
  $\ltermC'=\appl{\ltermC''}{\ltermsB'}$ and, either 
  $\ltermC''=\varC$ is a variable, or
  $\ltermC''$ is an abstraction and $\LengthOf{\ltermsB'}>0$.
  Then we can write
  \(
    \HeadTayExp{\ltermC'}=\labs{\svarC}{\appl{\hexprvA}{\HeadArgsTayExp{\ltermsB'}\IdStrExp{\svarC}}}
  \)
  where $\hexprvA=\varC$ or $\hexprvA=\HeadTayExp{\ltermC''}$,
  and $\svarC$ is fresh.
  Choosing $\svarB$ fresh, we obtain:
  \begin{align*}
    \HeadTayExp{\ltermA}
    =\labs{\svarB}{
      \appl{\HeadTayExp{\ltermC}}{\HeadArgsTayExp{\ltermsB}\seqcat\IdStrExp{\svarB}}
    }
    =\labs{\svarB}{
      \appl*{
        \labs{\seq{\varA}}{\labs{\svarC}{\appl{\hexprvA}{\HeadArgsTayExp{\ltermsB'}\IdStrExp{\svarC}}}}
      }{\HeadArgsTayExp{\ltermsB}\seqcat\IdStrExp{\svarB}}
    }\;.
  \end{align*}

  If $k\le l$ then we write
  $\ltermsB''\eqdef\tuple{\ltermB_{0},\dotsc,\ltermB_{k}}$
  and
  $\ltermsB'''\eqdef\tuple{\ltermB_{k+1},\dotsc,\ltermB_{l}}$,
  and we obtain
  $\HkOf{k}{\ltermA}=\appl*{\subst{\ltermC'}{\seq{\varA}}{\ltermsB''}}{\ltermsB'''}$.
  In this case,
  we obtain
  \begin{align*}
    &\HROf{\HeadTayExp{\ltermA}}
    \\
    &=\labs{\svarB}{
      \rsubst{
        \rsubst*{
          \appl{\hexprvA}{\HeadArgsTayExp{\ltermsB'}
          \seqcat\IdStrExp{\svarC}}
        }{\seq{\varA}}{\HeadArgsTayExp{\ltermsB''}}
      }{\svarC}{\HeadArgsTayExp{\ltermsB'''}\seqcat\IdStrExp{\svarB}}
    }
    &&\text{by \cref{lemma:HR}}
    \\
    &=\labs{\svarB}{
      \rsubst{
        \subst*{
          \appl{\hexprvA}{
            \HeadArgsTayExp{\ltermsB'}\seqcat\IdStrExp{\svarC}
          }
        }{\seq{\varA}}{\HeadTayExp{\ltermsB''}}
      }{\svarC}{\HeadArgsTayExp{\ltermsB'''}\seqcat\IdStrExp{\svarB}}
    }
    &&\text{by \cref{lemma:subst:taylor}}
    \\
    &=\labs{\svarB}{
      \appl{
        \subst{\hexprvA}{\seq{\varA}}{\HeadTayExp{\ltermsB''}}
      }{
        \subst{\HeadArgsTayExp{\ltermsB'}}{\seq{\varA}}{\HeadTayExp{\ltermsB''}}
        \seqcat
        \rsubst{\IdStrExp{\svarC}}{\svarC}
        {\HeadArgsTayExp{\ltermsB'''}\seqcat\IdStrExp{\svarB}}
      }
    }
    &&\text{since $\svarC$ is fresh}
    \\
    &\vresred\labs{\svarB}{
      \appl{
        \subst{\hexprvA}{\seq{\varA}}{\HeadTayExp{\ltermsB''}}
      }{
        \subst{\HeadArgsTayExp{\ltermsB'}}{\seq{\varA}}{\HeadTayExp{\ltermsB''}}
        \seqcat
        \HeadArgsTayExp{\ltermsB'''}\seqcat\IdStrExp{\svarB}
      }
    }
    &&\text{by \cref{lemma:cc:vectors}}
    \\
    &\vresredRT\labs{\svarB}{
      \appl{
        \subst{\hexprvA}{\seq{\varA}}{\HeadTayExp{\ltermsB''}}
      }{
        \HeadArgsTayExp{\subst{\ltermsB'}{\seq{\varA}}{\ltermsB''}}
        \seqcat
        \HeadArgsTayExp{\ltermsB'''}\seqcat\IdStrExp{\svarB}
      }
    }
    &&\text{by 
      \cref{lemma:headtayexp:subst,lemma:subst:prom}
    }
    \\
    &=\labs{\svarB}{
      \appl{
        \subst{\hexprvA}{\seq{\varA}}{\HeadTayExp{\ltermsB''}}
      }{
        \HeadArgsTayExp{
          \subst{\ltermsB'}{\seq{\varA}}{\ltermsB''}\seqcat\ltermsB'''
        }\seqcat\IdStrExp{\svarB}
      }
    }
    \;.
  \end{align*}

  If $\ltermC'=\appl{\varC}{\ltermsB'}$ with $\varC\not\in\seq{\varA}$
  then $\subst{\hexprvA}{\seq{\varA}}{\HeadTayExp{\ltermsB''}}=\varC$
  and we conclude since we have
  $\HkOf{k}{\ltermA}
  =\appl{\varC}{\pars{\subst{\ltermsB'}{\seq{\varA}}{\ltermsB''}}\seqcat\ltermsB'''}$.
  If $\ltermC'=\appl{\varA_i}{\ltermsB'}$
  then $\subst{\hexprvA}{\seq{\varA}}{\HeadTayExp{\ltermsB''}}=\HeadTayExp{\ltermB_i}$
  and we conclude by \cref{lemma:taylor:head} since
  $\HkOf{k}{\ltermA}
  =\appl{\ltermB_i}{\pars{\subst{\ltermsB'}{\seq{\varA}}{\ltermsB''}}\seqcat\ltermsB'''}$.
  And if $\ltermC'=\appl{\ltermC''}{\ltermsB'}$ with $\ltermC''$ an abstraction
  and $\LengthOf{\ltermsB'}>0$,
  then $\hexprvA=\HeadTayExp{\ltermC''}$ hence
  $\subst{\hexprvA}{\seq{\varA}}{\HeadTayExp{\ltermsB''}}
  \vresredRT\HeadTayExp{\subst{\ltermC''}{\seq{\varA}}{\ltermsB''}}$
  by \cref{lemma:headtayexp:subst},
  hence
  \[
    \HROf{\HeadTayExp{\ltermA}}
    \vresredRT
    \labs{\svarB}{
      \appl{
        \HeadTayExp{\subst{\ltermC''}{\seq{\varA}}{\ltermsB''}}
      }{
        \HeadArgsTayExp{
          \subst{\ltermsB'}{\seq{\varA}}{\ltermsB''}\seqcat\ltermsB'''
        }\seqcat\IdStrExp{\svarB}
      }
    }
  \]
  and we conclude again by \cref{lemma:taylor:head} since
  \(
  \HkOf{k}{\ltermA} =\appl*[\big]{\subst{\ltermsC''}{\seq{\varA}}{\ltermsB''}}{\pars[\big]{\subst{\ltermsB'}{\seq{\varA}}{\ltermsB''}}\seqcat\ltermsB'''}
  \).

  Now if $k> l$, then we write $\seq{\varA}'\eqdef\tuple{\varA_0,\dotsc,\varA_l}$
  and $\seq{\varA}''\eqdef\tuple{\varA_{l+1},\dotsc,\varA_k}$
  so that 
  $\HkOf{l}{\ltermA}=\labs{\seq{\varA}''}{\subst{\ltermC'}{\seq{\varA}'}{\ltermsB}}$.
  We also write \(k'\eqdef k-l-1\) and 
  $\tuple{\varA'_{0},\dotsc,\varA'_{k'}}\eqdef\seq{\varA}''$.
  Then we compute:
  \begin{align*}
    &\HROf{\HeadTayExp{\ltermA}}
    \\
    &=\labs{\svarB}{
      \rsubst{
        \rsubst{
          \rsubst{
            \rsubst*{
              \appl{\hexprvA}{
                \HeadArgsTayExp{\ltermsB'}
                \seqcat\IdStrExp{\svarC}
              }
            }{\seq{\varA}'}{\HeadArgsTayExp{\ltermsB}}
          }{\varA'_{0}}{\IdBagExp{\svarB[0]}}
          \cdots
        }{\varA'_{k'}}{\IdBagExp{\svarB[k']}}
      }{\svarC}{\shiftup{\IdStrExp{\svarB}}{\svarB}[k'+1]}
    }
    &&\text{\small by \cref{lemma:HR,lemma:cc:shift}}
    \\
    &=\labs{\svarB}{
      \rsubst{
        \rsubst{
          \rsubst{
            \rsubst*{
              \appl{\hexprvA}{
                \HeadArgsTayExp{\ltermsB'}
                \seqcat\IdStrExp{\svarC}
              }
            }{\seq{\varA}'}{\HeadArgsTayExp{\ltermsB}}
          }{\svarC}{\shiftup{\IdStrExp{\svarB}}{\svarB}[k'+1]}
        }{\varA'_{k'}}{\IdBagExp{\svarB[k']}}
        \cdots
      }{\varA'_{0}}{\IdBagExp{\svarB[0]}}
    }
    &&\text{\small since \(\svarB\) and \(\svarC\) were fresh}
    \\
    &=\labs{\svarB}{
      \rsubst{
        \rsubst{
          \shiftup{
            \rsubst{
              \rsubst*{
                \appl{\hexprvA}{
                  \HeadArgsTayExp{\ltermsB'}
                  \seqcat\IdStrExp{\svarC}
                }
              }{\seq{\varA}'}{\HeadArgsTayExp{\ltermsB}}
            }{\svarC}{\IdStrExp{\svarB}}
          }{\svarB}[k'+1]
        }{\varA'_{k'}}{\IdBagExp{\svarB[k']}}
        \cdots
      }{\varA'_{0}}{\IdBagExp{\svarB[0]}}
    }
    &&\text{\small by \cref{lemma:shift:rsubst}}
    \\
    &=\labs{\svarB}{
      \rsubst{
        \rsubst{
          \shiftup{
            \rsubst*{
              \appl{\hexprvA}{
                \HeadArgsTayExp{\ltermsB'}
                \seqcat
                \rsubst{\IdStrExp{\svarC}}{\svarC}{\IdStrExp{\svarB}}
              }
            }{\seq{\varA}'}{\HeadArgsTayExp{\ltermsB}}
          }{\svarB}[k'+1]
        }{\varA'_{k'}}{\IdBagExp{\svarB[k']}}
        \cdots
      }{\varA'_{0}}{\IdBagExp{\svarB[0]}}
    }
    &&\text{\small since \(\svarB\) and \(\svarC\) were fresh}
    \\
    &\vresred\labs{\svarB}{
      \rsubst{
        \rsubst{
          \shiftup{
            \rsubst*{
              \appl{\hexprvA}{
                \HeadArgsTayExp{\ltermsB'}
                \seqcat
                \IdStrExp{\svarB}
              }
            }{\seq{\varA}'}{\HeadArgsTayExp{\ltermsB}}
          }{\svarB}[k'+1]
        }{\varA'_{k'}}{\IdBagExp{\svarB[k']}}
        \cdots
      }{\varA'_{0}}{\IdBagExp{\svarB[0]}}
    }
    &&\text{\small by \cref{lemma:cc:vectors}}
    \\
    &=\labs{\svarB}{
      \subst{
        \subst{
          \shiftup{
            \subst*{
              \appl{\hexprvA}{
                \HeadArgsTayExp{\ltermsB'}
                \seqcat
                \IdStrExp{\svarB}
              }
            }{\seq{\varA}'}{\HeadTayExp{\ltermsB}}
          }{\svarB}[k'+1]
        }{\varA'_{k'}}{\IdExp{\svarB[k']}}
        \cdots
      }{\varA'_{0}}{\IdExp{\svarB[0]}}
    }
    &&\text{\small by \cref{lemma:subst:taylor}}
    \\
    &\vresredRT\labs{\svarB}{
      \subst{
        \subst{
          \shiftup{
            \subst*{
              \appl{\hexprvA}{
                \HeadArgsTayExp{\ltermsB'}
                \seqcat
                \IdStrExp{\svarB}
              }
            }{\seq{\varA}'}{\HeadTayExp{\ltermsB}}
          }{\svarB}[k'+1]
        }{\varA'_{k'}}{\svarB[k']}
        \cdots
      }{\varA'_{0}}{\svarB[0]}
    }
    &&\text{\small by \cref{lemma:cc:vectors}}
    \displaybreak[3]
    \\
    &=\lefteqn{\labs{\svarB}{
        \subst{
          \shiftup{
            \subst{
              \subst{
                \shiftup{
                  \subst*{
                    \appl{\hexprvA}{
                      \HeadArgsTayExp{\ltermsB'}
                      \seqcat
                      \IdStrExp{\svarB}
                    }
                  }{\seq{\varA}'}{\HeadTayExp{\ltermsB}}
                }{\svarB}[k']
              }{\varA'_{k'}}{\svarB[k'-1]}
              \cdots
            }{\varA'_{1}}{\svarB[0]}
          }{\svarB}
        }{\varA'_{0}}{\svarB[0]}
      }
    }
    \displaybreak[0]
    \\
    &
    &&\text{\small by \cref{lemma:shift:rsubst}}
    \displaybreak[3]
    \\
    &=\labs{\varA'_{0}}{
      \labs{\svarB}{
        \subst{
          \subst{
            \shiftup{
              \subst*{
                \appl{\hexprvA}{
                  \HeadArgsTayExp{\ltermsB'}
                  \seqcat
                  \IdStrExp{\svarB}
                }
              }{\seq{\varA}'}{\HeadTayExp{\ltermsB}}
            }{\svarB}[k']
          }{\varA'_{k'}}{\svarB[k'-1]}
          \cdots
        }{\varA'_{1}}{\svarB[0]}
      }
    }
    &&\text{\small by definition}
    \displaybreak[2]
    \\
    &\vresredRT\labs{\svarA''}{
      \labs{\svarB}{
        \subst*{
          \appl{\hexprvA}{
            \HeadArgsTayExp{\ltermsB'}
            \seqcat
            \IdStrExp{\svarB}
          }
        }{\seq{\varA}'}{\HeadTayExp{\ltermsB}}
      }
    }
    &&\text{\small iterating previous steps}
  \end{align*}

  It will thus be sufficient to establish 
  \(
    \labs{\svarA''}{
      \labs{\svarB}{
        \subst*{
          \appl{\hexprvA}{
            \HeadArgsTayExp{\ltermsB'}
            \seqcat
            \IdStrExp{\svarB}
          }
        }{\seq{\varA}'}{\HeadTayExp{\ltermsB}}
      }
    }
    \vresredRT
    \HeadTayExp{\subst{\ltermC'}{\seq{\varA}'}{\ltermsB}}
  \)
  which is done like in the case $k\le l$,
  by applying \cref{lemma:headtayexp:subst,lemma:subst:prom},
  and inspecting the shape of $\ltermC'$.
\end{proof}

We are now ready to complete the proof of \cref{theorem:unsolvable}:

\begin{lemma}\label{lemma:NFTn0:hn}
  If $\NFTayExp{\ltermA}\not=0$ then $\ltermA$ is head normalizable.
\end{lemma}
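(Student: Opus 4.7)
The plan is to prove the contrapositive: assuming $\ltermA$ is not head normalizable, I will show $\NFTayExp{\ltermA}=0$.

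First, I will iterate \cref{lemma:taylor:H} to build a sequence $K_0=0<K_1<K_2<\cdots$ such that $\HROf{\HeadTayExp{\HkOf{K_{i}}{\ltermA}}} \vresredRT \HeadTayExp{\HkOf{K_{i+1}}{\ltermA}}$ for every $i$. Since $\ltermA$ is not head normalizable, none of the $\HkOf{K_i}{\ltermA}$ is in head normal form, so each $K_{i+1}>K_i$. Applying $\HROf{-}$ on a value term coincides with a particular $\sresredRT$-reduction, and $\vresredRT$ preserves normal forms (\cref{lemma:vresred:NF}), whence
\[
  \NFTayExp{\ltermA}=\NF{\HeadTayExp{\ltermA}}=\NF{\HeadTayExp{\HkOf{K_i}{\ltermA}}}=\NFTayExp{\HkOf{K_i}{\ltermA}}
\]
for all $i$.

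Next, consider an arbitrary $\vtermB\in\HeadTayExp{\ltermA}$; since $\NF{-}$ is linear, it suffices to prove $\NF{\vtermB}=0$. Because $\ltermA$ is not in head normal form, neither is $\vtermB$ (by the correspondence of head normal forms noted before \cref{lemma:taylor:H}), hence $\HROf{\vtermB}$ is a sum of value terms of strictly smaller size by \cref{lemma:rsubst:str:size}. Iterating, after at most $\SizeOf{\vtermB}$ applications of $\HROf{-}$ the sum $\HRkOf{n}{\vtermB}$ consists only of terms in head normal form (or is zero), and its normal form equals $\NF{\vtermB}$.

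It remains to show that if $\ltermA$ is not head normalizable then no term in head normal form can appear in $\HRkOf{n}{\vtermB}$. I will establish this by induction on $n$: the key observation is that a head-normal value term $\vtermC=\labs{\svarB}{\appl{\varD}{\strB}}$ preserves its head variable $\varD$ under subsequent $\sresredRT$-reduction, so if such a $\vtermC$ appears in $\HRkOf{n}{\vtermB}$ then iterated use of \cref{lemma:taylor:H} forces the existence of some $K$ such that $\HeadTayExp{\HkOf{K}{\ltermA}}$ contains a head-normal term with head variable $\varD$, which in turn forces $\HkOf{K}{\ltermA}$ itself to be in head normal form by the hnf-correspondence for $\HeadTayExp{-}$, contradicting the hypothesis.

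The main obstacle lies precisely in the last step: $\vresredRT$ operates at the level of vectors and does not preserve individual summands, so tracking a specific head-normal term $\vtermC$ across the alternation of $\HROf{-}$ and $\vresredRT$ in the chain of \cref{lemma:taylor:H} requires a careful inductive invariant. I expect that the invariant should say, roughly, that for every $n$, any head-normal summand of $\HRkOf{n}{\HeadTayExp{\ltermA}}$ already witnesses some $\HkOf{K_n}{\ltermA}$ in head normal form, proved by induction on $n$ with the base case handled by the hnf-correspondence and the inductive step by combining \cref{lemma:HR}, \cref{lemma:headtayexp:subst}, and confluence. Once this invariant is in place, we conclude $\HRkOf{n}{\vtermB}=0$ for $n$ large enough, hence $\NF{\vtermB}=0$, and summing over $\vtermB\in\HeadTayExp{\ltermA}$ yields $\NFTayExp{\ltermA}=0$, completing the contrapositive.
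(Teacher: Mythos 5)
There is a genuine gap, and you have in fact located it yourself: the ``careful inductive invariant'' needed to track a head-normal summand across the alternation of $\HROf{-}$ and $\vresredRT$ is never established, only conjectured (``I expect that the invariant should say, roughly, \dots''). The difficulty is real and not merely technical. You run two iterations in parallel --- $\HRkOf{n}{\vtermB}$ on a fixed resource term $\vtermB$, and $\HkOf{K_i}{\ltermA}$ on the $λ$-term via \cref{lemma:taylor:H} --- but these live on different objects. After one application of \cref{lemma:taylor:H}, an element $\vtermC\in\HROf{\vtermB}$ reduces by $\sresredRT$ into a sum supported in $\HeadTayExp{\HkOf{K_1}{\ltermA}}$; your next step applies $\HROf{-}$ to $\vtermC$ itself, whereas the $λ$-term side would require applying $\HROf{-}$ to the elements of $\HeadTayExp{\HkOf{K_1}{\ltermA}}$. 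Reconciling these would need a commutation of head reduction with the auxiliary $\sresredRT$-reductions hidden in \cref{lemma:taylor:H}, which is precisely what you cannot supply. A secondary weakness: even granting that $\HRkOf{n}{\vtermB}$ eventually contains only head-normal terms, ruling out \emph{all} head-normal summands is stronger than needed and harder than showing each remaining summand normalizes to $0$; your head-variable-preservation observation does not bridge that gap either.

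The paper avoids the synchronization problem entirely by arguing directly (not by contraposition) and by inducting on $\SizeOf{\vtermA}$ for a \emph{witness} $\vtermA\in\HeadTayExp{\ltermA}$ with $\NF{\vtermA}\not=0$. One step suffices: if $\vtermA$ is not head normal, pick $\vtermA'\in\HROf{\vtermA}$ with $\NF{\vtermA'}\not=0$; by \cref{lemma:taylor:H} and positivity of $\K$, $\vtermA'$ reduces to a sum supported in $\HeadTayExp{\HkOf{k}{\ltermA}}$, from which one extracts a new witness $\vtermA''$ with $\NF{\vtermA''}\not=0$ and $\SizeOf{\vtermA''}<\SizeOf{\vtermA}$ (using \cref{lemma:rsubst:str:size}). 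The induction hypothesis then applies to the \emph{new} $λ$-term $\HkOf{k}{\ltermA}$ with the new, smaller witness, so the witness is re-anchored in the Taylor expansion of the head reduct at every stage and there is nothing to track across multiple rounds. If you want to salvage your contrapositive, restructure it as exactly this size induction on witnesses; as written, the argument does not go through.
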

\begin{proof}
  If  $\NFTayExp{\ltermA}\not=0$ then there exists $\vtermA\in\HeadTayExp{\ltermA}$
  such that $\NF{\vtermA}\not=0$.
  The proof is by induction on $\SizeOf{\vtermA}$.
  If $\ltermA$ is in head normal form then we conclude directly.
  Otherwise, $\NF{\HROf{\vtermA}}\not=0$, and
  we can pick $\vtermA'\in\HROf{\vtermA}$ such that $\NF{\vtermA'}\not=0$.
  In particular, $\SizeOf{\vtermA'}<\SizeOf{\vtermA}$.
  By \cref{lemma:taylor:H},
  we obtain $\vtermA'\sresredRT\vtermsA''$
  with \(\support{\vtermsA''}\subseteq\support{\ExtTayExp{\HkOf{k}{\ltermA}}}\).
  Again, $\NF{\vtermsA''}=\NF{\vtermA'}\not=0$ and we can pick 
  $\vtermA''\in\vtermsA''$ such that $\NF{\vtermA''}\not=0$.
  We also have $\vtermA''\in\ExtTayExp{\HkOf{k}{\ltermA}}$ and, 
  moreover, $\SizeOf{\vtermA''}\le\SizeOf{\vtermA'}<\SizeOf{\vtermA}$
  so the induction hypothesis applies:
  $\HkOf{k}{\ltermA}$ is head normalizable,
  hence $\ltermA$ is head normalizable too.
\end{proof}

\begin{corollary}\label{corollary:sensible}
  The $λ$-theory $\eqExtTay$ is sensible.
\end{corollary}
\begin{proof}
  By \cref{lemma:NFTn0:hn}, $\NFTayExp{\ltermA}=0$
  as soon as $\ltermA$ is not head normalizable.
\end{proof}

We conclude the present subsection by showing that the characterization of
coefficients in the extensional Taylor expansion of a term
(\cref{lemma:taylor:coef}) remains valid after normalization.
Our proof relies on \cref{lemma:NFTn0:hn}.

\begin{lemma}\label{lemma:NFT:coef}
  If $\vtermA\in\NFTayExp{\ltermA}$
  then $\NFTayExpCoef{\ltermA}{\vtermA}=\frac{1}{\muldeg{\vtermA}}$.
\end{lemma}
\begin{proof}
  The proof is by induction on \(\SizeOf{\vtermA}\).
  Since $\vtermA\in\NFTayExp{\ltermA}$, \cref{lemma:NFTn0:hn}
  ensures that \(\ltermA\) is head normalizable:
  \(\ltermA\eqBeta\ltermA'\) with
  \(\ltermA'=\labs{\varA_1}{\cdots{\labs{\varA_k}{\appl{\varB}{\ltermsB}}}}\),
  and \(\ltermsB=\tuple{\ltermB_0,\dotsc,\ltermB_{l-1}}\).
  By \cref{corollary:NFT}, \(\NFTayExp{\ltermA}=\NFTayExp{\ltermA'}\).
  Moreover,
  \(\HeadTayExp{\ltermA'}
  = \labs{\varA_1}{\cdots{\labs{\varA_k}{\labs{\svarC}{\appl{\varB}{\HeadArgTayExp{\ltermsB}\seqcat\IdStrExp{\svarC}}}}}}\),
  where \(\svarC\) is chosen fresh.
  Then we can write \(\vtermA
  = \labs{\varA_1}{\cdots{\labs{\varA_k}{\labs{\svarC}{\appl{\varB}{\bagB_0\cons\cdots\cons\bagB_{l+l'-1}\cons\emptystream}}}}}\),
  where \(\bagB_i\in\prom{\NFTayExp{\ltermB_i}}\) for \(0\le i< l\)
  and \(\bagB_{l+j}\in\IdBagExp{\svarC[j]}\) for \(0\le j< l'\).
  For \(0\le i< l\),
  applying the induction hypothesis to the elements of \(\bagB_i\),
  and then \cref{lemma:prom:coef}, we obtain 
  \(\Coef{\prom{\NFTayExp{\ltermB_i}}}{\bagB_i}=\frac{1}{\muldeg{\bagB_i}}\);
  and for \(0\le j<l'\), we have
  \(\Coef{\IdBagExp{\svarC[j]}}{\bagB_{l+j}}=\frac{1}{\muldeg{\bagB_{l+j}}}\)
  by \cref{lemma:cc:coef}.
  The result follows by the definition of \(\muldeg{\vtermA}\).
\end{proof}

This ensures that the definition of \(\eqExtTay\) does not depend
on the choice of the semiring of coefficients.
\begin{corollary}\label{corollary:eqExtTay:support}
  We have \(\ltermA\eqExtTay\ltermA'\) iff
  \(\support{\NFTayExp{\ltermA}}=\support{\NFTayExp{\ltermA'}}\).
\end{corollary}

\subsection{Böhm-out via Taylor expansion}

We have established that $\eqExtTay$ is a sensible extensional $λ$-theory.
It is obviously consistent since, e.g., $\varA\not\eqExtTay\varB$
when $\varA\not=\varB$.

To establish that $\eqExtTay$ is indeed maximum among sensible consistent 
$λ$-theories, we show that it contains the observational 
equivalence induced by head normal forms:
writing $\HNFs$ for the set of head normalizable $λ$-terms,
$\ltermA$ and $\ltermB$ are \definitive{observationally equivalent},
and we write $\ltermA\eqOh\ltermB$ if, for each $λ$-term context
$\tctxA$, $\tctxA[\ltermA]\in\HNFs$ iff $\tctxA[\ltermB]\in\HNFs$.
It is easy to check that a sensible $λ$-theory identifying two
$\eqOh$-distinct terms is inconsistent~\cite[Lemma 16.2.4]{DBLP:books/daglib/0067558}.
It follows that all sensible consistent $λ$-theories are included in 
$\eqOh$, so it remains only to prove that $\eqExtTay$ contains~$\eqOh$.

Our objective is thus to show that if $\ltermA\not\eqExtTay\ltermB$
then there is a context $\tctxA$ such that 
one of $\tctxA[\ltermA]$ and $\tctxA[\ltermB]$ is head normalizable,
and the other one is not --- in this case we say that $\tctxA$
\definitive{separates} $\ltermA$ from $\ltermB$.
By \cref{corollary:eqExtTay:support}, the assumption
$\ltermA\not\eqExtTay\ltermB$ amounts to the existence of a normal
value term $\vtermA$ such that
$\vtermA\in\NFTayExp{\ltermA}\setminus\NFTayExp{\ltermB}$
--- or \emph{vice versa}.
We show that the standard Böhm-out technique to separate $βη$-distinct normal
$λ$-terms can be adapted to this setting, by reasoning on normal value terms instead:
most of what follows is standard material about the $λ$-calculus,
and only the final result relies on the properties of extensional Taylor expansion.

Following the Böhm-out technique, we will only use separating contexts corresponding to
\definitive{Böhm transformations}, which are generated by composing
the following basic transformations:
$\Bapp{\ltermB}:\ltermA\mapsto\appl{\ltermA}{\ltermB}$
for $\ltermB\in\LambdaTerms$
(corresponding to the context $\appl{[\,]}{\ltermB}$); and
$\Bsub{\varA}{\ltermB}:\ltermA\mapsto\subst{\ltermA}{\varA}{\ltermB}$
for $\ltermB\in\LambdaTerms$ and $\varA\in\LVar$
(corresponding to the context $\appl*{\labs{\varA}{[\,]}}{\ltermB}$).
We use a postfix notation for the application of Böhm transformations
and use the sequential order for their composition so that
$\transf{\ltermA}{\btrfA\btrfB}=\transf*{\transf{\ltermA}{\btrfA}}{\btrfB}$
for any Böhm transformations $\btrfA$ and $\btrfB$.
We may apply a Böhm transformation $\btrfB$ to a tuple of terms:
$\transf{\tuple{\ltermA_1,\dotsc,\ltermA_k}}{\btrfB}
\eqdef\tuple{\transf{\ltermA_1}{\btrfB},\dotsc,\transf{\ltermA_k}{\btrfB}}$.

We say that $\ltermA$ and $\ltermB$ are \definitive{strongly separable},
and write $\ltermA\ssep\ltermB$,
if there exists a Böhm transformation $\btrfA$ such that
$\transf{\ltermA}{\btrfA}\eqBeta\labs{\varA}{\labs{\varB}{\varA}}$ and
$\transf{\ltermB}{\btrfA}\eqBeta\labs{\varA}{\labs{\varB}{\varB}}$.
And we say that $\ltermA$ is \definitive{separable} from $\ltermB$,
and write $\ltermA\sep\ltermB$,
if there exists a Böhm transformation $\btrfA$ such that
$\transf{\ltermA}{\btrfA}\in\HNFs$ and $\transf{\ltermB}{\btrfA}\not\in\HNFs$.
Note that strong separability implies separability and is
symmetric.\footnote{
  Our terminology deviates from the one used by, e.g., Coppo
  \emph{et al.}~\cite{CoppoDCRR1978}:
  in their terminology, one would say that the set
  \(\set{\ltermA,\ltermB}\) is separable (resp.\ semi-separable)
  if \(\ltermA\ssep\ltermB\) (resp.\ \(\ltermA\sep\ltermB\)
  or \(\ltermB\sep\ltermA\)).
}

The following are direct consequences of the definitions
or basic exercises in $λ$-calculus.
\begin{fact}\label{fact:sep}
  We have $\ltermA\ssep\ltermB$ as soon as one of the following holds:
  \begin{itemize}
    \item $\ltermA=\appl{\varA}{\ltermsA}$
      and $\ltermB=\appl{\varB}{\ltermsB}$
      with $\varA\not=\varB$
      or $\LengthOf{\ltermsA}\not=\LengthOf{\ltermsB}$;
    \item $\ltermA=\appl{\varA}{\ltermsA}$
      and $\ltermB=\labs{\seq{\varB}}{\appl{\varB}{\ltermsB}}$
      with $\varB\in\seq{\varB}$;
    \item $\ltermA=\labs{\seq{\varB}}{\labs{\varA}{\appl{\varA}{\ltermsA}}}$
      and $\ltermB=\labs{\seq{\varC}}{\labs{\varA}{\appl{\varA}{\ltermsB}}}$
      with  $\LengthOf{\seq{\varB}}\not=\LengthOf{\seq{\varC}}$
      or $\LengthOf{\ltermsA}\not=\LengthOf{\ltermsB}$;
  \end{itemize}
  and we have $\ltermA\sep\ltermB$ as soon as one of the following holds:
  \begin{itemize}
    \item $\ltermA\in\HNFs$ and $\ltermB\not\in\HNFs$;
    \item $\appl{\ltermA}{\ltermC}\sep\appl{\ltermB}{\ltermC}$
      for some $\ltermC\in\LambdaTerms$;
    \item $\subst{\ltermA}{\varA}{\ltermC}\sep\subst{\ltermB}{\varA}{\ltermC}$
      for some $\varA\in\LVar$ and $\ltermC\in\LambdaTerms$;
    \item $\ltermA\eqBEta\ltermA'\sep\ltermB'\eqBEta\ltermB$;
    \item $\ltermA=\labs{\varA}{\ltermA'}$
      and $\ltermB=\labs{\varA}{\ltermB'}$
      with $\ltermA'\sep\ltermB'$.
  \end{itemize}
\end{fact}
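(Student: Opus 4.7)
The plan is to treat each bullet separately, applying the classical Böhm-out construction: all witnesses will be Böhm transformations built from substitutions of head variables by Church-encoded selectors, together with extra applications used to ``expose'' bound structure. Throughout, we rely on the standard fact that $\one = \labs{\varA}{\labs{\varB}{\varA}}$ and $\zero = \labs{\varA}{\labs{\varB}{\varB}}$, and that the basic transformations commute when their variables are kept disjoint.

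For the three strong-separability items, I would proceed as follows. For the first, when $\varA \neq \varB$, take $\btrfA \eqdef \Bsub{\varA}{\labs{\svarC}{\one}} \Bsub{\varB}{\labs{\svarC}{\zero}}$ with $\svarC$ chosen to absorb exactly $\LengthOf{\ltermsA}$ (resp. $\LengthOf{\ltermsB}$) arguments; a single round of $\beta$-reduction yields $\one$ and $\zero$. When $\varA = \varB$ but $\LengthOf{\ltermsA} \neq \LengthOf{\ltermsB}$, first post-compose with $\Bapp{\one}\,\Bapp{\zero}$ applied sufficiently often to equalise the final arities, then substitute $\varA$ by a selector $\labs{x_1\dotsc x_n}{x_i}$ that picks the ``$\one$ slot'' for the short tuple and the ``$\zero$ slot'' for the long one. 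For the second item, the bound head variable $\varB$ on the right cannot be hit by any substitution on $\varA$, so we first apply enough arguments to the pair to $\beta$-reduce the outer abstractions of $\ltermB$, revealing $\varB$ as a free variable after this stage; we then treat both as in the first item. The third item is an instance of the same schema: bind-variable mismatches are converted to arity mismatches by applying the right number of extra arguments, and the resulting discrepancy is resolved by a single selector substitution.

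For the separability items, each is almost immediate from the definitions. The first is witnessed by the empty Böhm transformation. The two ``composition'' items follow because separating transformations can be concatenated on the right by further $\Bapp{-}$ or $\Bsub{-}{-}$ steps. The fourth item uses that head-normalizability is a $\eqBEta$-invariant property (head reduction is part of $\bred$, and $\etared$ preserves head-normal forms), together with the fact that Böhm transformations preserve $\eqBEta$. The last item reduces to the body: given a separator $\btrfA$ for $\ltermA'$ versus $\ltermB'$, postcomposing with $\Bapp{\varA}$ (and possibly renaming) produces a separator for $\labs{\varA}{\ltermA'}$ versus $\labs{\varA}{\ltermB'}$.

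The main obstacle is purely bookkeeping: we must keep all fresh variables genuinely fresh with respect to $\ltermA$, $\ltermB$ and every intermediate term introduced in the Böhm transformation, and we must compute arities correctly when compensating for mismatched tuple lengths and mismatched numbers of leading abstractions. No new ideas relative to the classical Böhm-out argument are needed; the only subtlety is to phrase everything in terms of our two reductions $\bred$ and $\etared$ rather than $\eqBeta$ alone, so that the $\eqBEta$ clause of separability is exploited (and not merely $\eqBeta$) in the cases where $\eta$-expansion of bound arguments is necessary.
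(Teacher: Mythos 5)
The paper gives no proof of this Fact --- it is introduced as ``direct consequences of the definitions or basic exercises in $λ$-calculus'' --- so your proposal is the only argument on the table. Its architecture is the standard Böhm-out one, and most of it is sound: the five $\sep$ items are indeed immediate from the definitions (the empty transformation; pre-composition with a single application or substitution step, which is what ``concatenation'' should mean here given the sequential composition convention; invariance of head normalizability under $\eqBEta$ together with compatibility of Böhm transformations with $\eqBEta$; and the abstraction case via $(\labs{\varA}{\ltermA'})\,\varA\eqBeta\ltermA'$). The second and third $\ssep$ items also work, provided the ``extra arguments'' you apply are \emph{fresh variables}, so that the bound head variable genuinely becomes a free one and you land back in the first item; with arguments like $\one$ or $\zero$ the bound head is consumed rather than ``revealed''.

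The gap is in the first $\ssep$ item, sub-case $\varA=\varB$ with $m\eqdef\LengthOf{\ltermsA}\not=\LengthOf{\ltermsB}\eqdef n$. First, you cannot ``equalise the final arities'' by applications: applying a common argument increments both argument counts, so their difference is invariant under every step $\ltermA\mapsto\appl{\ltermA}{\ltermC}$; what you actually exploit is that the appended arguments land at positions offset by $n-m$ in the two terms. Second, under that reading, the alternating padding $\one,\zero,\one,\zero,\dotsc$ read through a single projection $\labs{x_1\dots x_N}{x_i}$ fails whenever $n-m$ is even: slot $i$ of one term is the $(i-m)$-th appended argument and slot $i$ of the other is the $(i-n)$-th, and these have the same parity, hence the same Boolean. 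A correct padding must place $\one$ at index $i-m$ and $\zero$ at index $i-n$ of the appended list (these indices differ precisely because $m\not=n$); alternatively, substitute $\varA$ by $\rotate{k}$ for $k$ large, which converts the arity difference into a difference in the number of leading abstractions over a head of the form $\labs{\varC}{\appl{\varC}{\cdots}}$, i.e.\ into an instance of the third bullet (which then resolves into the first bullet with \emph{distinct} fresh head variables, so there is no circularity). This $\rotate{k}$ route is the one the paper itself relies on when it invokes this Fact inside \cref{lemma:bsep}.
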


For each $k\in\N$, we write
$\rotate{k}\eqdef
\labs{\varA_1}{\dotsc\labs{\varA_k}{\labs\varB}{\appl{\varB}{\varA_1\cdots\varA_k}}}
\in\LambdaTerms$.
For $l\in\N$, $\seq{\varA}=\tuple{\varA_1,\dotsc,\varA_l}\in\LVar^l$ and 
$\seq{k}=\tuple{k_1,\dotsc,k_l}\in\N^l$, we define the Böhm transformation 
$\Brot{\seq{\varA}}{\seq{k}}\eqdef
\Bsub{\varA_1}{\rotate{k_1}}\cdots\Bsub{\varA_l}{\rotate{k_l}}$,
which is a sequence of substitutions.
We say that $\ltermA$ is \definitive{Böhm-separable} from $\ltermB$,
and write $\ltermA\bsep\ltermB$, if,
for each $l\in\N$, each tuple $\seq{\varA}\in\LVar^l$ of pairwise distinct
variables and each tuple $\seq{k}\in\N^l$ of pairwise distinct and sufficiently
large integers, we have
$\transf{\ltermA}{\Brot{\seq{\varA}}{\seq{k}}}
\sep\transf{\ltermB}{\Brot{\seq{\varA}}{\seq{k}}}$.
Taking $l=0$, Böhm-separability implies separability.

The results of the following lemma are, again, standard material.
They adapt the setup used by Krivine for the strong separation of 
$η$-distinct $β$-normal forms \cite[Chapter 5]{krivine}.

\begin{lemma}\label{lemma:bsep}
  We have $\ltermA\bsep\ltermB$ as soon as one of the following holds:
  \begin{enumerate}
    \item
      \label{lemma:bsep:unsolvable}
      $\ltermA\in\HNFs$ and $\ltermB\not\in\HNFs$;
    \item 
      \label{lemma:bsep:betaeta}
      $\ltermA\eqBEta\ltermA'\bsep\ltermB'\eqBEta\ltermB$;
    \item 
      \label{lemma:bsep:lambda}
      $\ltermA=\labs{\varA}{\ltermA'}$ and $\ltermB=\labs{\varA}{\ltermB'}$
      with $\ltermA'\bsep\ltermB'$;
    \item
      \label{lemma:bsep:var}
      $\ltermA=\appl{\varA}{\ltermsA}$
      and $\ltermB=\appl{\varB}{\ltermsB}$
      with $\varA\not=\varB$
      or $\LengthOf{\ltermsA}\not=\LengthOf{\ltermsB}$;
    \item
      \label{lemma:bsep:arg}
      $\ltermA=\appl{\varA}{\ltermA_0\cdots\ltermA_k}$
      and $\ltermB=\appl{\varA}{\ltermB_0\cdots\ltermB_k}$
      with $\ltermA_i\bsep\ltermB_i$ for some $i\le k$.
  \end{enumerate}
\end{lemma}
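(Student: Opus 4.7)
Fix $l\in\N$, a tuple $\seq{\varA}=\tuple{\varA_1,\dotsc,\varA_l}$ of pairwise distinct variables, and a tuple $\seq{k}=\tuple{k_1,\dotsc,k_l}$ of pairwise distinct, sufficiently large integers, and set $\btrfA=\Brot{\seq{\varA}}{\seq{k}}$. In each case, the goal is to prove $\transf{\ltermA}{\btrfA}\sep\transf{\ltermB}{\btrfA}$; throughout, the proof is driven by the closure properties of $\sep$ (adding applications, adding substitutions, passing through $\eqBEta$, and absorbing abstractions).

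Cases~\ref{lemma:bsep:betaeta} and \ref{lemma:bsep:lambda} are nearly immediate. For Case~\ref{lemma:bsep:betaeta}, since $\eqBEta$ is a congruence under substitution, $\transf{\ltermA}{\btrfA}\eqBEta\transf{\ltermA'}{\btrfA}$ and $\transf{\ltermB'}{\btrfA}\eqBEta\transf{\ltermB}{\btrfA}$; instantiating $\ltermA'\bsep\ltermB'$ at the same $\seq{\varA},\seq{k}$ yields $\transf{\ltermA'}{\btrfA}\sep\transf{\ltermB'}{\btrfA}$, which transfers back by $\eqBEta$-closure of $\sep$. For Case~\ref{lemma:bsep:lambda}, $\alpha$-renaming ensures $\varA\not\in\seq{\varA}\cup\LVarOf{\seq{k}}$, so $\transf{\labs{\varA}{\ltermA'}}{\btrfA}=\labs{\varA}{\transf{\ltermA'}{\btrfA}}$, and the abstraction closure of $\sep$ combined with $\ltermA'\bsep\ltermB'$ concludes.

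For Case~\ref{lemma:bsep:unsolvable}, one shows that rotator substitutions with sufficiently large parameters preserve both solvability and unsolvability. If $\ltermA\eqBeta\labs{\seq{\varC}}{\appl{\varB}{\ltermsC}}$ is a head normal form, then $\transf{\ltermA}{\btrfA}$ has head $\varB$ (if $\varB\not\in\seq{\varA}$) or $\rotate{k_j}$ (if $\varB=\varA_j$); in the latter case, as long as $k_j>\LengthOf{\ltermsC}$, the rotator is applied to fewer than $k_j+1$ arguments and its head reduction terminates in an abstraction, which is again head normal. Conversely, if $\ltermB\not\in\HNFs$, then in its infinite head reduction sequence no reduct can have a free variable in head position (such a reduct would already be head normal); consequently, substituting rotators for free variables cannot create new head redexes, and the head reduction of $\transf{\ltermB}{\btrfA}$ faithfully mirrors that of $\ltermB$ and remains infinite. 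Combining both yields $\transf{\ltermA}{\btrfA}\in\HNFs$ and $\transf{\ltermB}{\btrfA}\not\in\HNFs$, hence $\sep$.

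Cases~\ref{lemma:bsep:var} and \ref{lemma:bsep:arg} constitute the Böhm-out core. The general strategy is to add just enough fresh variables as further arguments — using the application closure of $\sep$ — to make the rotators at the head of $\transf{\ltermA}{\btrfA}$ and $\transf{\ltermB}{\btrfA}$ fully reduce, thereby exposing one of the original arguments (or one of the added variables) as the new head. In Case~\ref{lemma:bsep:arg}, when $\varA=\varA_j$, adding $k_j-k$ fresh variables brings the arity of $\rotate{k_j}$ to exactly $k_j+1$, so the rotator reduces to an application whose arguments are, in fixed positions, all $\transf{\ltermA_0}{\btrfA},\dotsc,\transf{\ltermA_k}{\btrfA}$ together with the fresh variables; substituting for the last fresh variable a suitable projector that selects the position carrying $\transf{\ltermA_i}{\btrfA}$, the whole term becomes $\eqBEta$-equivalent to $\transf{\ltermA_i}{\btrfA}$ (and likewise for $\ltermB$), and the hypothesis $\ltermA_i\bsep\ltermB_i$ applied at the same $\seq{\varA},\seq{k}$ concludes. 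If $\varA\not\in\seq{\varA}$, one substitutes $\varA$ directly by a projector. Case~\ref{lemma:bsep:var} follows the same pattern: the asymmetry ($\varA\not=\varB$ or $\LengthOf{\ltermsA}\not=\LengthOf{\ltermsB}$), together with the distinctness and largeness of the $k_i$'s, guarantees that the positions exposed as heads after full reduction differ between $\ltermA$ and $\ltermB$; substituting one such position by the identity $\labs{\varC}{\varC}$ and the other by a term $\eqBeta$-equivalent to $\Omega$, we reduce to Case~\ref{lemma:bsep:unsolvable}.

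The main difficulty is the combinatorial bookkeeping of Cases~\ref{lemma:bsep:var} and~\ref{lemma:bsep:arg}: picking the $k_i$'s large enough to dominate all arities appearing in $\ltermA$ and $\ltermB$, matching the number of added applications precisely to $k_j+1$ so that $\rotate{k_j}$ reduces completely, tracking which argument position ends up in head position after reduction, and orchestrating the nested applications of the $\sep$-closure rules (applications, substitutions, $\eqBEta$) in the right order so that the final configuration matches either the inductive hypothesis or Case~\ref{lemma:bsep:unsolvable}. Once this bookkeeping is set up, all individual reductions are routine $\beta$-computations on rotators and projectors.
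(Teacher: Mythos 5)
Your overall strategy coincides with the paper's: fix $\seq{\varA}$ and $\seq{k}$, show $\transf{\ltermA}{\Brot{\seq{\varA}}{\seq{k}}}\sep\transf{\ltermB}{\Brot{\seq{\varA}}{\seq{k}}}$ in each case, handle cases \ref{lemma:bsep:betaeta} and \ref{lemma:bsep:lambda} by congruence, case \ref{lemma:bsep:unsolvable} by the observation that Böhm transformations commute with head reduction, and cases \ref{lemma:bsep:var}--\ref{lemma:bsep:arg} by reducing the rotators to full arity and projecting. Case \ref{lemma:bsep:var} is fine (the paper just delegates the endgame to the stated Fact on strong separability, whereas you re-derive it by hand with $\labs{\varC}{\varC}$ and $\Omega$).

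There is, however, a genuine gap in case \ref{lemma:bsep:arg} when the head variable satisfies $\varA\not\in\seq{\varA}$: you propose to ``substitute $\varA$ directly by a projector''. This is unsound, and it is precisely the failure mode that the rotator trick exists to avoid. The variable $\varA$ may occur free inside the arguments $\ltermA_i$ and $\ltermB_i$, and your only hypothesis is $\ltermA_i\bsep\ltermB_i$, i.e.\ separability after substituting \emph{rotators} (with large, pairwise distinct indices) for free variables; it says nothing about substituting a projector $\proj{k'}{i}$, which discards all but one of its arguments and can therefore erase exactly the subterms in which $\ltermA_i$ and $\ltermB_i$ differ. The paper's proof instead \emph{extends} the Böhm transformation to $\btrfB'=\Brot{\seq{\varA},\varA}{\seq{k},k'}=\btrfB\Bsub{\varA}{\rotate{k'}}$ for a fresh large $k'$ --- so that the inductive hypothesis still applies, yielding $\transf{\ltermA_i}{\btrfB'}\sep\transf{\ltermB_i}{\btrfB'}$ --- and only then feeds $\proj{k'}{i}$ as the final \emph{argument} consumed by the rotator's trailing abstraction. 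Your treatment of the subcase $\varA=\varA_j$ is correct and matches the paper; you need to route the subcase $\varA\not\in\seq{\varA}$ through the same mechanism rather than through a bare projector substitution.
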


We are now ready to establish the separation theorem.
The canonical structure of normal extensional resource terms
allows us to proceed by a simple induction, very similar
to the proof of strong separation on $β$-normal $λ$-terms
(compare, e.g., with \cite[Lemma 5.10]{krivine}).

\begin{theorem}[Separation]\label{theorem:separation}
  If $\ltermA\not\eqExtTay\ltermB$ then $\ltermA\sep\ltermB$ or $\ltermB\sep\ltermA$.
\end{theorem}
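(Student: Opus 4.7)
The plan is to prove the stronger conclusion that $\ltermA \bsep \ltermB$ or $\ltermB \bsep \ltermA$, from which $\ltermA \sep \ltermB$ or $\ltermB \sep \ltermA$ follows since Böhm-separability implies separability. Fix a normal value term $\vtermA$ witnessing $\Coef{\NFTayExp{\ltermA}}{\vtermA} \neq \Coef{\NFTayExp{\ltermB}}{\vtermA}$; up to swapping $\ltermA$ and $\ltermB$ we may arrange $\Coef{\NFTayExp{\ltermA}}{\vtermA} \neq 0$. The argument proceeds by strong induction on $\SizeOf{\vtermA}$, repeatedly applying the closure properties gathered in \cref{lemma:bsep}.

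Since $\NFTayExp{\ltermA} \neq 0$, \cref{theorem:unsolvable} ensures $\ltermA$ is head normalizable; if $\ltermB$ is not, \cref{lemma:bsep:unsolvable} concludes. Otherwise take head normal forms $\ltermA \eqBeta \labs{\seq{\varA}}{\appl{\varC_A}{\ltermsA}}$ and $\ltermB \eqBeta \labs{\seq{\varB}}{\appl{\varC_B}{\ltermsB}}$. Using $\eta$-expansion (legitimate since $\eqBEta \subseteq \eqTay$), $\alpha$-renaming, and \cref{lemma:bsep:betaeta}, we may further assume $\seq{\varA} = \seq{\varB}$; then \cref{lemma:bsep:lambda} reduces the task to $\appl{\varC_A}{\ltermsA'} \bsep \appl{\varC_B}{\ltermsB'}$, where $\ltermsA'$ and $\ltermsB'$ are the $\eta$-aligned argument sequences.

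The structure of $\vtermA$ then guides the remaining comparison. Since $\HeadTayExp{\ltermA}$ applied to a head normal form produces a value term whose outer application $\appl{\varC_A}{\cdots}$ is already head-irreducible, normalization acts only inside argument positions; exploiting that $\NF{-}$ commutes with $\prom{-}$ on value vectors, the normal elements of $\NFTayExp{\ltermA}$ take the shape $\labs{\seq{\varA}}{\labs{\svarB}{\appl{\varC_A}{\bagA_1 \cons \cdots \cons \bagA_p \cons \strA'}}}$ with each $\bagA_i \in \prom{\NFTayExp{\ltermA_i}}$ and $\strA' \in \IdStrExp{\svarB}$, and analogously for $\ltermB$. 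Matching $\vtermA$ against both rigid shapes: if the head variable or the arity data mismatch, \cref{lemma:bsep:var} closes the case; otherwise, by the product formula of \cref{lemma:prom:coef}, the discrepancy between $\Coef{\NFTayExp{\ltermA}}{\vtermA}$ and $\Coef{\NFTayExp{\ltermB}}{\vtermA}$ must originate at some argument position $i$, yielding a $\vtermC \in \bagA_i$ with $\Coef{\NFTayExp{\ltermA_i}}{\vtermC} \neq \Coef{\NFTayExp{\ltermB_i}}{\vtermC}$ and $\SizeOf{\vtermC} < \SizeOf{\vtermA}$; induction followed by \cref{lemma:bsep:arg} concludes.

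The hard part will be pinning down this canonical shape of the normal elements of $\NFTayExp{\ltermA}$ for a head-normalizable $\ltermA$: this requires reconciling the single sequence-variable abstraction of the extensional calculus with the explicit $\eta$-alignment of the $\lambda$-term's head abstractions, and formally verifying $\NF{\prom{\vtermvA}} = \prom{\NF{\vtermvA}}$ on value vectors — routine but notationally delicate calculations about bags, promotions and the fixed tail $\IdStrExp{\svarB}$.
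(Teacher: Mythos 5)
Your proposal is correct and follows essentially the same route as the paper's proof: induction on the witnessing normal value term, reduction to $\eta$-aligned head normal forms via the clauses of \cref{lemma:bsep}, and descent into the argument position where the expansions differ, using the canonical shape $\labs{\svarB}{\appl{\varA}{\prom{\NFTayExp{\ltermA_1}}\cons\cdots\cons\prom{\NFTayExp{\ltermA_k}}\cons\IdStrExp{\svarB}}}$ that the paper already extracts from its congruence lemma for $\eqTay$, so the ``hard part'' you flag is available off the shelf. The one (welcome) refinement is that you track a coefficient discrepancy and localize it with the product formula of \cref{lemma:prom:coef}, which matches the definition of $\eqTay$ more literally, whereas the paper silently assumes a witness lying in one support but not the other.
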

\begin{proof}
  We prove by induction on $\vtermA$ that,
  if $\vtermA\in\NFTayExp{\ltermA}\setminus\NFTayExp{\ltermB}$,
  then $\ltermA\bsep\ltermB$.
  In this case $\NFTayExp{\ltermA}\not=0$,
  hence $\ltermA\in\HNFs$ by \cref{lemma:NFTn0:hn}.
  If $\ltermB\not\in\HNFs$, we conclude directly by 
  \cref{lemma:bsep:unsolvable} of \cref{lemma:bsep}.

  Otherwise, since $\eqExtTay$ is a $λ$-theory, \cref{lemma:bsep:betaeta}
  allows us to $\beta$-reduce both $\ltermA$ and $\ltermB$
  to bring them into head normal form:
  $\ltermA=\labs{\seq{\varA}}{\appl{\varA}{\ltermsA}}$ and
  $\ltermB=\labs{\seq{\varA}'}{\appl{\varA'}{\ltermsB}}$,
  with $\ltermsA=\tuple{\ltermA_1,\dotsc,\ltermA_k}$ and 
  $\ltermsB=\tuple{\ltermB_1,\dotsc,\ltermB_{k'}}$.
  Then we can write
  $\vtermA=\labs{\seq{\varA}}{\labs{\svarB}{\appl{\varA}{\strA}}}$.
  And since $\eqExtTay$ is extensional, \cref{lemma:bsep:betaeta} again allows
  to $η$-expand inside $\ltermA$ and $\ltermB$.
  We can thus ensure that
  $\LengthOf{\seq{\varA}}=\LengthOf{\seq{\varA}'}$,
  and that $k$ is large enough to write
  $\strA=\bagA_1\cons\cdots\cons\bagA_k\cons\emptystream$.
  By $α$-equivalence, we can further assume $\seq{\varA}=\seq{\varA}'$.
  By iterating \cref{lemma:bsep:lambda}, 
  and observing that $\labs{\varC}{\vtermC}\in\NFTayExp{\labs{\varC}{\ltermC}}$
  iff $\vtermC\in\NFTayExp{\ltermC}$,
  we can assume that 
  $\ltermA=\appl{\varA}{\ltermsA}$,
  $\ltermB=\appl{\varA'}{\ltermsB}$
  and
  $\vtermA=\labs{\svarB}{\appl{\varA}{\strA}}$.

  If $\varA\not=\varA'$ or $\LengthOf{\ltermsA}\not=\LengthOf{\ltermsB}$,
  we conclude by \cref{lemma:bsep:var}.
  Otherwise, observe that
  \begin{align*}
    \NFTayExp{\ltermA}
    &=
    \labs{\svarB}{\appl{\varA}{
      \prom{\NFTayExp{\ltermA_1}}
      \cons\cdots\cons
      \prom{\NFTayExp{\ltermA_k}}
      \cons\IdStrExp{\svarB}
    }}
    \quad\text{and}
    \\
    \NFTayExp{\ltermB}
    &=
    \labs{\svarB}{\appl{\varA}{
      \prom{\NFTayExp{\ltermB_1}}
      \cons\cdots\cons
      \prom{\NFTayExp{\ltermB_k}}
      \cons\IdStrExp{\svarB}
    }}
    \;.
  \end{align*}
  Since $\emptystream\in\IdStrExp{\svarB}$ and $\vtermA\not\in\NFTayExp{\ltermB}$,
  there must be $i$ such that $\bagA_i\not\in\prom{\NFTayExp{\ltermB_i}}$.
  Hence there must be $\vtermA_i\in\bagA_i$ such that 
  $\vtermA_i\not\in\NFTayExp{\ltermB_i}$.
  Since moreover $\bagA_i\in\prom{\NFTayExp{\ltermA_i}}$,
  we have $\vtermA_i\in\NFTayExp{\ltermA_i}$.
  By applying the induction hypothesis to $\vtermA_i$, 
  we obtain $\ltermA_i\bsep\ltermB_i$,
  and conclude using \cref{lemma:bsep:arg}.
\end{proof}

We have thus established that, if $\ltermA\not\eqExtTay\ltermB$
then $\ltermA\not\eqOh\ltermB$, which is sufficient to ensure that 
$\eqExtTay$ contains every other consistent and sensible $λ$-theory.

\begin{theorem}\label{theorem:eqTay:is:eqHs}
  Given two $λ$-terms $\ltermA$ and $\ltermB$,
  we have $\ltermA\eqExtTay\ltermB$ iff $\ltermA\eqOh\ltermB$,
  and \(\eqExtTay\) is the maximum consistent and sensible $λ$-theory \(\Hs\).
\end{theorem}

In particular, \cref{example:NFJ} provides a new proof that \(\tlJ\eqOh\tlId\).

\section{A relational model}\label{section:rel}

By \cref{theorem:eqTay:is:eqHs},
to construct a model of $\Hs$, it is sufficient to give a model of the
reduction of resource vectors.
We exploit this approach to give a new proof of the fact
that $\Hs$ is the $λ$-theory induced by a particular 
extensional reflexive object in the relational model of the simply typed $λ$-calculus.

We define the set $\RelTypes$ of \definitive{relational types}
as $\RelTypes\eqdef\bigcup_{k\in\N}\RelTypes[k]$
with $\RelTypes[0]\eqdef\emptyset$ and $\RelTypes[k+1]\eqdef\SsOf{\RelTypes[k]}$.
In other words, the elements of $\RelTypes$ are generated from \(\emptystream\)
by iterating the construction
$\tuple{\rbagA,\rtypeB}\in\MfOf{\RelTypes}\times\RelTypes
\mapsto\rbagA\cons\rtypeB\in\RelTypes$,
subject to the identity $\mset{}\cons\emptystream=\emptystream$.
Note that $\RelTypes$ is nothing but the extensional reflexive object 
of the cartesian closed category $\MRel$ put forward by 
Bucciarelli, Ehrhard and Manzonetto \cite{DBLP:conf/csl/BucciarelliEM07}
--- as an example of the construction of an extensional $λ$-theory based on
reflexive object in a cartesian closed category having ``not enough points''. 
The fact that this $λ$-theory characterizes $\Hs$ was later proved
by Manzonetto \cite{DBLP:conf/mfcs/Manzonetto09}.

\subsection{Relational semantics of \pdflambda-terms as a type system}

Let us recall that the interpretation of $λ$-terms in this model 
can be described by a kind of non-idempotent intersection type system:
the “$\cons$” constructor acts as an arrow connective, while 
the monoid structure of multisets induces the non-idempotent intersection
operator.

More explicitly, we first define \definitive{relational typing contexts}
(denoted by greek capitals $\rctxA$, $\rctxB$, $\rctxC$)
as functions $\LVar\to\MfOf{\RelTypes}$ with finite support, \emph{i.e.}\ 
$\rctxA:\LVar\to\MfOf{\RelTypes}$ is a relational typing context if
$\set{\varA\in\LVar\st\rctxA(\varA)\not=\emptybag}$ is finite.
We write $\emptyrctx$ for the empty context: $\emptyrctx(\varA)=\emptybag$
for each $\varA\in\LVar$.
We write $\varA:\bagA$ for the context $\rctxA$ such that
$\rctxA(\varA)=\bagA$ and $\rctxA(\varB)=\emptybag$ when $\varA\not=\varB$.
And we define the concatenation of contexts point-wise:
$(\rctxA\rctxcat\rctxB)(\varA)\eqdef\rctxA(\varA)\bagcat\rctxB(\varA)$.
We also write $\rctxA-\varA$ for the context such that 
$(\rctxA-\varA)(\varA)=\emptybag$ and 
$(\rctxA-\varA)(\varB)=\rctxA(\varB)$ when $\varA\not=\varB$.

Similarly, if $\rstrA=\tuple{\rbagA_i}_{i\in\N}\in\SsOf{\RelTypes}$, we write 
$\svarA:\rstrA$ for the context $\rctxA$ such that 
$\rctxA(\svarA[i])=\rbagA_i$ for $i\in\N$,
and $\rctxA(\varB)=\emptybag$ if $\varB\not\in\svarA$.
We also write $\rctxA-\svarA$ for the context such that 
$(\rctxA-\svarA)(\svarA[i])=\emptybag$ and 
$(\rctxA-\svarA)(\varB)=\rctxA(\varB)$ when $\varB\not\in\svarA$.
Finally, $\rctxA(\svarA)$ denotes the sequence $\tuple{\rctxA(\svarA[i])}_{i\in\N}$,
which is a stream because $\rctxA$ has finite support.

The relational semantics $\sem{\ltermA}$ can then be computed as
the set \[\set{\tuple{\rctxA,\rtypeA}\st \rctxA\vdash\ltermA:\rtypeA}\]
where the type system is described by the following rules:
\begin{gather*}
  \begin{prooftree}
    \infer0{\varA:\mset{\rtypeA}\rjug \varA:\rtypeA}
  \end{prooftree}
  \qquad
  \begin{prooftree}
    \hypo{\rctxA\rjug \ltermA:\rtypeB}
    \infer1{\rctxA-\varA\rjug\labs{\varA}{\ltermA}:\rctxA(\varA)\cons\rtypeB}
  \end{prooftree}
  \\[1em]
  \begin{prooftree}
    \hypo{\rctxA\rjug \ltermA:\rbagA\cons \rtypeB}
    \hypo{\rctxB\rjugb \ltermB:\rbagA}
    \infer2{\rctxA\rctxcat\rctxB\rjug\appl{\ltermA}{\ltermB}:\rtypeB}
  \end{prooftree}
  \qquad
  \begin{prooftree}
    \hypo{\rctxA_1\rjug \ltermA:\rtypeA_1}
    \hypo{\cdots}
    \hypo{\rctxA_k\rjug \ltermA:\rtypeA_k}
    \infer3{\rctxA_1\rctxcat\cdots\rctxcat\rctxA_k\rjugb\ltermA:\mset{\rtypeA_1,\dotsc,\rtypeA_k}}
  \end{prooftree}
\end{gather*}
where the additional kind of judgement ($\rctxA\rjugb\ltermA:\rbagA$)
used in the application rule reflects the promotion rule of linear logic.

It is well known that this interpretation
is stable under $β$-reduction and $η$-expansion, 
and that it factors through (ordinary) Taylor expansion
(see, e.g., \cite[Proposition 5.9]{DBLP:journals/mscs/Manzonetto12}),
the semantics of resource terms being given by a natural variant
of the above rules:
\[
  \sem{\ltermA}=\sem{\TayExp{\ltermA}}=\sem{\NF{\TayExp{\ltermA}}}
  =\bigcup_{\vtermA\in\TayExp{\ltermA}} \sem{\vtermA}\;.
\]

We will now show a similar result for our extensional version of Taylor
expansion.
Since this semantics is set based, we only consider the qualitative 
version of extensional Taylor expansion, \emph{i.e.}\ we fix the set of 
coefficients to be the boolean semiring \(\Bool\).

A novel feature of our factorization is that, for extensional normal resource
terms, the relational semantics is always a singleton:
each normal resource term is mapped to a single point in the relational semantics.
Even more strikingly, the interpretation of a non-normal resource term
is not only a finite set, but has at most one element.
By contrast, in the ordinary resource calculus, normal terms
always have an infinite semantics:
for instance, the semantics of a single variable \(\varA\) is 
\(\set{\tuple{\varA:\mset{\rtypeA},\rtypeA}\st \rtypeA\in\RelTypes}\) (the same as if it was considered as a \(λ\)-term).
Also, the compatibility of the semantics with $η$-expansion is reflected
by the properties of Taylor expansion, instead of being an additional 
property of the model, to be checked by hand.

\subsection{Relational semantics of extensional resource terms}\label{section:rel:resource}

Each syntactic category of expressions will correspond to a specific category of types:
\begin{itemize}
  \item value expressions will be typed with elements of $\RelTypes$;
  \item bag terms with elements of $\MfOf{\RelTypes}$;
  \item stream terms with elements of $\SsOf{\RelTypes}$;
  \item base terms with a single, newly introduced base type $\basetype$.
\end{itemize}

The underlying idea is that a value expression of type $\rtypeA=\tuple{\rbagA_i}_{i\in\N}$ 
expects a sequence of bags $\strA=\tuple{\bagA_i}_{i\in\N}$ with
$\bagA_i$ of type $\rbagA_i$ to produce a successful interaction
(of type $\basetype$).
Although the types are incidentally taken from the same set $\RelTypes=\SsOf{\RelTypes}$,
the intuitive meaning of typing for value expressions and for stream terms
is thus quite different.
To fit this intuition, we redefine $\RelTypes$ simply to be able to distinguish
between $\RelTypes$ and $\SsOf{\RelTypes}$:
we believe this will clarify the presentation,
although the remainder of this section might be carried out identically
without this notational trick.

We define the sets $\ValueTypes$ of \definitive{value types}
and $\StreamTypes$ of \definitive{stream types},
simultaneously by mutual induction as follows:
\begin{itemize}
  \item $\rtypeA\in\ValueTypes$ if $\rtypeA=\dualtype{\rstrA}$ with $\rstrA\in\StreamTypes$;
  \item $\rstrA\in\StreamTypes$ if $\rstrA\in\SsOf{\ValueTypes}$;
\end{itemize}
so that $\rstrA\mapsto\dualtype{\rstrA}$ defines a bijection 
from $\StreamTypes=\SsOf{\ValueTypes}$ to $\ValueTypes$.
We will also write $\BagTypes\eqdef\MfOf{\ValueTypes}$ for the set of \definitive{bag types}
and $\BaseTypes\eqdef\set{\basetype}$ for the singleton containing the \definitive{base type}.
We call \definitive{type term} (denoted by $\rtermA,\rtermB,\rtermC$) any
of a value type, bag type, stream type or the base type.

A type context is now a function $\rctxA:\LVar\to\BagTypes$,
whose value is almost always the empty bag.
The type system involves four kinds of jugdements:
\[
  \rctxA\rjugVal\vtermA:\rtypeA
  \qquad
  \rctxA\rjugBag\bagA:\rbagA
  \qquad
  \rctxA\rjugStr\strA:\rstrA
  \qquad
  \rctxA\rjugBase\btermA:\basetype
\]
and we denote by $\rctxA\rjug\termA:\rtermA$ any judgement as above.
The rules are in \cref{figure:reltyping}.
\begin{figure}
\begin{gather*}
  \begin{prooftree}
    \hypo{\rctxA\rjugVal \vtermA:\dualtype{\rstrA}}
    \hypo{\rctxB\rjugStr \strB:\rstrA}
    \infer2{\rctxA\rctxcat\rctxB\rjugBase \appl{\vtermA}{\strB}:\basetype}
  \end{prooftree}
  \qquad
  \begin{prooftree}
    \hypo{\rctxA\rjugStr \strA:\rstrA}
    \infer1{\rctxA\rctxcat\varA:\mset{\dualtype{\rstrA}}\rjugBase \appl{\varA}{\strA}:\basetype}
  \end{prooftree}
  \\[1em]
  \begin{prooftree}
    \infer0{\emptyrctx \rjugStr \emptystream:\emptystream}
  \end{prooftree}
  \qquad
  \begin{prooftree}
    \hypo{\rctxA\rjugBag \bagA:\rbagA}
    \hypo{\rctxB\rjugStr \strB:\rstrB}
    \infer2{\rctxA\rctxcat\rctxB \rjugStr \pars{\bagA\cons\strB}:\pars{\rbagA\cons\rstrB}}
  \end{prooftree}
  \\[1em]
  \begin{prooftree}
    \hypo{\rctxA\rjugBase \btermA:\basetype}
    \infer1{\rctxA-\svarA\rjugVal \labs{\svarA}{\btermA}:\dualtype{\rctxA(\svarA)}}
  \end{prooftree}
  \qquad
  \begin{prooftree}
    \hypo{\rctxA_1\rjugVal \vtermA_1:\rtypeA_1}
    \hypo{\cdots}
    \hypo{\rctxA_k\rjugVal \vtermA_k:\rtypeA_k}
    \infer3{\rctxA_1\rctxcat\cdots\rctxcat\rctxA_k
    \rjugBag \mset{\vtermA_1,\dotsc,\vtermA_k}:\mset{\rtypeA_1,\dotsc,\rtypeA_k}}
  \end{prooftree}
\end{gather*}
\caption{Relational typing system for the extensional resource calculus}
\label{figure:reltyping}
\end{figure}
Note that, In particular, the last rule allows us to derive $\emptyrctx\rjugBag\emptybag:\emptybag$.

As we have announced, contrasting with the ordinary resource
calculus, the relational interpretation of an extensional resource
term will have at most one element:
\begin{lemma}\label{lemma:typing:unicity}
  Each resource term $\termA$ admits at most one derivable typing judgement.
  If moreover $\termA$ is normal, then it is typable.
\end{lemma}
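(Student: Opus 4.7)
The plan is to prove both statements by simultaneous induction on the resource term $\termA$ (treating value, base, bag, and stream terms as parallel inductions).

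For the uniqueness of typing derivations, the key observation is that the type system is entirely syntax-directed: the shape of $\termA$ determines which rule must be used. Given two derivations ending with the same rule, I would unfold them one step and appeal to the induction hypothesis on the immediate subterms. In each case the conclusion is completely determined by the premises: for $\labs{\svarA}{\btermA}$, the unique derivation of $\rctxA\rjugBase\btermA:\basetype$ fixes both $\rctxA-\svarA$ and $\dualtype{\rctxA(\svarA)}$; for $\bagA\cons\strB$, for $\mset{\vtermA_1,\dotsc,\vtermA_k}$ and for $\emptystream$, the contexts and types are read off component-wise. The only subtle cases are the two application rules: for $\appl{\hexprA}{\strB}$ (with $\hexprA$ either a value term or a variable) the stream type $\rstrA$ in the premise is pinned down by the unique typing of $\strB$, and the derivation exists only if the type of $\hexprA$ is exactly $\dualtype{\rstrA}$ — when it does exist, it is unique.

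For typability of normal forms, I first characterize $\sresred$-irreducible terms. Since every value term is an abstraction, any base term of the form $\appl{\vtermA}{\strB}$ is a redex: if $\strB=\bagB\cons\strC$ we apply $\resredrulebeta$, and if $\strB=\emptystream$ we apply $\resredruleiota$. Hence a normal base term must be of the form $\appl{\varA}{\strA}$ with $\strA$ normal; a normal value term is $\labs{\svarA}{\btermA}$ with $\btermA$ a normal base term; a normal bag is $\mset{\vtermA_1,\dotsc,\vtermA_k}$ with each $\vtermA_i$ normal; and a normal stream is either $\emptystream$ or $\bagA\cons\strB$ with both components normal. With this characterization, typability follows by a straightforward induction, building the derivation with the syntax-directed rules: $\emptystream$ is axiomatically typed; $\bagA\cons\strB$, $\mset{\vtermA_1,\dotsc,\vtermA_k}$, and $\labs{\svarA}{\btermA}$ just lift their subderivations; and $\appl{\varA}{\strA}$ invokes the variable-application rule, taking $\rstrA$ from the inductively constructed typing of $\strA$ and adding $\varA:\mset{\dualtype{\rstrA}}$ to the context.

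The only genuine content beyond routine case analysis is the remark that normality excludes precisely the shape $\appl{\vtermA}{\strB}$, which is the only case where the existence of a typing would otherwise require an \emph{a priori} coincidence between the types read off from two independent subderivations. Once this observation is in place, both statements reduce to bookkeeping across the four syntactic categories.
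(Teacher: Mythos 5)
Your proof is correct and follows essentially the same route as the paper: a straightforward induction on $\termA$ using the syntax-directedness of the rules for uniqueness, together with the observation that normality rules out the shape $\appl{\vtermA}{\strB}$ — the only typing rule imposing a coincidence constraint between its premises — so that typability of normal forms is immediate. The paper states exactly this in two sentences; your additional detail on characterizing $\sresred$-irreducible terms is a correct elaboration of the same argument.
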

\begin{proof}
  The proof is straightforward, by induction on $\termA$.
  Note that $\termA$ being normal forbids the first rule for base terms,
  which is the only one in which a constraint on the premises is imposed.
\end{proof}
We write $\typeable{\termA}$ when $\termA$ is typable,
and in this case we write $\ctxOf{\termA}$ and $\typeOf{\termA}$
respectively for the unique context and type term such that 
$\ctxOf{\termA}\rjug\termA:\typeOf{\termA}$ is derivable.
\begin{lemma}\label{lemma:typing:occurrences}
  If $\typeable{\termA}$ then $\LengthOf{\ctxOf{\termA}(\varA)}$
  is the number of occurrences of $\varA$ in $\termA$.
\end{lemma}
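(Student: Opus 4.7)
The plan is to proceed by structural induction on $\termA$, following the grammar of resource terms and matching each case to the corresponding typing rule. The statement expresses that the rules are strictly multiplicative (contexts are combined only by pointwise multiset concatenation $\rctxcat$, never contracted or weakened), so the length $\LengthOf{\ctxOf{\termA}(\varA)}$ tracks $\nocc{\varA}{\termA}$ constructor-by-constructor. Note that, by \cref{lemma:typing:unicity}, the derivation is unique, so I can speak without ambiguity of the context $\ctxOf{\termA}$ appearing at the root of the derivation.

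First, I will handle the additive cases. For $\termA=\appl{\vtermA}{\strB}$ typed via the first base rule, the derivation gives $\ctxOf{\termA}=\ctxOf{\vtermA}\rctxcat\ctxOf{\strB}$; since multiset concatenation is pointwise and $\LengthOf{-}$ is additive over $\bagcat$, the induction hypothesis applied to $\vtermA$ and $\strB$ yields the result. The cases $\termA=\bagA\cons\strB$ and $\termA=\mset{\vtermA_1,\dotsc,\vtermA_k}$ are treated identically, exploiting that $\nocc{\varA}{-}$ is also additive across the underlying constructors. The base case $\termA=\emptystream$ is immediate: both sides are zero.

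Next, the head-variable case $\termA=\appl{\varB}{\strA}$: the rule yields $\ctxOf{\termA}=\ctxOf{\strA}\rctxcat(\varB:\mset{\dualtype{\typeOf{\strA}}})$. If $\varA=\varB$ the extra singleton contributes exactly $1$ to the length, matching the one additional occurrence of $\varA$ at the head; if $\varA\ne\varB$ the singleton contributes $0$, and the induction hypothesis on $\strA$ closes the case.

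Finally, the abstraction case $\termA=\labs{\svarA}{\btermA}$: here $\ctxOf{\termA}=\ctxOf{\btermA}-\svarA$. For $\varA\not\in\svarA$, we have $\ctxOf{\termA}(\varA)=\ctxOf{\btermA}(\varA)$, and $\varA$ is free in $\termA$ iff it is free in $\btermA$ with the same count, so the induction hypothesis applies directly. For $\varA=\svarA[i]$, we have $\ctxOf{\termA}(\varA)=\emptybag$ of length $0$, matching $\nocc{\varA}{\termA}=0$ (free occurrences of $\svarA[i]$ in $\btermA$ are bound by the abstraction and so do not count as free occurrences in $\termA$). There is no real obstacle here; the one mild subtlety is to be explicit that $\nocc{\varA}{-}$ counts \emph{free} occurrences, consistent with its use in \cref{lemma:rsubst:size,lemma:rsubst:str:size}, which is precisely what the typing rule for $λ$ discards via the ${-}\svarA$ operation.
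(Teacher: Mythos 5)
Your proof is correct and follows exactly the route the paper takes — a structural induction on $\termA$ tracking how each typing rule combines contexts by pointwise concatenation (the paper simply states ``by a straightforward induction on $\termA$'' and leaves the case analysis implicit). Your explicit treatment of the head-variable and abstraction cases, including the remark that $\nocc{\varA}{-}$ counts free occurrences, fills in the details faithfully.
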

\begin{proof}
  By a straightforward induction on $\termA$.
\end{proof}

Note that the functions $\typeOf{-}$ and $\ctxOf{-}$ are not injective, even jointly:
\begin{example}
  \label{example:rel:noninjective}
  Consider
  \[
    \strA_1\eqdef
    \mset{\labs{\svarB}{\appl{\varA}{\emptystream}}}\cons
    \mset{\labs{\svarB}{\appl{\varA}{\mset{\vtermA}\cons\emptystream}}}\cons
    \emptystream
  \]
  and
  \[
    \strA_2\eqdef
    \mset{\labs{\svarB}{\appl{\varA}{\mset{\vtermA}\cons\emptystream}}}\cons
    \mset{\labs{\svarB}{\appl{\varA}{\emptystream}}}\cons
    \emptystream
  \]
  where $\vtermA$ is a typeable closed value and $\varA\not\in\svarB$,
  so that $\strA_1$ and $\strA_2$ differ only by the order of their first two bags.
  Then \[
    \typeOf{\strA_1}=\typeOf{\strA_2}
    =\mset{\dualtype{\emptystream}}\cons\mset{\dualtype{\emptystream}}\cons\emptystream
  \]
  because the variables of $\svarB$ have no occurrences in subterms, while
  \[
    \ctxOf{\strA_1}=\ctxOf{\strA_2}
    =\varA:\mset{\dualtype{\emptystream},\dualtype*{\mset{\typeOf{\vtermA}}\cons\emptystream}}
  \]
  because $\varA$ occurs in twice in each of $\strA_1$ and $\strA_2$,
  once applied to $\emptystream$,
  and then applied to $\mset{\vtermA}\cons\emptystream$.
  The relational semantics cannot \emph{``see''} that these occurrences are swapped.\footnote{
    This contrasts with the one-to-one correspondence between
    normal extensional resource terms and (isomorphism
    classes of) augmentations in game semantics,
    to be developed in \cref{section:gs}.
  }
\end{example}

Moreover, not all types are inhabited by a closed term.
This is easily observed on normal terms:
\begin{proposition}
  \label{proposition:rel:nondefinable}
  There is no normal term $\termA$ such that 
  \begin{itemize}
    \item $\termA=\btermA$ with $\emptyrctx\rjugBase\btermA:\basetype$, or
    \item $\termA=\vtermA$ with $\emptyrctx\rjugVal\vtermA:\dualtype{\emptystream}$, or
    \item $\termA=\bagA$ with $\emptyrctx\rjugBag\bagA:\mset{\dualtype{\emptystream}}$, or
    \item $\termA=\strA$ with $\emptyrctx\rjugVal\strA:\mset{\dualtype{\emptystream}}\cons\emptystream$.
  \end{itemize}
\end{proposition}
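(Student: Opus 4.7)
The plan is to reduce all four items to the first one, and then observe that a normal base term always has a free variable occurrence, making it impossible to type in the empty context.

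I would first observe that cases (2), (3), (4) reduce to case (1) by peeling off the outer constructors. For case (2), a normal value term must be of the form $\vtermA=\labs{\svarA}{\btermA}$ with $\btermA$ a normal base term, and the typing rule for $\ValueRule$ forces $\ctxOf{\btermA}(\svarA)=\emptystream$ (from $\dualtype{\ctxOf{\btermA}(\svarA)}=\dualtype{\emptystream}$) and $\ctxOf{\btermA}-\svarA=\emptyrctx$, so $\ctxOf{\btermA}=\emptyrctx$: this is case (1). For case (3), a normal bag term of type $\mset{\dualtype{\emptystream}}$ must be $\mset{\vtermA}$ for a normal value term typed as in case (2). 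For case (4), a normal stream term of type $\mset{\dualtype{\emptystream}}\cons\emptystream$ must be $\bagA\cons\emptystream$ (using $\emptystream=\emptybag\cons\emptystream$ to dispatch the tail) with $\bagA$ typed as in case~(3).

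It then remains to prove case (1): no normal base term $\btermA$ satisfies $\emptyrctx\rjugBase\btermA:\basetype$. The key point is that a normal base term cannot be of the form $\appl{\vtermA}{\strB}$, because $\vtermA=\labs{\svarA}{\btermA'}$ for some $\btermA'$, so this would be either an $\resredrulebeta$ redex (if $\strB=\bagB\cons\strC$) or an $\resredruleiota$ redex (if $\strB=\emptystream$). Hence any normal $\btermA$ is of the form $\appl{\varA}{\strA}$ (the $\NormalBaseRule$ case). But the only derivation of such a judgement proceeds via $\ctxOf{\strA}\rctxcat\varA:\mset{\dualtype{\typeOf{\strA}}}\rjugBase\appl{\varA}{\strA}:\basetype$, which forces $\varA\in\LVarOf{\btermA}$ and hence a non-empty context, contradicting $\ctxOf{\btermA}=\emptyrctx$. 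Alternatively one can invoke \cref{lemma:typing:occurrences}: the total size of $\ctxOf{\btermA}$ (as a sum of multiset lengths) equals the number of free variable occurrences in $\btermA$, and that number is at least $1$ for any normal base term by the observation above.

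I do not expect a serious obstacle here: the only subtlety is being clear that the shape analysis of normal base terms genuinely rules out the redex form $\appl{\vtermA}{\strB}$ in both sub-cases ($\strB=\emptystream$ and $\strB=\bagB\cons\strC$), which is immediate from the two base reduction rules $\resredrulebeta$ and $\resredruleiota$. Everything else is a direct unfolding of the typing rules and the unicity of derivations granted by \cref{lemma:typing:unicity}.
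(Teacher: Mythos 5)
Your proof is correct and follows essentially the same route as the paper: the paper also observes that the unique applicable typing rule for a normal base term (necessarily of the form $\appl{\varA}{\strA}$, since the redex form is never normal) forces a non-empty context, and then derives each of the remaining three items from the previous one. Your write-up merely makes explicit the shape analysis and the chain of reductions that the paper leaves as "direct".
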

\begin{proof}
  Given the shape of the unique applicable rule for a normal base term,
  we have $\ctxOf{\btermA}\not=\emptyrctx$ when $\typeable{\btermA}$ 
  and $\btermA$ is normal.
  Each of the other three statements follows directly from the previous one.
\end{proof}
The same result holds for all terms,
as will follow from \cref{lemma:invariance},
which ensures the invariance of typing under reduction.
We first characterize typing in substitutions, in the next two lemmas.

\begin{lemma}\label{lemma:substitution:expansion}
  If $\termA'\in\rsubst{\termA}{\varA}{\bagB}$
  and $\typeable{\termA'}$ then 
  $\typeable{\termA}$, $\typeable{\bagB}$,
  $\typeOf{\bagB}=\ctxOf{\termA}(\varA)$,
  $\typeOf{\termA'}=\typeOf{\termA}$, and
  $\ctxOf{\termA'}=(\ctxOf{\termA}-\varA)\rctxcat\ctxOf{\bagB}$.
\end{lemma}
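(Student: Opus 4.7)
The plan is to prove the statement by induction on $\termA\in\ResourceExprs$, following the inductive structure of resource substitution given in \cref{definition:rsubst}. The hypothesis $\termA'\in\rsubst{\termA}{\varA}{\bagB}$ ensures that in each clause of the definition, we are in the branch producing a non-zero summand, so the relevant decomposition of $\bagB$ exists.

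The base cases are direct from the definitions. If $\termA=\varA$ then $\bagB=\mset{\vtermB}$ and $\termA'=\vtermB$: from $\typeable{\vtermB}$ we immediately get $\typeable{\mset{\vtermB}}$ with $\ctxOf{\mset{\vtermB}}=\ctxOf{\vtermB}$ and $\typeOf{\mset{\vtermB}}=\mset{\typeOf{\vtermB}}=\ctxOf{\varA}(\varA)$. The cases $\termA=\varB\ne\varA$ (then $\bagB=\emptybag$) and $\termA=\emptystream$ (then $\bagB=\emptybag$) are settled similarly, using that $\emptyrctx\rjugBag\emptybag:\emptybag$ is derivable.

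For the inductive cases, I would peel off one constructor at a time and invoke the induction hypothesis on the strict subexpressions. For $\termA=\labs{\svarB}{\btermA}$ we have $\termA'=\labs{\svarB}{\btermA'}$ with $\btermA'\in\rsubst{\btermA}{\varA}{\bagB}$; typability of $\labs{\svarB}{\btermA'}$ forces typability of $\btermA'$, so the induction hypothesis applies, and the result follows by reapplying the abstraction rule together with the observation $\ctxOf{\btermA'}(\svarB)=\ctxOf{\btermA}(\svarB)$ (since $\varA\notin\svarB$ and $\svarB\cap\LVarOf{\bagB}=\emptyset$). For $\termA=\appl{\hexprA}{\strA}$ the decomposition $\bagB=\bagB_1\bagcat\bagB_2$, $\termA'=\appl{\hexprA'}{\strA'}$ with $\hexprA'\in\rsubst{\hexprA}{\varA}{\bagB_1}$ and $\strA'\in\rsubst{\strA}{\varA}{\bagB_2}$ lets us apply the IH to both $\hexprA$ and $\strA$; the additivity of contexts in the application rule, combined with $\ctxOf{\bagB}=\ctxOf{\bagB_1}\rctxcat\ctxOf{\bagB_2}$ and $\typeOf{\bagB}=\typeOf{\bagB_1}\bagcat\typeOf{\bagB_2}$, delivers exactly the required identities. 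The cases $\termA=\mset{\vtermA_1,\dotsc,\vtermA_k}$ (partition into $k$ parts) and $\termA=\bagA\cons\strC$ (cons partition) are handled in the same style.

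The only mild subtlety — and the place that deserves the most care — is checking that concatenation of bag types and contexts commutes with the partitionings used in the definition of resource substitution, i.e. that if $\bagB=\bagB_1\bagcat\cdots\bagcat\bagB_n$ and each $\bagB_i$ is typable, then $\bagB$ is typable with $\ctxOf{\bagB}=\ctxOf{\bagB_1}\rctxcat\cdots\rctxcat\ctxOf{\bagB_n}$ and $\typeOf{\bagB}=\typeOf{\bagB_1}\bagcat\cdots\bagcat\typeOf{\bagB_n}$; this is immediate from the bag formation rule, which is stable under permutation of premises. All the other manipulations are routine: uniqueness of derivations (\cref{lemma:typing:unicity}) lets us treat $\typeOf{-}$ and $\ctxOf{-}$ as functions, and \cref{lemma:typing:occurrences} guarantees that the length/support constraints on partitionings match the structure of contexts exactly.
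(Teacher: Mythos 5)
Your overall strategy coincides with the paper's: induction on $\termA$ following the clauses of \cref{definition:rsubst}, with the routine bookkeeping about partitionings and context concatenation handled exactly as you describe. However, there is a genuine gap in your treatment of base terms. You propose to handle $\termA=\appl{\hexprA}{\strA}$ by "applying the IH to both $\hexprA$ and $\strA$", and correspondingly you treat $\termA=\varA$ as a base case with conclusion $\typeOf{\mset{\vtermB}}=\ctxOf{\varA}(\varA)$. But the type system of this paper has no judgement for a bare variable: the four judgements type value, bag, stream and base terms only, and the variable is consumed \emph{inline} by the rule for $\appl{\varA}{\strA}$, which places $\varA:\mset{\dualtype{\typeOf{\strA}}}$ into the context. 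Consequently $\ctxOf{\varA}$ and $\typeOf{\varA}$ are not defined (and could not be made well-defined functions, since a bare variable would admit every type, breaking \cref{lemma:typing:unicity}), so neither your base case nor your appeal to the IH on a variable head is meaningful as stated.

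This matters because the head-variable case is precisely the one non-routine case of the lemma. When $\termA=\appl{\varA}{\strA}$, the substitution $\rsubst{\termA}{\varA}{\bagB}$ singles out one element $\vtermB$ of $\bagB=\mset{\vtermB}\bagcat\bagB_1$ to become the new head, producing $\appl{\vtermB}{\strA'}$ with $\strA'\in\rsubst{\strA}{\varA}{\bagB_1}$; the typing derivation switches from the second base-term rule to the first, and one must check that $\typeOf{\vtermB}=\dualtype{\typeOf{\strA}}$ accounts for the extra $\mset{\dualtype{\typeOf{\strA}}}$ summand in $\ctxOf{\termA}(\varA)$, with the remainder $\bagB_1$ matching $\ctxOf{\strA}(\varA)$ via the IH applied to $\strA$ alone. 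The paper devotes most of its proof to exactly this case. The fix is to abandon the recursion through head expressions and instead case-split on base terms as a whole ($\appl{\varC}{\strA}$ with $\varC\not=\varA$, $\appl{\varA}{\strA}$, and $\appl{\vtermA}{\strA}$), which is what the paper does; the rest of your argument then goes through.
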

\begin{proof}
  The proof is by induction on $\termA$.

  If $\termA=\mset{\vtermA_1,\dotsc,\vtermA_k}$ then
  we must have $\termA'=\mset{\vtermA'_1,\dotsc,\vtermA'_k}$
  and $\bagB=\bagB_1\bagcat\cdots\bagcat\bagB_k$ with
  $\vtermA'_i\in\rsubst{\bagA}{\varA}{\bagB_i}$ for $1\le i\le k$.
  Then $\typeable{\vtermA'_i}$ and we apply the induction hypothesis
  to each $\vtermA_i$ for $1\le i\le k$.
  We obtain that
  $\typeable{\vtermA_i}$ and $\typeable{\bagB_i}$ for $1\le i\le k$:
  it follows that $\typeable{\termA}$ and $\typeable{\bagB}$.
  Moreover, $\typeOf{\bagB_i}=\ctxOf{\vtermA_i}(\varA)$ for $1\le i\le k$,
  so $\typeOf{\bagB}=\typeOf{\bagB_1}\bagcat\cdots\bagcat\typeOf{\bagB_k}
  =(\ctxOf{\vtermA_1}\rctxcat\cdots\rctxcat\ctxOf{\vtermA_k})(\varA)
  =\ctxOf{\termA}(\varA)$.
  We also have $\typeOf{\vtermA'_i}=\typeOf{\vtermA_i}$ for $1\le i\le k$,
  hence $\typeOf{\termA'}=\typeOf{\termA}$.
  Finally, we have 
  $\ctxOf{\vtermA'_i}=(\ctxOf{\vtermA_i}-\varA)\rctxcat\ctxOf{\bagB_i}$ for $1\le i\le k$,
  hence
  $\ctxOf{\termA'}= \ctxOf{\vtermA'_1}\rctxcat\cdots\rctxcat\ctxOf{\vtermA'_k}
  =(\ctxOf{\termA}-\varA)\rctxcat\ctxOf{\bagB}$,

  If $\termA=\emptystream$ then $\termA'=\emptystream$ and
  $\bagB=\emptybag$, which entails all the desired properties.
  The cases of $\termA=\bagA\cons\strC\not=\emptystream$ and 
  $\termA=\appl{\vtermA}{\strC}$ are similar to that of
  $\termA=\mset{\vtermA_1,\vtermA_2}$.
  The case of $\termA=\labs{\svarB}{\btermA}$ 
  (choosing $\svarB\not\ni\varA$ and $\svarB\cap\LVarOf{\bagB}=\emptyset$)
  is similar to that of $\termA=\mset{\vtermA_1}$.


  The only remaining case is that of $\termA=\appl{\varC}{\strA}$.
  If $\varC\not=\varA$, the treatment is, again, similar to that of
  $\termA=\mset{\vtermA_1}$.
  Now assume $\varC=\varA$.
  Then we can write $\termA'=\appl{\vtermB}{\strA'}$
  and $\bagB=\mset{\vtermB}\bagcat\bagB_1$
  with $\strA'\in\rsubst{\strA}{\varA}{\bagB_1}$.
  We have $\typeable{\vtermB}$ and $\typeable{\strA'}$,
  and moreover $\typeOf{\vtermB}=\dualtype{\typeOf{\strA'}}$.
  We apply the induction hypothesis to $\strA$.
  We obtain $\typeable{\strA}$ and $\typeable{\bagB_1}$:
  it follows that $\typeable{\termA}$ and $\typeable{\bagB}$.
  Moreover, $\typeOf{\strA'}=\typeOf{\strA}$ and
  $\typeOf{\bagB_1}=\ctxOf{\strA}(\varA)$.
  So $\typeOf{\bagB}=\mset{\typeOf{\vtermB}}\bagcat\typeOf{\bagB_1}
  =\mset{\dualtype{\typeOf{\strA'}}}\bagcat\ctxOf{\strA}(\varA)
  =\mset{\dualtype{\typeOf{\strA}}}\bagcat\ctxOf{\strA}(\varA)
  =\ctxOf{\termA}(\varA)
  $ and
  $\ctxOf{\termA'}=\ctxOf{\vtermB}\rctxcat\ctxOf{\strA'}
  =\ctxOf{\vtermB}\rctxcat(\ctxOf{\strA}-\varA)\rctxcat\ctxOf{\bagB_1}
  =(\ctxOf{\termA}-\varA)\rctxcat\ctxOf{\bagB}
  $
  since
  $\ctxOf{\termA}-\varA=\ctxOf{\strA}-\varA$.
\end{proof}

\begin{lemma}\label{lemma:substitution:reduction}
  If $\typeable{\termA}$, $\typeable{\bagB}$,
  and $\typeOf{\bagB}=\ctxOf{\termA}(\varA)$,
  then there exists $\termA'\in\rsubst{\termA}{\varA}{\bagB}$
  such that $(\ctxOf{\termA}-\varA)\rctxcat\ctxOf{\bagB}\rjug\termA':\typeOf{\termA}$.
\end{lemma}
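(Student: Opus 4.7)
The plan is to proceed by induction on $\termA$, choosing at each step an appropriate partition of $\bagB$ (or a specific element of $\bagB$) so that the resulting $\termA'$ lies in $\rsubst{\termA}{\varA}{\bagB}$. Once existence of a typeable $\termA' \in \rsubst{\termA}{\varA}{\bagB}$ is established, the precise form of its context and type follows from \cref{lemma:substitution:expansion}; so the only real content is exhibiting $\termA'$. The construction is driven by a simple observation: by the typing rule for bags, if $\bagB = \mset{\vtermB_1,\dotsc,\vtermB_n}$ then $\typeOf{\bagB} = \mset{\typeOf{\vtermB_1},\dotsc,\typeOf{\vtermB_n}}$, so any decomposition of the multiset $\typeOf{\bagB}$ as $\rbagA_1 \bagcat \cdots \bagcat \rbagA_k$ lifts to a partition $\bagB = \bagB_1 \bagcat \cdots \bagcat \bagB_k$ with $\typeOf{\bagB_i} = \rbagA_i$.

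I would first handle the base cases. If $\termA = \varA$, then $\ctxOf{\termA}(\varA) = \mset{\typeOf{\termA}}$ forces $\bagB = \mset{\vtermB}$ with $\typeOf{\vtermB} = \typeOf{\termA}$, and one takes $\termA' \eqdef \vtermB$. If $\termA = \varB \neq \varA$ or $\termA = \emptystream$, then $\ctxOf{\termA}(\varA) = \emptybag$, so the type condition forces $\bagB = \emptybag$, and $\termA' \eqdef \termA$ works. For the abstraction case $\termA = \labs{\svarB}{\btermA}$, choosing $\svarB$ disjoint from $\varA$ and $\LVarOf{\bagB}$ so that $\ctxOf{\termA}(\varA) = \ctxOf{\btermA}(\varA)$, the induction hypothesis on $\btermA$ gives a $\btermA' \in \rsubst{\btermA}{\varA}{\bagB}$, and we set $\termA' \eqdef \labs{\svarB}{\btermA'}$.

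The binary cases all reduce to the type-directed partition argument. For $\termA = \appl{\vtermA}{\strA}$ we have $\ctxOf{\termA}(\varA) = \ctxOf{\vtermA}(\varA) \bagcat \ctxOf{\strA}(\varA)$, so we split $\bagB = \bagB_1 \bagcat \bagB_2$ matching these two multisets of types and apply the induction hypothesis separately to $\vtermA$ and $\strA$. For the subcase $\termA = \appl{\varA}{\strA}$, the context reads $\mset{\dualtype{\typeOf{\strA}}} \bagcat \ctxOf{\strA}(\varA)$, so we pick $\vtermB \in \bagB$ with $\typeOf{\vtermB} = \dualtype{\typeOf{\strA}}$ (which exists by the multiset equality), write $\bagB = \mset{\vtermB} \bagcat \bagB_1$, apply the induction hypothesis to $\strA$ with $\bagB_1$, and take $\termA' \eqdef \appl{\vtermB}{\strA'}$; this fits the case $\bagB_1 = \mset{\vtermB}$ in the definition of $\rsubst{\varA}{\varA}{-}$. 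The cases $\termA = \appl{\varB}{\strA}$ with $\varB \neq \varA$, $\termA = \mset{\vtermA_1,\dotsc,\vtermA_k}$, and $\termA = \bagA \cons \strC$ are entirely analogous, using $k$-way or $2$-way partitions of $\bagB$ aligned with the contexts of the sub-terms.

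I do not expect any serious obstacle: the only nontrivial fact is the type-directed partition lemma above, which is immediate from the typing rule for bags. The mild bookkeeping involved in tracking $\ctxOf{\termA} - \varA$ and $\rctxcat$ under partitioning is best avoided by a clean two-step presentation: first construct $\termA' \in \rsubst{\termA}{\varA}{\bagB}$ by the induction sketched above (which only uses typability of $\termA$ and $\bagB$ together with the matching condition), and then invoke \cref{lemma:substitution:expansion} to extract the desired judgement $(\ctxOf{\termA}-\varA) \rctxcat \ctxOf{\bagB} \rjug \termA' : \typeOf{\termA}$.
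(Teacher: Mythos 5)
Your proof is correct and follows essentially the same route as the paper's: induction on $\termA$, splitting $\bagB$ according to the multiset equality $\typeOf{\bagB}=\ctxOf{\termA}(\varA)$ so that the induction hypothesis applies to each subterm, with the case $\appl{\varA}{\strA}$ handled by extracting some $\vtermB\in\bagB$ of type $\dualtype{\typeOf{\strA}}$. Your delegation of the final context-and-type computation to \cref{lemma:substitution:expansion} is a harmless streamlining (the paper carries the full judgement through the induction instead), and the bare-variable base cases you add do not actually arise, since the typing judgements are only defined on the four categories of resource terms.
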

\begin{proof}
  Write $\termsA'\eqdef\rsubst{\termA}{\varA}{\bagB}$.
  The proof is by induction on $\termA$.

  If $\termA=\mset{\vtermA_1,\dotsc,\vtermA_k}$ then
  $\typeable{\vtermA_i}$ for $1\le i\le k$.
  Since $\typeOf{\bagB}=\ctxOf{\termA}(\varA)
  =\ctxOf{\vtermA_1}(\varA)\bagcat\cdots\bagcat\ctxOf{\vtermA_k}(\varA)$,
  we can write $\bagB=\bagB_1\bagcat\cdots\bagcat\bagB_k$ with
  $\typeOf{\bagB_i}=\ctxOf{\vtermA_i}(\varA)$ for $1\le i\le k$.
  We apply the induction hypothesis to $\vtermA_i$ for $1\le i\le k$:
  we obtain $\vtermA'_i\in\rsubst{\vtermA_i}{\varA}{\bagB_i}$ such that
  $(\ctxOf{\vtermA_i}-\varA)\rctxcat\ctxOf{\bagB_i}\rjugVal\vtermA'_i:\typeOf{\vtermA_i}$.
  We conclude by setting 
  $\termA'\eqdef\mset{\vtermA'_1,\dotsc,\vtermA'_k}$.

  As in the previous lemma, the cases of streams, values, and base 
  terms $\appl{\vtermC}{\strA}$ or $\appl{\varC}{\strA}$
  with $\varC\not=\varA$ follow the same pattern as for bags.

  Finally, if $\termA=\appl{\varA}{\strA}$ then $\typeable{\strA}$
  and $\typeOf{\bagB}=\mset{\dualtype{\typeOf{\strA}}}\bagcat\ctxOf{\strA}(\varA)$.
  Then we can write $\bagB=\mset{\vtermB}\bagcat\bagB_1$ with
  $\typeOf{\vtermB}=\dualtype{\typeOf{\strA}}$
  and $\typeOf{\bagB_1}=\ctxOf{\strA}(\varA)$.
  We apply the induction hypothesis to $\strA$, 
  and obtain $\strA'\in\rsubst{\strA}{\varA}{\bagB_1}$ such that
  $(\ctxOf{\strA}-\varA)\rctxcat\ctxOf{\bagB_1}\rjugStr\strA':\typeOf{\strA}$.
  We conclude by setting 
  $\termA'\eqdef\appl{\vtermB}{\strA'}$.
\end{proof}

\begin{lemma}
We have $\typeable{\vtermA}$ iff $\typeable{\labs{\varA}{\vtermA}}$
and then:
\begin{itemize}
  \item $\typeOf{\labs{\varA}{\vtermA}}=
    \dualtype*{\ctxOf{\vtermA}(\varA)\cons\rstrB}$
    iff $\typeOf{\vtermA}=\dualtype{\rstrB}$;
  \item $\ctxOf{\labs{\varA}{\vtermA}}=
    \ctxOf{\vtermA}-\varA$.
\end{itemize}
\end{lemma}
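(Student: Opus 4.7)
The argument is essentially mechanical once we decompose the definition of $\labs{\varA}{\vtermA}$. Write $\vtermA=\labs{\svarA}{\btermA}$ with $\svarA$ chosen so that $\varA\not\in\svarA$. Applying the (unique) abstraction rule of the type system, $\typeable{\vtermA}$ iff $\typeable{\btermA}$, with $\typeOf{\vtermA}=\dualtype{\ctxOf{\btermA}(\svarA)}$ and $\ctxOf{\vtermA}=\ctxOf{\btermA}-\svarA$. By the definition of single-variable abstraction given just before the erasure definition, $\labs{\varA}{\vtermA}=\labs{\svarA}{\btermB}$ where $\btermB\eqdef\subst{\shiftup{\btermA}{\svarA}}{\varA}{\svarA[0]}$. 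It therefore suffices to show that $\typeable{\btermA}$ iff $\typeable{\btermB}$, and to compute $\ctxOf{\btermB}$ and $\typeOf{\btermB}$ in terms of $\ctxOf{\btermA}$ and $\typeOf{\btermA}$.

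I would first state two auxiliary facts, each proved by a direct induction on resource expressions (analogous to the inductions in \cref{lemma:substitution:expansion,lemma:substitution:reduction}, but simpler since no bag substitution occurs). First, the shift $\shiftup{-}{\svarA}$ is a typing bijection: $\typeable{\exprA}$ iff $\typeable{\shiftup{\exprA}{\svarA}}$, and when these hold $\typeOf{\shiftup{\exprA}{\svarA}}=\typeOf{\exprA}$, $\ctxOf{\shiftup{\exprA}{\svarA}}(\svarA[0])=\emptybag$, $\ctxOf{\shiftup{\exprA}{\svarA}}(\svarA[i+1])=\ctxOf{\exprA}(\svarA[i])$ for $i\in\N$, and $\ctxOf{\shiftup{\exprA}{\svarA}}(\varC)=\ctxOf{\exprA}(\varC)$ for every $\varC\not\in\svarA$. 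Second, variable-for-variable renaming preserves typing: if $\varB\not=\varA$ and $\ctxOf{\exprA}(\varB)=\emptybag$, then $\typeable{\exprA}$ iff $\typeable{\subst{\exprA}{\varA}{\varB}}$, with type preserved and $\ctxOf{\subst{\exprA}{\varA}{\varB}}(\varB)=\ctxOf{\exprA}(\varA)$, $\ctxOf{\subst{\exprA}{\varA}{\varB}}(\varA)=\emptybag$, and the context unchanged elsewhere.

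Chaining these two facts with $\varB=\svarA[0]$ (which is valid because $\ctxOf{\shiftup{\btermA}{\svarA}}(\svarA[0])=\emptybag$ and $\svarA[0]\not=\varA$) yields $\typeable{\btermA}$ iff $\typeable{\btermB}$, and hence $\typeable{\vtermA}$ iff $\typeable{\labs{\varA}{\vtermA}}$. When both sides are typeable, the chain produces $\typeOf{\btermB}=\basetype$ and $\ctxOf{\btermB}(\svarA[0])=\ctxOf{\btermA}(\varA)$, $\ctxOf{\btermB}(\svarA[i+1])=\ctxOf{\btermA}(\svarA[i])$ for $i\in\N$, $\ctxOf{\btermB}(\varA)=\emptybag$, and $\ctxOf{\btermB}(\varC)=\ctxOf{\btermA}(\varC)$ for $\varC\not\in\svarA\cup\{\varA\}$. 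In stream form, $\ctxOf{\btermB}(\svarA)=\ctxOf{\btermA}(\varA)\cons\ctxOf{\btermA}(\svarA)$.

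Applying the abstraction rule once more, $\typeOf{\labs{\varA}{\vtermA}}=\dualtype{\ctxOf{\btermB}(\svarA)}=\dualtype{\ctxOf{\btermA}(\varA)\cons\ctxOf{\btermA}(\svarA)}$, which equals $\dualtype{\ctxOf{\vtermA}(\varA)\cons\rstrB}$ precisely when $\rstrB=\ctxOf{\btermA}(\svarA)$, i.e.\ when $\typeOf{\vtermA}=\dualtype{\rstrB}$. Finally, $\ctxOf{\labs{\varA}{\vtermA}}=\ctxOf{\btermB}-\svarA$ equals $\ctxOf{\btermA}-\svarA-\varA=\ctxOf{\vtermA}-\varA$, as required. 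The only mildly delicate point is the two auxiliary renaming lemmas; they are routine but must be set up carefully because typing derivations are unique (\cref{lemma:typing:unicity}), so one needs invertibility of both the shift and the variable renaming to obtain the backward implication of the equivalence.
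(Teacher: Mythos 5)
Your proof is correct and follows the same route as the paper, which simply declares the lemma a ``direct application of the definitions'': what you have written is exactly the careful unfolding of that claim, decomposing $\labs{\varA}{\vtermA}$ via its definition and tracking typeability, type and context through the shift $\shiftup{-}{\svarA}$ and the renaming $\subst{-}{\varA}{\svarA[0]}$. The two auxiliary facts you isolate are correctly stated (in particular the hypothesis $\ctxOf{\exprA}(\varB)=\emptybag$ needed for the renaming to act as a bijection on contexts) and are indeed routine inductions.
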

\begin{proof}
  Direct application of the definitions.
\end{proof}

Now we extend the typing system to sums:
if $\termsA\in\ResourceSums$, we set 
$\rctxA\rjug\termsA:\rtermA$
when there exists $\termA\in\support{\termsA}$ such that
$\rctxA\rjug\termA:\rtermA$.
We obtain:
\begin{lemma}\label{lemma:invariance}
  Assume $\termsA\sresred\termsA'$.
  We have $\rctxA\rjug\termsA:\rtermA$
  iff $\rctxA\rjug\termsA':\rtermA$.
\end{lemma}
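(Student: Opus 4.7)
The plan is to reduce to single-step reductions on terms, and then induct on the derivation of $\termA \resred \termsA'$. By the definition of $\sresred$ on sums, $\termsA\sresred\termsA'$ decomposes as $\termsA = \sum_{i=1}^k \termA_i$ and $\termsA' = \sum_{i=1}^k \termsA'_i$ with $\termA_i\resredR\termsA'_i$. Since $\rctxA\rjug\termsA:\rtermA$ means $\rctxA\rjug\termA:\rtermA$ for some $\termA\in\support{\termsA}$, and $\support{\termsA'}=\bigcup_i\support{\termsA'_i}$ (each factor's support being either a single unchanged summand or the support of some $\resred$-reduct), the iff for $\sresred$ follows from the iff for each component reduction $\termA_i\resred\termsA'_i$.

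The induction on $\termA\resred\termsA'$ breaks into the two redex rules and the contextuality rules. For the contextuality rules ($\resredruleabs$, $\resredruleappl$, $\resredruleappr$, $\resredrulebag$, $\resredrulestrl$, $\resredrulestrr$), we invoke the induction hypothesis on the reduced subexpression and then observe that each typing rule (of which there is a single one for each syntactic former) is entirely determined by the types of its subexpressions, in both directions: typability of the whole is equivalent to typability of each subterm with matching types/contexts.

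The crucial case is $\resredrulebeta$, where $\termA = \appl{\labs{\svarA}{\btermA}}{\bagB\cons\strC}$ reduces to $\termA' = \appl{\labs{\svarA}{\shiftdown{\rsubst{\btermA}{\svarA[0]}{\bagB}}{\svarA}}}{\strC}$. Here, using the preceding lemma about abstraction typing, typability of $\termA$ with type $\basetype$ and context $\rctxA$ is equivalent to the existence of $\rctxA_1, \rctxA_2, \rbagA, \rstrB$ with $\rctxA_1\rctxcat\rctxA_2 = \rctxA$, $\rctxA_1 \rjugBase \btermA : \basetype$ with $\ctxOf{\btermA}(\svarA[0]) = \rbagA$, $\tuple{\ctxOf{\btermA}(\svarA[i+1])}_{i\in\N} = \rstrB$, $\rctxA_2$ splitting as $\rctxB\rctxcat\rctxC$ with $\rctxB\rjugBag\bagB:\rbagA$ and $\rctxC\rjugStr\strC:\rstrB$. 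Typability of $\termA'$ with the same type and context is, by the same unpacking together with \cref{lemma:substitution:expansion,lemma:substitution:reduction}, equivalent to the existence of a typable $\btermA'' \in \rsubst{\btermA}{\svarA[0]}{\bagB}$ with $\ctxOf{\btermA''} = (\ctxOf{\btermA}-\svarA[0])\rctxcat\ctxOf{\bagB}$ and $\typeOf{\btermA''}=\basetype$, which matches exactly. The only remaining piece is a straightforward observation (a ``shift/context'' compatibility): for $\svarA[0]\not\in\LVarOf{\exprA}$, $\typeable{\shiftdown{\exprA}{\svarA}}$ iff $\typeable{\exprA}$ with the same type, and $\ctxOf{\shiftdown{\exprA}{\svarA}}(\svarA[i]) = \ctxOf{\exprA}(\svarA[i+1])$ (values on variables outside $\svarA$ are unchanged); this is proved by a straightforward induction on $\exprA$ using the uniqueness of typing derivations.

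For $\resredruleiota$, where $\appl{\labs{\svarA}{\btermA}}{\emptystream} \resred \erase{\btermA}{\svarA}$, typability of the left-hand side with context $\rctxA$ and type $\basetype$ forces $\ctxOf{\btermA}(\svarA) = \emptystream$ (the only typing of $\emptystream$), i.e. no $\svarA[i]$ occurs in $\btermA$, so $\erase{\btermA}{\svarA} = \btermA$ is typable with the same context and $\basetype$; conversely if $\erase{\btermA}{\svarA}$ is typable, then it must equal $\btermA$ (otherwise it is $0$ and hence not typable), so $\ctxOf{\btermA}(\svarA) = \emptystream$, and reassembling the typing of the left-hand side is immediate. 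The main obstacle is thus the bookkeeping in the $\resredrulebeta$ case, specifically the short shift-compatibility lemma that needs to be proved separately.
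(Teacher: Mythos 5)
Your proof is correct and follows essentially the same route as the paper: reduce to single-step reductions on terms, handle the $\resredrulebeta*$ case in both directions via \cref{lemma:substitution:expansion,lemma:substitution:reduction}, the $\resredruleiota*$ case via \cref{lemma:typing:occurrences}, and close under the contextual rules and sums. The only organizational difference is that the paper phrases the $\beta$-redex using the derived one-variable abstraction $\labs{\varA}{\vtermA}$, absorbing the shift bookkeeping into the small typing lemma stated for that notation, whereas you work with the raw rule and prove the shift-compatibility of typing directly; both amount to the same auxiliary fact.
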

\begin{proof}
  We first treat the case of redexes.
  First assume $\termsA=\appl*{\labs{\varA}{\vtermA}}{\bagB\cons\strC}$
  and $\termsA'=\appl*{\rsubst{\vtermA}{\varA}{\bagB}}{\strC}$.

  If $\rctxA\rjugBase\termsA:\basetype$
  then $\typeable{\vtermA}$, $\typeable{\bagB}$ and $\typeable{\strC}$,
  and moreover:
  $\typeOf{\bagB}=\ctxOf{\vtermA}(\varA)$,
  $\typeOf{\vtermA}=\dualtype{\typeOf{\strC}}$
  and $\rctxA=(\ctxOf{\vtermA}-\varA)\rctxcat\ctxOf{\bagB}\rctxcat\ctxOf{\strC}$.
  \Cref{lemma:substitution:reduction} yields
  $\vtermA'\in\rsubst{\vtermA}{\varA}{\bagB}$ such that
  $(\ctxOf{\vtermA}-\varA)\rctxcat\ctxOf{\bagB}\rjugVal\vtermA':\typeOf{\vtermA}$,
  and we set $\termA'\eqdef\appl{\vtermA'}{\strC}$ to obtain
  $\termA'\in\termsA'$ and $\rctxA\rjugBase\termA':\basetype$.

  Conversely, if $\rctxA\rjugBase\termsA':\basetype$,
  there exists $\vtermA'\in\rsubst{\vtermA}{\varA}{\bagB}$
  such that $\rctxA\rjugBase\appl{\vtermA'}{\strC}:\basetype$.
  It follows that $\typeable{\vtermA'}$ and $\typeable{\strC}$,
  and moreover $\rctxA=\ctxOf{\vtermA'}\rctxcat\ctxOf{\strC}$
  and $\typeOf{\vtermA'}=\dualtype{\typeOf{\strC}}$.
  \Cref{lemma:substitution:expansion} entails that
  $\typeable{\vtermA}$, $\typeable{\bagB}$, and moreover
  $\typeOf{\bagB}=\ctxOf{\vtermA}(\varA)$,
  $\typeOf{\vtermA}=\typeOf{\vtermA'}$,
  and
  $\ctxOf{\vtermA'}=(\ctxOf{\vtermA}-\varA)\rctxcat\ctxOf{\bagB}$.
  It follows that
  $\typeOf{\labs{\varA}{\vtermA}}=\dualtype*{\typeOf{\bagB}\cons\typeOf{\strC}}$
  and $\rctxA=\ctxOf{\labs{\varA}{\vtermA}}\rctxcat\ctxOf{\bagB}\rctxcat\ctxOf{\strC}$,
  hence $\rctxA\rjugBase\termsA:\basetype$.

  Now assume $\termsA=\appl*{\labs{\svarA}{\btermA}}{\emptystream}$
  and $\termsA'=\erase{\btermA}{\svarA}$.

  If $\rctxA\rjugBase\termsA:\basetype$
  then $\typeable{\btermA}$ and $\ctxOf{\btermA}(\svarA)=\emptystream$.
  By \cref{lemma:typing:occurrences},
  $\svarA\cap\LVarOf{\btermA}=\emptyset$ so $\termsA'=\btermA$.
  Moreover, $\rctxA=\ctxOf{\btermA}-\svarA=\ctxOf{\btermA}$.

  Conversely, if 
  $\rctxA\rjugBase\termsA':\basetype$
  then $\termsA'\not=0$ hence $\termsA'=\btermA$
  with $\svarA\cap\LVarOf{\btermA}=\emptyset$.
  Hence, by \cref{lemma:typing:occurrences},
  $\ctxOf{\btermA}(\svarA)=\emptystream$
  so $\typeOf{\labs{\svarA}{\btermA}}=\dualtype{\emptystream}$
  and $\ctxOf{\labs{\svarA}{\btermA}}=\rctxA$.
  We obtain $\rctxA\rjugBase\termsA:\basetype$.
  
  Next, we treat the case of $\termsA=\termA\in\ResourceTerms$,
  and $\termsA\resred\termsA'$.
  We obtain the result by a straightforward induction on 
  the definition of $\resred$.

  Finally, if $\termsA=\termA+\termsB$ and $\termsA'=\termsA'+\termsB$,
  with $\termA\resred\termsA'$, we conclude directly from the previous case,
  observing that 
  $\rctxA\rjug\termA+\termsB:\rtermA$
  (resp.\ $\rctxA\rjug\termsA'+\termsB:\rtermA$)
  iff $\rctxA\rjug\termA:\rtermA$ (resp.\ $\rctxA\rjug\termsA':\rtermA$) or
  $\rctxA\rjug\termsB:\rtermA$.
\end{proof}

\begin{lemma}\label{lemma:typeable:is:normalizable}
  For any $\termA\in\ResourceTerms$:
  \begin{itemize}
    \item either $\not\typeable{\termA}$ and $\NF{\termA}=0$;
    \item or $\typeable{\termA}$, and then
      $\rctxA\rjug\NF{\termA}:\rtermA$
      iff $\rctxA=\ctxOf{\termA}$ and $\rtermA=\typeOf{\termA}$
      --- in particular $\NF{\termA}\not=0$.
  \end{itemize}
\end{lemma}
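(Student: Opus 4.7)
The plan is to deduce both parts from two earlier results: the invariance of typing under small-step reduction (Lemma~\ref{lemma:invariance}), and the existence and unicity of typing for normal terms (Lemma~\ref{lemma:typing:unicity}). A trivial induction on the length of a reduction chain extends invariance from $\sresred$ to $\sresredRT$: for any $\termsA \sresredRT \termsA'$, the derivable typings $\rctxA \rjug \termsA : \rtermA$ and $\rctxA \rjug \termsA' : \rtermA$ coincide. Applied to $\termA \sresredRT \NF{\termA}$, this tells us that $\termA$ and $\NF{\termA}$ admit exactly the same set of derivable typing judgements.

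For the first case, assume $\not\typeable{\termA}$ and suppose for contradiction that $\NF{\termA} \neq 0$. Then $\support{\NF{\termA}}$ contains some term $\termA'$, which is normal because $\NF{\termA}$ is a sum of normal expressions. By Lemma~\ref{lemma:typing:unicity}, $\ctxOf{\termA'} \rjug \termA' : \typeOf{\termA'}$ is derivable. Unfolding the existential semantics of typing on sums yields a derivable $\ctxOf{\termA'} \rjug \NF{\termA} : \typeOf{\termA'}$, which by the preceding paragraph transfers to $\ctxOf{\termA'} \rjug \termA : \typeOf{\termA'}$, contradicting the assumption that $\termA$ is not typeable.

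For the second case, assume $\typeable{\termA}$. Lemma~\ref{lemma:typing:unicity} tells us that $\ctxOf{\termA} \rjug \termA : \typeOf{\termA}$ is the unique derivable typing for $\termA$, i.e.\ $\rctxA \rjug \termA : \rtermA$ holds iff $\rctxA = \ctxOf{\termA}$ and $\rtermA = \typeOf{\termA}$. By the first paragraph, the very same characterization holds for $\NF{\termA}$ in place of $\termA$, which is precisely the desired equivalence. In particular the judgement $\ctxOf{\termA} \rjug \NF{\termA} : \typeOf{\termA}$ is derivable, which by the existential semantics of typing on sums forces $\support{\NF{\termA}}$ to be non-empty, so $\NF{\termA} \neq 0$. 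The argument contains no real obstacle; the only point requiring care is to consistently unfold ``$\rctxA \rjug \termsA : \rtermA$'' as ``some term in $\support{\termsA}$ admits this typing'' when transferring typings between $\termA$, $\NF{\termA}$, and the individual normal terms in its support.
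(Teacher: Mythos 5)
Your proof is correct and follows essentially the same route as the paper: both transfer the typing invariance of Lemma~\ref{lemma:invariance} along $\termA\sresredRT\NF{\termA}$ and then invoke Lemma~\ref{lemma:typing:unicity} (unicity of typings, plus typability of normal terms) to conclude in each case. Your version merely spells out explicitly the contrapositive in the first case and the extension of invariance to $\sresredRT$, both of which the paper leaves implicit.
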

\begin{proof}
  Since $\termA\sresredRT\NF{\termA}$, 
  we can apply the previous lemma:
  \begin{itemize}
    \item if $\not\typeable{\termA}$ then $\not\typeable{\NF{\termA}}$,
      hence $\NF{\termA}=0$ by \cref{lemma:typing:unicity};
    \item if $\typeable{\termA}$ then
      $\rctxA\rjug\NF{\termA}:\rtermA$
      iff $\rctxA\rjug\termA:\rtermA$
      iff $\rctxA=\ctxOf{\termA}$ and $\rtermA=\typeOf{\termA}$,
      again by \cref{lemma:typing:unicity}.
      \qedhere
  \end{itemize}
\end{proof}

\subsection{Taylor expansion of relational semantics}

We now show that the equational theory induced by the relational interpretation
of \(λ\)-terms in \(\RelTypes\) is nothing but \(\eqExtTay\), or equivalently \(\Hs\).
We first reformulate the relational type system for $λ$-terms,
to account for the distinction between $\ValueTypes$ and $\StreamTypes$:
\begin{gather*}
  \begin{prooftree}
    \infer0{\varA:\mset{\rtypeA}\rjug \varA:\rtypeA}
  \end{prooftree}
  \qquad
  \begin{prooftree}
    \hypo{\rctxA\rjug \ltermA:\dualtype{\rstrB}}
    \infer1{\rctxA-\varA\rjug\labs{\varA}{\ltermA}:\dualtype*{\rctxA(\varA)\cons\rstrB}}
  \end{prooftree}
  \\[1em]
  \begin{prooftree}
    \hypo{\rctxA\rjug \ltermA:\dualtype*{\rbagA\cons\rstrB}}
    \hypo{\rctxB\rjugb \ltermB:\rbagA}
    \infer2{\rctxA\rctxcat\rctxB\rjug\appl{\ltermA}{\ltermB}:\dualtype{\rstrB}}
  \end{prooftree}
  \qquad
  \begin{prooftree}
    \hypo{\rctxA_1\rjug \ltermA:\rtypeA_1}
    \hypo{\cdots}
    \hypo{\rctxA_k\rjug \ltermA:\rtypeA_k}
    \infer3{\rctxA_1\rctxcat\cdots\rctxcat\rctxA_k\rjugb\ltermA:\mset{\rtypeA_1,\dotsc,\rtypeA_k}}
  \end{prooftree}
\end{gather*}

For any \(\termvA\in\ResourceVectors[\Bool]\),
we set $\rctxA\rjug\termvA:\rtermA$
iff there is $\termA\in\termvA$ with $\rctxA\rjug\termA:\rtermA$.
\Cref{lemma:invariance} entails:
\begin{lemma}\label{lemma:invariance:vresred}
  If $\termvA,\termvA'\in\ResourceVectors[\Bool]$
  and $\termvA\vresred\termvA'$ then 
  $\rctxA\rjug\termvA:\rtermA$
  iff
  $\rctxA\rjug\termvA':\rtermA$.
\end{lemma}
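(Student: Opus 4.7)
The plan is to reduce the statement to \cref{lemma:invariance} applied componentwise, exploiting the definition of $\vresred$ together with the observation that, since $\K = \Bool$, the support of a linear combination is just the union of supports of its nonzero summands.

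First, I would unfold the definition of $\vresred$: if $\termvA \vresred \termvA'$, then we can write $\termvA = \sum_{i \in I} \termA_i$ and $\termvA' = \sum_{i \in I} \termsA'_i$ with $\termA_i \in \ResourceTerms$, $\termsA'_i \in \ResourceSums$, and $\termA_i \sresredRT \termsA'_i$ for each $i \in I$ (all coefficients are $1$ in $\Bool$). The typing convention on vectors then reads $\rctxA \rjug \termvA : \rtermA$ iff $\rctxA \rjug \termA_i : \rtermA$ for some $i \in I$, and similarly $\rctxA \rjug \termvA' : \rtermA$ iff there is some $i \in I$ and some $\termA' \in \support{\termsA'_i}$ with $\rctxA \rjug \termA' : \rtermA$.

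Next, I would establish the extension of \cref{lemma:invariance} from $\sresred$ to $\sresredRT$: if $\termsA \sresredRT \termsA''$, then $\rctxA \rjug \termsA : \rtermA$ iff $\rctxA \rjug \termsA'' : \rtermA$. This is immediate by induction on the number of small-step reductions, using reflexivity for the zero-step case and \cref{lemma:invariance} for each single step.

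With this in hand, both directions of the equivalence are straightforward. For the forward direction, a witness $\termA_i$ of $\rctxA \rjug \termvA : \rtermA$ yields $\rctxA \rjug \termsA'_i : \rtermA$ by the iterated invariance, hence some element of $\support{\termsA'_i} \subseteq \support{\termvA'}$ witnesses the typing of $\termvA'$. For the backward direction, a witness $\termA' \in \support{\termvA'}$ must lie in some $\support{\termsA'_i}$, so $\rctxA \rjug \termsA'_i : \rtermA$, hence $\rctxA \rjug \termA_i : \rtermA$, and $\termA_i \in \support{\termvA}$. There is no real obstacle here: the only subtle point is just to make explicit the componentwise structure of $\vresred$ on $\Bool$-supports, after which \cref{lemma:invariance} does all the work.
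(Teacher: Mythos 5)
Your proof is correct and follows exactly the route the paper intends: the paper gives no explicit proof beyond "the results of the previous section entail this", and your filling-in — iterating \cref{lemma:invariance} along $\sresredRT$, then transporting typability through the componentwise decomposition of $\vresred$ using the fact that over $\Bool$ supports of sums are unions of supports — is precisely the intended argument. No gaps.
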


Then we show that the relational interpretation factors through Taylor expansion:
\begin{theorem}\label{theorem:rel:taylor}
  For any $λ$-term $\ltermA$, the following are equivalent: 
  \[
    \rctxA\rjug\ltermA:\rtypeA;
    \qquad
    \rctxA\rjug\ExtTayExp{\ltermA}:\rtypeA;
    \qquad
    \rctxA\rjug\HeadTayExp{\ltermA}:\rtypeA
    \;.
  \]
\end{theorem}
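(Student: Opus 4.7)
The equivalence of the second and third items follows immediately from Theorem~\ref{theorem:exttohead}, which yields $\ExtTayExp{\ltermA} \vresredRT \HeadTayExp{\ltermA}$, combined with Lemma~\ref{lemma:invariance:vresred}: reduction on resource vectors preserves typing. The bulk of the work is to establish (i) $\iff$ (iii), which I plan to prove by induction on $\ltermA$, matching the head-structural definition of $\HeadTayExp$.

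A central auxiliary result will be an \emph{identity characterization} of $\IdExp{\varA}$ and $\IdStrExp{\svarA}$: for every value type $\rtypeA \in \ValueTypes$ there exists a unique $\vtermA \in \IdExp{\varA}$ with $\varA : \mset{\rtypeA} \rjugVal \vtermA : \rtypeA$, and every typing of an element of $\IdExp{\varA}$ has this form; symmetrically, for every $\rstrA \in \StreamTypes$ there exists a unique $\strA \in \IdStrExp{\svarA}$ with $\svarA : \rstrA \rjugStr \strA : \rstrA$, and every typing of an element of $\IdStrExp{\svarA}$ has this form. I plan to prove this by mutual induction on the inductive stratification of $\RelTypes$, using the unfolding $\IdStrExp{\svarA} = \IdBagExp{\svarA[0]} \cons \shiftup{\IdStrExp{\svarA}}{\svarA}$ (Lemma~\ref{lemma:cc:shift}) for existence, and uniqueness of typing derivations (Lemma~\ref{lemma:typing:unicity}) together with the syntactic constraint that elements of $\IdExp{\varA}$ have the shape $\labs{\svarB}{\appl{\varA}{\strA}}$ with $\strA \in \IdStrExp{\svarB}$ for the converse.

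The second auxiliary lemma I need concerns promotion: $\rctxA \rjugBag \prom{\vtermvA} : \mset{\rtypeA_1, \ldots, \rtypeA_k}$ iff $\rctxA = \rctxA_1 \rctxcat \cdots \rctxcat \rctxA_k$ with $\rctxA_i \rjugVal \vtermvA : \rtypeA_i$ for each $i$; this follows directly from the definition of $\prom{\vtermvA}$ and the bag typing rule. The main induction then proceeds as follows. The variable case $\ltermA = \varA$ is exactly the identity lemma, since $\HeadTayExp{\varA} = \IdExp{\varA}$. The abstraction case $\labs{\varA}{\ltermB}$ is immediate from the induction hypothesis and the abstraction rules in both type systems. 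For an application, $\HeadTayExp{\appl{\ltermC}{\ltermsB}}$ has the form $\labs{\svarB}{\appl{\hexprvA}{\prom{\HeadTayExp{\ltermB_1}} \cons \cdots \cons \prom{\HeadTayExp{\ltermB_k}} \cons \IdStrExp{\svarB}}}$ where $\hexprvA$ is either a free variable or $\HeadTayExp{\ltermC}$: the promotion lemma combined with the IH rewrites each $\prom{\HeadTayExp{\ltermB_i}}$-typing as a $\ltermB_i$-typing with matching bag type $\rbagA_i$; the identity lemma realizes the trailing $\IdStrExp{\svarB}$ against any chosen stream type $\rstrA$; peeling the outer $\labs{\svarB}$ and the application rule then exactly matches the premises of the $\lambda$-calculus application rule producing $\dualtype{\rstrA}$.

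The delicate step will be the identity lemma. Existence requires constructing, for each value type $\rtypeA = \dualtype{\rstrA}$, a specific resource term inside the infinite expansion $\IdExp{\varA}$ realizing that type, by recursion on the inductive definition of $\RelTypes$ intertwined with stream expansions via Lemma~\ref{lemma:cc:shift}. The uniqueness half must propagate the constraint through that same recursion, exploiting that free variables of $\strA \in \IdStrExp{\svarB}$ lie in $\svarB$ so that the outer abstraction eliminates all context entries other than at $\varA$. Once both auxiliary lemmas are secured, the main induction is essentially a mechanical matching of typing rules.
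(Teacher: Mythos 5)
Your proposal is correct and matches the paper's proof in all essentials: the same witness/identity lemma for $\IdExp{\varA}$, $\prom{\IdExp{\varA}}$ and $\IdStrExp{\svarA}$ (existence and uniqueness of a typed inhabitant for each type, and the converse that every inhabitant is the witness of its type), the same observation that typing a bag in a promotion decomposes into typings of its elements, and the same induction on $\ltermA$. The only (harmless) organizational difference is that you pivot the main induction on $\HeadTayExp{\ltermA}$ and recover the $\ExtTayExp{\ltermA}$ clause via \cref{theorem:exttohead} and \cref{lemma:invariance:vresred}, whereas the paper inducts on the structurally simpler $\ExtTayExp{\ltermA}$ (one argument bag at a time, head always a value vector) and transfers to $\HeadTayExp{\ltermA}$ by the same reduction-invariance argument; your version additionally has to handle the variable-head base-term rule and match a $k$-long stream prefix against $k$ iterated uses of the $\lambda$-calculus application rule, but nothing breaks.
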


The equivalence between the last two statements follow from
\cref{lemma:invariance:vresred,theorem:exttohead}.
A first step to establish the equivalence with the first one is to consider the
expansion of variables.

\begin{lemma}
  Let $\varA\in\LVar$ and $\svarA\in\SVar$. Then:
  \begin{itemize}
    \item for each $\rtypeA\in\ValueTypes$, 
      there is a unique $\witnessOf{\varA}{\rtypeA}\in\IdExp{\varA}$
      with $\typeOf{\witnessOf{\varA}{\rtypeA}}=\rtypeA$;
    \item for each $\rbagA\in\BagTypes$, 
      there is a unique $\witnessBagOf{\varA}{\rtypeA}\in\prom*{\IdExp{\varA}}$
      with $\typeOf{\witnessBagOf{\varA}{\rbagA}}=\rbagA$;
    \item for each $\rstrA\in\StreamTypes$, 
      there is a unique $\witnessStrOf{\svarA}{\rstrA}\in\IdStrExp{\svarA}$
      with $\typeOf{\witnessStrOf{\svarA}{\rstrA}}=\rstrA$.
  \end{itemize}
  Moreover:
  \[
      \varA:\mset{\rtypeA}\rjugVal\witnessOf{\varA}{\rtypeA}:\rtypeA,\quad
      \varA:\rbagA\rjugBag\witnessBagOf{\varA}{\rbagA}:\rbagA,\quad
      \svarA:\rstrA\rjugStr\witnessStrOf{\svarA}{\rstrA}:\rstrA,\quad
  \]
  and:
  \begin{itemize}
    \item if $\vtermA\in\IdExp{\varA}$ then
      \(\typeable{\vtermA}\) and 
      $\vtermA=\witnessOf{\varA}{\typeOf{\vtermA}}$;
    \item if $\bagA\in\prom*{\IdExp{\varA}}$ then 
      \(\typeable{\bagA}\) and 
      $\bagA=\witnessBagOf{\varA}{\typeOf{\bagA}}$;
    \item if $\strA\in\IdStrExp{\svarA}$ then 
      \(\typeable{\strA}\) and 
      $\strA=\witnessStrOf{\svarA}{\typeOf{\strA}}$.
  \end{itemize}
\end{lemma}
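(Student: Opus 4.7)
The plan is to establish these three statements by mutual induction, essentially by building a bijection between the types and the elements of the respective Taylor-expansion vectors. The three claims are tightly interdependent (reflecting the mutual recursion defining $\IdExp{\varA}$, $\IdBagExp{\varA}$, and $\IdStrExp{\svarA}$), and the typing information is already baked into the syntactic structure: the shape of a value term in $\IdExp{\varA}$ is $\labs{\svarB}{\appl{\varA}{\strA}}$ with $\strA\in\IdStrExp{\svarB}$, and its type is $\dualtype{\typeOf{\strA}}$, so the type completely controls the choice of $\strA$ (hence of the whole term) by induction.

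First, I would define the witnesses by mutual induction on type terms:
\begin{align*}
  \witnessOf{\varA}{\dualtype{\rstrA}} &\eqdef \labs{\svarB}{\appl{\varA}{\witnessStrOf{\svarB}{\rstrA}}}
  \qquad\text{(choosing $\svarB\not\ni\varA$)}\\
  \witnessBagOf{\varA}{\mset{\rtypeA_1,\dotsc,\rtypeA_k}} &\eqdef
  \mset{\witnessOf{\varA}{\rtypeA_1},\dotsc,\witnessOf{\varA}{\rtypeA_k}}\\
  \witnessStrOf{\svarA}{\tuple{\rbagA_i}_{i\in\N}} &\eqdef \tuple{\witnessBagOf{\svarA[i]}{\rbagA_i}}_{i\in\N}\;.
\end{align*}
The finiteness constraint on streams ensures the last clause yields a legitimate stream term. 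Simultaneously, I would verify by induction on types that $\witnessOf{\varA}{\rtypeA}\in\IdExp{\varA}$, $\witnessBagOf{\varA}{\rbagA}\in\IdBagExp{\varA}$, and $\witnessStrOf{\svarA}{\rstrA}\in\IdStrExp{\svarA}$: the base cases use that $\emptybag\in\IdBagExp{\varA}$ and $\emptystream\in\IdStrExp{\svarA}$, and the inductive steps follow by unfolding the recursive characterizations of $\IdExp{\varA}$, $\IdStrExp{\svarA}$, and the definition of promotion.

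Next, I would establish the typing judgments, again by mutual induction on types. For $\witnessOf{\varA}{\dualtype{\rstrA}}$, I apply the induction hypothesis to get $\svarB:\rstrA\rjugStr\witnessStrOf{\svarB}{\rstrA}:\rstrA$, then apply the typing rule for a variable-headed base term to obtain $\svarB:\rstrA\rctxcat\varA:\mset{\dualtype{\rstrA}}\rjugBase\appl{\varA}{\witnessStrOf{\svarB}{\rstrA}}:\basetype$, and finally abstract over $\svarB$, which discharges $\svarB:\rstrA$ and yields the desired judgment. The bag and stream cases follow similarly by composing the typing rules with the induction hypotheses and the fact that contexts concatenate when we combine components (using \cref{lemma:typing:occurrences} to confirm variables are correctly accounted for).

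Finally, I would prove the converse (every element is its own witness) by a parallel induction on resource terms. If $\vtermA\in\IdExp{\varA}$ then by definition $\vtermA=\labs{\svarB}{\appl{\varA}{\strA}}$ with $\strA\in\IdStrExp{\svarB}$; by induction hypothesis $\strA=\witnessStrOf{\svarB}{\typeOf{\strA}}$, and since $\typeOf{\vtermA}=\dualtype{\typeOf{\strA}}$, we get $\vtermA=\witnessOf{\varA}{\typeOf{\vtermA}}$. The bag case uses the definition $\IdBagExp{\varA}=\prom*{\IdExp{\varA}}$ together with the multinomial structure of $\prom{-}$ (any $\bagA\in\prom*{\IdExp{\varA}}$ decomposes as a multiset of elements of $\IdExp{\varA}$). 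The stream case uses the componentwise structure of $\IdStrExp{\svarA}$, where each component is independently in $\IdBagExp{\svarA[i]}$. Uniqueness of $\witnessOf{\varA}{\rtypeA}$ among elements of $\IdExp{\varA}$ of type $\rtypeA$ is then immediate: any such element $\vtermA$ equals $\witnessOf{\varA}{\typeOf{\vtermA}}=\witnessOf{\varA}{\rtypeA}$.

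The main subtlety, though not really an obstacle, is making sure the componentwise treatment of streams (which are $\N$-indexed) is rigorous: one must check that the stream witness does lie in $\IdStrExp{\svarA}$ as defined via promotion of the sequence $\IdSeqExp{\svarA}$, which follows from the definition of $\prom{\seq{\vtermvA}}$ via products of coefficients and the fact that each component $\witnessBagOf{\svarA[i]}{\rbagA_i}$ is a genuine element of $\prom{\IdExp{\svarA[i]}}$ with coefficient $1$ in the support.
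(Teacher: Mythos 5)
Your proposal is correct and follows essentially the same route as the paper: the witnesses are defined by the same mutual induction on type terms, membership and the typing judgements are verified by induction on types, and the converse (hence uniqueness) is obtained by mutual induction on resource terms. The paper's proof is just a terser version of exactly this argument.
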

\begin{proof}
  We define 
  $\witnessOf{\varA}{\rtypeA}$, $\witnessBagOf{\varA}{\rbagA}$ and $\witnessStrOf{\svarA}{\rstrA}$
  by mutual induction on $\rtypeA$, $\rbagA$ and $\rstrA$.
  Given $\rtypeA=\dualtype{\rstrA}$, we chose $\svarB\not\ni\varA$ and set
  $\witnessOf{\varA}{\rtypeA}\eqdef\labs{\svarB}{\appl{\varA}{\witnessStrOf{\svarB}{\rstrA}}}$.
  If $\rbagA=\mset{\rtypeA_1,\dotsc,\rtypeA_k}$, we set
  $\witnessBagOf{\varA}{\rbagA}\eqdef
  \mset{\witnessOf{\varA}{\rtypeA_1},\dotsc,\witnessOf{\varA}{\rtypeA_k}}$.
  Finally, if $\rstrA=\tuple{\rbagA_i}_{i\in\N}$, we set
  $\witnessStrOf{\svarA}{\rstrA}\eqdef\tuple{\witnessBagOf{\svarA[i]}{\rbagA_i}}_{i\in\N}$.

  The fact that these expressions are the only ones satisfying the requirements,
  together with the induced typing judgements,
  are easily established by induction on type terms.
  Finally, the last three items are obtained by mutual induction on resource terms.
\end{proof}

\begin{proof}[Proof of Theorem~\ref{theorem:rel:taylor}]
  We establish that:
  \begin{itemize}
    \item $\rctxA\rjug\ltermA:\rtypeA$ iff
      there exists $\vtermA\in\ExtTayExp{\ltermA}$
      with $\rctxA\rjugVal\vtermA:\rtypeA$, and
    \item $\rctxA\rjugb\ltermA:\rbagA$ iff
      there exists $\bagA\in\ExtArgTayExp{\ltermA}$
      with $\rctxA\rjugBag\bagA:\rbagA$;
  \end{itemize}
  by induction on $\ltermA$.
  The second statement follows directly from the first:
  if $\rbagA=\mset{\rtypeA_1,\dotsc,\rtypeA_k}$ then 
  \begin{center}\begin{tabular}{rcl}
    $\rctxA\rjugb\ltermA:\rbagA$
    &iff&
    $\rctxA=\rctxA_1\rctxcat\cdots\rctxcat\rctxA_k$ with
    $\rctxA_i\rjug\ltermA:\rtypeA_i$ for $1\le i\le k$
    \\
    &iff&
    $\rctxA=\rctxA_1\rctxcat\cdots\rctxcat\rctxA_k$ with
    $\rctxA_i\rjugVal\ExtTayExp{\ltermA}:\rtypeA_i$ for $1\le i\le k$
    \\
    &iff&
    $\rctxA\rjugBag\ExtTayExp{\ltermA}:\rbagA$.
  \end{tabular}\end{center}

  First assume $\ltermA=\varA$.
  If $\rctxA\rjug\ltermA:\rtypeA$ 
  then $\rctxA=\varA:\mset{\rtypeA}$
  and we have $\varA:\mset{\rtypeA}\rjugVal\witnessOf{\varA}{\rtypeA}:\rtypeA$
  with $\witnessOf{\varA}{\rtypeA}\in\IdExp{\varA}$.
  Conversely, if $\vtermA\in\IdExp{\varA}$ with
  $\rctxA\rjugVal\vtermA:\rtypeA$,
  we must have $\vtermA=\witnessOf{\varA}{\rtypeA}$,
  hence $\rctxA=\varA:\mset{\rtypeA}$.

  Now assume $\ltermA=\labs{\varA}{\ltermB}$.
  If $\rctxA\rjug\ltermA:\rtypeA$
  then $\rctxA=\rctxB-\varA$ and $\rtypeA=\dualtype*{\rctxB(\varA)\cons\rstrC}$,
  with $\rctxB\rjug\ltermB:\dualtype{\rstrC}$.
  By induction hypothesis, this holds iff
  there exists $\vtermB\in\ExtTayExp{\ltermB}$
  with $\rctxB\rjugVal\vtermB:\dualtype{\rstrC}$,
  which is equivalent to
  $\rctxB-\varA\rjugVal\labs{\varA}{\vtermB}:\dualtype*{\rctxB(\varA)\cons\rstrC}$,
  \emph{i.e.}\ 
  $\rctxA\rjugVal\labs{\varA}{\vtermB}:\rtypeA$.

  Finally, assume $\ltermA=\appl{\ltermB}{\ltermC}$.
  If $\rctxA\rjug\ltermA:\dualtype{\rstrA}$
  then we can write $\rctxA=\rctxB\rctxcat\rctxC$
  and there exists $\rbagC\in\BagTypes$ 
  such that $\rctxB\rjug\ltermB:\dualtype{\rbagC\cons\rstrA}$
  and $\rctxC\rjugb\ltermC:\rbagC$.
  By induction hypothesis, we obtain 
  $\vtermB\in\ExtTayExp{\ltermB}$ and
  $\bagC\in\prom{\ExtTayExp{\ltermC}}$ such that
  $\rctxB\rjugVal\vtermB:\dualtype{\rbagC\cons\rstrA}$
  and $\rctxC\rjugBag\bagC:\rbagC$.
  Let $\svarA$ be a fresh sequence variable. We have
  $\rctxC\rctxcat\svarA:\rstrA\rjugStr\bagC\cons\witnessStrOf{\svarA}{\rstrA}:\rbagC\cons\rstrA$
  and then
  $\rctxB\rctxcat\rctxC\rctxcat\svarA:\rstrA\rjugBase\appl{\vtermB}{\bagC\cons\witnessStrOf{\svarA}{\rstrA}}:\basetype$,
  and finally
  $\rctxA\rjugVal\labs{\svarA}{\appl{\vtermB}{\bagC\cons\witnessStrOf{\svarA}{\rstrA}}}:\dualtype{\rstrA}$,
  We conclude since 
  $\labs{\svarA}{\appl{\vtermB}{\bagC\cons\witnessStrOf{\svarA}{\rstrA}}}\in\ExtTayExp{\ltermA}$.

  Conversely, if $\vtermA\in \ExtTayExp{\ltermA}$ with 
  $\rctxA\rjugVal\vtermA:\dualtype{\rstrA}$,
  then we must have $\vtermA=\labs{\svarA}{\appl{\vtermB}{\bagC\cons\strA}}$
  with
  \(\svarA\) fresh,
  $\vtermB\in\ExtTayExp{\ltermB}$,
  $\bagC\in\prom{\ExtTayExp{\ltermC}}$
  and $\strA\in\IdStrExp{\svarA}$,
  so that $\rctxA=\rctxA'-\svarA$
  with $\rctxA'(\svarA)=\rstrA$
  and $\rctxA'\rjugBase\appl{\vtermB}{\bagC\cons\strA}:\basetype$.
  Since \(\svarA\) is fresh,
  $\rctxA'(\svarA)=\ctxOf{\strA}(\svarA)$,
  hence $\strA=\witnessStrOf{\svarA}{\rstrA}$
  and $\svarA:\rstrA\rjugStr\strA:\rstrA$.
  Then there must exist $\rbagC\in\BagTypes$ 
  and contexts $\rctxB$ and $\rctxC$ such that
  $\rctxA=\rctxB\rctxcat\rctxC$,
  $\rctxB\rjugVal\vtermB:\dualtype*{\rbagC\cons\rstrA}$
  and
  $\rctxC\rjugBag\bagC:\rbagC$.
  By induction hypothesis, 
  we obtain
  $\rctxB\rjug\ltermB:\dualtype*{\rbagC\cons\rstrA}$ and
  $\rctxC\rjugb\ltermC:\rbagC$,
  hence
  $\rctxA\rjug\ltermA:\dualtype{\rstrA}$.
\end{proof}

\Cref{corollary:NFT,lemma:invariance:vresred,theorem:rel:taylor,theorem:unsolvable},
together with the inductive definition of the relational semantics 
entail that:
\begin{corollary}
  Setting $\ltermA\eqRel\ltermB$ when $\sem{\ltermA}=\sem{\ltermB}$ 
  defines a sensible extensional $λ$-theory.
\end{corollary}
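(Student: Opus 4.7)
The plan is to verify each defining property of a sensible extensional $\lambda$-theory by combining results already assembled. Reflexivity, symmetry and transitivity are immediate, since $\eqRel$ is defined as equality of sets. For the congruence properties with respect to abstraction and application, I will inspect the relational typing rules for $\lambda$-terms recalled above: the derivable judgements for $\labs{\varA}{\ltermA}$ are fully determined by those of $\ltermA$, and likewise the typings of $\appl{\ltermA}{\ltermB}$ are determined by those of $\ltermA$ together with the bag typings of $\ltermB$, which themselves are determined by $\sem{\ltermB}$ (each bag-typing rule combines a finite tuple of ordinary typings of $\ltermB$).

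For $\bred$- and $\etared$-containment, I will combine Theorem~\ref{theorem:rel:taylor} with Lemma~\ref{lemma:invariance:vresred}. Theorem~\ref{theorem:rel:taylor} identifies $\sem{\ltermA}$ with the set of pairs $(\rctxA,\rtypeA)$ such that $\rctxA\rjug\ExtTayExp{\ltermA}:\rtypeA$, i.e.\ that some $\vtermA\in\ExtTayExp{\ltermA}$ witnesses this typing. Since $\ExtTayExp{\ltermA}\vresredRT\NFTayExp{\ltermA}$, Lemma~\ref{lemma:invariance:vresred} yields $\sem{\ltermA}=\sem{\NFTayExp{\ltermA}}$. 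By Corollary~\ref{corollary:NFT}, $\NFTayExp{\ltermA}=\NFTayExp{\ltermA'}$ whenever $\ltermA\bred\ltermA'$ or $\ltermA\etared\ltermA'$, so we conclude $\sem{\ltermA}=\sem{\ltermA'}$ in both cases.

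For sensibility, I will show that any unsolvable $\ltermA$ satisfies $\sem{\ltermA}=\emptyset$. By Theorem~\ref{theorem:unsolvable}, $\ltermA$ is unsolvable exactly when $\NFTayExp{\ltermA}=0$; since coefficients here live in $\Bool$, this means that no element in the support of $\ExtTayExp{\ltermA}$ has a nonzero normal form. But Lemma~\ref{lemma:typeable:is:normalizable} states that every typeable resource term has a nonzero normal form, so no $\vtermA\in\ExtTayExp{\ltermA}$ is typeable. Theorem~\ref{theorem:rel:taylor} then forces $\sem{\ltermA}=\emptyset$, and so all unsolvable $\lambda$-terms are collapsed to the same (empty) semantics, which is precisely sensibility.

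I do not anticipate a serious obstacle: every step reduces to a direct application of a prior result. The only mildly delicate point is to read correctly the judgement $\rctxA\rjug\termvA:\rtermA$ on a resource vector as an existential statement over its support, so that Theorem~\ref{theorem:rel:taylor} literally computes $\sem{\ltermA}$; once this identification is made, all four properties follow as outlined.
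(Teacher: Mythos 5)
Your proposal is correct and matches the paper's (very terse) argument, which simply invokes Corollary~\ref{corollary:NFT}, Lemma~\ref{lemma:invariance:vresred}, Theorem~\ref{theorem:rel:taylor}, Theorem~\ref{theorem:unsolvable} and the inductive (compositional) definition of the relational semantics — exactly the ingredients you assemble, with Lemma~\ref{lemma:typeable:is:normalizable} serving as a legitimate shortcut for the sensibility step (the paper would instead note that $\ExtTayExp{\ltermA}\vresred\NFTayExp{\ltermA}=0$ and apply Lemma~\ref{lemma:invariance:vresred} directly, but these amount to the same fact). Your reading of $\rctxA\rjug\termvA:\rtermA$ as an existential over the support is indeed the intended one, so all four properties go through as you outline.
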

As we have stated above, this property is already well known
\cite{DBLP:conf/csl/BucciarelliEM07}: 
the originality of our approach is to relate this semantics
with extensional Taylor expansion.
In particular, this allows us to give a new proof 
that $\eqRel$ is $\Hs$: this was first established
by Manzonetto \cite{DBLP:conf/mfcs/Manzonetto09}, who gave 
sufficient axiomatic conditions on a cartesian closed category to 
host a reflexive object modelling $\Hs$.
Here the result comes directly from the properties of extensional
Taylor expansion:
\begin{corollary}
  The relations $\eqExtTay$ and $\eqRel$ coincide
  -- hence $\eqRel$ is \(\Hs\).
\end{corollary}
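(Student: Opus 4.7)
The plan is to prove the two inclusions $\eqTay \subseteq \eqRel$ and $\eqRel \subseteq \eqTay$ separately, and then use the earlier identification of $\eqTay$ with $\eqHs$ (established via \cref{theorem:separation} and \cref{lemma:hn:NFTn0}) to conclude.

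For $\eqTay \subseteq \eqRel$, I would show that the relational interpretation $\sem{\ltermA}$ depends only on $\NFTayExp{\ltermA}$. By \cref{theorem:rel:taylor}, $\rctxA\rjug\ltermA:\rtypeA$ iff there is a typing $\rctxA\rjugVal\vtermA:\rtypeA$ for some $\vtermA\in\ExtTayExp{\ltermA}$, so $\sem{\ltermA}$ is read off entirely from the typings of value terms in $\ExtTayExp{\ltermA}$. Since $\ExtTayExp{\ltermA}\vresredRT\NFTayExp{\ltermA}$ (by \cref{lemma:vresred:NF}), \cref{lemma:invariance:vresred} gives $\rctxA\rjug\ExtTayExp{\ltermA}:\rtypeA$ iff $\rctxA\rjug\NFTayExp{\ltermA}:\rtypeA$. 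Hence $\NFTayExp{\ltermA}=\NFTayExp{\ltermB}$ forces $\sem{\ltermA}=\sem{\ltermB}$, i.e.\ $\ltermA\eqRel\ltermB$.

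For $\eqRel \subseteq \eqTay$, I would invoke the previous corollary: $\eqRel$ is a sensible extensional $\lambda$-theory. It is also consistent (for instance, $\sem{\varA}\neq\sem{\varB}$ whenever $\varA\neq\varB$, as their relational contexts differ). Since $\eqTay=\eqHs$ is by definition the greatest consistent sensible $\lambda$-theory, we obtain $\eqRel\subseteq\eqTay$.

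Combining the two inclusions yields $\eqTay=\eqRel$, and since $\eqTay=\eqHs$ has already been established, we conclude $\eqRel=\eqHs$ as well. There is essentially no obstacle here: all the real work was carried out in the previous sections, and this corollary is just a matter of packaging the relational version of the factorization theorem together with the maximality of $\Hs$.
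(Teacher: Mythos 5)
Your proof is correct and follows essentially the same route as the paper: the inclusion $\mathord{\eqTay}\subseteq\mathord{\eqRel}$ via \cref{theorem:rel:taylor} together with invariance of typing under $\vresred$, and the converse by observing that $\eqRel$ is a consistent sensible $λ$-theory and hence contained in the maximal one, $\eqTay=\eqHs$, by \cref{theorem:separation}. The only cosmetic difference is that the paper routes the second inclusion explicitly through the observational equivalence $\eqOh$, which is what your appeal to maximality unfolds to.
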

\begin{proof}
  We have just established that $\eqRel$ is a consistent sensible $λ$-theory,
  hence $\mathord{\eqRel}\subseteq\mathord{\eqOh}$
  and we obtain $\mathord{\eqRel}\subseteq\mathord{\eqExtTay}$
  by \cref{theorem:separation}.
  For the reverse inclusion, assume $\ltermA\eqExtTay\ltermB$:
  by \cref{corollary:NFT,lemma:invariance:vresred} 
  we obtain $\sem{\ExtTayExp{\ltermA}}=\sem{\ExtTayExp{\ltermB}}$,
  and \cref{theorem:rel:taylor} yields \mbox{$\sem{\ltermA}=\sem{\ltermB}$}.
\end{proof}

\section{Normal resource terms as augmentations}\label{section:gs}

In this section, we show the correspondence with game semantics that
initially motivated the introduction of extensional resource terms.  

We briefly recall the basics of \emph{pointer concurrent games} in the
style of Blondeau-Patissier and Clairambault
\cite{DBLP:conf/fscd/Blondeau-Patissier21}. In this brief presentation
we shall introduce \emph{augmentations} as representations of
normal extensional resource terms,\footnote{
  In fact, it is the other way around: extensional resource terms were
  initially designed by the authors so that normal extensional resource terms
  are representations of augmentations on the universal arena.
}
but only consider them as static objects:
we leave for further work an account of the dynamics of extensional resource
terms in game semantics, as previously developed in the typed
case~\cite{BPCVA25}.

\subsection{Arenas and their constructions}

As usual in game semantics, the game is played on an \emph{arena} --
the arena presents all the available observable computational events on
a given type, along with their causal dependencies, given by a forest order:
we call \definitive{forest} any poset $A$ such that the principal ideal
$[a]_A\eqdef \set{a'\in\ev{A} \mid a'\leq_A a}$
of any element $a$
is finite and linearly ordered by $\leq_A$.

\begin{definition}\label{def:arena}
An \definitive{arena} is $A = \tuple{\ev{A}, \leq_A, \pol_A}$ where
the pair $\tuple{\ev{A}, \leq_A}$
(the set of \definitive{events} and their \definitive{causal} order)
defines a countable forest,
and the \definitive{polarity function} $\pol_A : \ev{A} \to \{-, +\}$ is
\definitive{alternating}: whenever $a_1 \imc_A a_2$,
we have $\pol_A(a_1) \neq \pol_A(a_2)$
(where $a_1 \imc_A a_2$ means $a_1 <_A a_2$ with no event strictly
between).
A \definitive{negative arena} additionally satisfies:
$\pol_A(a) = -$ for all $a \in \min(A) \eqdef \{a \in \ev{A} \mid
\text{$a$ minimal}\}$. 
\end{definition}

We say that a negative arena is \definitive{pointed} if it has a unique minimal
event, called the \definitive{initial move}. An \definitive{isomorphism} of
arenas, written $\varphi : A \iso B$, is a bijection on events
preserving and reflecting all structure.

\paragraph{Universal arena.} Usually, the arena represents the
\emph{type}. In this paper, as we are interested in an untyped
language, we shall mainly work with one ambient arena $\U$,
that we call the \emph{universal arena}
(this is the maximal single-tree arena in the sense of
Ker \emph{et al.}~\cite{KerNO02}):

\begin{definition}
The \definitive{universal arena}, denoted by $\U$, has the following data:
\[
\begin{array}{rl}
\text{\emph{events}:} &
\text{the set $\ev{\U} = \WordsOf{\mathbb{N}}$ of lists of natural numbers,}\\
\text{\emph{causality}:} &
\text{the prefix ordering,}\\
\text{\emph{polarity}:} & 
\text{$\pol_\U(l) = -$ if $l$ has even length, $+$ otherwise.}
\end{array}
\]
\end{definition}

\begin{figure}
\[
\scalebox{\ThCSscalefactor}{\small
\xymatrix@R=5pt@C=5pt{
&&&&\emptyword^-
	\ar@{.}[dlll]
	\ar@{.}[d]
	\ar@{.}[drrr]\\
&\tuple{0}^+ 	\ar@{.}[dl]
	\ar@{.}[d]
	\ar@{.}[dr]&&&
\tuple{1}^+	\ar@{.}[dl]
	\ar@{.}[d]
	\ar@{.}[dr]&&&
\tuple{2}^+	\ar@{.}[dl]
	\ar@{.}[d]
	\ar@{.}[dr]&\dots\\
\tuple{0,0}^-&\tuple{0,1}^-&\tuple{0,2}^-\dots&
\tuple{1,0}^-&\tuple{1,1}^-&\tuple{1,2}^-\dots&
\dots&\dots&\dots\\
\vdots&\vdots&\vdots&
\vdots&\vdots&\vdots&
}
}
\]
\caption{The universal arena $\U$}
\label{fig:univ_ar}
\end{figure}

The universal arena is represented in \cref{fig:univ_ar}, read
from top to bottom and where dotted lines represent immediate causal
dependency.
It is clear that $\U$ is a negative arena, with unique minimal event
$\emptyword$.
We can think of events in this arena as moves exploring the structure of a
closed Nakajima tree.
Opponent starts computation with $\emptyword^-$, exploring the 
root of the tree, prefixed by a countable sequence of
abstractions $\labs{\varA_0}{\cdots\labs{\varA_i}{\cdots}}$.
This Opponent move enables countably many Player moves
$\tuple{i}^+$, one for each $i \in \mathbb{N}$.
Playing $\tuple{i}^+$ corresponds to having an
occurrence of $x_i$ as head variable, with countably many arguments.
Opponent can access the $j$-th argument of this occurrence by playing
$\tuple{i,j}^-$. This process continues in this way, indefinitely.
Note that the arena only describes the types of moves that can be played
along this exploration process:
a state of the process will be captured by the notion of \emph{position}
(an isomorphism class of \emph{configurations})
and a run of the process itself will be abstracted as an \emph{isogmentation}
(an isomorphism class of \emph{augmentations}~\cite{BPCVA25})
-- these concepts will be recalled below.

\paragraph{Constructions.}
We shall use some other constructions on arenas: the \definitive{atomic
arena} $o$ has just one (negative) move $\qu$. The \definitive{tensor}
$A_1 \tensor A_2$ of arenas $A_1$ and $A_2$ has as events the tagged
disjoint union $\ev{A_1 \tensor A_2} = \ev{A_1} + \ev{A_2} = \{1\}
\times \ev{A_1} \uplus \{2\} \times \ev{A_2}$, and other components
inherited. The \definitive{hom-arena} $A_1 \vdash A_2$ is $A_1^\perp
\tensor A_2$, where the \definitive{dual} $A^\perp$ of $A$ has the same
components but the polarity reversed. If $A_1$ is a negative arena and $A_2$
is pointed, the \definitive{arrow} $A_1 \tto A_2$, a pointed arena, has the
same components as $A_1 \vdash A_2$ except that the events of $A_1$ are set
to depend on the initial event of $A_2$.
In the sequel, we shall also use the obvious generalization of the tensor with
arbitrary (at most countable) arity, and write $\SeqsOf{A} = \bigtensor_{n\in \N}
A$.

\paragraph{Isomorphisms.}
The following easy lemma states a few useful isomorphisms.

\begin{lemma}\label{lem:arisos}
For all arenas $\Gamma, A, B, C$
with $\Gamma$ and $A$ negative and $B$ pointed,
we have isomorphisms:
\[
\begin{array}{rcrclcl}
  \curry_{\Gamma, A, B} &:& \Gamma \tensor A \tto B
  &\iso & \Gamma \tto (A \tto B) &:& \uncurry_{\Gamma, A, B}\\
  \pack_C &:& C \tensor \SeqsOf{C} &\iso& \SeqsOf{C} &:& \unpack_C\\
  \unfold &:& \U &\iso& \SeqsOf{\U} \tto o &:& \fold\\
\end{array}
\]
  supported by the bijections:
  \[\begin{array}{rcrclcl}
    \left.\begin{array}{r}
        \tuple{1,\tuple{1,g}}\\
        \tuple{1,\tuple{2,a}}\\
        \tuple{2,b}\\
    \end{array}\right\}
    &\in& \ev{\Gamma \tensor A \tto B} & \leftrightarrow
    & \ev{\Gamma \tto (A \tto B)} &\ni&
    \left\{\begin{array}{r}
        \tuple{1,g}\\
        \tuple{2,\tuple{1,a}}\\
        \tuple{2,\tuple{2,b}}\\
    \end{array}\right.
    \\
    \\
    \left.\begin{array}{r}
        \tuple{1,c}\\
        \tuple{2,\tuple{i,c}}\\
    \end{array}\right\}
    &\in& \ev{C\tensor\SeqsOf{C}} & \leftrightarrow
    & \ev{\SeqsOf{C}} &\ni&
    \left\{\begin{array}{r}
        \tuple{0,c}\\
        \tuple{i+1,c}\\
    \end{array}\right.
    \\
    \\
    \left.\begin{array}{r}
        \emptyword\\
        i\cons \seq{n}\\
    \end{array}\right\}
    &\in& \ev{\U} & \leftrightarrow
    & \ev{\SeqsOf{\U}\tto o} &\ni&
    \left\{\begin{array}{r}
        \tuple{2,\qu}\\
        \tuple{1,\tuple{i,\seq{n}}}\\
    \end{array}\right.
  \end{array}
  .
\]
\end{lemma}

There is a category $\ArIso$ of arenas and isomorphisms between them,
with full subcategories $\ArIso_{-}$, restricted to negative arenas,
and $\ArIso_\bullet$, restricted to pointed arenas.
The tensor and arrow constructions act functorially on isomorphisms of
arenas, yielding functors
\[
\begin{array}{rcrcl}
\tensor &:& \ArIso \times \ArIso &\to& \ArIso\\
\tto &:& \ArIso_{-} \times \ArIso_\bullet &\to& \ArIso_\bullet
\end{array}
\]
obtained again in the obvious way, since the events of both $A \tensor
B$ and $A \tto B$ are obtained as a disjoint union. Note $\tto$ is
covariant in both components.

With this observation, \cref{lem:arisos} induces isomorphisms
$\U \iso \SeqsOf{\U} \tto o \iso (\U \tensor \SeqsOf{\U}) \tto o
\iso \U \tto (\SeqsOf{\U} \tto o) \iso \U \tto \U$,
yielding the following:
\begin{corollary}
We have an isomorphism of arenas:
\[
\begin{array}{rcrclcl}
\fun &:& \U &\iso& \U \tto \U &:& \unfun
\end{array}
\]
supported by the bijection
\[
  \left.\begin{array}{r}
      \emptyword\\
      0\cons\seq{n}\\
      (i+1)\cons\seq{n}\\
  \end{array}\right\}
  \in \ev{\U}  \leftrightarrow
  \ev{\U\tto\U} \ni
  \left\{\begin{array}{r}
      \tuple{2,\emptyword}\\
      \tuple{1,\seq{n}}\\
      \tuple{2,i\cons\seq{n}}\\
  \end{array}\right.
  \qquad.
\]
\end{corollary}

We shall establish a correspondence between closed normal extensional
resource terms and \emph{isogmentations} on $\U$.
But before defining those, we must first define an adequate notion of
\emph{state} on an arena. 

\subsection{Configurations and positions}

Usually, in Hyland-Ong games, after defining arenas, one would go on to
define \emph{plays}, which are certain strings of moves of the arena
(allowing replicated moves) additionally equipped with \emph{pointers},
and satisfying a few further conditions. In pointer concurrent games,
instead, our first step is to define \emph{configurations}, which
correspond to plays with pointers without the chronological ordering.

\paragraph{Configurations.} We start with the definition:

\begin{definition}\label{def:conf}
A \definitive{configuration} of the arena $A$, written $x \in \conf(A)$, is $x =
\tuple{\ev{x}, \leq_x, \display_x}$ s.t. $\tuple{\ev{x}, \leq_x}$ is a
finite forest, and the \definitive{display map} $\display_x : \ev{x} \to
\ev{A}$ is a function such that
\[
\begin{array}{rl}
\text{\emph{minimality-respecting:}} & \text{for $a \in \ev{x}$,
$a$ is $\leq_x$-minimal iff $\display_x(a)$ is $\leq_A$-minimal,}\\
\text{\emph{causality-preserving:}} & \text{for $a_1, a_2 \in \ev{x}$, if $a_1 \imc_x a_2$ then $\display_x(a_1) \imc_A \display_x(a_2)$,}
\end{array}
\]
and $x$ is \definitive{pointed} (noted $x\in\conf_\bullet(A)$) if it
has exactly one minimal event $\init(x)$.
We write $\emptyconf$ for the empty configuration.
\end{definition}

Configurations are a first step in capturing a notion of ``thick
subtree'' \cite{DBLP:conf/tlca/Boudes09} of an arena $A$: an
exploration of the arena, keeping the same causal structure, but with
the additional capacity to duplicate branches.
\begin{sidefigure}
\[
\scalebox{\ThCSscalefactor}{
\xymatrix@R=10pt@C=5pt{
&\varepsilon^-
	\ar@{.}[dl]
	\ar@{.}[dr]
	\ar@{.}[d]\\
\tuple{3}^+&\tuple{2}^+&\tuple{2}^+
	\ar@{.}[dl]
	\ar@{.}[d]
	\ar@{.}[dr]\\
&\tuple{2,0}^-&
\tuple{2,0}^-&
\tuple{2,2}^-
	\ar@{.}[d]\\
&&&\tuple{2,2,0}^+
}
}
\]
\caption{A configuration on $\U$}
\label{fig:ex_conf_U}
\end{sidefigure}
As an example, we display in \cref{fig:ex_conf_U} a configuration
of the arena $\U$ introduced above. The figure shows the tree structure
along with the display map, conveyed via the labelling of the nodes; and
showcases the ability to replay the same move.

Though the definition of a configuration does not include a polarity
function on $x$, one may be readily deduced by setting 
$\pol_x(a) = \pol_A(\display_x(a))$ for $a \in x$.  We write $a^-$
(respectively $a^+$) for $a$ such that $\pol_x(a) = -$ (respectively
$\pol(a)=+$).

Let us give a few constructions on configurations.
We start with the \definitive{tensor product}:
for any arenas $A$ and $B$, and configurations $x\in \conf(A)$ and $y \in \conf(B)$,
we define $x \tensor y \in \conf(A \tensor B)$ by
\( \ev{x \tensor y} = \ev{x} + \ev{y}\), the rest of the structure being inherited.
This construction is easily generalized to
the tensor \(\tensor{\seq{x}}= \bigtensor_{i\in I} x_i\) of a
finite family of configurations,
with \(\ev{\bigtensor_{i\in I} x_i} = \sum_{i\in I} \ev{x_i}\).
The other constructions are variations on this theme.
For $x \in \conf(A)$ and $y \in \conf(B)$,
$x \vdash y \in \conf(A \vdash B)$ is defined like $x \tensor y$.
If $B$ and $y$ are pointed, then
$x \tto y \in \conf(A\tto B)$ is defined as $x \vdash y$ where, additionally,
all events of $x$ are set to depend on the unique initial move of $y$. 

If $x\in \conf(A)$ and $y \in \conf(\SeqsOf{A})$, we define
$x\conspack y\in\conf(\SeqsOf{A})$ from
$x\tensor y\in\conf(A\tensor\SeqsOf{A})$ by applying the isomorphism of arenas
$\pack_{A}: A \tensor \SeqsOf{A} \bij \SeqsOf{A}$:
in other words, events and causality are the same, and 
$\display_{x \conspack y}=\pack_{A}\circ\display_{x\tensor y}$.
Note that $x\conspack y=\emptyconf$ iff $x$ and $y$ are both empty.
By analogy with the range of streams, we also define the \definitive{range} of
$x\in\confOf{A^\N}$ as
$
  \rangeOf{x}\eqdef \max\set{i+1\in\N\st \exists a\in\ev{x},\ \display_x(a)\in\set{i}\times\ev{A}}
$.
By construction, $\rangeOf{\emptyconf}=0$ and 
$\rangeOf{x\conspack y}=\rangeOf{y}+1$ if at least one of 
$x$ and $y$ is non-empty.

We will also consider \definitive{sums} of configurations.
Given a finite family
$\seq{x}=\tuple{x_i}_{i\in I} \in \confOf{A}^I$,
we define $\sum\seq{x}=\sum_{i\in I}x_i \in \confOf{A}$,
as for the tensor \(\tensor{\seq{x}}\), except for
the display map:
\(\display_{\sum{\seq{x}}}(j, a) = \display_{x_j}(a) \).
If moreover $\seq{x} \in \pconfOf{A}^I$, we also write $\bagpack{\seq{x}}$,
$\bagpack{x_i\st i \in I}$ or $\bagpack{x_i}_{i \in I}$
for $\sum{\seq x} \in \confOf{A}$.
Note that $\sum\emptyword=\emptyconf$ is the empty configuration.
Moreover, if $\seq{x}=\tuple{x_i}_{i\in I}\in\pconfOf{A}^I$, then
$\bagpack{x_i}_{i \in I}=\emptyconf$ iff $\seq{x}=\emptyword$.

\paragraph{Symmetry and positions.}
Configurations do not adequately capture thick subtrees by
themselves, because they carry arbitrary ``names'' for events, which
are irrelevant.  Configurations should rather be considered up to
\emph{symmetry}: a \definitive{symmetry} $\varphi : x \sym_A y$ is an
order-isomorphism such that $\display_y \circ \varphi = \display_x$.
Symmetry classes of configurations are called \definitive{positions}: the
set of positions on $A$ is written $\pos(A)$, and they are ranged over
by $\x, \y$, \emph{etc.} A position $\x$ is \definitive{pointed}, written
$\x \in \pos_\bullet(A)$, if any of its representatives is.
We also define the \definitive{range} of $\x \in \pos(\U^\N)$ as
$\rangeOf{\x}\eqdef\rangeOf{x}$ for any representative $x$ of $\x$.
It is easy to check that our constructions on configurations
preserve symmetry so that, e.g., the sum \(\sum\seq{\x}\in\posOf{A}\)
of a finite family \(\seq{\x}\in\posOf{A}^I\) is well defined.

If $x \in \conf(A)$, we write $\ic{x} \in \pos(A)$ for the corresponding
position. Reciprocally, if $\x \in \pos(A)$, we fix $\rep{\x} \in
\conf(A)$ a representative.  

\begin{lemma}\label{lem:pos_isos}
For any arenas $A$, $A'$ and $B$ with $B$ pointed,
the earlier constructions on configurations induce bijections:%
\[
\begin{array}{rcrcl}
(- \tensor -) &:& \pos(A) \times \pos(A') &\bij& \pos(A \otimes A')\\
( - \vdash -) &:& \pos(A) \times \pos(A') &\bij& \pos(A \vdash A')\\
(- \tto -) &:& \pos(A) \times \pos_\bullet(B) &\bij& \pos_\bullet(A \tto B) \\
(- \conspack -) &:& \posOf{A} \times \posOf{\SeqsOf{A}} &\bij& \posOf{\SeqsOf{A}}\\
\bagpack{-} &:& \Mf(\pos_\bullet(B)) &\bij& \pos(B)
\end{array}
\]
\end{lemma}
\begin{proof}
For the first four bijections, this is a straightforward verification
that the corresponding constructions on configurations preserve and reflect
symmetry, and are essentially surjective.

For \(\bagpack{-}\), consider $\mu \in \Mf(\pos_\bullet(B))$. Fix
$\tuple{\x_i}_{i\in I}$ an enumeration of $\mu$, and for each $i \in I$, fix
$x_i \in \x_i$ a representative. Then, set
\[
  \bagpack{\mu} \eqdef \ic{\bagpack{x_i \mid i \in I}} \in \pos(B)
  \,.
\]

First, we show that this does
not depend on the choice of the enumeration $\tuple{\x_i}_{i\in I}$
nor of the representatives $\tuple{x_i}_{i\in I}$:
any other enumeration of $\mu$ is $\tuple{\x_{\pi(j)}}_{j \in J}$ with $\pi: J \bij I$,
and then for any choice of representatives $\tuple{y_{j}}_{j \in J}$
we have symmetries $\theta_j : x_{\pi(j)} \sym_B y_j$ for $j\in J$;
then 
\[
\begin{array}{rcrcl}
  \theta &:& \bagpack{x_i}_{i \in I} &\sym_B& \bagpack{y_j}_{j \in J}\\
         && (\pi(j), a) &\mapsto& \tuple{j, \theta_j(a)}
\end{array}
\]
is a valid symmetry.
It follows that $\bagpack{-} : \Mf(\pos_\bullet(B)) \to \pos(B)$ is well-defined.

For injectivity, consider representatives
$\tuple{x_i}_{i\in I}$, $\tuple{y_j}_{j\in J}$, and some symmetry
$\theta : \bagpack{x_i \mid i \in I} \sym_B \bagpack{y_j \mid j \in J}$. As $\theta$ is an
order-isomorphism between forests, it is determined by a bijection
between minimal events, and a subsequent order-isomorphism for each
minimal event. But since the $x_i$'s and $y_j$'s are pointed, minimal
events of $\bagpack{x_i \mid i \in I}$ are in canonical bijection with $I$ --
and likewise for $\bagpack{y_j \mid j \in J}$, so that $\theta$ is determined
by some $\pi : J \bij I$ and a family $\tuple{\theta_j : x_{\pi(j)} \sym_B
y_{j}}_{j\in J}$. But this is exactly what is needed to ensure
that $\tuple{x_i}_{i\in I}$ and $\tuple{y_j}_{j\in J}$ represent the same element
of $\Mf(\pos_\bullet(B))$.

For surjectivity, consider $\x \in \pos(B)$ represented by
some $x \in \conf(B)$. As a configuration, $x$ is a
forest. If $I$ is the set of its minimal events and, for $i \in I$,
$x_i$ is the corresponding tree with root $i$, then clearly $x \sym_B
\bagpack{x_i \mid i \in I}$.
\end{proof}

Note that, as a particular case, we have
\[ 
(- \tto o) : \pos(A)\times\pposOf{o} \bij \pos_\bullet(A\tto o)
\]
and $\pposOf{o}$ is a singleton so that $\pos(A)\times\pposOf{o}\bij\pos(A)$.
Up to the isomorphism of arenas $\fold:\U^\N\tto o \iso \U$,
and up to a minor abuse of notation, we obtain
\[ 
(- \tto o) : \pos(\U^\N) \bij \pos_\bullet(\U)\;.
\]

\subsection{Positions of the universal arena} 
\label{subsec:pos_u}

We use the above to show that the pointed positions of the universal arena $\U$
are in one-to-one correspondence with the elements of the particular
relational model of the pure \(λ\)-calculus~\cite{DBLP:conf/csl/BucciarelliEM07}
that we studied in \cref{section:rel}.
We recall from our presentation in \cref{section:rel:resource}
that these elements are the value types $\dualtype{\rstrA}\in\ValueTypes$ where,
inductively, $\rstrA$ ranges over the set of stream types $\StreamTypes=\SsOf{\ValueTypes}$;
and a bag type is just a bag of values types $\bagA\in\BagTypes=\MfOf{\ValueTypes}$.
Formally, we exhibit bijections
\[
    \kappa_{\mathrm v} : \ValueTypes \bij \pposOf{\U}
    \,,
    \quad
    \kappa_{\oc}: \BagTypes \bij \posOf{\U}
    \quad
    \text{and}
    \quad
    \kappa_{\mathrm s} : \StreamTypes \bij \posOf{\U^\N}
    \,.
\]

\begin{definition}
  We define the functions $\kappa_{\mathrm v}$, $\kappa_{\oc}$
  and $\kappa_{\mathrm s}$ simultaneously by induction on type terms:%
  \begin{align*}
    \kappa_{\mathrm v}(\dualtype{\rstrA})
    &\eqdef \kappa_{\mathrm s}(\rstrA)\tto o
    &
    \kappa_{\oc}(\mset{\rtypeA_1,\dotsc,\rtypeA_k})
    &\eqdef
    \bagpack{\kappa_{\mathrm v}(\rtypeA_1),\dotsc,\kappa_{\mathrm v}(\rtypeA_k)}
    \\
    \kappa_{\mathrm s}(\emptystream)
    &\eqdef
    \ic{\emptyconf}
    &
    \kappa_{\mathrm s}(\rbagA\cons\rstrB)
    &
    \eqdef\kappa_{\oc}(\rbagA)\conspack\kappa_{\mathrm s}(\rstrB)
    \quad\text{(if $\rbagA\cons\rstrB\not=\emptystream$)}.
  \end{align*}
\end{definition}
By construction, $\kappa_{\mathrm s}(\rbagA\cons\rstrB)=\emptypos$
iff $\rbagA\cons\rstrB=\emptystream$,
so that
$\kappa_{\mathrm s}(\rbagA\cons\rstrB)
=\kappa_{\oc}(\rbagA)\conspack\kappa_{\mathrm s}(\rstrB)$
also in this case.

To show that these define bijections as stated above, we will reason on the
\emph{size} of positions: if $A$ is an arena and $\x \in \pos(A)$, then the
\definitive{size} of $\x$, written $\SizeOf{\x}$, is the number of events
in any of its representatives.
In particular, $\SizeOf{\x}=0$ iff $\x$ is the empty position.
From the definitions of bijections in \cref{lem:pos_isos}, it is simple to
obtain the following identities:
\[
  \SizeOf*{\x \tto o}                
  = \SizeOf{\x}+1
  \qquad
  \SizeOf{\bagpack{\x_i \mid i \in I}}
  = \sum_{i\in I} \SizeOf{\x_i}
  \qquad
  \SizeOf*{\x\conspack\y} 
  = \SizeOf{\x}+\SizeOf{\y}
\]
-- indeed, it suffices to observe these identities on representatives.

\begin{theorem}\label{th:bij_pos_reltypes}
  The functions
  \[
      \kappa_{\mathrm v} : \ValueTypes \to \pposOf{\U}
      ,
      \quad
      \kappa_{\oc}: \BagTypes \to \posOf{\U}
      \quad
      \text{and}
      \quad
      \kappa_{\mathrm s} : \StreamTypes \to \posOf{\U^\N}
  \]
  are bijections.
\end{theorem}
\begin{proof}
  For injectivity, we reason by induction on type terms.
  If $\kappa_{\mathrm v}(\dualtype{\rstrA})
  =\kappa_{\mathrm v}(\dualtype{\rstrB})$ then 
  $\kappa_{\mathrm s}(\rstrA)\tto o =\kappa_{\mathrm s}(\rstrB)\tto o$,
  hence $\kappa_{\mathrm s}(\rstrA) =\kappa_{\mathrm s}(\rstrB)$
  by \cref{lem:pos_isos}:
  by induction hypothesis we have $\rstrA=\rstrB$
  hence $\dualtype{\rstrA}=\dualtype{\rstrB}$.

  If $\kappa_{\oc}(\rbagA)=\kappa_{\oc}(\rbagB)$
  with $\rbagA=\mset{\rtypeA_{1},\dotsc,\rtypeA_{k}}$
  and $\rbagB=\mset{\rtypeB_{1},\dotsc,\rtypeB_{l}}$,
  we obtain
  $\bagpack{\kappa_{\mathrm v}(\rtypeA_{1}),\dotsc,\kappa_{\mathrm v}(\rtypeA_{k})}
  =\bagpack{\kappa_{\mathrm v}(\rtypeB_{1}),\dotsc,\kappa_{\mathrm v}(\rtypeB_{l})}$.
  \Cref{lem:pos_isos} ensures that $k=l$ and,
  up to reordering,
  $\kappa_{\mathrm v}(\rtypeA_{i})=\kappa_{\mathrm v}(\rtypeB_{i})$
  for $1\le i\le k$:
  by induction hypothesis, we have $\rtypeA_i=\rtypeB_i$
  for $1\le i\le k$, hence $\rbagA=\rbagB$.

  We have already observed that
  $\kappa_{\mathrm s}(\rstrA)=\emptypos$
  iff $\rstrA=\emptystream$.
  If $\x=\kappa_{\mathrm s}(\rbagA\cons\rstrB)=\kappa_{\mathrm s}(\rbagA'\cons\rstrB')$ with $\rbagA\cons\rstrB\not=\emptystream$,
  then $\x\not=\emptypos$, hence $\rbagA'\cons\rstrB'\not=\emptystream$.
  We then write $\x=\kappa_{\oc}(\rbagA)\conspack\kappa_{\mathrm v}(\rstrB)
  = \kappa_{\oc}(\rbagA')\conspack\kappa_{\mathrm v}(\rstrB')$,
  and it is again sufficient to apply \cref{lem:pos_isos} and the induction hypothesis.

  For surjectivity, we prove by induction on $d\in\N$ that,
  for each $\x\in\pposOf{\U}$ (resp.\ $\x\in\posOf{\U}$,
  $\x\in\posOf{\U^\N}$) such that $\SizeOf{\x}\le d$,
  there exists $\rtypeA\in\ValueTypes$ (resp.\ 
  $\rbagA\in\BagTypes$, $\rstrA\in\StreamTypes$)
  such that $\kappa_{\mathrm v}(\rtypeA)=\x$
  (resp.\  $\kappa_{\oc}(\rbagA)=\x$,
  $\kappa_{\mathrm s}(\rstrA)=\x$).

  If $\x\in\pposOf{\U}$ then, by \cref{lem:pos_isos}, there exists
  $\y\in\posOf{\U^\N}$ such that $\x=\y\tto o$.
  Then $\SizeOf{\y}<\SizeOf{\x}$ and by induction hypothesis 
  there is $\rstrA$ such that $\y=\kappa_{\mathrm s}(\rstrA)$:
  we obtain $\kappa_{\mathrm v}(\dualtype{\rstrA})=\y\tto o=\x$.

  If $\x\in\posOf{U}$, by \cref{lem:pos_isos}, there exists
  $\mset{\x_1,\dotsc,\x_k}\in\MfOf{\pposOf{\U}}$ such that 
  $\x=\bagpack{\x_1,\dotsc,\x_k}$.
  Then $\SizeOf{\x_i}\le\SizeOf{\x}$ and 
  the previous paragraph gives $\rtypeA_i$ such that
  $\x_i=\kappa_{\mathrm v}(\rtypeA_i)$ for $1\le i\le k$: we obtain
  $\x=\kappa_{\oc}(\mset{\rtypeA_1,\dotsc,\rtypeA_k})$.

  If $\x\in\posOf{\U^\N}$, we prove by a further induction
  on $\rangeOf{\x}$ that there exists $\rstrA\in\StreamTypes$
  such that $\kappa_{\mathrm s}(\rstrA)=\x$.
  We have $\emptypos=\kappa_{\mathrm s}(\emptystream)$ by definition,
  so we can assume $\x\not=\emptypos$.
  \Cref{lem:pos_isos} yields $\y\in\posOf{\U}$ and $\z\in\posOf{\SeqsOf{\U}}$
  such that $\x=\y\conspack\z$.
  Then $\SizeOf{\y}\le\SizeOf{\x}$, $\SizeOf{\z}\le\SizeOf{\x}$,
  and, since one of $\y$ and $\z$ must be non-empty, $\rangeOf{\z}<\rangeOf{\x}$:
  the induction hypothesis yields $\rstrB$ such that
  $\z=\kappa_{\mathrm s}(\rstrB)$.
  Moreover, the previous paragraph gives $\rbagA$ such that
  $\y=\kappa_{\oc}(\rbagA)$.
  We get $\x=\kappa_{\mathrm s}(\rbagA\cons\rstrB)$.
\end{proof}

\subsection{Augmentations and isogmentations}

\subsubsection{Definitions}

\paragraph{Augmentations.} 
From the above, we have seen that
\emph{positions} are exactly points of the relational semantics (or
intersection types), and that configurations are concrete
representations for positions. 
An \emph{augmentation} is then a configuration enriched with
causal links: we will see that this extra structure amounts
exactly to the specification a normal term of the corresponding
type.

\begin{sidefigure}
    \includegraphics[page=10,width=.5\textwidth]{figures.pdf}
\caption{An augmentation on $\U$}
\label{fig:ex_aug_1}
\end{sidefigure}
\begin{csidefigure}
    \includegraphics[page=11,width=.4\textwidth]{figures.pdf}
\caption{The same augmentation following the dynamic causality}
\label{fig:ex_aug_2}
\end{csidefigure}
\begin{definition}\label{def:augmentation}
An \definitive{augmentation} on the arena $A$ is a tuple $\q = \tuple{\ev{\q},
        \leq_{\deseq{\q}}, \leq_\q, \display_\q}$, where $\deseq{\q} =
\tuple{\ev{\q}, \leq_{\deseq{\q}},\display_\q} \in \conf(A)$, and
$\tuple{\ev{\q}, \leq_\q}$ is a forest satisfying:
\[
\begin{array}{rl}
\text{\emph{rule-abiding:}} & \text{if
        $a_1 \leq_{\deseq \q} a_2$, then $a_1 \leq_\q a_2$,}\\ 
\text{\emph{courteous:}} & \text{if $a_1 \imc_\q a_2$ with
        $\pol(a_1) = +$ or $\pol(a_2) = -$, then $a_1 \imc_{\deseq \q}
a_2$,}\\
\text{\emph{deterministic:}} & \text{if $a^- \imc_\q a^+_1$ and
        $a^- \imc_\q a^+_2$, then $a_1 = a_2$,}\\
\text{\emph{$+$-covered:}}
& \text{if $a \in \ev{\q}$ is maximal in $\q$, then
$\pol(a)=+$,}\\
\text{\emph{negative:}}
& \text{if $a \in \min(\q)$, then $\pol(a) = -$.}
\end{array}
\]

We then write $\q \in \Aug(A)$, and call
$\deseq{\q} \in \conf(A)$ the \definitive{desequentialization} of $\q$.

Finally, $\q$ is \definitive{pointed} if it has a unique minimal event. In
that case, we write $\q \in \Aug_\bullet(A)$.
\end{definition}

We sometimes refer to $\leq_{\deseq{\q}}$ as the \definitive{static}
partial order, given that it corresponds to the causal constraints
imported from the ``type''. Likewise, we often refer to $\leq_\q$ as
the \definitive{dynamic} partial order, which will reflect the structure of a term.
The \definitive{size} of an augmentation $\q$, written $\SizeOf{\q}$, is
simply the cardinality of its set of events.

Moreover, for any negative arena $\Gamma$ and
$\q\in\Aug(\Gamma\vdash \U^{\N})$,
we define the \definitive{range} of $\q$ as
$\rangeOf{\q}\eqdef \max\set{i+1\in\N\st \exists a\in\ev{\q},\ 
q\in\ev{\U},\ \display_{\q}(a)=\tuple{2,\tuple{i,q}}}$.
By \emph{negative}, we have $\rangeOf{\bq}=0$ iff $\bq=\emptyisog$.

\Cref{fig:ex_aug_1} shows a pointed augmentation, obtained by enriching
the configuration in \cref{fig:ex_conf_U} with an adequate dynamic
causal order, presented here with triangle arrows $\imc$ -- following our drawing
convention, we display the (Hasse diagram of the) static causality
with dotted lines, and the (Hasse diagram of the) dynamic causality as
$\imc$. \Cref{fig:ex_aug_2} shows the same augmentation, but
rearranged so as to emphasize the tree structure of $\imc$ -- it is
this tree structure that we will show exactly corresponds to the tree
structure of an extensional resource term; here
\begin{eqnarray}
\labs{\svarA}
     {\appl{\svarA[2]}{\mset{\labs{\svarB}{\appl{\svarA[3]}{\emptystream}},\labs{\svarB}{\appl{\svarA[2]}{\emptystream}}}\cons\mset{}\cons\mset{\labs{\svarB}{\appl{\svarB[0]}{\emptystream}}}}\cons\emptystream}\,,\label{eq:ex_term}
\end{eqnarray}
and the reader may already match the tree structure of this term with
\cref{fig:ex_aug_2}. However, as for configurations, augmentations
are too rigid and carry explicit names for all events which should be
quotiented out before we establish the link with extensional resource
terms. This is done next.

\paragraph{Isogmentations.}
An \definitive{isomorphism} of augmentations $\varphi : \q \iso \p$ is a bijection
\(\ev\q\iso\ev\p\), preserving and reflecting all structure.
An \definitive{isogmentation} is an isomorphism class of augmentations,
ranged over by $\bq, \bp$, \emph{etc.}:
we write $\Isog(A)$ (resp.\ $\Isog_\bullet(A)$) for
isogmentations (resp.\ \emph{pointed} isogmentations).
If $\q \in \Aug(A)$, we write $\ic{\q} \in \Isog(A)$
for its isomorphism class; reciprocally, if $\bq \in \Isog(A)$, we
fix a representative $\rep{\bq} \in \bq$.

Clearly, the size of an augmentation is invariant under isomorphism:
the \definitive{size} of an isogmentation $\bq$, written
$\SizeOf{\bq}$, is the size of any representative.
Similarly, if $\bq\in\Aug(\Gamma\vdash \U^{\N})$,
we set the \definitive{range} of $\bq$ to be 
the range of any of its representatives.

\subsubsection{Constructions on Isogmentations}

Next, we provide a few constructions on augmentations and
isogmentations.

\paragraph{Tupling.}

Consider negative arenas $\Gamma, A_1, \dots, A_k$,
and $\q_i \in \Aug(\Gamma \vdash A_i)$ for $1\le i\le k$.
We set $\q=\tuplepack{\q_i \mid 1 \leq i \leq k} \in
\Aug(\Gamma \vdash \tensor_{1\leq i \leq k} A_i)$ with
\[
\ev{\q} = \sum_{i=1}^k \ev{\q_i}\,,
\qquad
\left\{
\begin{array}{rclcl}
  \display_{\q}(i, m) &=& \tuple{1, g}
  &\quad& \text{if $\display_{\q_i}(m) = \tuple{1, g}$,}\\
  \display_{\q}(i, m) &=& \tuple{2, \tuple{i, a}}
  && \text{if $\display_{\q_i}(m) = \tuple{2, a}$,}
\end{array}
\right.
\]
with the two orders $\leq_{\q}$ and
$\leq_{\deseq{\q}}$ inherited.

\begin{proposition}\label{prop:def_seq}
For any negative arenas $\Gamma, A_1, \dots, A_k$, this yields
a bijection
\[
\tuplepack{-, \dots, -} : \prod_{i=1}^k \Isog(\Gamma \vdash A_i)
\bij
\Isog(\Gamma \vdash \tensor_{1 \leq i \leq k} A_i)\,.
\]
\end{proposition}
\begin{proof}
First, it is direct that this construction preserves symmetry so that
it extends to isogmentations. We show that it is a bijection.

\emph{Injective.} As an iso must preserve $\imc$
and display maps, any $\varphi : \tuplepack{\q_i \mid i \in I} \iso
\tuplepack{\p_i \mid i \in I}$ decomposes uniquely into a sequence of
$\varphi_i : \q_i \iso \p_i$, as required.

\emph{Surjective.}
Consider $\q \in \Aug(\Gamma \vdash \tensor_{1\leq i \leq k} A_i)$.
Since $\q$ is a forest, any $m \in \ev{\q}$ has a unique minimal antecedent,
called $\init(m)$, sent by the display map (via \emph{negative}) 
to one of the $A_i$s --- we say that $m$ is \emph{above $A_i$}. This
allows us to assign every move of $\q$ to exactly one of the $A_i$s, so
that moves connected by (the transitive symmetric closure of) $\leq_\q$
are in the same component. This determines a partition of $\q$ into a
family of $\q_i \in \Aug(\Gamma \vdash A_i)$, one for each component --
and it is then a straightforward verification that 
$\q \iso \tuplepack{\q_i\mid 1 \leq i \leq k}$.
\end{proof}

In particular, we obtain a bijection
\begin{align*}
  \Isog(\Gamma \vdash \U)\times \Isog(\Gamma \vdash \SeqsOf{\U})
  &\bij
  \Isog(\Gamma \vdash \SeqsOf{\U})
  \\
  \tuple{\bp,\bq}
  &\mapsto
  \bp\conspack\bq
\end{align*}
by setting $ \bp\conspack\bq
  \eqdef 
  \pack_\Gamma(\tuplepack{\bp,\bq})
$
where \[
  \pack_\Gamma:
  \Isog(\Gamma \vdash \U \tensor \SeqsOf{\U})
  \bij \Isog(\Gamma \vdash \SeqsOf{\U})
\]
is induced by the isomorphism of arenas
$\pack_{\U}: \U \tensor \SeqsOf{\U}\bij \SeqsOf{\U}$.

\paragraph{Bags.}

Consider negative arenas $\Gamma$ and $A$, and $\q_1, \q_2 \in \Aug(\Gamma
\vdash A)$. We set $\q_1 * \q_2 \in \Aug(\Gamma \vdash A)$ with events
$\ev{\q_1 * \q_2} = \ev{\q_1} + \ev{\q_2}$,
and display
$\display_{\q_1 * \q_2}(i, m) = \display_{\q_i}(m)$,
and the two orders $\leq_{\q_1 * \q_2}$ and $\leq_{\deseq{\q_1 *
\q_2}}$ inherited. This generalizes to an $n$-ary operation
$\Pi_{i \in I} \q_i$ in the obvious way, which preserves
isomorphisms. The operation induced on isogmentations is associative
and admits as neutral element the empty isogmentation $\emptyisog \in
\Isog(\Gamma \vdash A)$ (with a unique representative $\emptyaug$) with no event. 
If $\bar\bq=\mset{\bq_1,\dotsc,\bq_k}\in\MfOf{\pIsogOf{\Gamma\vdash A}}$,
then we write $\bagpack{\bq_1,\dotsc,\bq_k}\eqdef\Pi\bar\bq$.

\begin{proposition}\label{prop:def_bag}
For any negative arenas $\Gamma, A$, this yields a bijection
\begin{align*}
\Mf(\Isog_\bullet(\Gamma \vdash A))
&\bij
\Isog(\Gamma \vdash A)
\\
\mset{\bq_1,\dotsc,\bq_k}
&\mapsto
\bagpack{\bq_1,\dotsc,\bq_k}\,.
\end{align*}
\end{proposition}
\begin{proof}
\emph{Injective.} Assuming the $\q_i$'s and $\p_i$'s are pointed, an iso
$\varphi : \Pi_{1\leq i \leq k} \q_i \iso \Pi_{1\leq j \leq l}
\p_j$ forces $k = l$ and induces a permutation $\pi$ on $k$ and a
family of isos $(\varphi_i : \q_i \bij \p_{\pi(i)})_{1\leq i \leq k}$,
which implies $[\ic{\q_i} \mid 1\leq i \leq k] = [\ic{\p_i} \mid 1 \leq
i \leq k]$ as bags.

\emph{Surjective.} As any $\q \in \Aug(\Gamma \vdash A)$ is finite, it
has a finite set $I$ of initial moves. As $\q$ is a forest, any
$m \in \ev{\q}$ is above exactly one initial move. For $i \in I$, write
$\q_i \in \Aug_\bullet(\Gamma \vdash A)$ the restriction of $\q$ above
$i$; then $\q \iso \Pi_{i\in I} \q_i$.
\end{proof}

\paragraph{Currying.}
For negative arenas $\Gamma$, $A$ and $B$, with $B$ pointed, we have 
\[
\Lambda_{\Gamma, A, B} : \Aug(\Gamma \tensor A \vdash B) \bij \Aug(\Gamma \vdash A \tto B)
\]
a bijection compatible with isos, which leaves the core of the
augmentation unchanged and only reassigns the display map, following
the associativity up to isomorphism of the tagged disjoint union $(X +
Y) + Z \bij X + (Y + Z)$. 
Additionally, this bijection preserves symmetry, and
hence extends to isogmentations. In linking resource terms and
isogmentation, we shall only use the instance:
\[
  \Lambda_{V,\svarA} : \Isog_\bullet((\U^{\N})^{V,\svarA} \vdash o)
  \bij
  \Isog_\bullet((\U^{\N})^V\vdash \U^\N \tto o)
\]
where $V$ is a finite set of sequence variables
and $\svarA\not\in V$  is a sequence variable.

\paragraph{Head occurrence.} We are missing one last bijection,
corresponding to normal base terms.  We start with the construction on
augmentations. Fix for now a negative arena
$\Gamma = (\U^\N)^V$
for $V$ a finite set of sequence variables. Consider also 
\(
\q \in \Aug(\Gamma \vdash \U^\N)
\)
an augmentation,
a sequence variable $\svarA\in V$, and a rank $i \in \N$. With this, we form
\[
\lift_{\svarA,i}(\q) \in \Aug_\bullet(\Gamma \vdash o)
\]
the \definitive{$(\svarA, i)$-lifting} of $\q$: it calls the value variable
$\svarA[i]$, giving it $\q$ as a stream of arguments, mimicking a base term
$\svarA[i]\,\q$. Concretely, $\lift_{\svarA,i}(\vec \q)$ starts by
playing the initial move in the component $\tuple{\svarA, i}$ of $\Gamma$,
then proceeds as $\q$. More precisely:

\begin{definition}\label{def:lifting}
Consider $\Gamma = (\U^\N)^V$ for $V$ a finite set
of sequence variables, $\q \in \Aug(\Gamma \vdash \U^\N)$ an
augmentation, $\svarA \in V$, and $i \in \N$. 
The \definitive{$(\svarA, i)$-lifting} of $\q$, written
$\lift_{\svarA,i}(\q) \in \Aug_\bullet(\Gamma \vdash o)$, has dynamic
partial order $\leq_{\q}$ prefixed with two additional moves, \emph{i.e.}\ 
$\ominus \imc \oplus \imc \q$. Its static causality is the least
partial order comprising the dependencies of the form
\[
\begin{array}{rclcl}
m &\leq_{\deseq{\lift_{\svarA,i}(\q)}}& n &\qquad& 
\text{for $m, n \in \ev{\q}$ with $m \leq_{\deseq{\q}} n$,}\\
\oplus &\leq_{\deseq{\lift_{\svarA,i}(\q)}}& m&&
\text{for all $m \in \ev{\q}$ with $\partial_{\q}(m) = \tuple{2, -}$,}
\end{array}
\]
and with display map given by the following clauses:
\[
\begin{array}{rclcl}
  \partial_{\lift_{\svarA,i}(\q)}(\ominus) &=& \tuple{2, q}\\
  \partial_{\lift_{\svarA,i}(\q)}(\oplus) &=& \tuple{1, \tuple{\svarA, \tuple{i, \emptyword}}}\\
  \partial_{\lift_{\svarA,i}(\q)}(m) &=& \tuple{1, a}
  &\qquad&
  \text{if $\partial_{\q}(m) = \tuple{1, a}$,}\\
  \partial_{\lift_{\svarA,i}(\q)}(m) &=& \tuple{1, \tuple{\svarA, \tuple{i, k::l}}}
  &&\text{if $\partial_\q(m) = \tuple{2, \tuple{k,l}}$,}
\end{array}
\]
altogether defining $\lift_{\svarA,i}(\q) \in \Aug_\bullet(\Gamma \vdash o)$ as
required.
\end{definition}

It is a routine exercise to check that this yields a well-defined augmentation
and that it preserves isomorphisms, hence extends to isogmentations, to give:

\begin{proposition}\label{prop:def_base} 
Let $\Gamma = (\U^\N)^V$ for $V$ a finite set
of sequence variables. 
We obtain a bijection
\begin{align*}
\lift : V \times \N \times \Isog(\Gamma \vdash \U^\N)
&\bij
\Isog_\bullet(\Gamma \vdash o)
\\
\tuple{\svarA,i,\bq}
&\mapsto
\lift_{\svarA,i}(\bq)\,.
\end{align*}
\end{proposition}
\begin{proof}
\emph{Injective.} Assume $\varphi : \lift_{\svarA, i} \q \iso
\lift_{\svarB, j} \p$ an iso. From the commutation with the display
map, the first Player moves must be in the same component, hence
$\svarA = \svarB$ and $i=j$. Removing the first two events, $\varphi$
restricts to $\psi : \q \iso \p$.

\emph{Surjective.}
Any $\q \in \Aug_\bullet(\Gamma \vdash o)$ has a unique initial move
(write it $\ominus$), which cannot be maximal by \emph{$+$-covered}. By
\emph{determinism}, there is a unique subsequent Player move (write it
$\oplus)$, displayed to the initial move of some occurrence of $\U$ in
$\Gamma = (\U^\N)^V$, determining some $\svarA \in V$
and $i \in \N$. We obtain $\p \in \Aug(\Gamma \vdash \U^\N)$ by
removing $\ominus$ and $\oplus$, and redisplaying on the right the
events statically depending on $\oplus$, reversing the reassignment of
\cref{def:lifting}.
We obtain $\p\in\Aug(\Gamma\vdash\U^\N)$ and
$\q\iso\lift_{\svarA,i}\p$ by a routine inspection of the
definitions.
\end{proof}

\subsection{Isogmentations as normal resource terms}

We spell out the bijection between normal resource terms and
isogmentations. If $\Gamma$ is a finite set of sequence variables,
write $\NormalValueTerms(\Gamma)$
(\emph{resp.}\ $\NormalBaseTerms(\Gamma), \NormalBagTerms(\Gamma),
\NormalStreamTerms(\Gamma)$) for the \emph{normal} value
terms $\vtermA \in \ValueTerms$ s.t. $\SVarOf{\vtermA} \subseteq
\Gamma$ (\emph{resp.}\ likewise for base, bag, and stream terms). A
finite set of sequence variables is interpreted as an arena:
\[
\intr{\Gamma} \eqdef (\U^\N)^\Gamma\,.
\]

We shall in fact establish \emph{four} mutually inductive bijections
for value, base, bag and stream normal extensional resource terms,
presenting them as isogmentations on appropriate arenas.

\begin{definition}
  We define 
  \begin{align*}
    \sintr{-}^{\Gamma}_{\mathrm b} &: \NormalBaseTerms(\Gamma)   \to \Isog_\bullet(\intr{\Gamma} \vdash o)\\
    \sintr{-}^{\Gamma}_{\mathrm v} &: \NormalValueTerms(\Gamma)  \to \Isog_\bullet(\intr{\Gamma} \vdash \U)\\
    \sintr{-}^{\Gamma}_{\oc}       &: \NormalBagTerms(\Gamma)    \to \Isog(\intr{\Gamma} \vdash \U)\\
    \sintr{-}^{\Gamma}_{\mathrm s} &: \NormalStreamTerms(\Gamma) \to \Isog(\intr{\Gamma} \vdash \U^\N)
  \end{align*}
  by mutual induction as follows:
  \begin{align*}
    \sintr{\appl{\svarA[i]}{\strB}}^{\Gamma}_{\mathrm b}
    &\eqdef \lift_{\svarA,i}\pars[\big]{\sintr{\strB}^{\Gamma}_{\mathrm s}}\\
    \displaybreak[3]
    \sintr{\labs{\svarA}{\btermA}}^{\Gamma}_{\mathrm v}
    &\eqdef \Lambda_{\intr{\Gamma},\svarA}\big(\sintr{\btermA}^{\Gamma,\svarA}_{\mathrm b}\big)\\
    \displaybreak[3]
    \sintr{\mset{\vtermA_1,\dotsc,\vtermA_k}}^{\Gamma}_{\oc}
    &\eqdef \bagpack{\sintr{\vtermA_1}^{\Gamma}_{\mathrm v},\dotsc,\sintr{\vtermA_k}^{\Gamma}_{\mathrm v}}[\big] \\
    \displaybreak[3]
    \sintr{\emptystream}^{\Gamma}_{\mathrm s}
    &\eqdef \emptyisog\\
    \displaybreak[3]
    \sintr{\bagA\cons\strB}^{\Gamma}_{\mathrm s}
    &\eqdef \sintr{\bagA}^{\Gamma}_{\mathrm b}\conspack\sintr{\strB}^{\Gamma}_{\mathrm s}
    \quad\text{if $\bagA\cons\strB\not=\emptystream$.}
  \end{align*}
\end{definition}
Note that $\sintr{\strA}^{\Gamma}_{\mathrm s}=\emptystream$
iff $\strA=\emptystream$, and that the identity
$\sintr{\bagA\cons\strB}^{\Gamma}_{\mathrm s}
=\sintr{\bagA}^{\Gamma}_{\mathrm b}\conspack
\sintr{\strB}^{\Gamma}_{\mathrm s}$
always holds.

Unlike in the typed setting \cite{BPCVA25},
the proof that these define bijections cannot be done by induction on
types. 
As we did in the proof of \cref{th:bij_pos_reltypes}, we shall reason by
induction on terms for injectivity, and by induction on the size 
and range of isogmentations for surjectivity.
Note that $\SizeOf{\bq}=0$ iff $\bq$ is empty.
By a straightforward inspection of the definitions, we moreover obtain:
\begin{align*}
  \SizeOf{\lift_{\svarA,i}(\bp)} &= \SizeOf{\bp}+2&
  \SizeOf{\Lambda_{\Gamma,\svarA}(\bp)} &= \SizeOf{\bp}
  \\
  \SizeOf{\bagpack{\bq_i \mid i \in I}} &= \sum_{i\in I} \SizeOf{\bq_i}&
  \SizeOf*{\bq\conspack\bp} &= \SizeOf{\bq}+\SizeOf{\bp}
\end{align*}
and $\rangeOf{\bp\conspack \bq}=\rangeOf{\bq}+1$ if at least one of 
$\bp$ and $\bq$ is non-empty.

\begin{theorem}\label{thm:isog_terms}
Consider $\Gamma$ any finite set of sequence variables. 
Then
$\sintr{-}^{\Gamma}_{\mathrm b}$,
$\sintr{-}^{\Gamma}_{\mathrm v}$,
$\sintr{-}^{\Gamma}_{\oc}$, and
$\sintr{-}^{\Gamma}_{\mathrm s}$
are bijections.
\end{theorem}
\begin{proof}
  For injectivity, we reason by induction on terms.

  If $\sintr{\appl{\svarA[i]}{\strB}}^{\Gamma}_{\mathrm b}
  = \sintr{\appl{\svarB[j]}{\strC}}^{\Gamma}_{\mathrm b}$
  then $\svarA=\svarB$, $i=j$ and 
  $\sintr{\strB}^{\Gamma}_{\mathrm s}
  =\sintr{\strC}^{\Gamma}_{\mathrm s}$,
  by the injectivity of $\lift$.
  We moreover obtain $\strB=\strC$ by induction hypothesis.

  If $\sintr{\labs{\svarA}{\btermA}}^{\Gamma}_{\mathrm v}
  = \sintr{\labs{\svarB}{\btermB}}^{\Gamma}_{\mathrm v}$,
  we can always assume $\svarA=\svarB$ by $α$-equivalence,
  hence $\sintr{\btermA}^{\Gamma,\svarA}_{\mathrm b}
  = \sintr{\btermB}^{\Gamma,\svarA}_{\mathrm b}$
  by the injectivity of $\Lambda_{\Gamma,\svarA}$.
  Then we obtain $\btermA=\btermB$ by induction hypothesis.

  If $\sintr{\mset{\vtermA_1,\dotsc,\vtermA_k}}^{\Gamma}_{\oc}=
  \sintr{\mset{\vtermB_1,\dotsc,\vtermB_l}}^{\Gamma}_{\oc}$,
  then by the injectivity of $\bagpack{-}$, we have
  $\mset{\sintr{\vtermA_1}^{\Gamma}_{\oc},\dotsc,\sintr{\vtermA_k}^{\Gamma}_{\oc}}
  =\mset{\sintr{\vtermB_1}^{\Gamma}_{\oc},\dotsc,\sintr{\vtermB_l}^{\Gamma}_{\oc}}$,
  hence $k=l$ and, up to reordering the $\vtermB_i$'s,
  $\sintr{\vtermA_i}^{\Gamma}_{\oc}=\sintr{\vtermB_i}^{\Gamma}_{\oc}$
  for $1\le i\le k$.
  The induction hypothesis yields $\vtermA_i=\vtermB_i$
  for $1\le i\le k$.

  We have already observed that
  $\sintr{\strA}^{\Gamma}_{\mathrm s}=\emptystream$
  iff $\strA=\emptystream$.
  If $\bq=\sintr{\bagA\cons\strB}^{\Gamma}_{\mathrm s}=
  \sintr{\bagA'\cons\strB'}^{\Gamma}_{\mathrm s}$
  with $\bagA\cons\strB\not=\emptystream$,
  then $\bq\not=\emptyisog$, hence
  $\bagA'\cons\strB'\not=\emptystream$.
  The injectivity of $\conspack$ then yields
  $\sintr{\bagA}^{\Gamma}_{\oc}=\sintr{\bagA'}^{\Gamma}_{\oc}$
  and $\sintr{\strB}^{\Gamma}_{\mathrm s}
  =\sintr{\strB'}^{\Gamma}_{\mathrm s}$.

  For surjectivity, we prove by induction on $d\in\N$ that,
  for each $\bq\in\Isog_\bullet(\intr{\Gamma} \vdash o)$
  (resp.\ $\bq\in\Isog_\bullet(\intr{\Gamma} \vdash \U)$,
  $\bq\in\Isog(\intr{\Gamma} \vdash \U)$, or
  $\bq\in\Isog(\intr{\Gamma} \vdash \U^\N)$)
  such that $\SizeOf{\bq}\le d$,
  there exists a term 
  $\btermA\in \NormalBaseTerms(\Gamma)$
  (resp.\ $\vtermA\in \NormalValueTerms(\Gamma)$,
  $\bagA\in\NormalBagTerms(\Gamma)$, or
  $\strA\in\NormalStreamTerms(\Gamma)$)
  such that
  $\sintr{\btermA}^{\Gamma}_{\mathrm b}=\bq$
  (resp.\ $\sintr{\vtermA}^{\Gamma}_{\mathrm v}=\bq$,
  $\sintr{\bagA}^{\Gamma}_{\oc}=\bq$, or
  $\sintr{\strA}^{\Gamma}_{\mathrm s}=\bq$).

  For each $\bq\in\Isog_\bullet(\intr{\Gamma} \vdash o)$,
  the surjectivity of $\lift$ yields $\svarA\in\Gamma$,
  $i\in\N$ and $\bp\in\Isog(\intr{\Gamma} \vdash \U^{\N})$
  such that $\bq=\lift_{\svarA,i}(\bp)$.
  In particular, $\SizeOf{\bp}<\SizeOf{\bq}$,
  and the induction hypothesis gives $\strA$ with 
  $\sintr{\strA}^{\Gamma}_{\mathrm s}=\bp$.
  We obtain $\bq=\sintr{\appl{\svarA[i]}{\strA}}^{\Gamma}_{\mathrm b}$.

  For each $\bq\in\Isog_\bullet(\intr{\Gamma} \vdash \U)$
  and sequence variable $\svarA\not\in\Gamma$,
  the surjectivity of $\Lambda_{\Gamma,\svarA}$
  yields $\bp\in\Isog_\bullet(\intr{\Gamma,\svarA} \vdash o)$
  such that $\bq=\Lambda_{\Gamma,\svarA}(\bp)$.
  In particular, $\SizeOf{\bp}\le\SizeOf{\bq}$,
  and the previous paragraph gives $\btermA$ with 
  $\sintr{\btermA}^{\Gamma,\svarA}_{\mathrm b}=\bp$.
  We obtain $\bq=\sintr{\labs{\svarA}{\btermA}}^{\Gamma}_{\mathrm v}$.

  For each $\bq\in\Isog(\intr{\Gamma} \vdash \U)$,
  the surjectivity of $\bagpack{-}$
  yields $\bp_1,\dotsc,\bp_k\in\Isog_\bullet(\intr{\Gamma} \vdash \U)$
  such that $\bq=\bagpack{\bp_1,\dotsc,\bp_k}$.
  In particular, $\SizeOf{\bp_i}\le\SizeOf{\bq}$,
  and the previous paragraph gives $\vtermA_i$ with 
  $\sintr{\vtermA_i}^{\Gamma}_{\mathrm v}=\bp_i$.
  We obtain $\bq=\sintr{\mset{\vtermA_1,\dotsc,\vtermA_k}}^{\Gamma}_{\oc}$.

  Finally, if $\bq\in\Isog(\intr{\Gamma} \vdash \U^{\N})$
  we prove by a further induction on $\rangeOf{\bq}$
  that there exists $\strA\in\NormalStreamTerms(\Gamma)$
  such that $\sintr{\strA}^{\Gamma}_{\mathrm s}=\bq$.
  Since $\emptyisog=\sintr{\emptystream}^{\Gamma}_{\mathrm s}$
  we can assume that $\bq\not=\emptyisog$.
  By the surjectivity of $\conspack$, we can write
  $\bq=\bq_1\conspack\bq_2$ with 
  $\bq_1\in\Isog(\intr{\Gamma} \vdash \U)$ and
  $\bq_2\in\Isog(\intr{\Gamma} \vdash \U^{\N})$.
  Then $\SizeOf{\bq_1}\le\SizeOf{\bq}$,
  $\SizeOf{\bq_2}\le\SizeOf{\bq}$, and 
  $\rangeOf{\bq_2}<\rangeOf{\bq}$ since $\bq$ is not empty.
  The previous paragraph gives
  $\bagA_1\in\NormalBagTerms(\Gamma)$
  such that $\sintr{\bagA_1}^{\Gamma}_{\oc}=\bq_1$, and
  the induction hypothesis gives
  $\strA_2\in\NormalStreamTerms(\Gamma)$ such that 
  $\sintr{\strA_2}^{\Gamma}_{\mathrm s}=\bq_2$.
  We obtain $\bq=\sintr{\bagA_1\cons\strA_2}^{\Gamma}_{\mathrm s}$.
\end{proof}

As claimed, normal extensional resource terms are a
syntax for isogmentations on the universal arena.
The reader may apply the implicit algorithm on the augmentation of
\cref{fig:ex_aug_2}, and observe that we indeed get the term
\eqref{eq:ex_term}.

\subsection{Compatibility with the bijection on positions}

Each isogmentation yields (forgetting the dynamic causal links) a
position. In \cref{subsec:pos_u}, we established a bijection between
positions of the universal arena and the relational types of
\cref{section:rel}. In \cref{section:rel:resource}, we proved that each
normal resource term admits \emph{exactly one} typing derivation.
In this final technical section, we tie everything together by proving that
these correspondences are compatible -- that is, that the following diagram
commutes:
\[
  \begin{tikzcd}[column sep=large] 
    \text{Resource terms}
    \arrow[r,"\intr{-}"]
    \arrow[d,"\text{typing}"']&
    \text{Isogmentations}
    \arrow[d,"\text{desequentialization}"]\\
    \text{Type terms}
    \arrow[r,"\intr{-}"']&
    \text{Positions}
  \end{tikzcd}
\]

To express this more formally, we need to introduce a few additional
notations.

\paragraph{From normal terms to types.}
In \cref{lemma:typing:unicity}, we introduced the notations
$\ctxOf{\termA}$ and $\typeOf{\termA}$ respectively for the unique
context and type term such that
$\ctxOf{\termA}\rjug\termA:\typeOf{\termA}$ is derivable.
Given a finite set \(\Gamma\) of sequence variables, 
we moreover write $\NormalTerms(\Gamma) = \NormalValueTerms(\Gamma) \uplus
\NormalBaseTerms(\Gamma) \uplus \NormalBagTerms(\Gamma) \uplus
\NormalStreamTerms(\Gamma)$ for the set of normal resource terms with
free sequence variables in $\Gamma$,
and we define a function
\(\ctxCtx{\Gamma}:\NormalTerms(\Gamma) \to \StreamTypes^\Gamma\),
using the notations of \cref{section:rel}:
\(\ctxCtxOf{\Gamma}{\termA}(\svarA)\eqdef\tuple{\ctxOf{\termA}\pars{\svarA[i]}}[\big]_{i\in\N}\).

\paragraph{From isogmentations to positions.} Recall that any
augmentation $\q \in \Aug(A)$ is based on a configuration $\deseq{\q}
\in \conf(A)$, its \emph{desequentialization}.
In turn, this yields a position by taking the symmetry class.
Composing with the constructions (tensor and \(\vdash\)) of \cref{lem:pos_isos}, we get functions:
\[
\begin{array}{rrcl}
  \deseq{-}_{\mathrm b}^{\Gamma}:&
\Isog_\bullet(\intr{\Gamma} \vdash o)
	&\to& \pos(\U^\N)^\Gamma\\
  \deseq{-}_{\mathrm v}^{\Gamma}:&
\Isog_\bullet(\intr{\Gamma} \vdash \U)
	&\to& \pos(\U^\N)^\Gamma \times \pos_\bullet(\U) \\
  \deseq{-}_{\oc}^{\Gamma}:&
\Isog(\intr{\Gamma} \vdash \U)
	&\to& \pos(\U^\N)^\Gamma \times \pos(\U)\\
  \deseq{-}_{\mathrm s}^{\Gamma}:&
\Isog(\intr{\Gamma} \vdash \U^\N)
	&\to& \pos(\U^\N)^\Gamma \times \pos(\U^\N)\,.
\end{array}
\]

\paragraph{Compatibility.} We are now equipped to state and prove:

\begin{theorem}
The correspondences of \cref{th:bij_pos_reltypes} and
\cref{thm:isog_terms} are compatible.
More formally, the diagrams of \cref{fig:correspondences} commute.
\end{theorem}
\begin{proof}
For any arena \(A\), \(x\in\confOf{A}\) and \(i\in\N\),
we write \(i\cdot x\in\confOf{A^\N}\) for the configuration with the same
events and order, and such that \(\display_{i\cdot x}(a)=\tuple{i,\display_x(a)}\) for each \(a\in\ev{x}\).
This operation obviously preserves symmetry, thus extends to positions.
Moreover, for any \(\svarA\in\Gamma\) and \(\x\in\posOf{A}\),
we write \((\svarA\mapsto\x)\in\posOf{A}^\Gamma\) for the 
\(\Gamma\)-indexed family sending \(\svarA\) to \(\x\)
and each \(\svarB\not=\svarA\) to \(\emptypos\).

A careful analysis of the action on configurations of the constructions
on isogmentations involved in the bijection of
\cref{thm:isog_terms} then yields the following equalities:
\[
\begin{array}{rclcl}
\deseq{\lift_{\svarA,i}(\bq)}_{\mathrm b}^{\Gamma}
&=&\gamma+ \pars{\svarA\mapsto i\cdot (\y \tto o)}
&~&
\text{%
  for $\bq \in \Isog(\sintr{\Gamma} \vdash \U^\N)$,
  with $\deseq{\bq} = \tuple{\gamma, \y}$;
}
\\[5pt]
\deseq{\Lambda_{\sintr{\Gamma}, \svarA}(\bq)}_{\mathrm v}^{\Gamma}
&=& \tuple{\gamma' \setminus \svarA, \gamma'(\svarA) \tto o}[\big]
&&
\text{%
  for $\bq \in \Isog_\bullet(\sintr{\Gamma, \svarA} \vdash o)$, 
  with $\deseq{\bq} = \gamma'$;
}
\\[5pt]
\deseq{\bagpack{\bq_1, \ldots, \bq_k}}_{\oc}^{\Gamma}
&=& \tuple{\sum \gamma_i, \bagpack{\x_i \mid 1 \leq i \leq k}}
&&
\text{%
  for $\bq_i \in \Isog_\bullet(\sintr{\Gamma} \vdash \U)$,
  with $\deseq{\bq_i} = \tuple{\gamma_i, \x_i}$;
}
\\[5pt]
\deseq{\bq \conspack \bp}_{\mathrm s}^{\Gamma}
&=& \tuple{\gamma_1 + \gamma_2, \x \conspack \y}
&& 
\text{%
  for $\bq \in \Isog(\sintr{\Gamma} \vdash \U)$
  and $\bp \in \Isog(\sintr{\Gamma} \vdash \U^\N)$,
}
\\&&&&
\text{%
  with $\deseq{\bq} = \tuple{\gamma_1, \x}$ and $\deseq{\bp} = \tuple{\gamma_2, \y}$;
}
\end{array}
\]
and $\deseq{\iota}_{\mathrm s}^{\Gamma}=\tuple{\tuple{\emptypos}_{\svarB\in\Gamma},\emptypos}$.
Here the sum of \(\Gamma\)-indexed families of positions is defined point-wise,
$\x\in\posOf{\U}$, $\y \in \posOf{\U^\N}$,
$\x_i\in\pposOf{\U}$ for $1\le i\le k$,
$\gamma, \gamma_1, \gamma_2 \in \posOf{\U^\N}^\Gamma$,
$\gamma' \in \posOf{\U^\N}^{\Gamma, \svarA}$, 
and \(\gamma'\setminus\svarA\) denotes the restriction of \(\gamma'\) to \(\Gamma\).

We then show that the diagrams of \cref{fig:correspondences} commute, by
mutual induction on type terms, observing that the above equations match the
rules of the intersection type system in \cref{section:rel:resource}.
\end{proof}

\begin{figure}
\[\begin{array}{cc}
    \begin{tikzcd}[column sep=large] 
        \NormalBaseTerms(\Gamma) 
        \arrow[d,"\ctxCtx{\Gamma}"']
        \arrow[r,"\sintr{-}^{\Gamma}_{\mathrm b}"]
        &
        \Isog_\bullet(\sintr{\Gamma} \vdash o) 
        \arrow[d,"\deseq{-}^{\Gamma}_{\mathrm b}"]
        \\
        \StreamTypes^\Gamma
        \ar[r,"\kappa_{\mathrm s}^\Gamma"']
        &
        \posOf{\U^\N}^\Gamma 
    \end{tikzcd}
    &
    \begin{tikzcd}[column sep=large] 
        \NormalValueTerms(\Gamma)
        \arrow[d,"\tuple{\ctxCtx{\Gamma},\type}"']
        \arrow[r,"\sintr{-}^{\Gamma}_{\mathrm v}"]
        &
        \Isog_\bullet(\sintr{\Gamma} \vdash \U)
        \arrow[d,"\deseq{-}^{\Gamma}_{\mathrm v}"]
        \\
        \StreamTypes^\Gamma \times \ValueTypes
        \ar[r,"\kappa_{\mathrm s}^\Gamma \times \kappa_{\mathrm v}"']
        &
        \posOf{\U^\N}^\Gamma \times \pos_\bullet(\U)
    \end{tikzcd}
    \\
    \begin{tikzcd}[column sep=large] 
        \NormalBagTerms(\Gamma)
        \arrow[d,"\tuple{\ctxCtx{\Gamma},\type}"']
        \arrow[r,"\sintr{-}^{\Gamma}_{\oc}"]
        &
        \Isog(\sintr{\Gamma} \vdash \U)
        \arrow[d,"\deseq{-}^{\Gamma}_{\oc}"]
        \\
        \StreamTypes^\Gamma \times \BagTypes
        \arrow[r,"\kappa_{\mathrm s}^\Gamma \times \kappa_{!}"']
        &
        \posOf{\U^\N}^\Gamma \times \pos(\U)
    \end{tikzcd}
    &
    \begin{tikzcd}[column sep=large] 
        \NormalStreamTerms(\Gamma)
        \arrow[d,"\tuple{\ctxCtx{\Gamma},\type}"']
        \arrow[r,"\sintr{-}^{\Gamma}_{\mathrm s}"]
        &
        \Isog(\sintr{\Gamma} \vdash \U^\N)
        \arrow[d,"\deseq{-}^{\Gamma}_{\mathrm s}"']
        \\
        \StreamTypes^\Gamma \times \StreamTypes
        \arrow[r,"\kappa_{\mathrm s}^\Gamma \times \kappa_{\mathrm s}"']
        &
        \posOf{\U^\N}^\Gamma \times \posOf{\U^\N}
    \end{tikzcd}
\end{array}\]
\caption{Compatibility of the bijections}
\label{fig:correspondences}
\end{figure}

\section{Directions for future work}\label{section:conclusion}

As stated in our introduction, although we consider extensional Taylor
expansion as a valuable contribution in itself, it is but a first step
in a ongoing program to exhibit a precise, compositional relationship between
Taylor expansion and game semantics.

We have made significant progress in the typed case by describing the game
semantics interpretation \(\gs{\vtermA}\) of a resource term \(\vtermA\) in a
compositional way, as is exposed by the structure of a resource
category, which ensures its compatibility with resource
reduction~\cite{BPCVA25}.
Moreover, on \(η\)-long resource terms in normal form, this interpretation
coincides with the simply typed version of the bijection \(\sintr{-}\):
this makes the diagram $(1)$ of \cref{fig:gs:taylor} commute --
the commutation of diagram $(1')$ follows directly, by linearity.
We are convinced that the extensional resource calculus provides the
appropriate syntax and dynamics to adapt this approach in the untyped setting.

\begin{cfigure}
  \begin{center}
  \begin{tikzcd}[
      column sep=.5cm,
      row sep=1cm,
      ]
      \vtermA
      \arrow[|->,"{\NF[\nodelim]{}}"]{rr}
      \arrow[|->,"{\gs[\nodelim]{}}"']{d}
      \arrow[phantom,"(1)"]{rrd}
      &&
      \NF{\vtermA}
      \arrow[phantom,"\simeq" sloped]{d}
      \\
      \gs{\vtermA}
      \arrow[phantom,"="]{r}
      &
      \gs{\NF{\vtermA}}
      \arrow[phantom,"="]{r}
      &
      \sintr{\NF{\vtermA}}_{\mathrm v}
  \end{tikzcd}
  \qquad
  \begin{tikzcd}[
      column sep=1cm,
      row sep=1cm,
      ]
      \ltermA
      \arrow[|->,"{\TayExp[\nodelim]{}}"]{r}
      \arrow[|->,"{\gs[\nodelim]{}}"']{d}
      \arrow[phantom,"(2)"]{rd}
      &
      \TayExp{\ltermA}
      \arrow[|->,"{\NF[\nodelim]{}}"]{r}
      \arrow[|->,"{\gs[\nodelim]{}}"]{d}
      \arrow[phantom,"(1')"]{rd}
      &
      \NFTayExp{\ltermA}
      \arrow[phantom,"\simeq" sloped]{d}
      \\
      \gs{\ltermA}
      \arrow[phantom,"="]{r}
      &
      \gs{\TayExp{\ltermA}}
      \arrow[phantom,"="]{r}
      &
      \sintr{\NFTayExp{\ltermA}}_{\mathrm v}
  \end{tikzcd}
  \end{center}
  \caption{A correspondence between extensional Taylor expansion and game semantics}
  \label{fig:gs:taylor}
\end{cfigure}

The next step, that is the subject of ongoing work, is to check that the
game semantics \(\gs{\ltermA}\) of a typed \(λ\)-term \(\ltermA\) coincides
with that of \(\TayExp{\ltermA}\).
More precisely we aim to establish how, given a resource category with enough
structure (satisfied by the category of games interpreting the resource
calculus), one can construct a cartesian closed category (which, in the case of
games, is the usual interpretation of the \(λ\)-calculus), in such a way that
Taylor expansion describes exactly the relation between both interpretations.
This would make the diagram \((2)\) of \cref{fig:gs:taylor} commute,
altogether establishing that \(\gs{\ltermA}\) and \(\NFTayExp{\ltermA}\)
are isomorphic interpretations.
Again, we believe that extensional Taylor expansion can provide the appropriate
framework to adapt this line of research in the untyped setting.

Another direction for future work is to lift the completeness condition on the
semiring of coefficients, by means of uniformity.
Indeed, ordinary Taylor expansion is amenable to a more general quantitative
setting:
\begin{itemize}
  \item it is easy to check that the definition of \(\TayExp{\ltermA}\)
    uses finite sums of non-zero coefficients only;
  \item it is possible to show, relying on a suitable notion of parallel
    resource reduction, that the compatibility of Taylor expansion
    with \(β\)-reduction does not require infinite sums
    either~\cite{DBLP:journals/lmcs/Vaux19}:
  \item and Ehrhard and Regnier have shown that, due to the uniformity
    properties of the Taylor expansion of pure \(λ\)-terms,
    no sum of coefficients is actually performed during normalization,
    in the computation of
    \(\NFTayExp{\ltermA}\)~\cite{DBLP:journals/tcs/EhrhardR08}.
\end{itemize}

The first observation is easy to reproduce in the extensional
setting, and we are confident that the third one can be adapted,
because the very same uniformity arguments apply
(in fact, we already know that the coefficients are finite,
both in the extensional Taylor expansion of a term and in its normal form,
thanks to \cref{lemma:taylor:coef,lemma:NFT:coef}).
The compatibility with \(β\)-reduction needs more care, though.
Indeed, our simulation results
(\cref{theorem:simulation:beta,theorem:simulation:eta})
rely on unbounded iterations of resource reductions 
(see \cref{section:resourcevectors:reduction}).
This was essential, in particular to deal with the \(η\)-expansion of variables:
\cref{lemma:cc} does require the iterated reduction of possibly created redexes,
and could not be reproduced with a single step of parallel reduction.

Nonetheless, we believe that one could adapt the approach of Cerda, in his
study of Taylor expansion for the infinitary \(λ\)-calculus
\cite[esp. Section 4.4]{cerda:tel-04664728}:
there, he considers a restricted version of parallel reduction,
that is sufficient to capture \(β\)-reduction,
yet preserves uniformity,
which allows him to safely consider iterated reductions
and normalization, without completeness hypothesis.

\printbibliography

\end{document}